\newtheorem{definition}{\textbf{Definition}}
\newtheorem{theorem}[definition]{\textbf{Theorem}}
\newtheorem{lemma}[definition]{\textbf{Lemma}}
\newtheorem{corollary}[definition]{\textbf{Corollary}}
\newtheorem{remark}[definition]{\textbf{Remark}}
\newtheorem{proposition}[definition]{\textbf{Proposition}}
\newtheorem{example}[definition]{\textbf{Example}}
\newtheorem{notation}[definition]{\textbf{Notation}}
\newtheorem*{proof*}{\textbf{Proof}}
\newcommand{\Col}{\ensuremath{\msf{Col}}}
\newcommand{\ColB}{\ensuremath{\msf{ColBSCC}}}
\newcommand{\msf}[1]{\mathsf{#1}}
\newcommand{\Zone}{\mathcal{L}}
\newcommand{\ScPar}{\ensuremath{\msf{Succ}}}
\newcommand{\brck}[1]{\ensuremath{\llbracket #1 \rrbracket}}
\newcommand{\brd}{\ensuremath{\brck{0,e}}}
\newcommand{\GF}{\mathsf{GF}}
\newcommand{\Frm}{\mathsf{Form}}
\newcommand{\formNF}{\mathcal{F}}
\newcommand{\gameNF}[2]{\langle #1,#2 \rangle}
\newcommand{\outComeNF}{\mathsf{O}}
\newcommand{\outCNF}{\varrho}
\newcommand{\newC}{\mathsf{NewCol}}
\newcommand{\qinit}{\ensuremath{q_{\msf{init}}}}
\newcommand{\Prb}{\ensuremath{\mathbb{P}}}
\newcommand{\A}{\mathsf{A} }
\newcommand{\B}{\mathsf{B} }
\newcommand{\C}{\mathsf{C} }
\newcommand{\colSet}{\mathsf{K} }
\newcommand{\s}{\ensuremath{\mathsf{s}}}
\newcommand{\Act}{\ensuremath{\mathsf{Act}}}
\newcommand{\colFunc}{\mathsf{col}}
\newcommand{\colVirt}{\mathsf{vcol}}
\newcommand{\Dist}{\mathcal{D} }
\newcommand{\Supp}{\mathsf{Sp} }
\newcommand{\distribSet}{\mathsf{D} }
\newcommand{\distribFunc}{\mathsf{dist} }
\newcommand{\prob}[2]{\mathbb{P}^{#1}_{#2} }
\newcommand{\MarVal}[1]{\chi_{#1} }
\newcommand{\G}{\mathcal{G} }
\newcommand{\Aconc}{\mathcal{C} }
\newcommand{\Games}[2]{\langle #1,#2 \rangle }
\newcommand{\fsf}{\ensuremath{\mathsf{F}}}
\newcommand{\SetStrat}[2]{\ensuremath{\mathsf{S}_{#1}^{#2} }}
\newcommand{\SetPosStrat}[2]{\ensuremath{\mathsf{PS}_{#1}^{#2} }}
\newcommand{\restr}[2]{\ensuremath{\left.#1\right|_{#2}}}
\newcommand{\cyl}{\ensuremath{\mathsf{Cyl}}}
\newcommand{\outM}{\ensuremath{\mathsf{out}}}
\newcommand{\va}{\ensuremath{\mathsf{val}}}
\newcommand{\N}{\ensuremath{\mathbb{N}}}
\newcommand{\R}{\ensuremath{\mathbb{R}}}
\newcommand{\pat}[1]{{\color{magenta} #1}}
\tikzstyle{player}=[state,draw=none,circle,align=center,scale=0.8]
\tikzstyle{player1}=[initial,initial text={},initial where=above,scale=0.8,state,draw,circle,align=center]
\tikzstyle{player1'}=[scale=0.8,state,draw,circle,align=center]
\tikzstyle{player1ell}=[scale=0.8,state,draw,ellipse,align=center]
\tikzstyle{player2}=[state,draw,rectangle,align=center]
\tikzstyle{player2'}=[state,draw,diamond,align=center]
\tikzstyle{player3}=[state,draw,rectangle,align=center,style={align=center},scale=0.7]
\tikzstyle{widget}=[draw=red,rectangle, rounded rectangle=10pt,dashed,minimum size=6mm,fill=yellow]
\tikzset{every loop/.style={looseness=7}}
\newcommand{\lenH}{1}
\newcommand{\lenV}{1}
\newcommand{\leta}{\ensuremath{{q}_0}}
\newcommand{\lete}{\ensuremath{{q}_4}}
\newcommand{\letf}{\ensuremath{{q}_5}}
\newcommand{\letg}{\ensuremath{{q}_6}}
\definecolor{color0}{rgb}{0.59, 0.47, 0.71}
\definecolor{color1}{rgb}{0.95, 0.25, 0.25}
\definecolor{color2}{rgb}{0.25, 0.41, 0.88}
\definecolor{color3}{rgb}{0.9, 0., 0.}
\definecolor{color4}{rgb}{0, 0.06, 0.54}
\definecolor{colorEdge}{rgb}{0, 0, 0.}
\definecolor{colorWait}{rgb}{0.59, 0.47, 0.71}
\definecolor{colorTreated}{rgb}{1, 0.84, 0.}
\author[1,2,3]{Benjamin Bordais}
\author[1]{Patricia Bouyer}
\author[1]{Stéphane Le Roux}
\affil[1]{Université Paris-Saclay, CNRS, ENS Paris-Saclay, LMF, 91190 Gif-sur-Yvette, France}
\affil[2]{TU Dortmund University, Dortmund, Germany}
\affil[3]{Center for Trustworthy Data Science and Security, University Alliance Ruhr, Dortmund, Germany}
\date{}
\begin{document}
	\title{From Local To Global Optimality in Concurrent Parity Games}
	\maketitle
	
	\begin{abstract}
		We study two-player games on finite graphs. Turn-based games have many nice properties, but concurrent games are harder to tame: e.g. turn-based stochastic parity games have positional optimal strategies, whereas even basic concurrent reachability games may fail to have optimal strategies. We study concurrent stochastic parity games, and identify a local structural condition that, when satisfied at each state, guarantees existence of positional optimal strategies for both players.  
		
	\end{abstract}

	\section{Introduction}
	Two-player games played on finite graphs have been a helpful model in areas of computer science. In such games, states/vertices of the graph are colored; the actions of the players induce an infinite path in the graph, thus inducing an infinite sequence of colors. Who wins depends on the color sequence. These games can be turn-based, i.e. at each state a unique player chooses an outgoing edge leading to a probability distribution over successor states, or concurrent, i.e. at each state, the combination of one action per player determines the probability distribution over successor states. In such a stochastic setting, Player $\A$ wants to maximize her probability to win, and Player $\B$ to minimize the very same probability.
	
	We study the above games in the case of parity objective: colors are
	natural numbers, and a sequence is winning for Player $\A$ iff the
	maximal color seen infinitely often is even. This objective has been
	well-studied in connection with model-checking. The turn-based version
	of these games has nice properties involving deterministic positional
	strategies, where "positional" means that the played action depends
	only on the current state: with only deterministic probability distributions over successor states, either of the
	players has a winning such strategy \cite{zielonka98}; with arbitrary probability distributions over successor states, both players have optimal such strategies
	\cite{zielonka04,DBLP:conf/soda/ChatterjeeJH04}. Note that in
	concurrent parity games, positional optimal strategies for distinct
	starting states yield one positional strategy optimal for all states
	uniformly. So we omit the word "uniform" in this paper.
	
	The above properties and others break in basic concurrent games but for safety games, as recorded in the table below, where safety and reachability are special cases of (co-)Büchi, and parity subsumes (co-)Büchi. Büchi games may not have optimal strategies but when they have, they have positional ones; co-Büchi games may not have optimal strategies, and they may have only infinite-memory ones; and the other way around for $\varepsilon$-optimal strategies, hence incomparable difficulty between Büchi and co-Büchi. This shows that concurrent parity is strictly harder than (co-)Büchi, since parity games may have only infinite-memory 
	($\varepsilon$-)optimal strategies. (See also the column on subgame optimal strategies or Appendix~\ref{sec:explain_table}.)
	
	\bigskip
	\hspace*{-0.4cm}
	\label{intro}
	\begin{table}
		\centering
		\begin{tabular}{|c|c|c|c|c|}
			\hline
			& $\exists$ opt strat? & opt & $\varepsilon$-opt & SubG-opt\\
			\hline
			Safety & always \cite{AM04b} & pos. \cite{AM04b} & pos. \cite{AM04b} & pos. \cite{AM04b} \\	
			Reach. & not always \cite{kumar1981existence} & pos. \cite{BBSCSL22} & pos. \cite{everett57} & pos. \cite{BBSCSL22}\\
			Büchi & not always & pos. \cite{BBSFSTTCS22} & $\infty$ \cite{quantitativeAH00} & pos. \cite{BBSFSTTCS22}  \\
			Co-B. & not always & $\infty$ \cite{BBSFSTTCS22} & pos. \cite{AH06} & pos. \cite{BBSFoSSaCs23} \\
			Parity & not always & $\infty$ \cite{quantitativeAH00} & $\infty$ \cite{quantitativeAH00} & $\infty$ \cite{quantitativeAH00} \\ \hline
		\end{tabular}
		\caption{The summary of the situation in concurrent parity games in terms of existence and nature of (subgame) (almost-)optimal strategies.}
		\label{tab:intro}
	\end{table}
	\medskip
	
	Nevertheless, concurrent games' poor behavior in general should not deter us from studying them, as many complex systems are inherently concurrent. See \cite{KNPS+22} for further arguments. Continuing a recent line of research~\cite{BBSFSTTCS21,BBSCSL22,BBSFSTTCS22} we seek local structural good behaviors that scale up to the whole game. 
	
	The following key property still holds in parity concurrent games, by determinacy of Blackwell games \cite{martin98}: each state has a value $u \in [0,1]$, i.e. for all $\varepsilon > 0$ Player $\A$ has an $\varepsilon$-optimal strategy for plays starting in the state, i.e. guaranteeing probability at least $u - \varepsilon$ to win, regardless of Player $\B$'s strategy; and likewise Player $\B$ can guarantee that Player $\A$ wins with probability at most $u + \varepsilon$. 
	Here we are interested in strategies that are optimal, i.e. that
	realize exactly the value $u$, and that are positional. They do not
	exist in general (see~\cite[Figure 2]{BBSCSL22}), but our result below is a transfer from local to global positional optimality, and the terminology that it uses is explained afterwards.
	\begin{theorem}
		\label{thm:intro}
		If at every state the induced game form is positionally optimizable, both players have positional optimal strategies in the game.
	\end{theorem}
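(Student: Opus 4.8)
\emph{Setup via the value function.} By determinacy of Blackwell games \cite{martin98} the game has a value function $\chi : Q \to [0,1]$, and I would start from the (standard) \emph{one-step consistency}: at every state $q$, $\chi(q)$ equals the value of the local game form $\mathcal{F}_q$ once each successor state $q'$ is evaluated by $\chi(q')$ and these values are combined through the transition probabilities. One inequality comes from composing a $\delta$-optimal local action for the minimizer with $\delta$-optimal tails from the successors, the other dually; since values are history-independent this gives $\chi(q)=\mathrm{val}(\mathcal{F}_q[\chi])$. Now apply the hypothesis at each $q$: since $\mathcal{F}_q$ is positionally optimizable, the induced game $\langle \mathcal{F}_q,\chi\rangle$ has optimal mixed actions for both players, and "positionally optimizable" additionally provides such actions that behave optimally for the auxiliary reach/avoid objectives attached to the colors, not merely for the value. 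Picking one such pair $(\sigma_\A(q),\sigma_\B(q))$ at each $q$ and patching gives candidate positional strategies $\sigma_\A,\sigma_\B$.

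\emph{Reduction to finite Markov chains.} To prove $\sigma_\A$ optimal it suffices to beat every \emph{positional} $\sigma_\B$: once $\sigma_\A$ is fixed, Player $\B$ faces a parity MDP, which admits a positional optimal response; symmetrically for $\sigma_\B$. So fix positional $\sigma_\A,\sigma_\B$; the play follows a finite Markov chain $\mathcal{M}$ whose BSCCs are won by Player $\A$ exactly when their maximal color is even. One-step consistency makes $\chi$ a sub-(resp.\ super-)martingale along $\mathcal{M}$ whenever $\sigma_\A$ (resp.\ $\sigma_\B$) is the constructed strategy, hence $\chi$ is harmonic and so constant on each reachable BSCC $S$, with some value $\chi_S$; and then $\chi(q_0)=\sum_S \Pr_{\mathcal{M}}[\mathrm{reach}\,S]\cdot\chi_S$ while $\Pr_{\mathcal{M}}[\A\text{ wins}]=\sum_S \Pr_{\mathcal{M}}[\mathrm{reach}\,S]\cdot\mathbbm{1}[S\text{ won by }\A]$. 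Thus optimality of $\sigma_\A$ reduces to: for every positional $\sigma_\B$ and every reachable BSCC $S$ of $\mathcal{M}_{\sigma_\A,\sigma_\B}$, $\chi_S>0\Rightarrow S$ is won by $\A$; and optimality of $\sigma_\B$ to the dual implication $S$ won by $\A \Rightarrow \chi_S=1$.

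\emph{The parity verification (core).} I would prove these two implications by induction on the number of colors, peeling off the top color $d$. Writing $C_d$ for the $d$-colored states: if $d$ is even, a BSCC is won by $\A$ iff it meets $C_d$, or lies entirely in the sub-game on $Q\setminus C_d$ and is won there; one treats "$\mathrm{reach}\,C_d$" as a reachability objective, uses that positional optimizability of the game forms is inherited by the sub-game on $Q\setminus C_d$ and by the reachability gadget, and combines the local optima extracted for the reachability layer with the inductively obtained positional optimal strategy on $Q\setminus C_d$ — the extra clause in "positionally optimizable" being exactly what lets these two requirements be realized by a single distribution per state. If $d$ is odd, exchange the two players and argue dually. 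Unwinding the induction, every reached BSCC carries the correct relation between $\chi_S$ and its parity indicator, which yields the two implications and hence the theorem.

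\emph{Main obstacle.} The crux is this last step. One-step consistency alone only prevents $\sigma_\A$ from losing value in expectation; it does \emph{not} force the BSCCs it reaches to be parity-good — concurrent co-Büchi is the standard witness that, absent the local hypothesis, optimal play can require infinite memory. Making the colour-by-colour peeling go through \emph{positionally} requires establishing that "positionally optimizable" is robust under (i) deletion of the top-color states and (ii) superposition of a reachability target, and that the value functions of the resulting sub-games are the restrictions of $\chi$, so that the per-state extractions at the different levels are mutually compatible; carrying out this robustness and compatibility analysis is where the real work lies.
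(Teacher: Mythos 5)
Your first two steps (one-step consistency of the value function, patching locally optimal distributions, reduction to positional responses and then to a BSCC condition via the sub/super-martingale argument) coincide with the paper's framework: they are exactly the combination of Lemma~\ref{lem:opt_in_gf_equiv_opt_in_parity}, Definition~\ref{def:dominate_strongly_dominate_gurantee} (parity domination) and Theorem~\ref{thm:strongly_domnates_imply_guarantee}, together with the slice-gluing Lemma~\ref{lem:optimal_in_all_value_slice_optimal_everywhere}. The gap is in your ``parity verification (core)'', which you yourself flag as the real work: as sketched it rests on two claims that do not hold, and the missing content is essentially the whole proof. First, the compatibility claim that ``the value functions of the resulting sub-games are the restrictions of $\chi$'' is false in general: deleting the top-colour states (or making ``reach $C_d$'' a terminal win) changes the values of the remaining states in a concurrent stochastic game, so the per-state extractions at different levels of your peeling are not taken with respect to the same valuation. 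The paper avoids this by decomposing by \emph{value} first (the $u$-slices, where outside states become stopping states with their true values), not by colour, and inside a slice it never works with the honest sub-game on $Q\setminus C_d$ but with the layer games $\mathcal{L}^n_{\mathsf{vcol}}$, whose abstract states $k^n_i$ carry deliberately pessimistic colours; the statement proved about them is parity domination of an auxiliary valuation, not equality of values with $\chi$.

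Second, the colour-by-colour peeling cannot be done once, top-down, the way you propose. Within a slice, which layer a state belongs to is not its colour but the output of a one-step probabilistic-attractor operator ($\mathsf{NewCol}$), so states of low colour get promoted; moreover whether a low even layer is ``as good as'' a higher even layer depends on whether the odd layers below it are empty (the paper's three-state example with $q_1,q_2,q_3$ and the dotted state), which forces the merge-of-the-least-layer step, the reset of the layer just above it, and a restart of the whole attractor computation, with termination only via a lexicographic measure. A two-level decomposition ``reach $C_d$ with positive probability, else win the parity sub-game on $Q\setminus C_d$'' does not produce a single positional strategy realizing both requirements, and your appeals to inheritance of positional optimizability under (i) top-colour deletion and (ii) a reachability gadget are both unproved and of the wrong shape: the hypothesis is used in the paper through environments containing a whole palette of abstract coloured states $k_0,\dots,k_e$ plus stopping values, and its strength must grow with the distance $e_u-\mathsf{col}(q)$ between a state's colour and the top colour of its slice (this is why Theorem~\ref{thm:transfer_local_global} is graded, and why Proposition~\ref{prop:inclustion_parm} gives a strict hierarchy). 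So the proposal is a correct outer shell around an inner argument that is not there and whose naive form would fail; filling it would amount to reconstructing the virtual-colouring/faithfulness machinery of Sections 6.2--6.5.
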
	
	A game form is a map from pairs of actions to a probability distribution over successor states, see Figure~\ref{fig:GFEnv} to the left, or \cite{BBSFSTTCS21} page 5 forcmore. So each state induces a game form. Given a game form $\formNF$, an $\formNF$-game has a unique non-trivial state which induces
	$\formNF$ 
	and the other player states are of two kinds: first colored kind, they loop
	back to the non-trivial state; second uncolored kind, they have an explicit value in $[0,1]$, and the game stops there. Moreover Player $\A$ prefers higher explicit values. See Figure~\ref{fig:GFEnv} or Figure~\ref{fig:extracted_environment} for examples of such a construction. For all $n \in \N$, we say that $\formNF$ is called \emph{positionally optimizable up to $n$} if both players have positional optimal strategies in all $\formNF$-games with at most $n+1$ different colors. We say that $\formNF$ is positionally optimizable if it positionally optimizable up to $n$ for all $n \in \N$. An overview of the proof of Theorem~\ref{thm:intro} is given in Subsection~\ref{subsec:big_picture}, after having introduced the relevant definitions. 
	
	As a corollary of Theorem 1, we recover the known result that positional strategies are sufficient to play optimally in turn-based stochastic
	games~\cite{zielonka04,DBLP:conf/soda/ChatterjeeJH04}.
	
	The version of Theorem~\ref{thm:intro} that we show in this paper (i.e. Theorem~\ref{thm:transfer_local_global}) is actually a little more precise. Indeed, for Theorem~\ref{thm:intro} to hold, it is actually sufficient that all the game forms appearing in the game are positionally optimizable up to $n$ where $n$ denotes the different between the highest and least color appearing in the game. After having proved Theorem~\ref{thm:transfer_local_global}, we show that it is decidable if a game form is positionally optimizable (resp. positionally optimizable up to $n$), see Proposition~\ref{prop:positionally_optim_decidable}. Furthermore, we also show that, for all $n \in \N$, the set of game forms positionally optimizable up to $n$ is strictly larger that the set of game forms up to $n+1$, see Theorem~\ref{prop:inclustion_parm}. 
	
	Finally, note that the results presented in this paper appear in the PhD thesis of one of the authors, which will soon be available online.

\section{Preliminaries and Game Forms}
\label{sec:preliminaries}
Consider a non-empty set $Q$. We denote by $Q^*$, $Q^+$ and $Q^\omega$ the set of finite sequences, non-empty finite sequences and infinite sequences of elements of $Q$ respectively. 


For all subsets $S \subseteq Q$, we let $\mathbbm{1}_S: Q \rightarrow \{ 0,1 \}$ be the indicator function of the set $S$ defined by, for all $q \in Q$, we have $\mathbbm{1}_S(q) = 1$ if and only if $q \in S$. When $S = \{ q \}$ for any $q \in Q$, we write $\mathbbm{1}_q$ for $\mathbbm{1}_S$.

A \emph{discrete probability distribution} over a non-empty finite set $Q$ is a function $\mu: Q \rightarrow [0,1]$ such that $\sum_{x \in Q} \mu(x) = 1$. The distribution $\mu$ is \emph{deterministic} if $\mu(q) = 1$ for some $q \in Q$. For all $S \subseteq Q$, we let $\mu[S] := \sum_{q \in S} \mu(q)$. 
The set of distributions over the set $Q$ is denoted $\Dist(Q)$.	

For all $i \leq j \in \N$, we denote by $\brck{i,j}$ the set of integers between $i$ and $j$: $\brck{i,j} := \{ k \in \N \mid i \leq k \leq j \}$. This set is typed in the sense that these are seen as integers and not reals numbers: $\brck{0,1} \cap [0,1] = \emptyset$\footnote{This is done because, in parity games, the integers refer to the colors appearing in the game whereas the real numbers between in $[0,1]$ refer to the values of the states in the game. These are disjoint.}. For all finite sets $S \subseteq \N$, we let $\msf{Even}(S)$ (resp. $\msf{Odd}(S)$) be the smallest even (resp. odd) integer that is greater than or equal to all elements in $S$. 

We recall the definition of game forms, which 
informally are finite two-dimensional tables with variables, and of games in normal forms which are game forms whose outcomes are values between $0$ and $1$.
\begin{definition}[Game form and game in normal form]
	\label{def:arena_game_nf}
	Consider a non-empty set of outcomes $\outComeNF$. A \emph{game form} ($\GF$ for short) on $\outComeNF$ is a tuple
	$\formNF = \langle \Act_\A,\Act_\B,\outComeNF,\outCNF \rangle$
	where $\Act_\A$ (resp. $\Act_\B$) is the non-empty finite set of actions available to Player $\A$ (resp. $\B$) and $\outCNF: \Act_\A \times \Act_\B \rightarrow \Dist(\outComeNF)$ is a function that associates a distribution over the outcomes to each pair of actions. We denote by $\Frm(\outComeNF)$ the set of all game forms on $\outComeNF$. 
	
	When the set of outcomes $\outComeNF$ is equal to $[0,1]$, we say that $\formNF$ is a \emph{game in normal	form}. For a valuation $v: \outComeNF \rightarrow [0,1]$ of the outcomes, the notation $\gameNF{\formNF}{v}$ refers to the game in normal form $\langle \Act_\A,\Act_\B,[0,1],v \circ \outCNF \rangle$ induced from $\formNF$ by $v$.
	
	Finally, a Player-$\A$ (resp. Player-$\B$) game
	form $\formNF$ is such that $|\Act_\B| = 1$ (resp. $|\Act_\A|
	= 1$). The only Player-$\B$ (resp. Player-$\A$) action is denoted
	$*$. Such a game form is denoted $\formNF =
	\langle \A:\outCNF(\Act_\A,*) \rangle$ (resp. $\formNF =
	\langle \B:\outCNF(*,\Act_\B) \rangle$)
	. When $\formNF$ is both a Player-$\A$ and a Player-$\B$ game form, it is called a \emph{trivial} game form. Such a game form is denoted $\formNF = \{ \outCNF(*,*) \}$.
	%
	%
\end{definition}
An example of a game form is depicted on the left of
Figure~\ref{fig:GFEnv} where the actions available to Player $\A$ are
the rows and the actions available to Player $\B$ are the columns.
Strategies available to Player-$\A$ (resp. $\B$) are then the probability distributions over their respective sets of actions. In a game in normal form, Player $\A$ tries to maximize the outcome, whereas Player $\B$ tries to minimize
it.
\begin{definition}[Outcome and value of a game in normal form]
	\label{def:outcome_game_form}
	Consider a game in normal form
	$\formNF = \langle \Act_\A,\Act_\B,[0,1],\outCNF \rangle$. The
	set $\Dist(\Act_\A)$ (resp. $\Dist(\Act_\B)$) corresponds to the
	set of strategies available to Player $\A$
	(resp. $\B$), it is denoted $\Sigma_\A(\formNF)$ (resp. $\Sigma_\B(\formNF)$). For a pair of strategies
	$(\sigma_\A,\sigma_\B) \in \Sigma_\A(\formNF) \times
	\Sigma_\B(\formNF)$, the outcome
	$\outM_\formNF(\sigma_\A,\sigma_\B)$ in $\formNF$ of the
	strategies $(\sigma_\A,\sigma_\B)$ is defined as:
	$\outM_\formNF(\sigma_\A,\sigma_\B) := \sum_{a \in \Act_\A}
	\sum_{b \in \Act_\B} \sigma_\A(a) \cdot \sigma_\B(b) \cdot
	\outCNF(a,b) \in [0,1]$. 
	
	Consider a Player-$\A$ strategy $\sigma_\A \in
	\Sigma_\A(\formNF)$.
	The \emph{value} of strategy $\sigma_\A$, denoted
	$\va_\formNF(\sigma_\A)$ is equal to:
	$\va_\formNF(\sigma_\A) := \inf_{\sigma_\B \in
		\Sigma_\B(\formNF)} \outM_{\formNF}(\sigma_\A,\sigma_\B)$,
	and dually for Player $\B$
	. When
	$\sup_{\sigma_\A \in \Sigma_\A(\formNF)}
	\va_\formNF(\sigma_\A) = \inf_{\sigma_\B \in
		\Sigma_\B(\formNF)} \va_\formNF(\sigma_\B)$, it defines the
	\emph{value} of the game $\formNF$, denoted $\va_\formNF$.  If
	$\va_\formNF(\sigma_\A) = \va_\formNF$, the strategy
	$\sigma_\A$ is said to be \emph{optimal} for Player $\A$. This
	is defined analogously for Player $\B$. Since the sets of actions is finite, Von Neumann's minimax
	theorem~\cite{vonNeuman} ensures the existence of a value and
	of optimal strategies for both players.
	%
\end{definition}
In the following, strategies in game forms will be called
$\GF$-strategies in order not to confuse them with strategies in
concurrent games (on graphs).

\section{Concurrent games, local interactions
}
\label{sec:concurrent_games}
\subsection{Concurrent arenas and games}
\begin{definition}[Finite stochastic concurrent arena]
	A finite \emph{concurrent arena} $\Aconc$ is a tuple 
	$\langle Q,\fsf \rangle$ where $Q$ is a non-empty finite set of states and
	$\fsf: Q \rightarrow \Frm(Q)$ maps each state to its
	induced game form, which describes the interaction of
	the players at this state.
\end{definition}

In the following, the arena $\Aconc$ will refer to the tuple $\langle Q,\fsf \rangle$, unless otherwise stated. Furthermore, for all $q \in Q$, we denote by $\Act_\A^q$ and $\Act_\B^q$ the sets of actions available to both players in the local interaction $\fsf(q)$, unless otherwise stated.
%
In this paper, we focus on (max) parity objectives: 
the goal of Player $\A$ is that the maximum of the colors (i.e. label on states) visited infinitely often is even.
\begin{definition}[Parity game]
	Consider a finite set of states $Q$ and a coloring function
	$\colFunc: Q \rightarrow
	\N$. 
	This coloring function induces
	the 
	\emph{parity objective} $W(\colFunc) \subseteq Q^\omega$ defined by
	$W(\colFunc) := \{ \rho \in Q^\omega \mid
	\max(\colFunc(\rho)_\infty) \text{ is even } \}$ where
	$\colFunc(\rho)_\infty := \{ k \in \N \mid \forall i \in \N,\;
	\exists j \geq i,\; \colFunc(\rho)_j = k \} \neq \emptyset$ (since
	$Q$ is finite) denotes the set of colors seen infinitely often in
	$\colFunc(\rho)$. 
	
	A \emph{parity game} $\G = \Games{\Aconc}{\colFunc}$ is a pair formed of a concurrent arena $\Aconc$ and a coloring function $\colFunc: Q \rightarrow
	\N$. 
\end{definition}

In such a game, strategies 
map the history of the game (i.e. the finite sequence of states visited so far)
to a $\GF$-strategy in the game form 
corresponding to the current state of the game.
\begin{definition}[Strategies]
	Consider a concurrent arena $\Aconc$. A \emph{strategy} for Player
	$\A$ is a function
	$\s_\A: \cup_{q \in Q} (Q^* \cdot q \rightarrow
	\Sigma_\A(\fsf(q)))$. 
	It is \emph{positional} if for
	all $q \in Q$, there is a $\GF$-strategy
	$\sigma_\A^q \in \Sigma_\A(\fsf(q))$ such that, for all
	$\pi = \rho \cdot q \in Q^+$: $\s_\A(\pi) = \sigma_A^q$. In that
	case, the strategy $\s_\A$ is said to be \emph{defined} by
	$(\sigma_\A^q)_{q \in Q}$. 
	
	We denote by $\SetStrat{\Aconc}{\A}$ and
	$\SetPosStrat{\Aconc}{\A}$ the set of all strategies and positional
	strategies respectively in arena $\Aconc$ for Player $\A$. A strategy $\s_\A$ is \emph{deterministic} if, for all $\rho \in Q^+$, $\s_\A(\rho)$ is deterministic. The definitions are analogous for Player $\B$.
\end{definition}

Unlike deterministic games with deterministic strategies, the
outcome of a game, given two strategies (one for each Player), is not a single play but
rather a distribution over plays. To formalize this, we first define 
the probability to go from a state $q$ to a state $q'$ given two
$\GF$-strategies in $\fsf(q)$.
\begin{definition}[Probability transition given two strategies]
	\label{def:probability_transition}
	Consider a concurrent arena $\Aconc
	$, a state $q \in Q$ and two strategies $(\sigma_\A,\sigma_\B)
	\in \Sigma_\A(\fsf(q)) \times \Sigma_\B(\fsf(q))$. 
	Let $q' \in Q$. The probability to go from $q$ to $q'$ if the players plays, in q, $\sigma_\A$ and $\sigma_\B$, denoted $\prob{\sigma_\A,\sigma_\B}{}{}(q,q')$, is equal to:
	\begin{displaymath}
	\prob{\sigma_\A,\sigma_\B}{}{}(q,q') := \outM_{\Games{\fsf(q)}{\mathbbm{1}_{q'}}}(\sigma_\A,\sigma_\B)
	\end{displaymath}
\end{definition}

We now define the probability of occurrence of finite paths, and
consequently of any Borel set, given a strategy per player.
\begin{definition}[Probability distribution given two strategies]
	\label{def:probability_in_parity_game}
	Consider a concurrent arena $\Aconc
	$ and two arbitrary strategies $(\s_\A,\s_\B) \in \SetStrat{\Aconc}{\A} \times \SetStrat{\Aconc}{\B}$.
	We denote by $\mathbb{P}^{\s_\A,\s_\B}: Q^+ \rightarrow \Dist(Q)$ the function giving the probability distribution over the next state of the arena given the sequence of states already seen. That is, for all finite path $\pi = \pi_0 \ldots \pi_n \in Q^+$ and $q \in Q$, we have:
	\begin{displaymath}
	\mathbb{P}^{\s_\A,\s_\B}(\pi)[q] =
	\prob{\s_\A(\pi),\s_\B(\pi)}{}{}(\pi_n,q)
	\end{displaymath}
	
	The probability 
	of a finite path $\pi = \pi_0 \cdots \pi_n \in Q^+$ from a state $q_0 \in Q$ with the pair of strategies $(\s_\A,\s_\B)$ is then equal to $\prob{\Aconc,q_0}{\s_\A,\s_\B}(\pi) := \Pi_{i = 0}^{n-1} \mathbb{P}^{\s_\A,\s_\B}(\pi_{\leq i})[\pi_{i+1}]$ if $\pi_0 = q_0$ and $0$ otherwise. 
	The probability of a cylinder set $\cyl(\pi) := \{ \pi \cdot \rho \mid \rho \in Q^\omega \}$ is $\prob{\Aconc,q_0}{\s_\A,\s_\B}[\cyl(\pi)] := \prob{\s_\A,\s_\B}{}(\pi)$ for any finite path $\pi \in Q^*$. This induces the probability measure over Borel sets in the usual way. 
	We denote by $\prob{\Aconc,q_0}{\s_\A,\s_\B}$ this probability measure. mapping each Borel set to a value in $[0,1]$
\end{definition}

The values of strategies and of the game follow.
\begin{definition}[Value of strategies and of the game]
	Let $\G = \Games{\Aconc}{\colFunc}$ be a parity game 
	and $\s_\A \in \SetStrat{\Aconc}{\A}$ be a Player-$\A$ strategy. The vector $\MarVal{\G}[\s_\A]: Q \rightarrow [0,1]$ giving the value of the strategy $\s_\A$ is such that, for all $q_0 \in Q$, we have $\MarVal{\G}[\s_\A](q_0) := \inf_{\s_\B \in \SetStrat{\Aconc}{\B}} \prob{\Aconc,q_0}{\s_\A,\s_\B}[W(\colFunc)]$. The vector $\MarVal{\G}[\A]: Q \rightarrow [0,1]$ giving the value for Player $\A$ is such that, for all $q_0 \in Q$, we have $\MarVal{\G}[\A](q_0) := \sup_{\s_\A \in \SetStrat{\Aconc}{\A}} \MarVal{\G}[\s_\A](q_0)$. The vector $\MarVal{\G}[\B]: Q \rightarrow [0,1]$ giving the value of the game for Player $\B$ is defined similarly by reversing $\inf$ and $\sup$.
	
	By Martin's result on the determinacy of Blackwell
	games \cite{martin98}:
	$\MarVal{\G}[\A] = \MarVal{\G}[\B]$, which defines the
	\emph{value} of the game:
	$\MarVal{\G} := \MarVal{\G}[\A] = \MarVal{\G}[\B]$. A
	Player-$\A$ strategy $\s_\A$ such that
	$\MarVal{\G} := \MarVal{\G}[\s_\A]$ is \emph{optimal} (and
	similarly for Player $\B$).
	\label{def:determinacy}
\end{definition}
Note that optimal strategies may not exist in general; when they
exist they can be arbitrarily complex; see the table page~\pageref{intro}.

Finally, we extend our formalism by considering stopping states
with output value, i.e. states that, when visited, immediately
stop the game and induce a specific value in $[0,1]$. The fact that
the value of a stopping state $q$ is set to be $u$ is denoted
$\msf{val}(q) \gets u$. Note that such stopping states can be encoded in
our formalism (see
Appendix~\ref{appen:implement_stopping_states}). They will be
depicted as dashed states.

\subsection{Markov chains and sufficient condition for optimality}
In this subsection, we give a condition on positional strategies to be
optimal in a concurrent parity game. First, we need to introduce the
notions of Markov chain and bottom strongly connected component.
\begin{definition}[Markov chain, BSCC]
	A \emph{Markov chain} $\mathcal{M}$ is a pair
	$\mathcal{M} = \langle Q,\mathbb{P} \rangle$ where $Q$ is a finite
	set of states and $\mathbb{P}: Q \rightarrow \Dist(Q)$. A
	\emph{bottom strongly connected component} (BSCC for short)
	$H \subseteq Q$ is a subset of states such that the underlying graph
	of $H$ is strongly connected 
	(w.r.t. edges given by positive  probability transitions from states to states) and $H$ cannot be exited: for all $q \in H$ and $q' \in Q$, $\mathbb{P}(q)(q') > 0$ implies $q' \in H$.
	
	We say that a subset of states $S \subseteq Q$ \emph{occurs} in $H$ if $S \cap H \neq \emptyset$. We say that a state $q \in Q$ \emph{occurs} in $H$ if $q \in H$.
\end{definition}
Two positional strategies (one per player) in a concurrent arena
not only induce a probability measure on infinite sequences of states
(recall Definition~\ref{def:probability_in_parity_game}), but also a
Markov chain, whose graph is a subgraph of the arena.
(This is done formally in Appendix~\ref{appen:markov_chain}.) If we only fix a strategy, we can consider the BSCCs that are
compatible with that strategy as follows.

\begin{definition}[Induced Markov chains, BSCCs compatible with a strategy]
	Consider a parity game $\G = \Games{\Aconc}{\colFunc}$.
	Let $\s$ be a positional strategy for one of the players. For
	every positional and deterministic strategy $\s'$ for the other
	player, we denote by
	$\mathcal{M}_{\s,\s'} = \langle Q,\mathbb{P}^{\s,\s'} \rangle$ the
	Markov chain \emph{induced} by $\s$ and $\s'$. 
	
	We let $H_{\s}$
	denote the set of BSCCs \emph{compatible} with $\s$, i.e. the
	BSCCs of some Markov chain $\mathcal{M}_{\s,\s'}$, where $\s'$
	ranges over ther other player's positional deterministic
	strategies. A BSCC $H \in H_{\s}$ is \emph{even-colored} if
	$\max \colFunc[H]$ is even. Otherwise, it is \emph{odd-colored}.
\end{definition}

We now define several properties relating a positional strategy and a valuation of the states.
\begin{definition}
	\label{def:dominate_strongly_dominate_gurantee}
	Let $\G = \Games{\Aconc}{\colFunc}$ be a 
	game and $v: Q \rightarrow [0,1]$ be a valuation over its states. Consider
	a positional Player-$\A$ strategy
	$\s_\A$ (resp. Player-$\B$ strategy $\s_\B$). 
	The strategy $\s_\A$ (resp. $\s_\B$):
	\begin{itemize}
		\item \emph{dominates} the valuation $v$ if for all $q \in Q$, it
		holds that $v(q) \leq \va_{\Games{\fsf(q)}{v}}(\s_\A(q))$
		(resp. $v(q) \geq \va_{\Games{\fsf(q)}{v}}(\s_\B(q))$);
		\item \emph{parity dominates} the valuation $v$ if it dominates $v$
		and all BSCCs $H$ compatible with $\s_\A$ (resp. $\s_\B$)
		such that $\min v[H] > 0$ (resp. $\max v[H] < 1$) are even-colored
		(resp. odd-colored);
		\item \emph{guarantees} the valuation $v$ if, for all $q \in Q$, it
		holds $v(q) \leq \MarVal{\G}[\s_\A](q)$ (resp. $v(q) \geq \MarVal{\G}[\s_\B](q)$).
	\end{itemize}
\end{definition}
Parity dominating a valuation implies guaranteeing it (the theorem
below can be deduced from~\cite[Lemma 34]{BBSArXivICALP}). 
\begin{theorem}[Proof~\ref{appen:proof_thm_even_in_bscc_ok}]
	\label{thm:strongly_domnates_imply_guarantee}
	Consider a parity game $\G = \Games{\Aconc}{\colFunc}$, a
	Player-$\A$ positional strategy $\s_\A \in \SetPosStrat{\Aconc}{\A}$
	and a valuation of the states $v: Q \rightarrow [0,1]$. If the strategy $\s_\A$	dominates the valuation $v$, then for all BSCCs $H \in H_{\s_\A}$, there is some $v_H \in [0,1]$ such that $v[H] = \{ v_H \}$.If in addition $\s_\A$ parity dominates $v$, it also guarantees $v$.
\end{theorem}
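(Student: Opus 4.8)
The plan is to prove the two assertions in turn: first that $v$ is constant on every BSCC compatible with $\s_\A$ under the mere assumption that $\s_\A$ dominates $v$, and then, adding the parity-domination hypothesis, that $\s_\A$ guarantees $v$.

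First I would fix a BSCC $H \in H_{\s_\A}$, witnessed by some positional deterministic Player-$\B$ strategy $\s'$, so that $H$ is a BSCC of the finite Markov chain $\mathcal{M}_{\s_\A,\s'}$. On $H$ the dynamics are purely probabilistic. The domination hypothesis says that for each $q \in H$ we have $v(q) \le \va_{\gameNF{\fsf(q)}{v}}(\s_\A(q)) \le \outM_{\gameNF{\fsf(q)}{v}}(\s_\A(q),\s'(q)) = \sum_{q'} \mathbb{P}^{\s_\A,\s'}(q)(q') \cdot v(q')$, i.e. $v$ restricted to $H$ is a subharmonic function for the chain: $v(q) \le \sum_{q'} \mathbb{P}(q)(q')\,v(q')$ for all $q\in H$. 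Since $H$ is finite, pick $q^\star \in H$ maximizing $v$ over $H$; from subharmonicity at $q^\star$ and $\sum_{q'}\mathbb{P}(q^\star)(q') = 1$ with all successors staying in $H$, every successor $q'$ of $q^\star$ with positive probability must also attain the maximum. By strong connectivity of $H$ this propagates to all of $H$, so $v$ is constant on $H$, equal to some $v_H \in [0,1]$. (One must check the edge case where $H$ is a singleton with a self-loop, which is immediate.)

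Next, assuming in addition that $\s_\A$ parity dominates $v$, I would show $v(q_0) \le \MarVal{\G}[\s_\A](q_0)$ for every $q_0$. Fix an arbitrary Player-$\B$ strategy $\s_\B$; by standard arguments (it suffices to bound against positional deterministic $\s_\B$, since these suffice to minimize against a fixed positional $\s_\A$ — or, more carefully, one argues directly on the measure $\prob{\Aconc,q_0}{\s_\A,\s_\B}$) consider the induced Markov chain $\mathcal{M}_{\s_\A,\s_\B}$. Almost every play eventually reaches and stays in some BSCC $H$, and on $H$ the parity objective $W(\colFunc)$ is won iff $\max \colFunc[H]$ is even. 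Now split BSCCs into two classes. If $H$ is even-colored, $W(\colFunc)$ holds on every play absorbed in $H$, contributing fully. If $H$ is odd-colored, then by parity domination we must have $\min v[H] = 0$, hence $v_H = 0$, so plays absorbed in such $H$ contribute nothing to $v$ either. Therefore, writing $p_H(q_0)$ for the probability of absorption in $H$ from $q_0$,
\[
\prob{\Aconc,q_0}{\s_\A,\s_\B}[W(\colFunc)] \ \ge\ \sum_{H \text{ even-colored}} p_H(q_0) \ \ge\ \sum_{H} p_H(q_0)\, v_H \ =\ \mathbb{E}[v_H \mid q_0],
\]
where the last equality uses absorption probability $1$. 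The final step is to observe that the function $q \mapsto \mathbb{E}[v_H \mid q]$ (expected value of $v$ on the absorbing BSCC, started from $q$) dominates $v$ pointwise: it is harmonic for the chain, agrees with the constant $v_H$ on each BSCC, and $v$ is subharmonic with $v(q) \le$ its value on BSCCs reachable from $q$; a maximum-principle / martingale-convergence argument on the finite chain gives $v(q) \le \mathbb{E}[v_H \mid q]$ for all $q$. Combining, $\prob{\Aconc,q_0}{\s_\A,\s_\B}[W(\colFunc)] \ge v(q_0)$, and taking the infimum over $\s_\B$ yields $\MarVal{\G}[\s_\A](q_0) \ge v(q_0)$, i.e. $\s_\A$ guarantees $v$. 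The Player-$\B$ case is symmetric, exchanging even/odd, $\le/\ge$, and $0/1$.

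The main obstacle I expect is the reduction to positional deterministic opponent strategies and making the "$v$ subharmonic $\Rightarrow$ $v(q) \le \mathbb{E}[v_H \mid q]$" step fully rigorous against an arbitrary (history-dependent, randomized) $\s_\B$: with $\s_\A$ positional the reachable part of the game is still a genuine Markov decision process for Player $\B$, so one should either invoke that optimal (here, $W(\colFunc)$-minimizing) MDP strategies can be taken positional deterministic — which is exactly what the definition of $H_{\s_\A}$ anticipates — or run the supermartingale argument directly on $(v(\pi_n))_n$ under $\prob{\Aconc,q_0}{\s_\A,\s_\B}$, using that $v(\pi_n)$ is a bounded submartingale, hence converges a.s., and identifying the limit with $v_H$ on the absorbing BSCC via the constancy established in the first part. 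Since the statement notes this can be deduced from \cite[Lemma 34]{BBSArXivICALP}, I would lean on that for the delicate measure-theoretic bookkeeping and keep the above as the conceptual skeleton.
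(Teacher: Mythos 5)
Your proof is correct, but it takes a more self-contained route than the paper. The paper dispatches both claims by citation: the constancy of $v$ on BSCCs is \cite[Proposition 18]{BBSCSL22}, and the guarantee part is obtained by applying \cite[Lemma 17]{BBSArXivICALP}, whose second hypothesis is verified through an end-component argument --- inside any end component $H$ of the MDP obtained by fixing $\s_\A$ with $v_H>0$, an optimal positional deterministic Player-$\B$ strategy (from \cite{DBLP:conf/soda/ChatterjeeJH04}) yields only BSCCs compatible with $\s_\A$, which by parity domination are even-colored, so the whole end component has value $1$. You instead prove everything directly: a finite maximum-principle argument (domination makes $v$ subharmonic along the chain, so the maximum propagates through the strongly connected $H$) for constancy, and then, after reducing to positional deterministic $\s_\B$ via the same MDP positional-determinacy fact of \cite{DBLP:conf/soda/ChatterjeeJH04} that the paper also leans on, an absorption-plus-bounded-submartingale computation giving $\prob{\Aconc,q_0}{\s_\A,\s_\B}[W(\colFunc)] \geq \sum_H p_H(q_0)\, v_H \geq v(q_0)$. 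What your route buys is transparency and independence from the statement of \cite[Lemma 17]{BBSArXivICALP}; what the paper's route buys is brevity and reuse of already-proved machinery (end components rather than only BSCCs). The only small imprecisions in your write-up are that the parity objective holds \emph{almost surely} (not surely) on plays absorbed in an even-colored BSCC, and that the reduction to pure positional opponent strategies deserves the one-line remark that fixing the positional $\s_\A$ turns the game into a finite MDP for Player $\B$ in which randomization over $\Act_\B^q$ gains nothing; neither affects correctness.
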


\section{Local Environment}
The goal of this section is to define simple parity games with a
single non-trivial local interaction.
This will allow us to define exactly the game forms that should be
used in parity games if one requires positional optimal strategies for both players. We first define what a (parity) environment on a given set of outcomes is. We can then define the parity game induced by such an
environment (along with a game form). 
\begin{definition}[Parity environment]
	\label{def:parity_env}
	Consider a non-empty finite set of outcomes $\outComeNF$. An
	\emph{environment} $E$ on $\outComeNF$ is a tuple
	$E := \langle c,e,p \rangle$ where $c,e \in \N$ with $c \leq e$ and
	$p: \outComeNF \rightarrow \{q_{\msf{init}}\} \uplus K_e
	\uplus [0,1]$. We let $p_{ [0,1] } := p[\outComeNF] \cap [0,1]$
	. The \emph{size} w.r.t. Player $\A$ (resp. $\B$) $\msf{Sz}_\A(E)$ (resp. $\msf{Sz}_\B(E)$) of the environment $E$ is equal to $\msf{Sz}_\A(E) := \msf{Even}(e)-c$ (resp.  $\msf{Sz}_\B(E) := \msf{Odd}(e)-c$). We denote by
	$\mathsf{Env}(\outComeNF)$ the set of all environments on the set
	of outcomes $\outComeNF$. 
\end{definition}
\begin{remark}
	\label{rmk:size_environement}
	A quick note on the size of an environment. From Player $\A$'s perspective, 
	the size of an environment $E := \langle c,e,p \rangle$, assuming that $c = 0$, is equal to: 0 if $e = 0$, 2 if $e = 1$ or $e = 2$, etc. 
	Informally, when $c = 0$, if $e = 0$ this corresponds to safety game, if $e = 1$ this corresponds to a co-Büchi game, etc. This will become more apparent with Definition~\ref{def:parity_game_from_gf} below.
\end{remark}

We can then define the simple parity game corresponding to a parity environment.
\begin{definition}[Parity game induced by an environment]
	\label{def:parity_game_from_gf}
	Consider a non-empty finite set of outcomes $\outComeNF$, a
	game form $\formNF \in \Frm(\outComeNF)$ and an environment	$E = \langle c,e,p \rangle \in \msf{Env}(\outComeNF)$. Let $Y := (\formNF,E)$. The local arena $\Aconc_{Y} = \langle Q,\fsf \rangle$ and the coloring function $\colFunc_Y$ induced by $Y$ is such that:
	\begin{itemize}
		\item $Q := \{ q_\mathsf{init} \} \cup K_e \cup p_{[0,1]}$, $Q_\msf{s} := p_{[0,1]}$, and for all $u \in p_{[0,1]}$, we set the value of the stopping state $u$ to be $u$ itself: $\msf{val}(u) \gets u$; 
		\item $\fsf(q_\msf{init}) := \formNF^p = \langle \Act_\A,\Act_\B,Q,\outCNF_p \rangle$ where, for all $\sigma_\A \in \Dist(\Act_\A)$, $\sigma_\B \in \Dist(\Act_\B)$, and $q \in Q$, we have $\outCNF_p(\sigma_\A,\sigma_\B)(q) := \outCNF(\sigma_\A,\sigma_\B)[p^{-1}[q]]$;
		\item for all $i \in \brck{0,e}$, the set $k_i$ is trivial, its only outcome is the state $\qinit$;
		\item $\colFunc_Y(q_\msf{init}) := c$ and for all $i \in \brd$, we have $\colFunc_Y(k_i) := i$.
	\end{itemize}
	For all $u \in [0,1]$, we denote by $v_Y^u: Q \rightarrow [0,1]$ the valuation such that: $v_Y^u(q_\msf{init}) = v_Y^u(k_i) := u$ for all $i \in \brd$ and $v_Y^u(x) := x$ for all $x \in p_{[0,1]}$. Furthermore, for all Player-$\A$ $\GF$-strategies $\sigma_\A \in \Sigma_\A(\formNF)$, we denote by $\s_\A^Y(\sigma_\A) \in \msf{S}_\A^{\Aconc_Y}$ the Player-$\A$ positional strategy defined by $\sigma_\A$ in the arena $\Aconc_Y$. 
	
	The game $\G_{Y}$ is then equal to $\G_{Y} := \Games{\Aconc_{Y}}{\colFunc_Y}$.
\end{definition}

\begin{example}
	Definition~\ref{def:parity_game_from_gf} above is illustrated in Figure~\ref{fig:GFEnv}
	. Note that the colors of the non-stopping states are depicted in red next to the states. 
\end{example}

As in the previous chapter, we can define the notion of Player-$\A$ $\GF$-strategy being optimal in a parity environment.
\begin{definition}[Optimal $\GF$-strategies]
	\label{def:optimal_strat_in_gf}
	Consider a non-empty finite set of outcomes $\outComeNF$, a
	game form $\formNF \in \Frm(\outComeNF)$, an environment $E =
	\langle c,e,p \rangle \in \msf{Env}(\outComeNF)$, and let $Y
	:= (\formNF,E)$. A Player-$\A$ $\GF$-strategy $\sigma_\A \in \Sigma_\A(\formNF)$ is said to be \emph{optimal} w.r.t. $Y$ if the Player-$\A$ positional strategy $\s_\A^Y(\sigma_\A)$ is optimal in $\G_Y$. The definition is analogous for Player $\B$.
\end{definition}

Given a finite set of outcomes $\outComeNF$, we can now define the game forms on $\outComeNF$ ensuring the existence of optimal strategies w.r.t. all environments. 
\begin{definition}[Game forms with optimal strategies]
	\label{def:positionally_maximizable}
	Consider a non-empty finite set of outcomes $\outComeNF$, a game form $\formNF \in \Frm(\outComeNF)$ and some $n \in \N$. 
	
	Consider a Player $\msf{C} \in \{ \A,\B \}$. The game form $\formNF$ is said to be \emph{positionally maximizable up to} $n$ w.r.t. Player $\msf{C}$ if, for each environment $E \in \msf{Env}(\outComeNF)$ with $\msf{Sz}_\msf{C}(E) \leq n$, there is an optimal $\GF$-strategy for Player $\msf{C}$ w.r.t. $(\formNF,E)$. 
	
	When this holds for both Players, $\formNF$ is said to be \emph{positionally optimizable up to $n$}. The corresponding set of game forms is denoted $\msf{ParO}(n)$. If this holds for
	all $n \in \N$, $\formNF$ is simply said to be
	\emph{positionally optimizable}, and the corresponding set of game forms is denoted $\msf{ParO}$. 
\end{definition}

\begin{remark}
	\label{rmk:parm_necessary}
	First, note that all game forms are positionally optimizable up to 0. This is because environement of size 0 induce safety games. However, there are some game forms that are not positionally maximizable w.r.t. any player up to 1. 
	
	Furthermore, by definition, from a game form $\formNF \in
	\Frm(\outComeNF)$ that is not positionally optimizable up to
	some $n \in \N$, there exists an environment $E \in
	\msf{Env}(\outComeNF)$ such that either of the players has no
	positional optimal strategy in the simple parity game $\G_{(\formNF,E)}$ where the difference between $\colFunc(q_{\msf{init}})$ and the maximum of the
	colors appearing in $\G_{(\formNF,E)}$ is at most $n$. 
\end{remark}
In the game $\G_{(\formNF,E)}$ depicted on the right of Figure~\ref{fig:GFEnv}, Player $\A$ has positional optimal strategies: it suffices to play both rows with positive probability. (This is similar for Player $\B$.) As a side remark, the game form in the left of Figure~\ref{fig:GFEnv} is positionally optimizable.

In Lemma~\ref{lem:opt_in_gf_equiv_opt_in_parity} below, we formulate more explicitly (using the notion of parity domination from
Definition~\ref{def:dominate_strongly_dominate_gurantee}) what optimal $\GF$-strategies are.

\begin{lemma}[Proof~\ref{appen:proof_prop_opt_in_gf_equiv_opt_in_parity}]
	\label{lem:opt_in_gf_equiv_opt_in_parity}
	Consider a non-empty finite set of outcomes $\outComeNF$, a game form $\formNF\in \Frm(\outComeNF)$, an environment
	$E = \langle c,e,p\rangle \in \msf{Env}(\outComeNF)$ and
	$Y := (\formNF,E)$. A Player-$\A$ $\GF$-strategy $\sigma_\A \in \Sigma_\A(\formNF)$ is \emph{optimal} w.r.t. $Y$ if and only if, letting
	$u := \MarVal{\G_Y}(q_\msf{init})$, either (i) $u = 0$, or (ii)
	the positional Player-$\A$ strategy $\s^Y_\A(\sigma_\A)$ parity
	dominates the valuation $v_Y^u$.
	
	Furthermore (ii) is equivalent to: 
	\begin{itemize}
		\item (1) the Player-$\A$ positional
		strategy $\s^Y_\A(\sigma_\A)$ dominates the valuation $v_Y^u$, i.e. $\sigma_\A$ is optimal in the game in normal form $\Games{\formNF}{v^u_Y \circ p}$; and
		\item (2) for all $b \in \Act_\B$, if $\outM_{\Games{\formNF}{\mathbbm{1}_{p^{-1}[0,1]}}}(\sigma_\A,b) = 0$
		(i.e. the probability under $\sigma_\A$ and $b$ to reach a stopping state is null), then $\max (\mathsf{Color}(\formNF,p,\sigma_\A,b) \cup \{ c \})$ is even where $\mathsf{Color}(\formNF,p,\sigma_\A,b) := \{ i \in \brd \mid 
		\outM_{\Games{\formNF}{\mathbbm{1}_{p^{-1}[k_i]}}}(\sigma_\A,b) > 0 \}$ is the set
		of colors that can be seen with positive probability under
		$\sigma_\A$ and $b$. 
	\end{itemize}
	This is symmetrical for Player $\B$.
\end{lemma}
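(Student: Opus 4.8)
The plan is to unfold the definition of optimality for a $\GF$-strategy $\sigma_\A$ w.r.t. $Y=(\formNF,E)$, which by Definition~\ref{def:optimal_strat_in_gf} means that the induced positional strategy $\s_\A^Y(\sigma_\A)$ is optimal in the parity game $\G_Y$. Since $\G_Y$ has a value $u := \MarVal{\G_Y}(q_\msf{init})$ by determinacy (Definition~\ref{def:determinacy}), I would first observe that $\MarVal{\G_Y}$ is exactly the valuation $v_Y^u$: on stopping states this holds by construction ($\msf{val}(x)\gets x$), and on the color-gadget states $k_i$, which are trivial and loop back to $q_\msf{init}$, the value equals that of $q_\msf{init}$, namely $u$. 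So "$\s_\A^Y(\sigma_\A)$ is optimal" means precisely "$\s_\A^Y(\sigma_\A)$ guarantees $v_Y^u$" in the sense of Definition~\ref{def:dominate_strongly_dominate_gurantee}.

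The key step is then to connect "guarantees $v_Y^u$" with "parity dominates $v_Y^u$". Theorem~\ref{thm:strongly_domnates_imply_guarantee} gives the ($\Leftarrow$) direction directly: if $\s_\A^Y(\sigma_\A)$ parity dominates $v_Y^u$ then it guarantees $v_Y^u$, hence is optimal. For the case distinction, if $u=0$ there is nothing to guarantee below (every strategy trivially guarantees the zero-ish valuation on the non-stopping part, and stopping states keep their values regardless of $\sigma_\A$), so (i) suffices. For the converse ($\Rightarrow$), assuming $u>0$, I would argue that optimality forces parity domination: domination of $v_Y^u$ must hold because if at $q_\msf{init}$ we had $\va_{\Games{\formNF}{v_Y^u\circ p}}(\sigma_\A) < u$, Player $\B$ could exploit this to pull the value strictly below $u$ (using that the $k_i$ loop back, so one local deviation already witnesses a loss); and the BSCC condition must hold because any odd-colored BSCC $H$ compatible with $\s_\A^Y(\sigma_\A)$ with $\min v_Y^u[H]>0$ would be a losing sink reached with positive probability against some positional deterministic $\B$-strategy, again contradicting that $\s_\A^Y(\sigma_\A)$ achieves value $u>0$ there. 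This uses the structural simplicity of $\G_Y$: the only non-trivial state is $q_\msf{init}$, and every BSCC either consists of stopping states (impossible, they're absorbing with fixed value, so really they aren't in cycles) — more precisely the only BSCCs inside $\{q_\msf{init}\}\cup K_e$ contain $q_\msf{init}$ and some subset of the $k_i$'s, whose max color is the relevant parity.

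Finally, for the "Furthermore" part, I would show (ii) $\Leftrightarrow$ (1) $\wedge$ (2). Condition (1) is literally the domination clause of parity domination restated at the single non-trivial state $q_\msf{init}$: $\s_\A^Y(\sigma_\A)$ dominates $v_Y^u$ iff $v_Y^u(q_\msf{init}) \le \va_{\Games{\fsf(q_\msf{init})}{v_Y^u}}(\sigma_\A)$, and $\fsf(q_\msf{init}) = \formNF^p$ with $v_Y^u \circ \outCNF_p = (v_Y^u\circ p)\circ \outCNF$, giving optimality of $\sigma_\A$ in $\Games{\formNF}{v_Y^u\circ p}$ (using that $u$ is the value, so "dominates" upgrades to "optimal"). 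Condition (2) is the BSCC/parity clause made concrete: against a pure $\B$-action $b$, if no stopping state is reached with positive probability then play stays forever in $\{q_\msf{init}\}\cup K_e$, the reachable colors are exactly $\{c\}\cup\mathsf{Color}(\formNF,p,\sigma_\A,b)$, this set is the max-color of the unique resulting BSCC, and requiring it to be even is exactly the even-coloredness condition; conversely every BSCC compatible with $\s_\A^Y(\sigma_\A)$ arises this way from some pure $b$, by convexity/support arguments reducing mixed $\B$-strategies to their support actions. The main obstacle I anticipate is this last reduction — carefully matching "BSCCs compatible with the positional strategy, ranging over $\B$'s positional deterministic strategies" with "pure actions $b$ such that no stopping state is reached," and handling the interplay between the support of $\sigma_\A$ and which $k_i$'s are actually reachable — but the finite, single-state structure of $\G_Y$ makes this a bookkeeping exercise rather than a conceptual one. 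The symmetry for Player $\B$ is immediate by swapping roles and replacing even by odd, $\msf{Even}$ by $\msf{Odd}$, and $>0$ by $<1$.
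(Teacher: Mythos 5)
Your proposal is sound, and for the ``Furthermore'' equivalence it coincides with the paper's argument: both exploit that the only non-trivial state of $\G_Y$ is $q_\msf{init}$, so that against a positional deterministic Player-$\B$ strategy fixing an action $b$, the only non-stopping BSCC is $\{q_\msf{init}\}\cup\{k_i \mid i\in\brd,\ \outM_{\Games{\formNF}{\mathbbm{1}_{p^{-1}[k_i]}}}(\sigma_\A,b)>0\}$, whose maximal color is $\max(\mathsf{Color}(\formNF,p,\sigma_\A,b)\cup\{c\})$; your worry about reducing mixed $\B$-strategies to pure actions is moot, since compatible BSCCs are by definition taken over positional \emph{deterministic} strategies of the opponent. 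Where you genuinely diverge is the first equivalence: the paper does not prove it at all, but imports it from \cite[Lemma 16]{BBSFSTTCS22} together with positional-deterministic optimality of the minimizer in finite parity MDPs \cite{DBLP:conf/soda/ChatterjeeJH04}, whereas you re-derive it --- the ($\Leftarrow$) direction via $\MarVal{\G_Y}=v_Y^u$ and Theorem~\ref{thm:strongly_domnates_imply_guarantee}, and the ($\Rightarrow$) direction by two exploit arguments (a one-step deficit at $q_\msf{init}$ is turned into a strict global loss because the $k_i$ loop back, and an odd-colored compatible BSCC inside $\{q_\msf{init}\}\cup K_e$ forces value $0$ against the witnessing deterministic strategy, contradicting $u>0$). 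Your route buys self-containedness, and it is viable precisely because $\G_Y$ has one non-trivial state: the ``one local deviation witnesses a loss'' step reduces to an elementary computation (if the one-step value under $(\sigma_\A,b)$ with loops priced at $u$ is below $u$, the exit probability is positive and the conditional exit value is below $u$, so the actual value against $b$ played forever is below $u$). The paper's citation buys brevity and hides exactly the MDP facts your argument uses implicitly (the Bellman equation for the value of the $\s_\A$-induced MDP and almost-sure losingness of odd BSCCs). Two points you should make explicit in a full write-up: that $\MarVal{\G_Y}=v_Y^u$ also on the $k_i$ states (deterministic loop back to $q_\msf{init}$ plus prefix-independence), and that the value of the normal-form game $\Games{\formNF}{v_Y^u\circ p}$ equals $u$ (needed for ``dominates'' to upgrade to ``optimal'' in item (1)); both are short but not free.
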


\begin{remark}
	\label{rmk:explanation_optim_gf_strat}
	Informally, this proposition states that for a Player-$\A$
	$\GF$-strategy $\sigma_\A$ to be optimal in a simple game
	$\G_Y$ with positive value, it must be the case that for every
	Player-$\B$ action $b$: either there is a positive probability
	(w.r.t. $\sigma_\A$ and $b$) to exit $q_\msf{init}$ and 
	the expected value of the stopping states visited is at
	least $u$; or the game loops on $q_\msf{init}$ with
	probability $1$, and 
	the maximum of the colors that can be seen with positive probability (w.r.t. $\sigma_\A$ and $b$) is even. In particular, if
	$c \leq \max \mathsf{Color}(\formNF,p,\sigma_\A,b)$ or if $c$ is
	odd, then $\max \mathsf{Color}(\formNF,p,\sigma_\A,b)$ is even.
\end{remark}

\section{The main theorem}
\label{sec:global_environment_global_strategy}
\begin{figure}
	\centering
	\begin{minipage}{0.3\linewidth}
		\resizebox{1.2\linewidth}{!}{\input{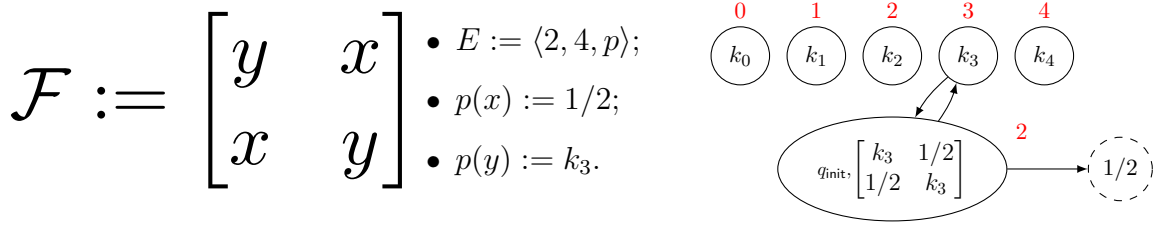}}
		\label{fig:ASimpleGF}
	\end{minipage}
	\begin{minipage}{0.24\linewidth}
		\begin{itemize}
			\item[{$\bullet$}] {\large $E := \langle 2,4,p \rangle$;}
			\item[{$\bullet$}] {\large $p(x) := 1/2$;}
			\item[{$\bullet$}] {\large $p(y) := k_3$.}
		\end{itemize}
	\end{minipage}
	\begin{minipage}{0.42\linewidth}
		\centering
		\begin{tikzpicture}
	\node[player] (cs0) at (4.7*\lenV,2.5*\lenH) {\color{red}2} ;
	\node[player1ell,scale=0.8] (s0) at (3*\lenV,2*\lenH) {$q_{\msf{init}}$,{\Large$\begin{bmatrix}
			k_3 & 1/2 \\
			1/2 & k_3
			\end{bmatrix}$}} ;
	\node[player] (c0) at (1*\lenV,4.1*\lenH) {\color{red}0} ;
	\node[player1'] (k0) at (1*\lenV,3.5*\lenH) {$k_0$} ;
	\node[player] (c1) at (2*\lenV,4.1*\lenH) {\color{red}1} ;
	\node[player1'] (k1) at (2*\lenV,3.5*\lenH) {$k_1$} ;
	\node[player] (c2) at (3*\lenV,4.1*\lenH) {\color{red}2} ;
	\node[player1'] (k2) at (3*\lenV,3.5*\lenH) {$k_2$} ;
	\node[player] (c3) at (4*\lenV,4.1*\lenH) {\color{red}3} ;
	\node[player1'] (k3) at (4*\lenV,3.5*\lenH) {$k_3$} ;
	\node[player] (c4) at (5*\lenV,4.1*\lenH) {\color{red}4} ;
	\node[player1'] (k4) at (5*\lenV,3.5*\lenH) {$k_4$} ;
	\node[player1',dashed] (s2) at (6*\lenV,2*\lenH) {\color{black}$1/2$} ;
	
	\path[-latex]  	
	(s0) edge[bend right=10] (k3)
	(k3) edge[bend right=10] (s0)
	(s0) edge (s2)
	;
\end{tikzpicture}
	\end{minipage}
	\caption{On the left, a game form on the  set of outcomes
		$\outComeNF := \{ x,y \}$, in the middle the description
		of an environment on $\outComeNF$ and on the right the parity game $\G_{(\formNF,E)}$ obtained from what is depicted on the left. The dashed state is a stopping of value $1/2$.}
	\label{fig:GFEnv}
\end{figure}

The goal of this section is to formally state the main theorem of this paper,
Theorem~\ref{thm:transfer_local_global} below. Informally, this theorem consists in extracting, for every state of a game and for each Player, a local
environment which will summarize the context of the state to the
Player, and tell her 
how to play optimally (and positionally).

Before stating this theorem, let us recall and define below some useful notations, in particular we recall the notation for value slices.
\begin{definition}[Value slice]
	Consider a parity game $\G$. For all subsets of states $S \subseteq Q$, we denote by $V_S := \{ u \in [ 0,1] \mid \exists q \in S,\; \MarVal{\G}(q) = u \}$ the finite set of values of states in $S$. Furthermore, for all $u \in V_Q$, we let $Q_u := \{ q \in Q \mid \MarVal{\G}(q) = u \}$ be the set of states whose value is $u$: it is the $u$-slice of $\G$. Finally, for all $u \in V_Q$, we let $e_u := \msf{Even}(\colFunc[Q_u])$ and $o_u := \msf{Odd}(\colFunc[Q_u])$.
\end{definition}

We also introduce the notion of positional strategies generated by an environment function: this is a strategy that, at each state, plays a $\GF$-strategy that is optimal in the corresponding environment. 
\begin{definition}[Positional strategy generated by an environment function]
	\label{def:strat_induced_by_environment}
	For all environment functions
	$\msf{Ev}: Q \rightarrow \msf{Env}(\outComeNF)$, a
	Player-$\A$ positional strategy $\s_\A$ is \emph{generated} by
	$\msf{Ev}$ if for all $q \in Q$, the $\GF$-strategy $\s_\A(q)$
	is optimal w.r.t.
	$(\fsf(q),\msf{Ev}(q))$.
	
	This is similar for Player $\B$.
\end{definition}

We can now state the main result of this chapter: Theorem~\ref{thm:transfer_local_global}.
\begin{theorem}
	\label{thm:transfer_local_global}
	Let $\G = \Games{\Aconc}{\colFunc}$ be a parity game. Assume that for all $q \in Q$, letting $u := \MarVal{\G}(q) \in [0,1]$, the game form $\fsf(q)$ is:
	\begin{itemize}
		\item positionally
		maximizable up to $e_{u} - \colFunc(q)$ w.r.t. Player $\A$; and
		\item positionally maximizable up to $o_{u} - \colFunc(q)$ w.r.t. Player $\B$.
	\end{itemize}
	Then, there is a function $\msf{Ev}_\A: Q \rightarrow \msf{Env}(\outComeNF)$ (resp. $\msf{Ev}_\B$) such that all Player-$\A$
	(resp. Player-$\B$) positional strategies $\s_\A$ (resp. $\s_\B$) generated by $\msf{Ev}_\A$ (resp. $\msf{Ev}_\B$) are optimal in $\G$; and such strategies exist.
\end{theorem}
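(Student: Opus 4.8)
The proof proceeds by induction on the structure of the value slices, reducing the general case to the case of a single value slice where all non-stopping states share one value $u$. The plan is to first observe that it suffices to construct, for each slice $Q_u$ with $u \in V_Q$, a positional optimal strategy for Player $\A$ (and symmetrically Player $\B$) within the ``$u$-slice game'' obtained by replacing every state of value $v \neq u$ with a stopping state of explicit value $v$; gluing one such positional strategy per slice yields a global positional optimal strategy, a gluing principle that follows from Theorem~\ref{thm:strongly_domnates_imply_guarantee} together with the fact that parity domination is a slice-local condition that can be verified independently in each $u$-slice. So the core task is: given a parity game in which every non-stopping state has value $u$, and every induced game form $\fsf(q)$ is positionally maximizable up to $e_u - \colFunc(q)$ w.r.t.\ Player $\A$, build a positional strategy $\s_\A$ that parity dominates $v_Y^u$, hence (by Theorem~\ref{thm:strongly_domnates_imply_guarantee}) guarantees it.

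\textbf{Layered attractor construction.} Inside a single $u$-slice, I would run the layered probabilistic-attractor computation sketched in the introduction: starting from the highest color $e$ appearing in the slice, build layers $L_e \supseteq$-attractor, then $L_{e-1}$ for the opponent, and so on down to color $0$, where at each layer the relevant player has a positional $\GF$-strategy witnessing that either a leak of expected value $\geq u$ (resp.\ $< u$) happens with positive probability, or the maximal layer index seen with positive probability has the right parity and is large enough. The existence of these local witnessing $\GF$-strategies is exactly where the hypothesis is used: the layer-$n$ game, with the other layers abstracted into colored sink states $k_i^n$, presents at each state $q \in L_n$ a game form $\fsf(q)$ inside an environment whose $\A$-size is bounded by $e_u - \colFunc(q)$ (since the highest abstract color is $e = e_u$ up to rounding to even), so positional maximizability up to that bound yields an optimal $\GF$-strategy, and Lemma~\ref{lem:opt_in_gf_equiv_opt_in_parity} translates optimality into the parity-domination / color-parity guarantee we need. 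Gluing these per-state $\GF$-strategies over $L_n$ (breaking $L_n$ into one $\formNF$-game per state, abstracting the rest of $L_n$ into one node) produces the layer-$n$ positional strategy. The environment function $\msf{Ev}_\A$ is then read off: $\msf{Ev}_\A(q)$ records the color $\colFunc(q)$, the value $u = \MarVal{\G}(q)$, the top color $e_u$, and the partition of $\fsf(q)$'s outcomes into ``$q_\msf{init}$-type'' (same layer), colored $k_i$-type (other layers, inheriting their abstract colors), and stopping values.

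\textbf{Merge/backtrack and termination.} The subtlety is that the naive layering need not stabilize: after building all layers down to $L_0$, the bottom nonempty layer $L_m$ may be ``alike'' to a higher layer $L_{m'}$ of the same parity (no escape route below it), in which case $L_m$ must be merged into $L_{m'}$; but merging changes which layer ``overrules'' which w.r.t.\ indices, so states that had migrated upward between layers of opposite parity may have to move back down, forcing a reset of all layers below the merge point and a restart of the attractor alternation. I would therefore set up the construction as an outer loop over merges, with a monotone progress measure --- e.g.\ the multiset of layer indices ordered so that each merge strictly decreases it, or simply the observation that each merge strictly reduces the number of distinct nonempty layers --- guaranteeing termination after finitely many merges, ending in a configuration with a single even layer $L_{e_u}$ containing every state of the slice, bearing true colors, with no abstract sink states, so that the layer-$e_u$ game literally equals the $u$-slice game and the computed positional strategy is optimal there. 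The invariant to carry through the outer loop is the ``key growing property'': between consecutive merges, the attractor computation from the top down to $L_n$ yields, for every even $n$, a positional $\A$-strategy of value $\geq u$ on $\cup_{i \geq n,\, i \text{ even}} L_i$ that parity dominates the restricted valuation, and dually for $\B$ on odd layers; establishing that this property is preserved across a merge (the merged layer genuinely behaves like its target because there is no lower layer to escape through) is the main obstacle, and it is where one must argue carefully that abstracting the opponent's escape options can only help, using Lemma~\ref{lem:opt_in_gf_equiv_opt_in_parity} to see that the color-parity condition at each state is robust under this abstraction. Once the single-slice case is closed, the global statement follows by the gluing principle, and the ``such strategies exist'' clause is immediate since each $\msf{Ev}_\A(q)$ is an environment for which, by hypothesis, an optimal $\GF$-strategy exists.
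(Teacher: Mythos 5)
Your route is the same as the paper's: reduce to value slices via the gluing lemma (Lemma~\ref{lem:optimal_in_all_value_slice_optimal_everywhere}, which is indeed a consequence of Theorem~\ref{thm:strongly_domnates_imply_guarantee}), then, inside a slice, run the layered attractor with abstracted layer games, use the positional-maximizability hypothesis through Lemma~\ref{lem:opt_in_gf_equiv_opt_in_parity} to witness each layer, and iterate merge/reset until a single even layer remains. However, two concrete steps in your outer loop do not hold as stated. First, your termination argument: the claim that ``each merge strictly reduces the number of distinct nonempty layers'' is false. A merge of the least layer $L_n$ into $L_{n+2}$ is accompanied by resetting the states of $L_{n+1}$ to their true colors and re-running the attractor alternation below, which can repopulate arbitrary lower layers; the number of nonempty layers is not monotone. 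The measure that works (and is the one the paper uses in Lemma~\ref{lem:metric_work}) is the vector in $\N^{e+1}$ counting, for each virtual color, how many states carry it, compared lexicographically from the top color $e$ downward: a merge strictly increases it (the count at $n+2$ grows, nothing above changes), and the attractor updates never decrease it because the local operator never lowers a state's virtual color. Your first, vaguer suggestion (a multiset ordered ``so that each merge decreases it'') is only correct if it is set up as exactly this top-down lexicographic comparison.

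Second, and more seriously, your merge step is undefined in one reachable-looking configuration: when the least nonempty virtual color is $e-1$, there is no higher layer of the same parity to merge into, and you are not in the terminal situation either, so the loop stalls. The paper has to exclude this case by a dedicated argument (Lemma~\ref{lem:minimum_faithful_cannot_be_d_minus_one}): in a completely faithful configuration with least color $e-1$, any Player-$\B$ positional strategy generated by the environments on the $(e-1)$-layer parity dominates a valuation $v^{u'}$ with $u'<u$, and since leaving that layer while staying in the slice forces the (even, maximal) color $e$, every compatible BSCC stays in the layer and is odd-colored; this yields value $<u$ from states of value $u$ in $\G^u$, a contradiction. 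This is precisely the point where the Player-$\B$ half of the hypothesis and of the invariant is indispensable, so it cannot be left implicit in the ``key growing property''; your proposal never addresses it. (Relatedly, when you ``read off'' $\msf{Ev}_\A(q)$ you should record the coloring function as it was \emph{before} $q$'s color is updated --- otherwise the generated strategies may loop on $q$ believing its color is the new, higher one, and witnessing fails; this is the subtlety isolated in Example~\ref{ex:turthful_one_step} and built into the coherence condition used in the proof of Lemma~\ref{lem:algo_preserves_faithfulness}.)
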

\begin{remark}
	\label{rmk:smaller_color_more_complex_env}
	Given some $u \in V_Q$, one can realize that the requirement at states $q,q' \in Q_u$ changes depending on the color of $q$ and $q'$. More specifically, if $\colFunc(q) < \colFunc(q')$, then the requirement at state $q$ is (a priori) stronger than the requirement at state $q'$ since the game form $\fsf(q)$ should behave well for environments of larger size than the game form $\fsf(q')$.
\end{remark}

The remainder of this section and the entirety of the next is devoted to the explanation of the construction of the environment function $\msf{Ev}_\A$ (the
construction being similar for Player $\B$), hence we are taking the
point of view of Player $\A$. First, let us argue that we can restrict ourselves to a specific $u$-slice $Q_u$ for some $u \in V_Q$. Such a restriction is properly defined (using stopping states) in Definition~\ref{def:game_restricted_value_slice} below. 
\begin{definition}[Game restricted to a $u$-slice]
	\label{def:game_restricted_value_slice}
	For all $u \in V_{Q}$, we let $\G^u$ be the concurrent game
	$\G$ where all states outside $Q_u$ are made stopping states:
	for every $q \in Q \setminus Q_u$, we set
	$\msf{val}(q) \gets \MarVal{\G}(q)$. 
	The states, game forms and coloring function on $Q_u$ are left unchanged.
\end{definition}

Interestingly, a Player-$\A$ positional strategy optimal in $\G$ can be obtained by merging appropriate positional strategies $\s_\A^u$ in the games $\G^u$ for all $u \in V_Q \setminus \{ 0 \}$, which is actually a straightforward consequence of Theorem~\ref{thm:strongly_domnates_imply_guarantee}.
\begin{lemma}[Proof~\ref{appen:proof_prop_optimal_in_all_value_slice_optimal_everywhere}]
	\label{lem:optimal_in_all_value_slice_optimal_everywhere}
	Consider a parity game $\G$. For all $u \in V_Q \setminus \{ 0 \}$, consider a positional Player-$\A$ strategy $\s_\A^u$ that parity dominates the valuation $\MarVal{\G}$ in $\G^u$. Then, the Player-$\A$ positional strategy $\s_\A$ such that $\s_\A(q) := \s_\A^u(q)$ for all $u \in V_Q \setminus \{ 0 \}$ and $q \in Q_u$ guarantees the valuation $\MarVal{\G}$ in $\G$ (i.e. it is optimal).
\end{lemma}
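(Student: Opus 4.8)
The plan is to verify the two hypotheses of the second half of Theorem~\ref{thm:strongly_domnates_imply_guarantee} for the glued strategy $\s_\A$ and the valuation $v := \MarVal{\G}$ in the full game $\G$: namely that $\s_\A$ \emph{dominates} $v$ and \emph{parity dominates} $v$; the conclusion ``$\s_\A$ guarantees $v$'' (equivalently, is optimal, since $v = \MarVal{\G}$) then follows immediately.

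First I would check domination. Fix $q \in Q$; since $0 \notin V_Q$ gives no state to worry about only if $\MarVal{\G}(q) = 0$, but actually when $u := \MarVal{\G}(q) = 0$ the inequality $v(q) = 0 \leq \va_{\Games{\fsf(q)}{v}}(\s_\A(q))$ is trivial, so assume $u \in V_Q \setminus \{0\}$, hence $q \in Q_u$ and $\s_\A(q) = \s_\A^u(q)$. In the game $\G^u$, the game form at $q$ is unchanged ($\fsf_{\G^u}(q) = \fsf(q)$), and by construction of $\G^u$ the valuation $\MarVal{\G}$ on $Q$ restricts to $\MarVal{\G^u}$ (states outside $Q_u$ are stopping states with value $\MarVal{\G}(q)$, and states inside keep their value $u$ — this is the content of Definition~\ref{def:game_restricted_value_slice} together with the fact that making non-$Q_u$ states absorbing with their own values does not change the values on $Q_u$). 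Since $\s_\A^u$ parity dominates $\MarVal{\G}$ in $\G^u$, in particular it dominates it, so $v(q) \leq \va_{\Games{\fsf(q)}{v}}(\s_\A^u(q)) = \va_{\Games{\fsf(q)}{v}}(\s_\A(q))$. This gives domination of $v$ by $\s_\A$ in $\G$.

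Next, parity domination: I must show every BSCC $H$ compatible with $\s_\A$ in $\G$ with $\min v[H] > 0$ is even-colored. By the first assertion of Theorem~\ref{thm:strongly_domnates_imply_guarantee} (which applies because we have just shown $\s_\A$ dominates $v$), $v$ is constant on $H$, say $v[H] = \{u_H\}$ with $u_H > 0$. The key observation is that $H$ must lie entirely inside a single slice $Q_{u_H}$, and moreover $H$ is reachable-within-itself using only edges present in $\G^{u_H}$ under $\s_\A^{u_H}$ against some positional deterministic Player-$\B$ strategy: indeed every $q \in H$ has $v(q) = u_H$ so $q \in Q_{u_H}$, the successors of $q$ under $\s_\A$ with positive probability stay in $H \subseteq Q_{u_H}$ (BSCC closure), and on $Q_{u_H}$ the arenas $\G$ and $\G^{u_H}$ coincide while $\s_\A$ coincides with $\s_\A^{u_H}$; so the Player-$\B$ deterministic positional strategy witnessing $H$ as a BSCC in $\G$ can be transported to $\G^{u_H}$ (it only ever picks successors inside $H$), showing $H \in H_{\s_\A^{u_H}}$ with $\min \MarVal{\G^{u_H}}[H] = u_H > 0$. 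Since $\s_\A^{u_H}$ parity dominates $\MarVal{\G}$ in $\G^{u_H}$, $H$ is even-colored, and the coloring on $Q_{u_H}$ is the same in both games, so $H$ is even-colored in $\G$. Thus $\s_\A$ parity dominates $v$ in $\G$, and Theorem~\ref{thm:strongly_domnates_imply_guarantee} yields that $\s_\A$ guarantees $v = \MarVal{\G}$, i.e.\ it is optimal.

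The main obstacle is the bookkeeping around the equality $\restr{\MarVal{\G}}{Q_u} = \restr{\MarVal{\G^u}}{Q_u}$ and the correct transport of BSCCs and of the opponent's deterministic positional strategies between $\G$ and $\G^u$ — i.e.\ arguing carefully that a BSCC of $\G$ sitting in a slice is ``the same'' object in $\G^u$, that making the off-slice states absorbing with their own values is value-preserving on the slice, and that domination in a normal-form game is insensitive to whether off-slice successors are genuine states or stopping states of the same value. Once these identifications are in place, everything reduces to a direct invocation of Theorem~\ref{thm:strongly_domnates_imply_guarantee}.
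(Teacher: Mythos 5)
Your proof is correct and follows essentially the same route as the paper's: establish domination of $\MarVal{\G}$ slice by slice, use the first part of Theorem~\ref{thm:strongly_domnates_imply_guarantee} to get that the value is constant (and positive) on any relevant BSCC, localize that BSCC inside a single slice $Q_{u_H}$ where $\s_\A$ coincides with $\s_\A^{u_H}$, invoke the parity-domination hypothesis there, and conclude with the second part of Theorem~\ref{thm:strongly_domnates_imply_guarantee}. Your extra remarks (the $\MarVal{\G}(q)=0$ case and the identification $\restr{\MarVal{\G}}{Q_u} = \restr{\MarVal{\G^u}}{Q_u}$) are harmless but not needed, since the hypothesis is stated directly in terms of $\MarVal{\G}$ in $\G^u$.
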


\section{The proof}
In Appendix~\ref{appen:property_of_the_update_function}, we give a quick overview of the technical properties and lemmas that we show in the appendix and that we will use to prove the lemmas (we will state later in this chapter) leading to the proof of Theorem~\ref{thm:transfer_local_global}.

In this section, we fix a parity game $\G = \Games{\Aconc}{\colFunc}$. In particular, the set of states $Q$ is fixed: r  ecall that $Q$ is also the set of outcomes of all game forms occurring in $\G$. Also, Lemma~\ref{lem:optimal_in_all_value_slice_optimal_everywhere} above justifies that we focus on a given $u$-slice $Q_u$ for some positive $u \in (0,1]$. We also let $e := e_u$, $o := o_u$ and $K := \{ k_i \mid i\in \brd \}$ and for all $n \in \brd$, we let $K^n := \{ k_i^n \mid i\in \brd \}$.

\subsection{Big picture of the proof}
\label{subsec:big_picture}

In order to give an idea of the steps that we take to prove Theorem~\ref{thm:transfer_local_global}, let us first consider the very simple case of finite turn-based deterministic reachability games. In this setting, computing the area $L_\A$ from which Player $\A$ has a winning strategy can be done inductively. That is, initially we set $L_\A := T$ where $T$ denotes the target that Player $\A$ wants to reach. Then, the inductive step is handled with a (deterministic) attractor: we add to $L_\A$ any Player-$\A$ state with a successor in $L_\A$ and any Player-$\B$ state with all successors in $L_\A$. After finitely many steps, there is no more state to add in $L_\A$: this exactly corresponds to the states from which Player $\A$ has a winning strategy.

Computing a single attractor is not merely enough to take into account the intricate behavior of parity objectives, which is what Theorem~\ref{thm:transfer_local_global} deals with
. Therefore, we are going to iteratively compute several layers of (virtual) colors, 
with a local update to change the (virtual) color (and therefore the layer it belongs to) of a state. This local update can be seen as an attractor except 
in a concurrent stochastic setting. 
Hence, when we update the (virtual) color of a state, we take into account the 
concurrent interaction of the players at each state along with the probability to see stopping states or states with different (virtual) colors. 
We define this local update in Subsection~\ref{subsec:local_operator}. Let us describe below the steps that we take to capture the behavior of the parity objective. 


We compute layers of successive probabilistic attractors with leaks towards the stopping states. Although we compute a strategy, e.g., for Player $\A$, we alternate players to build layers, then move the last non-empty layer into the closest layer with same parity, then backtrack the attractor computation from this layer downwards, and start over again the full attractor computation on the new layer structure. In a more concrete way, let us assume below that the highest color in the $u$-slice is $6$. We proceed as follows:
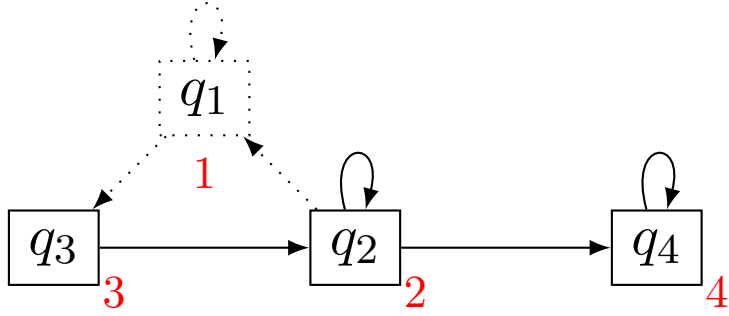
\begin{figure}
	\centering
	\resizebox{0.65\linewidth}{!}{\begin{tikzpicture}
	\node[draw=black,rectangle,scale=1] (q3) at (0*\lenV,0*\lenH) {$q_3$} ;
	\node[player] (c3) at (0.4*\lenV,-0.3*\lenH) {\color{red}3} ;
	\node[draw=black,rectangle,scale=1] (q2) at (2*\lenV,0*\lenH) {$q_2$} ;
	\node[player] (c2) at (2.4*\lenV,-0.3*\lenH) {\color{red}2} ;
	\node[draw=black,rectangle,scale=1] (q4) at (4*\lenV,0*\lenH) {$q_4$} ;
	\node[player] (c4) at (4.4*\lenV,-0.3*\lenH) {\color{red}4} ;
	\node[draw=black,rectangle,scale=1,dotted] (q1) at (1*\lenV,1*\lenH) {$q_1$} ;
	\node[player] (c1) at (1*\lenV,0.5*\lenH) {\color{red}1} ;
	
	\path[-latex]  	
	(q3) edge (q2)
	(q2) edge (q4)
	(q2) edge[loop above] (q2)
	(q4) edge[loop above] (q4)
	(q2) edge[dotted] (q1)
	(q1) edge[dotted] (q3)
	(q1) edge[loop above,dotted] (q1)
	
	;
\end{tikzpicture}}
	\caption{A (deterministic turn-based) game.}
	\label{fig:simpleGameThreeStatesIntro}  
\end{figure}
\begin{enumerate}
	\item Add the states colored with $6$ to layer $L_6$.
	
	\item Recursively add to $L_6$ the states where Player $\A$ can guarantee that with positive probability (pp) either a leak towards stopping states occurs now and its expected explicit value is $u$ or more ($\msf{Leak}_{\geq u}$), or with pp the next state is in $L_6$, i.e. with pp color $6$ or $\msf{Leak}_{\geq u}$ will occur.
	
	\item Add the remaining states colored with $5$ to layer $L_5$.
	
	\item Recursively add to $L_5$ the states where Player $\B$
	can guarantee that either $\msf{Leak}_{<u}$ occurs now with
	pp, or the next state is surely not in $L_6$ and with pp in
	$L_5$;  i.e. if $\msf{Leak}_{<u}$ occurs with probability $0$, 
	then color
	$6$ will not occur but color $5$ will eventually occur with pp. 
	
	\item Add the remaining states colored with $4$ to layer $L_4$.
	
	\item Recursively add to $L_4$ the states where Player $\A$ can guarantee that either $\msf{Leak}_{\geq u}$ will occur with pp, or the maximal layer index of the next states seen with pp is $4$ or $6$, i.e. even and at least $4$.
	
	\item And so on, from color $3$ to $0$. The layers so far only give information about what can happen at finite horizon.
	
	\item For instance, from $L_2$, Player $\A$ can guarantee that either $\msf{Leak}_{\geq u}$ will occur with pp, or the maximal color that will be seen with pp 
	is in $\{2,4,6\}$. Now, if e.g. $L_0 = L_1 = \emptyset$, we merge $L_2$ into $L_4$. This is, arguably, the most surprising step, let us try to give a conceptual intuition behind this step, we will then illustrate it on a concrete example. Consider what happens in the $L_2$ layer, assuming $L_0 = L_1 = \emptyset$. From states in that layer, either:
	\begin{itemize}
		\item $\msf{Leak}_{\geq u}$ occurs with pp; 
		\item otherwise, $L_3 \cup L_4 \cup L_5 \cup L_6$ occurs with pp, and the maximum index seen with pp is even;
		\item otherwise, the game loops surely in $L_2$, and with pp the maximum color seen is $2$. In that case, Player $\A$ wins the (real) parity game almost-surely.
	\end{itemize}
	In other words, either the game loops in $L_2$ and Player $\A$ wins, or what happens is alike to what happens in the layer $L_4$. In Figure~\ref{fig:simpleGameThreeStatesIntro} without dotted state $q_1$ where Player $\B$ plays alone, $L_4 = \{q_4\}$, then $L_3 = \{q_3\}$, then $L_2 = \{q_2\}$, but if the play stays in $L_2$, Player $\A$ wins, so Player $\B$ may just as well go to $q_4$ since there is no $L_0,L_1$ below to escape. 
	However, in Figure~\ref{fig:simpleGameThreeStatesIntro} with dotted state $q_1$, Player $\B$ can leave $L_2$ via $L_1 \neq \emptyset$, avoid color $4$, and win. There $L_2$ and $L_4$ are not alike. Hence, we do not merge them. 
	
	\item Earlier, some states of color $1$ may have been added to $L_3$ since $L_3$ "overruled" $L_2$ w.r.t indices, but since $L_2$ has just been merged into $L_4$, it now overrules $L_3$, so some states from $L_3$ may have to go back to $L_1$. Therefore we reset the layers below $L_4$ and repeat the above attractor alternation 
	all over again, until all the states are eventually in $L_6$ as we shall prove.  
\end{enumerate}
The key property that is growing throughout the above computation and will hold in the final $L_6$ involves layer games: the $L_n$-game is derived from the $u$-slice by abstracting each $L_i$ with $i \neq n$ via one state $k_i^n$ from which the player who dislikes the parity of $n$ chooses any next state in $L_n$, making it harder for the other to win. If $i > n$ then $k_i^n$ is $i$-colored, else $(n-1)$-colored, also making it harder for the other to win. And states in $L_n$ bear their true colors. See for instance Figure~\ref{fig:extracted_game}. The $L_n$-game is only seemingly harder to win: it is actually equivalently hard, but its useful properties are easier to prove.

The key growing property is as follows: between two merges
, the attractor computation from the top layer down to $L_n$ ensures that Player $\A$ has a positional strategy of value at least $u$ in each $L_i$ for even $i \geq n$, and Player $\B$ less than $u$ for odd $i \geq n$. In the very end, there is only one even layer with all states bearing their true colors, and no abstract states, so the last layer game equals the $u$-slice game, for which we have thus computed a positional optimal strategy.

Let us hint at how to show positional optimality in the $L_n$-games when it holds: we break $L_n$ each into one simple parity game built on $\fsf(q)$ per state $q$ in $L_n$, abstracting the other states in $L_n$ into one. Our theorem assumption yields an optimal $\GF$-strategy for Player $\A$ or $\B$ in the simple parity game. Gluing them does the job. 

Now recall the above example of computation, where Player $\A$ for even indices, and Player $\B$ for odd ones, guarantees (assuming neither $\msf{Leak}_{\geq u}$ nor $\msf{Leak}_{< u}$ occurs): 
\begin{itemize}
	\item in $L_6$, that a next state is of color 6 with pp;
	\item in $L_5$, that the next state is surely not of color 6, and with pp a next state is of color 5;
	\item in $L_4$, that a next state is of color 6 with pp or surely the next state is not of color 5, and with pp a next state is of color 4;
	\item in $L_3$, that the next state is surely not of color 6 and that the next state is of color 5 with pp or surely the next state is not of color 4, and with pp a next state is of color 3; etc.
\end{itemize}
One can realize that the furthest the layer is from the maximal color (that is 6), the more complex the requirement is at that layer. That is why the strength of our assumption on the game form induced at some state increases with the difference between maximal true color in the $u$-slice and the true color (at most $n$) of the state, as stated in Remark~\ref{rmk:smaller_color_more_complex_env}. 
We will discuss this further in the next chapter. In particular, we will show that there is an infinite strict hierarchy between these assumptions. 


\subsection{Extracting an environment function from a parity game}
Once restricted to the game $\G^u$, the method that we use to prove Theorem~\ref{thm:transfer_local_global}
consists in iteratively building a pair of (virtual) coloring
and environment functions ensuring a nice property (namely
\emph{faithfulness}, defined below in
Definition~\ref{def:trithful_env_coloring_func}). For the remainder of this section, we illustrate the definitions and lemmas on the game depicted in Figures~\ref{fig:InitialGame} and~\ref{fig:SomeStepsGame}.
\begin{figure*}
	\hspace*{-0.5cm}
	\begin{minipage}{0.48\linewidth}
		\hspace*{-1cm}
		\centering
		\includegraphics[scale=0.4]{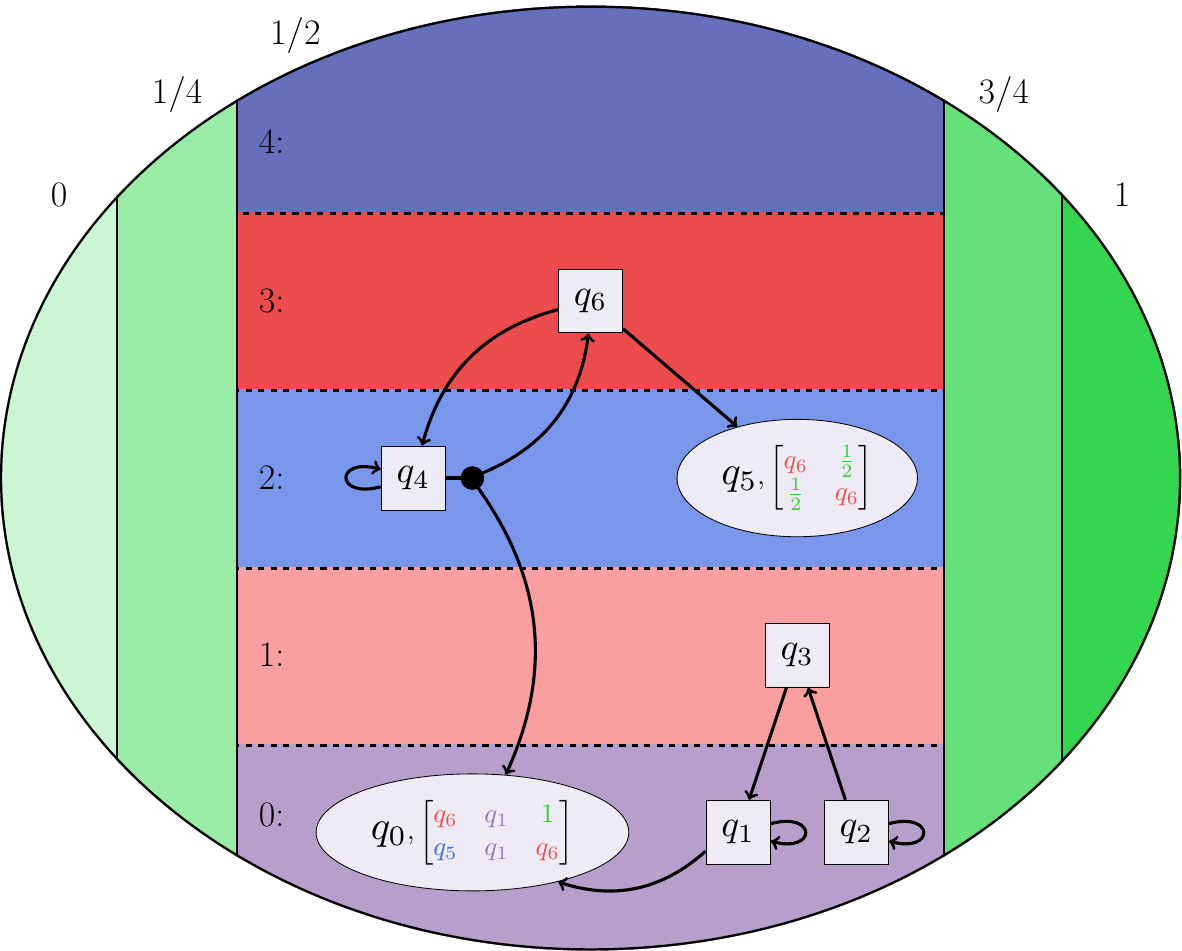}
		\caption{The depiction of a game restricted to the $1/2$-slice $Q_{1/2}$ with the initial coloring function $\colFunc$.}
		\label{fig:InitialGame}  
	\end{minipage}
	\hspace{1cm}
	\begin{minipage}{0.48\linewidth}
		\centering
		\includegraphics[scale=0.4]{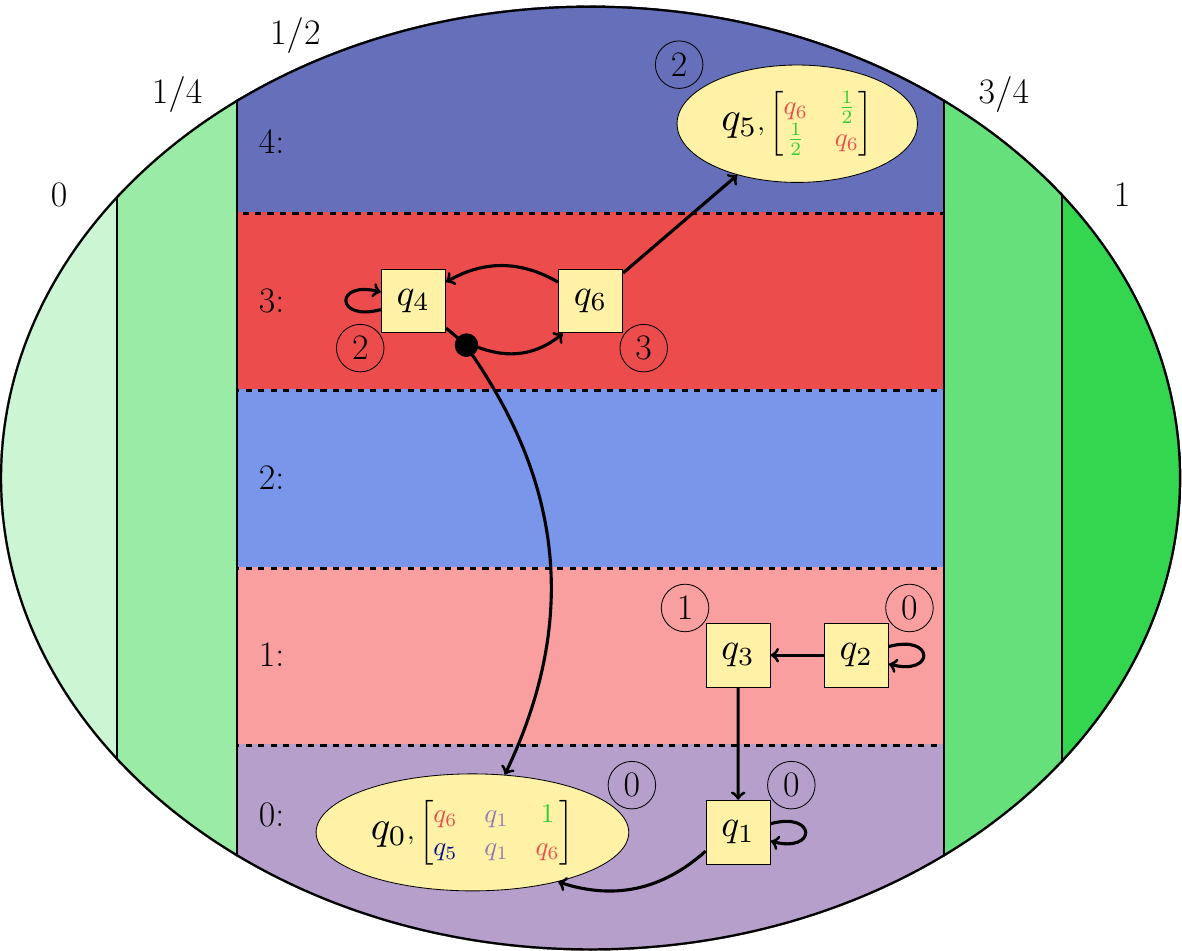}
		\caption{The same game restricted to $Q_{1/2}$ with a different coloring function $\colVirt$.}
		\label{fig:SomeStepsGame}  
	\end{minipage}
\end{figure*}

\begin{example}
	We explain the notations used to depict this game (it is in
	fact the same arena in both Figures~\ref{fig:InitialGame}
	and~\ref{fig:SomeStepsGame}, 
	with different coloring functions -- real or virtual). 
	On the sides in green are the slices $Q_0,Q_{1/4},Q_{3/4}$ and $Q_1$ from left to right. We focus on the central slice $Q_{1/2}$. In $Q_{1/2}$, there are seven states, five of which (the square-shaped ones) are turn-based for Player $\B$, 
	that is, Player $\A$ has only one available action. On the other hand, the two circled-shaped states $\leta$ and $\letf$ are \textquotedblleft truly\textquotedblright{} concurrent in the sense that both players have several actions available. Furthermore, note that only one stochastic outcome (of a local interaction) is drafted as a black dot: from $\lete$, Player $\B$ may either loop on $\lete$ or go with equal probability to $\leta$ and $\letg$. 
	The other arrows lead to a single state and the outcomes of the game forms in $\leta$ or $\letf$ is a single state or a value: $1$ or $1/2$. These formally 
	refer to a (distribution over) stopping states outside of the $1/2$-slice $Q_{1/2}$. 
	The horizontal layers depict the colors of the states. In Figure~\ref{fig:InitialGame}, the coloring function considered is the initial one $\colFunc$ whereas in Figure~\ref{fig:SomeStepsGame} we have depicted a (virtual) coloring function $\colVirt$. For instance, $\colFunc(\letg) = 3$ whereas $\colFunc(\letf) = 2$. Similarly, $\colVirt(\letg) = 3$ whereas $\colVirt(\letf) = 4$. Note that, in Figure~\ref{fig:SomeStepsGame}, the real colors (given by $\colFunc$) are reminded next to some states with circled numbers. 
	Finally, note that $e := e_{1/2} = 4$. 
\end{example}

Given a (virtual) coloring function, we need to 
extract local environments from the parity game $\G$, which 
summarize how the Players see their neighboring states via the
virtual coloring function.
This is (partly) done in Definition~\ref{def:probability_extracted_parity_game}.
\begin{definition}[Probability function extracted from an arena and a
	(virtual) coloring function]
	\label{def:probability_extracted_parity_game}
	Consider 
	a virtual coloring function $\colVirt: Q_u \rightarrow \brd$. 
	For all $n \in \brd$, we let $Q_n := \colVirt^{-1}[n]$. We then define the map $p_{n,\colVirt}: Q \rightarrow Q_n \uplus K^n \uplus V_{Q \setminus Q_u}$ such that, for all $q \in Q$:
	\begin{itemize}
		\item if $q \in Q_n$, $p_{n,\colVirt}(q) := q \in Q$;
		\item if $q \in Q_u \setminus Q_n$, 
		$p_{n,\colVirt}(q) := k^n_{\colVirt(q)} \in K^n$;
		\item if $q \in Q \setminus Q_u$, $p_{n,\colVirt}(q) := 
		\MarVal{\G}(q) \in V_{Q \setminus Q_u}$.
	\end{itemize}
\end{definition}


Given a (virtual) coloring function $\colVirt: Q_u \rightarrow \brd$  and a color $n \in \brd$, we can now extract a smaller parity game from $\G$ where the states with non-trivial game forms are the states in $\colVirt^{-1}[n]$, the states in $Q \setminus Q_u$ are stopping states and 
the arena loops back to $\colVirt^{-1}[n]$ when a state in $Q_u
\setminus \colVirt^{-1}[n]$ is seen. This is done below in Definition~\ref{def:parity_game_from_set_of_gfs}.
\begin{definition}[Parity game extracted from the $u$-slice
	]
	\label{def:parity_game_from_set_of_gfs}
	Consider a virtual coloring function $\colVirt: Q_u \rightarrow \brd$ and a color $n \in \brd$. Let $\msf{C} \in \{ \A,\B \}$ be a Player: $\A$ if $n$ is odd and $\B$ if $n$ is even. The arena $\Aconc_{\colVirt}^{n} = \langle Q',\fsf' \rangle$ along with the coloring function $\colVirt_n: Q' \rightarrow \N$ are such that, denoting $Q_n := \colVirt^{-1}[n]$:
	\begin{itemize}
		\item $Q' := Q_n \cup K^n \cup V_{Q \setminus Q_u}$ where 
		all $x \in V_{Q \setminus Q_u}$ are stopping states with $\msf{val}(x) \gets x$;
		\item for all $q \in Q_n$, $\fsf'(q) := \fsf(q)^{p_{n,\colVirt}} = \langle \Act_\A^q,\Act_\B^q,Q',\outCNF_{p_{n,\colVirt}} \rangle$ where, for all $\sigma_\A \in \Act_\A^q$, $\sigma_\B \in \Act_\B^q$, and $q \in Q$, we have $\outCNF_{p_{n,\colVirt}}(\sigma_\A,\sigma_\B)(q) := \outCNF(\sigma_\A,\sigma_\B)[p_{n,\colVirt}^{-1}[q]]$;
		\item for all $k \in K^n$, we set $\fsf'(k)$ as a Player-$\C$ state whose outcomes are all the states in $Q_n$;
		\item for all $q \in Q_n$, we let $\colVirt_n(q) := \colFunc(q)$ and for all $i \in \brd$, we have $\colVirt_n(k_i^n) := \max(i,n-1)$.
	\end{itemize}
	For $t \in [0,1]$, we define the valuation $v_{n,\colVirt}^{t}: Q' \rightarrow [0,1]$: $v_{n,\colVirt}^{t}[Q_n \cup K^n] := \{ t \}$ and for all $x \in V_{Q \setminus Q_u}$, $v_{n,\colVirt}^{t}(x) := x$.
	
	The game $\Zone_{\colVirt}^{n}$ is then equal to $\Zone_{\colVirt}^{n} = \Games{\Aconc^{n}_{\colVirt}}{\colVirt_n}$
	.
\end{definition}

\begin{remark}
	\label{rmk:explain_game_extracted_layer}
	First, the notation $\Zone^n_{\colVirt}$ comes from the fact that
	the game is extracted for the $n$-colored layer w.r.t. the coloring
	function $\colVirt$. The idea behind
	Definition~\ref{def:parity_game_from_set_of_gfs} is the following:
	the states of interest are those of $Q_n$, that is, those for which
	the virtual color given by $\colVirt$ is $n$. Note however that the
	colors of these states in $\Zone^n_{\colVirt}$ are given by the real
	coloring function $\colFunc$. On the other hand, for all $i \in \brd$, the state $k_i^n$ in $\Zone^n_{\colVirt}$ corresponds to the states in $\G^u$ colored with $i$ w.r.t. $\colVirt$. 
	In the case where $n$ is even, as formally defined later in Definition~\ref{def:env_guarantee_value_and_color}, we will require that any Player-$\A$ positional strategy generated by a given environment has value at least $u$, in the game $\Zone^n_{\colVirt}$, from all states in $Q_n$. However, all states $k^n_i$ for $i \in \brd$ are Player-$\B$'s, who can then choose to loop back to any state in $Q_n$. Therefore, given a Player-$\A$ positional strategy $\s_\A$, if the game cannot exit to any stopping state, for the strategy $\s_\A$ not to have value 0, the game may loop on some $k_i^n$ only at the condition that the highest color seen with positive probability is even.
	In addition, note that the color of the state $k^n_i$ for $i \in \brck{0,n-1}$ is $n-1$ (which is odd). Hence, all other things being equal, the game is harder for Player $\A$ when $n = 4$ than when $n=2$ or $0$. Finally, note that, when $n$ is odd, we will require that any Player-$\B$ positional strategy generated by a given environment has value less than $u$
	.
\end{remark}

\begin{figure}
	\centering
	\resizebox{0.4\linewidth}{!}{\begin{tikzpicture}
	\node[player] (c3) at (0*\lenV,2*\lenH) {\color{red}3} ;
	\node[player2',scale=0.8] (k3) at (0*\lenV,1.25*\lenH) {$k_3^3$} ;
	\node[player] (c4) at (2*\lenV,2*\lenH) {\color{red}4} ;
	\node[player2',scale=0.8] (k4) at (2*\lenV,1.25*\lenH) {$k_4^3$} ;
	\node[draw=black,rectangle,scale=1,minimum size=6mm] (q4) at (0*\lenV,0*\lenH) {$q_{4}$} ;
	\node[player] (c2) at (-0.5*\lenV,0.5*\lenH) {\color{red}2} ;
	\node[fill=black,draw=black,circle,scale=0.1,minimum size=6mm] (nextq4) at (0*\lenV,-0.6*\lenH) {};
	\node[draw=black,rectangle,scale=1,minimum size=6mm] (q6) at (2*\lenV,0*\lenH) {$q_{6}$} ;
	\node[player] (c2) at (2.5*\lenV,0.5*\lenH) {\color{red}3} ;
	\node[player] (c0) at (-0.1*\lenV,-1.35*\lenH) {\color{red}2} ;
	\node[player2',scale=0.8] (k0) at (-0.5*\lenV,-1*\lenH) {$k_0^3$} ;
	\node[player] (c1) at (1.4*\lenV,-1.35*\lenH) {\color{red}2} ;
	\node[player2',scale=0.8] (k1) at (1*\lenV,-1*\lenH) {$k_1^3$} ;
	\node[player] (c2) at (2.9*\lenV,-1.35*\lenH) {\color{red}2} ;
	\node[player2',scale=0.8] (k2) at (2.5*\lenV,-1*\lenH) {$k_2^3$} ;
	
	\path[-latex]  
	(q6) edge (k4)
	(q6) edge[bend right] (q4)
	(q4) edge[-] (nextq4)
	(q4) edge[loop left] (q4)
	(nextq4) edge (k0)
	(nextq4) edge[bend left=10] (q6)
	;
\end{tikzpicture}}
	\caption{The game $\Zone^{3}_{\colVirt}$. For
		readability, exiting arrows from
		$k_0^3,k_1^3,k_2^3,k_3^3$ and $k_4^3$ are not
		depicted: they would all loop back to both $\lete$
		and $\letg$.}
	\label{fig:extracted_game}
\end{figure}
\begin{figure}
	\centering
	\resizebox{0.6\linewidth}{!}{\begin{tikzpicture}
	\node[player] (cs0) at (5.2*\lenV,1.4*\lenH) {\color{red}0} ;
	\node[player1ell] (s0) at (3*\lenV,2*\lenH) {$q_{\msf{init}}$,{\Large$\begin{bmatrix}
			k_3 & k_0 & 1\\
			k_4 & k_0 & k_3
			\end{bmatrix}$}} ;
	\node[player] (c0) at (0.5*\lenV,3.7*\lenH) {\color{red}0} ;
	\node[player1'] (k0) at (0.5*\lenV,3*\lenH) {$k_0$} ;
	\node[player] (c1) at (1.6*\lenV,4.2*\lenH) {\color{red}1} ;
	\node[player1'] (k1) at (1.6*\lenV,3.5*\lenH) {$k_1$} ;
	\node[player] (c2) at (3*\lenV,4.5*\lenH) {\color{red}2} ;
	\node[player1'] (k2) at (3*\lenV,3.8*\lenH) {$k_2$} ;
	\node[player] (c3) at (4.3*\lenV,4.2*\lenH) {\color{red}3} ;
	\node[player1'] (k3) at (4.3*\lenV,3.5*\lenH) {$k_3$} ;
	\node[player] (c4) at (5.5*\lenV,3.7*\lenH) {\color{red}4} ;
	\node[player1'] (k4) at (5.5*\lenV,3*\lenH) {$k_4$} ;
	\node[player1',dashed] (s2) at (6*\lenV,2*\lenH) {\color{black}$1$} ;
	
	\path[-latex]  	
	(s0) edge (k0)
	(k0) edge[bend right=10] (s0)
	(s0) edge[bend right=10] (k3)
	(k3) edge[bend right=10] (s0)
	(s0) edge[bend right=10] (k4)
	(k4) edge[bend right=0] (s0)
	(s0) edge (s2)
	;
\end{tikzpicture}}
	\caption{The game $\G_{q_0,\colVirt}^0$ with $\colVirt$ the coloring function depicted in Figure~\ref{fig:SomeStepsGame}.}
	\label{fig:extracted_environment}
\end{figure}

\begin{example}
	\label{ex:extracted_layer}
	The game $\Zone^{3}_{\colVirt}$ is partly depicted in 
	Figure~\ref{fig:extracted_game} (the virtual coloring function
	$\colVirt$ being the one depicted in
	Figure~\ref{fig:SomeStepsGame}). The colors of the states are
	depicted in red
	. 
	Although the arrows are not depicted, from all states $k_0^3,k_1^3,k_2^3,k_3^3$ and $k_4^3$ Player $\A$ can decide to which state among $\{ q_4,q_6 \}$ to loop back (since $n = 3$ is odd). 
	In an even-colored layer, it would have been Player $\B$ to decide.
\end{example}


Given a virtual coloring function, we also associate a local environment with each state.

\begin{definition}[Local environment induced by a virtual coloring function
	and a color]
	\label{def:local_env_induced_coloring_func}
	Consider a state $q \in Q_u$ and a coloring function
	$\colVirt: Q_u \rightarrow \llbracket 0,e \rrbracket$. We define $p_{q,\colVirt}: Q \rightarrow \{ \qinit \} \cup K_e \cup [0,1]$ similarly to how we define $p_{n,\colVirt}$ in Definition~\ref{def:probability_extracted_parity_game}. That is, for all $q' \in Q$, we have: 
	\begin{itemize}
		\item if $q' = q$, $p_{q,\colVirt} := \qinit$;
		\item if $q' \in Q_u \setminus \{ q \}$, $p_{q,\colVirt} := k_{\colVirt(q')} \in K_e$;
		\item if $q' \in Q \setminus Q_u$, $p_{q,\colVirt}(q) := \MarVal{\G}(q) \in [0,1]$.
	\end{itemize}
	
	Then, for all $n \in \llbracket 0,e \rrbracket$, the environment
	$E^n_{q,\colVirt}$ associated with state $q$ w.r.t. $\colVirt$
	and $n$ is such that
	$E^n_{q,\colVirt} := \langle \max(c_n,\colVirt(q)),e 
	,p_{q,\colVirt} \rangle$ where $c_n = n+1$ if $n$ is odd
	and $c_n := n-1$ if $n$ is even. We say that the coloring function
	$\colVirt$ is \emph{associated} with the environment
	$E^n_{q,\colVirt}$. 
	
	The corresponding (local) game
	$\G_{(\fsf(q),E^n_{q,\colVirt})}$ (see
	Definition~\ref{def:parity_game_from_gf}) is denoted
	$\G_{q,\colVirt}^n$. For all $x \in [0,1]$, we set
	$v_{q,\colVirt}^x := v^x_{(\fsf(q),E^0_{q,\colVirt})}$ (see
	Definition~\ref{def:parity_game_from_gf}).
\end{definition}
The definition of $c_n$ may seem ad hoc. We give an explanation of this definition below in the next subsection.
\begin{example}
	\label{ex:local_env}
	The game $\G^n_{q_5,\colFunc}$ is depicted on the right of	 Figure~\ref{fig:GFEnv} for $n=0,1,2$. However, 
	if $n=3$, 
	the color of $q_\msf{init}$ would be 4, and if $n=4$, it would be 3
	. The game $\G^n_{q_0,\colVirt}$ is depicted in Figure~\ref{fig:extracted_environment} for $n=0$. However, if $n=1$, the color of $q_\msf{init}$ would be 2, if $n=2$, the color would be 1, if $n=3$, the color would be 4 and if $n=4$ the color would be 3.
\end{example}

\subsection{Local Operator}
\label{subsec:local_operator}
We want to define a way to update a (virtual) coloring function
$\colVirt$. This will be done via a local operator mapping a given
state $q$ to the best color $k$ for which Player $\A$ can achieve the
value $u$ in the corresponding local parity game
$\G^k_{q,\colVirt}$. Note that \textquotedblleft
best\textquotedblright{} is to be understood considering an ordering
compatible with the parity objective. Specifically, taking the
point-of-view of Player $\A$, any even number is better than any odd
number, and when they increase, odd numbers get worse whereas 
even numbers get better
. This induces a new ordering
.
\begin{definition}[Parity order]
	We define a total strict order relation $\prec_{\msf{par}}$ on $\N$ such that, for all $m,n \in \N$, we have $m \prec_{\msf{par}} n$ if: $m$ is odd and $n$ is even; or $m > n$ and $m$ and $n$ are odd; or $m < n$ and $m$ and $n$ are even.
	
\end{definition}

\begin{definition}[Local operator]
	\label{def:local_operator}
	Consider a state $q \in Q_u$ and a (possibly virtual) coloring
	function $\colVirt: Q_u \rightarrow \brd$. The color
	$\newC(q,\colVirt) \in \N$ induced by $\colVirt$ at $q$ is defined by:
	\begin{displaymath}
	\newC(q,\colVirt) := \max_{\prec_{\msf{par}}} \{ n \in \brd \mid \MarVal{\G_{q,\colVirt}^n}(q_{\msf{init}}) \geq u \}
	\end{displaymath}
\end{definition}
The meaning of a new virtual color $n$ assigned to a state $q$ via
$\newC$ is the following: in the game
$\G^u$ with the coloring function $\colVirt$, from state $q$ and in at most one
step, the highest color w.r.t. $\colVirt$ seen with positive
probability when both players play optimally is $n$ (and no
stopping state is seen).
%

\label{eplain:c_n}
Let us now explain the choice of $c_n$ in
Definition~\ref{def:local_env_induced_coloring_func}. In a local
environment parameterized by $n$, the integer $n$ induces a shifted
parity objective for Player $\A$: her objective is that the maximal
color seen infinitely often is at least $n$
w.r.t. $\prec_{\msf{par}}$; in particular $n=0$ induces the usual
parity objective. The value $c_n$ encodes that winning
condition. For instance, if $n=2$, assuming $\colVirt(q)=0$ for
simplicity, then $c_n = 1$, which implies that seeing $0$ infinitely
often is not enough, but seeing $2$ infinitely often is enough to
win. Similarly, if $n = 1$, $c_n = 2$, which implies that seeing $1$ infinitely
often is now enough to win, but seeing $3$ infinitely often is still losing.

\begin{remark}
	\label{rmk:explain_new_col}
	Assume for instance that $\newC(q,\colVirt) = 2$ for some state $q \in Q$ and some (virtual) coloring function $\colVirt$\footnote{In
		particular, it must be the case that $\colVirt(q) \le 2$, see
		Proposition~\ref{prop:update_color_at_least_preivous_color}.}. In particular, Player $\A$ has a $\GF$-strategy $\sigma_\A$ optimal w.r.t. $(\formNF,E^2_{q,\colVirt})$, which yields a value at least $u$, and Player $\B$ has a $\GF$-strategy
	$\sigma_\B$ optimal w.r.t. $(\formNF,E^4_{q,\colVirt})$, which yields
	a value less than $u$. Remember that all games
	$\G^k_{q,\colVirt}$ (where $k$ ranges over $\brd$) share the same
	structure and that only the color of $q_\msf{init}$ changes.
	Consider what happens in the game $\G^0_{q,\colVirt}$ (where the state
	$q_\msf{init}$ is colored by $\colVirt(q)$) when playing strategies
	$\sigma_\A$ and $\sigma_\B$. Recalling
	Remark~\ref{rmk:explanation_optim_gf_strat}, the game cannot exit to
	any stopping state since the expected value of the stopping states reached
	would be both at least $u$ (since $\sigma_\A$ is optimal in
	$\G^2_{q,\colVirt}$) and less than $u$ (since $\sigma_\B$ is
	optimal in $\G^4_{q,\colVirt}$); hence this cannot
	happen. Furthermore, if some color of value at least
	$3$ is seen with positive probability, then the highest such color
	must be both even (since $\sigma_\A$ is optimal in
	$\G^2_{q,\colVirt}$) and odd (since $\sigma_\B$ is optimal in
	$\G^4_{q,\colVirt}$). In fact, the highest color seen with positive
	probability in $\G^0_{q,\colVirt}$ under $\sigma_\A$ and $\sigma_\B$ is $2$ and neither of the players can do better
	(w.r.t. the ordering
	$\prec_\msf{par}$). 
	From this, we can infer the
	semantics 
	of a virtual color $n$ assigned to a state $q$ via operator $\newC$
	(i.e. a color given by a virtual coloring function): in the game
	$\G^u$ with coloring function $\colVirt$, from state $q$ and in at most one
	step, the highest color w.r.t. $\colVirt$ 
	seen with positive probability is $n$ 
	(and no stopping state can be seen).
	%
\end{remark}

Let us exemplify this operator on an example.
\begin{example}
	\label{ex:compute_new_col}
	First, consider Figure~\ref{fig:GFEnv} and let us compute $\newC(q_5,\colFunc)$. We can realize that, regardless of the color of state $q_\msf{init}$, Player $\A$ can (positionally) play both rows with positive probability and ensure reaching (almost-surely) the stopping state $1/2$. In fact, for all $n \in \brck{0,4}$, we have $\MarVal{\G_{q_5,\colFunc}^n}(q_{\msf{init}}) = 1/2$. Hence, $\newC(q_5,\colFunc) = 4$.		
	
	Consider now Figure~\ref{fig:extracted_environment} and let us compute $\newC(q_0,\colVirt)$. As mentioned in Example~\ref{ex:local_env}, the game $\G^4_{q_0,\colVirt}$ corresponds to the game depicted in Figure~\ref{fig:extracted_environment} except that $q_\msf{init}$ is colored with 3. One can realize that, with this choice (of coloring of the state $q_\mathsf{init}$), if the highest color $i \in \brck{0,4}$ such that $k_i$ is seen infinitely often is such that $i \prec_\msf{par} 4$, then Player $\A$ loses. The value of this game is 0 as Player $\B$ can ensure looping on $k_0$ and $q_\msf{init}$ (by playing, positionally and deterministically, the middle column) thus ensuring that the highest color seen infinitely often is 3. Thus, $\newC(q_0,\colVirt) \prec_\msf{par} 4$. In the game $\G^2_{q_0,\colVirt}$, $q_\msf{init}$ is colored with 1. Again, with this choice (of coloring of the state $q_\mathsf{init}$), if the highest color $i \in \brck{0,4}$ such that $k_i$ is seen infinitely often is such that $i \prec_\msf{par} 2$, then Player $\A$ loses. The value of this game is also 0 as Player $\B$ can still play the middle column ensuring that the highest color seen infinitely often is 1. Thus, $\newC(q_0,\colVirt) \prec_\msf{par} 2$. Consider now the game $\G^0_{q_0,\colVirt}$, the one depicted in Figure~\ref{fig:extracted_environment}. The value of the state $q_\msf{init}$ is now 1. Indeed, if Player $\A$ plays the two rows with equal probability, one can see that this strategy parity dominates (see Definition~\ref{def:dominate_strongly_dominate_gurantee}) the valuation $v_{q_0,\colVirt}^1$ (recall Definition~\ref{def:local_env_induced_coloring_func})
	. Indeed, the BSCCs compatible with this strategy are $\{ q_\msf{init},k_3,k_4 \}$ and $\{ q_\msf{init},k_0 \}$ 
	and they are even-colored. Hence, by Theorem~\ref{thm:strongly_domnates_imply_guarantee}, $\MarVal{\G_{q_1,\colFunc}^0}(q_{\msf{init}}) = 1 \geq 1/2$ and $\newC(q_0,\colVirt) \succeq_\msf{par} 0$. That is, $\newC(q_0,\colVirt) = 0$.
\end{example}

Some of the properties enjoyed by the local operator $\newC$ are given in the appendix in Page~\pageref{subsubsec:local_operator}.

\subsection{Faithful coloring function}
\label{subsec:faithful_coloring_function}
To prove Theorem~\ref{thm:transfer_local_global}, we iteratively build
a (virtual) coloring function and a local
environment. 
We want to define the desirable property that the pair of coloring and
environment functions should satisfy that will be preserved step by
step.
%
First, we need to define the notion of an environment function witnessing a color.
\begin{definition}[Environment witnessing a color]
	\label{def:env_guarantee_value_and_color}
	Consider a coloring function $\colVirt: Q_u \rightarrow
	\brd$, a color $n \in \brd$ and an environment function
	$\msf{Ev}: Q_n \rightarrow \msf{Env}(Q)$ with
	$Q_n := \colVirt^{-1}[n]$. 
	
	Assume that $n$ is even. We say that the pair
	$( \colVirt,\msf{Ev})$
	\emph{witnesses} 
	the color
	$n$ 
	if for all $q \in Q_n$, $\msf{Sz}_\A(\msf{Ev}(q)) \leq e - \colFunc(q)$
	and all positional Player-$\A$ strategies $\s_\A \in \msf{S}_\A^\Aconc$ 
	generated by $\msf{Ev}$ (recall
	Definition~\ref{def:strat_induced_by_environment}) in the game
	$\Zone^{n}_{\colVirt}$ 
	parity dominate the valuation $v_{n,\colVirt}^u$ (recall
	Definition~\ref{def:parity_game_from_set_of_gfs}). 
	
	Assume that $n$ is odd. We say that the pair $( \colVirt,\msf{Ev})$ witnesses the color $n$, if for all $q \in Q_n$, $\msf{Sz}_\B(\msf{Ev}(q)) \leq o - \colFunc(q)$ and for all positional Player-$\B$ strategies $\s_\B \in \msf{S}_\B^\Aconc$ generated by $\msf{Ev}$ in the game $\Zone^{n}_{\colVirt}$, there is some $u' < u$, such that $\s_\B$ parity dominates the valuation
	$v_{n,\colVirt}^{u'}$.
	%
\end{definition}
\begin{remark}
	The condition on the size of the environments considered along with the assumptions of Theorem~\ref{thm:transfer_local_global} ensures that the quantification over 
	the strategies generated by the environment function 
	are not over the empty set.
	
	This definition was hinted in
	Remark~\ref{rmk:explain_game_extracted_layer}. Informally, it
	means that, in the (virtual) games given by $\colVirt$, in the
	even-colored layers, Player $\A$ can achieve at least what she
	should be able to achieve in this $u$-slice (i.e. the value of the states is at least $u$). Whereas, in the odd-colored layers, Player $\B$ can prevent Player $\A$ from achieving this.
\end{remark}

We can now define the notion of faithful pair of coloring and environment functions. 
\begin{definition}[Faithful pair of coloring and environment functions]
	\label{def:trithful_env_coloring_func}
	Consider a coloring function $\colVirt: Q_u \rightarrow \brd$, some $n \in \brck{0,e+1}$ and a partial environment function $\msf{Ev}: Q_u \rightarrow \msf{Env}(Q)$ defined on $\colVirt^{-1}[\brck{n,e}]$.
	We say that $(\colVirt,\msf{Ev})$ is \emph{faithful down to} $n$ if:
	\begin{itemize}
		\item for all $k \in \brck{n,e}$, the pair $(\colVirt,\msf{Ev})$ witnesses color $k$
		;
		\item for all $q \in Q_u$, if $\colVirt(q) < n$, then
		$\colFunc(q) = \colVirt(q)$ and $\newC(q,\colVirt) < n$;
	\end{itemize}
	If $n = 0$, we say that the pair $(\colVirt,\msf{Ev})$ is \emph{completely faithful}.
\end{definition}

Only the first condition for faithfulness is really of interest to
us. For instance, this first condition 
suffices to show the crucial 
proposition below. However, the second condition is used in the
proofs. (It is also helpful as it guides us in how to build a
completely faithful pair, as discussed below.) Note that, in the
proofs, we use an even stronger notion of faithfulness with a third
condition. However, we do not present it here in order not to
complexify too much the approach, and the two first conditions are
sufficient for the 
(informal) explanation of Theorem~\ref{thm:transfer_local_global}. It
can however be found in Appendix~\ref{subsubsec:faithfulness}.

The benefit of faithful environments and coloring functions lies in
the proposition below: if all states are mapped w.r.t. the coloring
function to $e$
, then the environment function guarantees the value $u$ in the whole $u$-slice
$Q_u$. 
\begin{proposition}
	\label{lem:completely_faithful_all_even_ok}
	For a coloring function
	$\colVirt: Q_u \rightarrow \llbracket 0,e \rrbracket$ and an
	environment function $\msf{Ev}: Q_u \rightarrow \msf{Env}(Q)$, assume that $(\colVirt,\msf{Ev})$ is completely faithful and that
	$\colFunc_u[Q_u] = \{ e \}$. Then, all Player-$\A$ positional
	strategies generated by the environment function $\msf{Ev}$ parity
	dominate the valuation $\MarVal{\G}
	$ in the game
	$\G^u$
	.
\end{proposition}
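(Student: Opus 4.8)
I would compare the extracted layer game $\Zone^e_{\colVirt}$ with the value‑slice game $\G^u$ (Definition~\ref{def:game_restricted_value_slice}) and reduce the statement to the single clause ``$(\colVirt,\msf{Ev})$ witnesses color $e$'' that complete faithfulness provides. First I would exploit the hypothesis that $\colVirt$ takes only the value $e$ on $Q_u$: this gives $Q_e := \colVirt^{-1}[e] = Q_u$ and $\colVirt^{-1}[i] = \emptyset$ for every $i < e$. Plugging this into Definitions~\ref{def:probability_extracted_parity_game} and~\ref{def:parity_game_from_set_of_gfs}, the map $p_{e,\colVirt}$ is the identity on $Q_u$, sends every $q \in Q \setminus Q_u$ to $\MarVal{\G}(q)$, and sends nothing to $K^e$ (since $Q_u \setminus Q_e = \emptyset$, one has $p_{e,\colVirt}^{-1}[k_i^e] = \emptyset$ for all $i \in \brd$); hence no state of $K^e$ is reachable in $\Zone^e_{\colVirt}$. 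Discarding these dead states, $\Zone^e_{\colVirt}$ is literally $\G^u$: the game form $\fsf(q)^{p_{e,\colVirt}}$ at $q \in Q_u$ is $\fsf(q)$ with the outcomes outside $Q_u$ relabelled by their values, which is exactly the effect of turning those states into stopping states in $\G^u$; the color $\colVirt_e(q) = \colFunc(q)$ on $Q_u$ is the original one; and the stopping values agree. Under this identification the valuation $v^u_{e,\colVirt}$ --- worth $u$ on $Q_u \cup K^e$ and $x$ on each $x \in V_{Q \setminus Q_u}$ --- corresponds to the valuation $\MarVal{\G}$ in $\G^u$, which is $u$ on $Q_u$ (the $u$‑slice) and the stopping value elsewhere.

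Next I would unfold the assumption: being completely faithful means faithful down to $0$ (Definition~\ref{def:trithful_env_coloring_func}), so $(\colVirt,\msf{Ev})$ witnesses every $k \in \brd$; for $k < e$ this is vacuous as $\colVirt^{-1}[k] = \emptyset$, and for $k = e$, which is even, Definition~\ref{def:env_guarantee_value_and_color} says precisely that every positional Player‑$\A$ strategy generated by $\msf{Ev}$, read in $\Zone^e_{\colVirt}$, parity dominates $v^u_{e,\colVirt}$. It then remains to transfer parity domination along the identification above. A positional Player‑$\A$ strategy $\s_\A$ generated by $\msf{Ev}$ (Definition~\ref{def:strat_induced_by_environment}) is given by a tuple $(\sigma^q_\A)_{q \in Q_u}$ of $\GF$‑strategies, each optimal w.r.t. $(\fsf(q),\msf{Ev}(q))$, and the same tuple defines the positional strategy in both $\Zone^e_{\colVirt}$ and $\G^u$. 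For every $q \in Q_u$ the normal‑form games $\Games{\fsf(q)^{p_{e,\colVirt}}}{v^u_{e,\colVirt}}$ and $\Games{\fsf(q)}{\MarVal{\G}}$ are the same game --- both send a pair of $\GF$‑strategies to the same convex combination of outcome values, with weight $u$ on $Q_u$ and $\MarVal{\G}(q')$ on $q' \notin Q_u$ --- so the ``dominates'' clause of Definition~\ref{def:dominate_strongly_dominate_gurantee} carries over verbatim; and the Markov chain induced by $\s_\A$ together with any positional deterministic Player‑$\B$ strategy is the same on $Q_u$ in both games and agrees on the stopping gadgets, so the compatible BSCCs, their colors and their values coincide, whence the ``even‑colored BSCC'' clause carries over too. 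Hence every such $\s_\A$ parity dominates $\MarVal{\G}$ in $\G^u$, which is the claim; that such strategies exist follows from the bound $\msf{Sz}_\A(\msf{Ev}(q)) \le e - \colFunc(q)$ included in witnessing $e$ together with positional maximizability of $\fsf(q)$ up to $e_u - \colFunc(q)$ (hypothesis of Theorem~\ref{thm:transfer_local_global}), which yields an optimal $\GF$‑strategy w.r.t. $(\fsf(q),\msf{Ev}(q))$ for each $q$.

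The main obstacle is precisely this identification step: carefully checking that $\Zone^e_{\colVirt}$ and $\G^u$ are interchangeable for the purpose of parity domination --- in particular that the encoding of stopping states creates no spurious compatible BSCCs, and that ``generated by $\msf{Ev}$'' means the same thing in both games (this uses that $E^n_{q,\colVirt}$, $p_{q,\colVirt}$ in Definition~\ref{def:local_env_induced_coloring_func} and $p_{e,\colVirt}$ are all built from the same local data, so the $\GF$‑games to which optimality refers line up). This is essentially bookkeeping, but it is where the argument could fail if the definitions did not match up; if the appendix already records a lemma identifying $\Zone^n_{\colVirt}$ with $\G^u$ when $\colVirt$ is constant equal to $n$, the proof collapses to a single line from ``witnesses $e$''.
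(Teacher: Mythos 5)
Your proposal is correct and follows essentially the same route as the paper's own proof: complete faithfulness gives that $(\colVirt,\msf{Ev})$ witnesses the color $e$, and since $\colVirt[Q_u]=\{e\}$ the extracted game $\Zone^{e}_{\colVirt}$ coincides with $\G^u$ (the $K^e$ states being unreachable) while $v^u_{e,\colVirt}$ coincides with $\MarVal{\G}$, so parity domination transfers directly. The paper asserts this identification in a single line; your extra bookkeeping (dead $K^e$ states, merging of equal-valued stopping states, and the existence remark) merely spells out what the paper takes as immediate from the definitions.
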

\begin{proof*}
	This is direct from the definitions. Indeed, as $(\colVirt,\msf{Ev})$ is completely faithful, it follows that $(\colVirt,\msf{Ev})$ witnesses the color $e$ (see Definition~\ref{def:trithful_env_coloring_func}). 
	That is, all Player-$\A$ positional strategies $\s_\A$ generated by $\msf{Ev}$ in the game $\Zone^{e}_{\colVirt}$ parity dominate the valuation $v_{e,\colVirt}^u$ (see Definition~\ref{def:env_guarantee_value_and_color}). Since $\colVirt[Q_u] = \{ e \}$, 
	both games $\Zone^{e}_{\colVirt}$ and $\G^u$ are identical (see Definitions~\ref{def:game_restricted_value_slice} and \ref{def:parity_game_from_set_of_gfs})
	. Similarly, the valuation $v_{e,\colVirt}^u$ is equal to the valuation $\MarVal{\G}$ in the game $\G^u$ (also see Definition~\ref{def:env_guarantee_value_and_color}). 
\end{proof*}

\subsection{Computing a completely faithful pair}
\label{subsec:computing_faithful}
Given Lemma~\ref{lem:completely_faithful_all_even_ok}, our goal
is to come up with a pair of an environment function and a coloring
function completely faithful such that all states are colored with
$e$.
Let us first consider how to obtain a completely faithful pair from
the initial coloring function and the empty environment function
(i.e. no state is mapped to an environment). Note that this initial
pair of coloring and environment functions is faithful down to
$e+1$. Hence, our goal is, given a pair $(\colVirt,\msf{Ev})$
faithful down to some $n \in \brck{1,e+1}$, to build a new pair that
is faithful down to $n-1$. To do so, let us be guided by the second
property for faithfulness: to be faithful down to $n-1$, no state
$q \in Q_u$ such that $\colVirt(q) \leq n-2$ should be such that
$\newC(q,\colVirt) = n-1$. Hence, the idea is, for all such states
$q \in Q_u$, to change their colors to $n-1$ until no state
$q \in Q_u$ with $\colVirt(q) \leq n-2$ satisfies\footnote{This can seen as computing the \textquotedblleft probabilistic attractor with leaks towards the stopping states\textquotedblright{} that we mentioned above.}
$\newC(q,\colVirt) = n-1$. The environment associated to each such
state $q$ newly colored by $n-1$ will be given by the coloring
function $\colVirt$ for which $\newC(q,\colVirt) = n-1$ for the
first time (crucially, this is done before the color of $q$ is updated
to
$n-1$). 
The procedure we have described is formally given in the
Appendix as Algorithm~\ref{fig:UpdateColEnv}.
Interestingly, the update done in the algorithm
preserves the faithfulness of environment and coloring functions.
\begin{lemma}[Proof Appendix~\ref{subsubsec:proof_lem_algo_preserves_faithfulness}, Page~\pageref{subsubsec:proof_lem_algo_preserves_faithfulness}]
	\label{lem:algo_preserves_faithfulness}
	Consider a coloring function
	$\colVirt: Q_u \rightarrow \llbracket 0,e \rrbracket$,
	$n \in \brck{1,e+1}$, and a partial environment function
	$\msf{Ev}: Q_u \rightarrow \msf{Env}(Q)$ defined on
	$\colVirt^{-1}[\brck{n,e}]$. Assume that $(\colVirt,\msf{Ev})$ is
	faithful down to $n$. Let
	$(\colVirt',\msf{Ev}') \gets \msf{UpdateColEnv}(n-1,
	\colVirt,\msf{Ev})$ be the pair computed by
	Algorithm~\ref{fig:UpdateColEnv} for index $n-1$. (Only states $q$ such that $\colVirt' (q) = n-1$ may have
	changed their colors and be newly mapped to an environment.)
	Then, $(\colVirt',\msf{Ev}')$ is faithful
	down to $n-1$.
\end{lemma}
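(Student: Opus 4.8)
Write $m := n-1$ and let $A := {\colVirt'}^{-1}[m]\setminus\colVirt^{-1}[m]$ be the set of states whose colour the call $\msf{UpdateColEnv}(m,\colVirt,\msf{Ev})$ actually changed; by the parenthetical remark $\colVirt'=\colVirt$ off $A$, while $\colVirt(q)\le m-1$ and $\colVirt'(q)=m$ for $q\in A$, and $\msf{Ev}'=\msf{Ev}$ on $\colVirt^{-1}[\brck{n,e}]$. For each $q\in Q_m:={\colVirt'}^{-1}[m]$ the algorithm sets $\msf{Ev}'(q):=E^{m}_{q,\colVirt^{(q)}}$, where $\colVirt^{(q)}$ is the colouring at the moment $q$ is first picked up; by the (stronger, three‑condition) faithfulness invariant maintained along the run we may assume $\newC(q,\colVirt^{(q)})=m$ and $\colVirt^{(q)}(q)=\colFunc(q)$ throughout, and that every $q\in A$ has $\colFunc(q)\le m-1$. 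The claim then splits into the two bullets of Definition~\ref{def:trithful_env_coloring_func} for $n-1$.

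\emph{Second bullet (the colouring constraint below $m$).} Let $q$ with $\colVirt'(q)<m$; then $q\notin A$, so $\colVirt'(q)=\colVirt(q)<m<n$ and faithfulness of $(\colVirt,\msf{Ev})$ down to $n$ gives $\colFunc(q)=\colVirt(q)=\colVirt'(q)$. For $\newC(q,\colVirt')<m$: at termination the fixpoint property of $\msf{UpdateColEnv}$ guarantees $\newC(q,\colVirt')\neq m$; and since $\colVirt'$ differs from $\colVirt$ only by raising some colours from $\le m-1$ to $m$ — leaving all colours $\ge n$ untouched — the monotonicity/stability properties of $\newC$ (Appendix, Page~\pageref{subsubsec:local_operator}) together with $\newC(q,\colVirt)<n$ force $\newC(q,\colVirt')\le m$, hence $\newC(q,\colVirt')<m$.

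\emph{First bullet, colours $k\in\brck{n,e}$.} Here ${\colVirt'}^{-1}[k]=\colVirt^{-1}[k]$ and $\msf{Ev}'$ agrees with $\msf{Ev}$ there, so the $\GF$-strategies generated and the size bounds are unchanged; it suffices to see that $\Zone^{k}_{\colVirt'}$ and $\Zone^{k}_{\colVirt}$ are the same game up to a colour- and ownership-preserving bisimulation. Indeed the only change is that some successors formerly sent to $k^{k}_{i}$ with $i\le m-1\le k-1$ are now sent to $k^{k}_{m}$ with $m\le k-1$; all abstract states $k^{k}_{j}$ with $j\le k-1$ carry the same colour $\colVirt_k(k^k_j)=\max(j,k-1)=k-1$, belong to the same player, and loop to all of $Q_k$, so they are interchangeable. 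Hence $(\colVirt,\msf{Ev})$ witnesses $k$ iff $(\colVirt',\msf{Ev}')$ does.

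\emph{First bullet, the new colour $m=n-1$ (main work).} Assume $m$ even (the $m$ odd case is symmetric, replacing $v^{u}$ by $v^{u'}$ for an appropriate $u'<u$ and using the Player‑$\B$ half of Lemma~\ref{lem:opt_in_gf_equiv_opt_in_parity}). The size bound is immediate: $\msf{Ev}'(q)=E^m_{q,\colVirt^{(q)}}$ has first component $\max(c_m,\colFunc(q))=\max(m-1,\colFunc(q))$, so $\msf{Sz}_\A(\msf{Ev}'(q))=e-\max(m-1,\colFunc(q))\le e-\colFunc(q)$. Now fix a positional $\s_\A$ generated by $\msf{Ev}'$ in $\Zone^{m}_{\colVirt'}$. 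Since $\newC(q,\colVirt^{(q)})=m$, we have $u_q:=\MarVal{\G^{m}_{q,\colVirt^{(q)}}}(q_{\msf{init}})\ge u>0$, so by Lemma~\ref{lem:opt_in_gf_equiv_opt_in_parity} the $\GF$-strategy $\s_\A(q)$ is optimal in $\gameNF{\fsf(q)}{v^{u_q}_{\cdot}\circ p_{q,\colVirt^{(q)}}}$ and satisfies clause (2); the valuation $v^{u_q}_\cdot\circ p_{q,\colVirt^{(q)}}$ equals $u_q$ on all of $Q_u$ and $\MarVal{\G}$ elsewhere. Writing $\beta$ for the probability of staying in $Q_u$ under $\s_\A(q)$ and any $b$, optimality gives $\beta u_q+W\ge u_q$ where $W$ is the (opponent‑independent, $\MarVal{\G}$‑weighted) contribution of the $Q\setminus Q_u$ successors; hence $W\ge(1-\beta)u$, and since $v^{u}_{m,\colVirt'}\circ p_{m,\colVirt'}$ equals $u$ on $Q_u$ and the same $W$-term off it, $\outM_{\gameNF{\fsf'(q)}{v^u_{m,\colVirt'}}}(\s_\A(q),b)=\beta u+W\ge u$. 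Together with the trivial cases of the $k^m_i\in K^m$ (a Player‑$\B$ state here) and the stopping states, $\s_\A$ dominates $v^u_{m,\colVirt'}$.

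It remains to show every BSCC $H$ compatible with $\s_\A$ with $\min v^u_{m,\colVirt'}[H]>0$ is even‑coloured. If $H$ meets a stopping state it is a single such state of positive value and evenness follows from the standard encoding of stopping states (Appendix~\ref{appen:implement_stopping_states}); otherwise $H\subseteq Q_m\cup K^m$ and $H\cap Q_m\neq\emptyset$. For $q\in H\cap Q_m$, as the play from $q$ under $\s_\A(q)$ and $\s_\B(q)$ stays in $H$ it reaches no stopping state, so clause (2) of Lemma~\ref{lem:opt_in_gf_equiv_opt_in_parity} gives that $\max\big(\mathsf{Color}(\fsf(q),p_{q,\colVirt^{(q)}},\s_\A(q),\s_\B(q))\cup\{\max(m-1,\colFunc(q))\}\big)$ is even. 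The key bookkeeping is to compare a $\colVirt^{(q)}$‑colour of a successor $q''\neq q$ with the colour actually carried in $\Zone^m_{\colVirt'}$ by the image of $q''$ (which is $\colFunc(q'')$ if $\colVirt'(q'')=m$, else $\max(\colVirt'(q''),m-1)$): these agree unless $q''$ was recoloured strictly after $q$, in which case both are $\le m-1$ and equal. From this one shows: (a) if the maximal colour $c^\ast$ in $H$ exceeds $m-1$, it stems from some $k^m_{c^\ast}$, hence lies in $\mathsf{Color}(\cdots)$ for the $q$ routing to it and is the maximum there, so the local parity condition makes $c^\ast$ even; (b) a BSCC with all colours $\le m-1$ is impossible — since $m-1$ is odd, the local condition forces every $q\in H\cap Q_m$ to have a successor of $\colVirt^{(q)}$‑colour $\ge n$ (excluded) or exactly $m-1$, and such a successor must be a state of $A$ recoloured before $q$ and lying in $H\cap Q_m$, producing an infinite strictly descending chain of processing times, a contradiction. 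Hence $\max\colFunc_m[H]\in\{m-1\}\cup\{c^\ast\}$ is even. This establishes that $(\colVirt',\msf{Ev}')$ witnesses $m$, and with the previous steps that it is faithful down to $n-1$.

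\textbf{Main obstacle.} The delicate part is the BSCC analysis above, precisely because each newly coloured state $q$ carries an environment $E^m_{q,\colVirt^{(q)}}$ computed at a \emph{different} intermediate colouring $\colVirt^{(q)}$ in which the $m$-layer is only partially built, whereas the object we must analyse is the \emph{final} game $\Zone^m_{\colVirt'}$; reconciling the two requires tracking how colours seen inside a BSCC change between $\colVirt^{(q)}$ and $\colVirt'$ and a finiteness argument on the processing order, and it is here that the stronger three‑condition form of faithfulness (and the stability of $\newC$ on the layer under construction) is needed.
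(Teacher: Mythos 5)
Your argument is correct in substance, and where it overlaps with the paper (the domination computation for $v^u_{n-1,\colVirt'}$ and the systematic use of clause (2) of Lemma~\ref{lem:opt_in_gf_equiv_opt_in_parity} at recoloured states) it follows the same route; the genuine divergence is in the organization and in the BSCC analysis. The paper splits $\msf{UpdateColEnv}$ into $\msf{UpdCurSta}$ and $\msf{UpdNewSta}$ through an intermediate ``non-deceiving'' notion (Lemmas~\ref{lem:algo_turthful_to_non_deceiving} and~\ref{lem:algo_turthful_non_deceiving_to_faithful}), first shows that the partially built $(n-1)$-layer already witnesses $n-1$, and then runs an induction $\mathcal{P}(i)$ along the processing order whose base case is proved by contradiction, building an adversarial Player-$\B$ strategy in $\Zone^{n-1}_{\colVirt}$ that would create an odd-coloured BSCC. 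You instead anchor the whole argument on $C_{n-1}:=Q_{n-1}\cap\colFunc^{-1}[n-1]$: since the hypothesis here is faithfulness down to $n$, every state originally coloured $n-1$ has true colour $n-1$, so a BSCC all of whose colours are at most $n-2$ can only contain freshly recoloured layer states, and clause (2) then forces from each of them a positive-probability step to a strictly earlier-processed layer state, giving your descending-chain contradiction. For the stated lemma this is simpler and perfectly valid; note, however, that the paper's heavier base case is what allows the same sub-lemma to also serve $\msf{IncLeast}$ (Lemma~\ref{lem:algo_increase_least_color}), where the layer being completed contains states whose true colour is strictly smaller, so your shortcut would not transfer to that setting.

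A few slips to repair in the write-up, none of which break the argument. In case (b) the forced successor must have $\colVirt^{(q)}$-colour exactly $m=n-1$, not $m-1$: a colour-$(m-1)$ successor neither makes the local maximum even nor is an earlier-recoloured state, and your own continuation (``a state of $A$ recoloured before $q$'') only makes sense for colour-$m$ successors; you should also say explicitly that a colour-$m$-at-that-time successor lying in $H$ which is \emph{not} in $A$ is an original layer state, hence in $C_{n-1}$, contradicting the case hypothesis directly. In case (a), a maximal colour exceeding $m-1$ may also be realized by a state of $C_{n-1}$ itself (colour $m$, trivially even); your case split omits this. Your ``bookkeeping'' claim that local and layer colours of a successor agree unless it was recoloured after $q$ is inaccurate: they disagree precisely when the successor was recoloured \emph{before} $q$ (local colour $m$ versus its true colour in $\Zone^{n-1}_{\colVirt'}$), and also for never-recoloured successors of colour below $m-1$ (layer colour $m-1$); fortunately your uses only need the instances that do hold. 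The facts you ``may assume'' ($\newC(q,\colVirt^{(q)})=n-1$, $\colVirt^{(q)}(q)=\colFunc(q)$, $\colFunc(q)\le n-2$ on $A$) need no unstated invariant: they follow from the algorithm's guard, condition~3 of faithfulness down to $n$, and Proposition~\ref{prop:update_color_at_least_preivous_color} (the latter for the states originally coloured $n-1$). Finally, for odd $n-1$ you should make explicit that the generated Player-$\B$ strategies are optimal with respect to the shifted index produced by $\msf{CreateEnv}$, whose value is below $u$ because $\newC(q,\colVirt^{(q)})=n-1$; this is where the uniform $u'<u$ in the witnessing condition comes from.
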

Before giving a proof sketch of this lemma, let us illustrate it on an example.
\begin{example}
	\label{ex:turthful_one_step}
	Let us illustrate this algorithm on Figures~\ref{fig:InitialGame}
	and~\ref{fig:SomeStepsGame}. The first step is to build a pair that
	is faithful down to $e = 4$. As mentioned above in
	Example~\ref{ex:compute_new_col}, we have $\newC(q_5,\colFunc) =
	4$. Hence, the color of this state is changed to $4$ (we obtain a
	virtual coloring function $\colVirt^{q_5}$) and we set
	$\msf{Ev}(q_5) := E^4_{q_5,\colFunc}$. Note that a Player-$\A$
	$\GF$-strategy $\sigma_\A$ is optimal in this environment if and
	only if it plays both rows with positive probability. Furthermore,
	note that, in the extracted game $\Zone^{4}_{\colVirt^{q_5}}$, a
	Player-$\A$ positional strategy playing such a $\GF$-strategy
	$\sigma_\A$ in $q_5$ parity dominates the valuation
	$v_{Q_4,\colFunc^{q_5}}^{1/2}$. Hence, the pair
	$(\colVirt^{q_5}, \msf{Ev})$ is faithful down to $4$.
	
	Consider now the layer $3$. First, the state $q_6$ already has color
	$3$, so it only remains to set its environment:
	$\msf{Ev}(q_6) := E^{3}_{q_6,\colVirt^{q_5}}$.  We then realize that
	$\newC(q_4,\colVirt^{q_5}) = 3$. Indeed, $q_4$ is colored with $2$
	and may go with equal probability to a state colored with $0$ and to
	a state colored with $3$. The color of this state is therefore
	changed, thus obtaining a new virtual coloring function
	$\colVirt^{q_5,q_6,q_4}$. We set its environment:
	$\msf{Ev}(q_4) := E^{3}_{q_4,\colVirt^{q_5}}$. One can realize that the
	pair $(\colVirt^{q_5,q_6,q_4}, \msf{Ev})$ witnesses the color $3$
	: a positional
	Player-$\B$ strategy generated by this environment would be so that
	(i) from $q_6$, it goes to $q_4$ with probability $1$ (to avoid
	$k_4$ that is colored with $4$) and (ii) from $q_5$, it goes to
	$q_6$ with positive probability (to see the color $3$ with positive
	probability). Such a strategy has value $0$ in the game
	$\Zone^3_{\colVirt^{q_5}} = \Zone^3_{\colVirt}$ from
	Figure~\ref{fig:extracted_game}, hence the pair
	$(\colVirt^{q_5,q_6,q_4}, \msf{Ev})$ witnesses the color $3$.
	
	We illustrate on this step why the environment needs to be set
	before changing the new color and not after. That is, we explain why
	it would not be correct to set
	$\msf{Ev}(q_4) := E^{3}_{q_4,\colFunc^{q_5,q_6,q_4}}$ instead of
	what we do above.
	In this environment, the state $q_4$ has color $3$. Hence, looping
	with probability $1$ on $q_4$ is an optimal $\GF$-strategy for
	Player $\B$ w.r.t.
	$(\fsf(q_4),\msf{Ev}(q_4))$. Then, the
	corresponding pair of coloring and environment functions would not
	witness the color $3$. Indeed, a Player $\B$ strategy that loops
	with probability $1$ on $q_4$ is generated by this environment, and
	it has value $1 \geq u$ (because the real color of this state is
	$2$, and not $3$).
	
	This process is then repeated down to $0$. In
	Figure~\ref{fig:SomeStepsGame}, the depicted coloring function (with
	the appropriate environment function that is not shown in
	Figure~\ref{fig:SomeStepsGame}) are in fact completely faithful (this
	is what outputs Algorithm~\ref{fig:UpdateColEnv} on the
	coloring function of Figure~\ref{fig:InitialGame}). 
\end{example}

We give a proof sketch of Lemma~\ref{lem:algo_preserves_faithfulness},
which explains the ideas for the first phase of the procedure for
computing a first completely faithful pair, before
Algorithm~\ref{fig:IncLeast} is called -- which we will discuss right after.
\begin{proof*}[Proof sketch]
	We want to prove that the pair $(\colVirt',\msf{Ev}')$ witnesses the
	color $n-1$ (the other condition for faithfulness is ensured by the
	construction).  We consider the case where $n-1$ is even, the
	other case is similar (but one needs to take the point-of-view of
	Player $\B$). Consider a Player-$\A$ positional strategy $\s_\A$
	generated by the environment function $\msf{Ev}'$ in the game
	$\Zone^{n-1}_{{\colVirt'}}$.
	Let $Q_{n-1} := \colVirt'^{-1}[n-1]$ and let
	$v := v_{n-1,{\colVirt'}}^u$. For every $q \in Q_{n-1}$, let
	$Y_q := (\fsf(q),\msf{Ev}'(q))$ be the
	local environment at state $q$ and let us denote $\msf{Ev}'(q)$ by $\msf{Ev}'(q) = \langle c_q,e,p_q \rangle$
	.  From the characterization of
	Lemma~\ref{lem:opt_in_gf_equiv_opt_in_parity} (item (ii.1)),
	by carefully analyzing the links between the local games $\G_{Y_q}$
	for all $q \in Q_{n-1}$
	and the game $\Zone^{n-1}_{{\colVirt'}}$, we deduce that the
	strategy $\s_\A$ dominates the valuation $v$.
	
	It remains to show that all BSCCs (that are not reduced to a stopping
	state and are) compatible with $s_\A$ are even-colored.  Consider
	such a BSCC $H$ and a Player-$\B$ deterministic positional strategy
	$\s_\B$ which induces $H$.  For every state $q \in H$, since no stopping
	state occurs in $H$, it must be that the probability to reach a
	stopping state is $0$. That is, it amounts to have
	$\outM_{\Games{\fsf(q)}{\mathbbm{1}_{Q \setminus Q_u}}}(\sigma_\A,b) = 0$
	.
	%
	%
	For every state $q \in Q_{n-1}$, the coloring function $\colVirt_q$
	associated with environment $\msf{Ev}'(q)$ is such that
	$\colVirt_q(q) \leq n-1$.\footnote{This is because all states
		$q \in Q_{n-1}$ satisfy $\colFunc(q) \leq n-1$. This is
		one of the additional conditions for faithfulness that we did
		mention, but that is used in the Appendix in Definition~\ref{def:true_trithful_env_coloring_func}.} Hence, the color $c_q$
	is such that
	$c_q = \max(n-2,\colVirt_q(q)) \leq n-1$. Now, assume that some
	state $k_i$ is in $H$ for some $i > n-1 \ge c_q$. In that case,
	as explained in Remark~\ref{rmk:explanation_optim_gf_strat}, the
	highest $i$ such that $k_i$ is in $H$ must be even. Hence, $H$ is
	even-colored. Assume now that no state $k_i$ in $H$ is such that
	$i > n-1$. In that case, if a state in $H$ has color $n-1$ (like the
	state $q_6$ in Figure~\ref{fig:SomeStepsGame} in the case where
	$n-1 = 3$), then $n-1$ is the highest color in $H$ and $H$ is
	even-colored. Consider the first state $q$ whose color is now $n-1$
	(w.r.t. ${\colVirt'}$) but whose previous color was not $n-1$.
	In that case, we have $c_q = \max(n-2,\colVirt_q(q)) = n-2$ is
	odd. Furthermore, the state $q$ has changed its color because
	$\newC(q,\colVirt_q) =
	n-1$. 
	With Remark~\ref{rmk:explanation_optim_gf_strat}, since $\s_\A(q)$
	is optimal w.r.t. $Y_q$, it follows that there is a positive
	probability to reach, in the game $\G_{Y_q}$ the state $k_{n-1}$. In
	the game $\Zone^{n-1}_{{\colVirt'}}$, this corresponds to a positive
	probability to reach a state $q' \in H$ colored with $n-1$
	w.r.t. $\colVirt_q$ (recall
	Definition~\ref{def:probability_extracted_parity_game}). Since $q$
	is the first state to have changed its color, we can deduce that
	$q'$ 
	already had color $n-1$ w.r.t. $\colVirt$. Furthermore, one can show
	that $q'$ is colored with $n-1$ w.r.t. the real coloring function
	$\colFunc$. Overall, in the game $\Zone^{n-1}_{{\colVirt'}}$, with
	the $\GF$-strategy $\s_\A(q)$, there is a positive probability to
	reach in one step a state $q'$ colored with
	$n-1$. 
	Iteratively, we obtain that, considering the $k$-th state whose
	color is now $n-1$ (i.e. w.r.t. ${\colVirt'}$) but whose initial
	color was not $n-1$, there is a positive probability to reach (in at
	most $k$ steps) a state colored with $n-1$. Hence, the highest color
	appearing in $H$ is $n-1$, which is even. We obtain that $\s_\A$
	parity dominates the valuation $v$.
\end{proof*}

Overall, applying iteratively Algorithm~\ref{fig:UpdateColEnv} on
all colors from $e$ down to $0$ starting with the initial coloring
function induces a completely faithful pair
$(\colVirt,\msf{Ev})$. However, it may be the case that some states
are mapped to an odd number, which does not allow to apply
Lemma~\ref{lem:completely_faithful_all_even_ok}. 
The question is then: from that completely faithful configuration, how
can one make some progress towards a situation where
Lemma~\ref{lem:completely_faithful_all_even_ok} can be applied?
\begin{example}
	\label{ex:progress_when_completely_turhtful}
	Consider the coloring function of Figure~\ref{fig:SomeStepsGame}. As
	mentioned in Example~\ref{ex:turthful_one_step}, with an appropriate
	environment function (that is not shown in
	Figure~\ref{fig:SomeStepsGame}), we can have a pair which is
	completely faithful. 
	To gain some intuition on what should be done next, let us focus
	only on the states $q_1,q_2,q_3$. A simplified version is presented
	in Figure~\ref{fig:simpleGameThreeStates} (with a slight
	modification: instead of going to $q_0$, $q_1$ loops on itself): the
	initial (and true) colors of the states are in circles next to them
	and their color w.r.t. the current (virtual) coloring function (that
	is completely faithful with an appropriate environment function) is
	written in red
	. In this game, Player $\B$ plays alone, but it is obvious that Player
	$\A$ wins surely from $q_2$: indeed, either the game stays
	indefinitely in $q_2$, or it eventually reaches and settles in
	$q_1$.
	
	The current virtual color $1$ assigned to both $q_2$ and $q_3$
	does not properly reflect the fact that if the game reaches $q_3$,
	even though Player $\B$ plays optimally according to the local
	game associated with $q_2$, it will end up looping in $q_1$, which
	will be losing for Player $\B$. In a way, we would like to
	propagate the information that reaching $q_1$ is bad for Player
	$\B$. Since $0$ is the smallest color, there is no harm in
	increasing it to $2$, the game from $q_1$ will be the same: it
	will be won by Player $\A$ by looping. Player $\B$ will now be
	able to know that going to $q_1$ is dangerous for him, which will
	be obtained by applying the previous iterative process.
	
	
	In a more general concurrent game, the next step of the process
	when we have a completely faithful configuration not satisfying
	the assumptions of
	Lemma~\ref{lem:completely_faithful_all_even_ok}
	consists in changing all the states with the least (virtual) color
	$n$ to the color $n+2$. 
	However, note that there is a (very important) second step: the
	colors of all states (virtually) colored with $n+1$ should be reset
	to their initial colors. The reason why can be seen again in
	Figure~\ref{fig:simpleGameThreeStates}. After the color of $q_1$
	becomes $2$, the color of $q_3$ will also become $2$. However, if
	the color of the state $q_2$ is not reset, then it is not going to
	change 
	since Player $\B$ can choose to loop to $q_2$ and see the color $1$
	for ever (in game $\G^0_{q_2,\colVirt}$). That is, from Player $\B$'s perspective, looping
	indefinitely on $q_2$ is winning, which is not what happens in the
	real game (i.e. the coloring function does not faithfully describes
	what happens in the game). The changes made to the coloring function
	$\colVirt$ from Figure~\ref{fig:SomeStepsGame} can be seen in
	Figure~\ref{fig:ProgressCompletelyFaithful}. Note that the process
	of increasing the colors of some states by $2$ can only be done with
	the \emph{least} color (otherwise faithfulness will not be
	preserved).
\end{example}

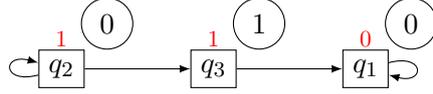
\begin{figure}
	\centering
	\begin{tikzpicture}
	\node[player] (c1) at (2*\lenV,2.4*\lenH) {\color{red}0} ;
	\node[draw=black,circle,scale=1] (oc1) at (2.6*\lenV,2.6*\lenH) {$0$} ;
	\node[draw=black,rectangle,scale=1] (q1) at (2*\lenV,2*\lenH) {$q_1$} ;
	
	\node[player] (c2) at (-2*\lenV,2.4*\lenH) {\color{red}1} ;
	\node[draw=black,circle,scale=1] (oc1) at (-1.4*\lenV,2.6*\lenH) {$0$} ;
	\node[draw=black,rectangle,scale=1] (q2) at (-2*\lenV,2*\lenH) {$q_2$} ;
	
	\node[player] (c3) at (0*\lenV,2.4*\lenH) {\color{red}1} ;
	\node[draw=black,circle,scale=1] (oc1) at (0.6*\lenV,2.6*\lenH) {$1$} ;
	\node[draw=black,rectangle,scale=1] (q3) at (0*\lenV,2*\lenH) {$q_3$} ;
	
	\path[-latex]  	
	(q1) edge[loop right] (q1)
	(q3) edge (q1)
	(q2) edge (q3)
	(q2) edge [loop left] (q2)
	;
\end{tikzpicture}
	\caption{A (deterministic turn-based) game with only three states.}
	\label{fig:simpleGameThreeStates}  
\end{figure}
\begin{figure}
	\centering
	\includegraphics[scale=0.4]{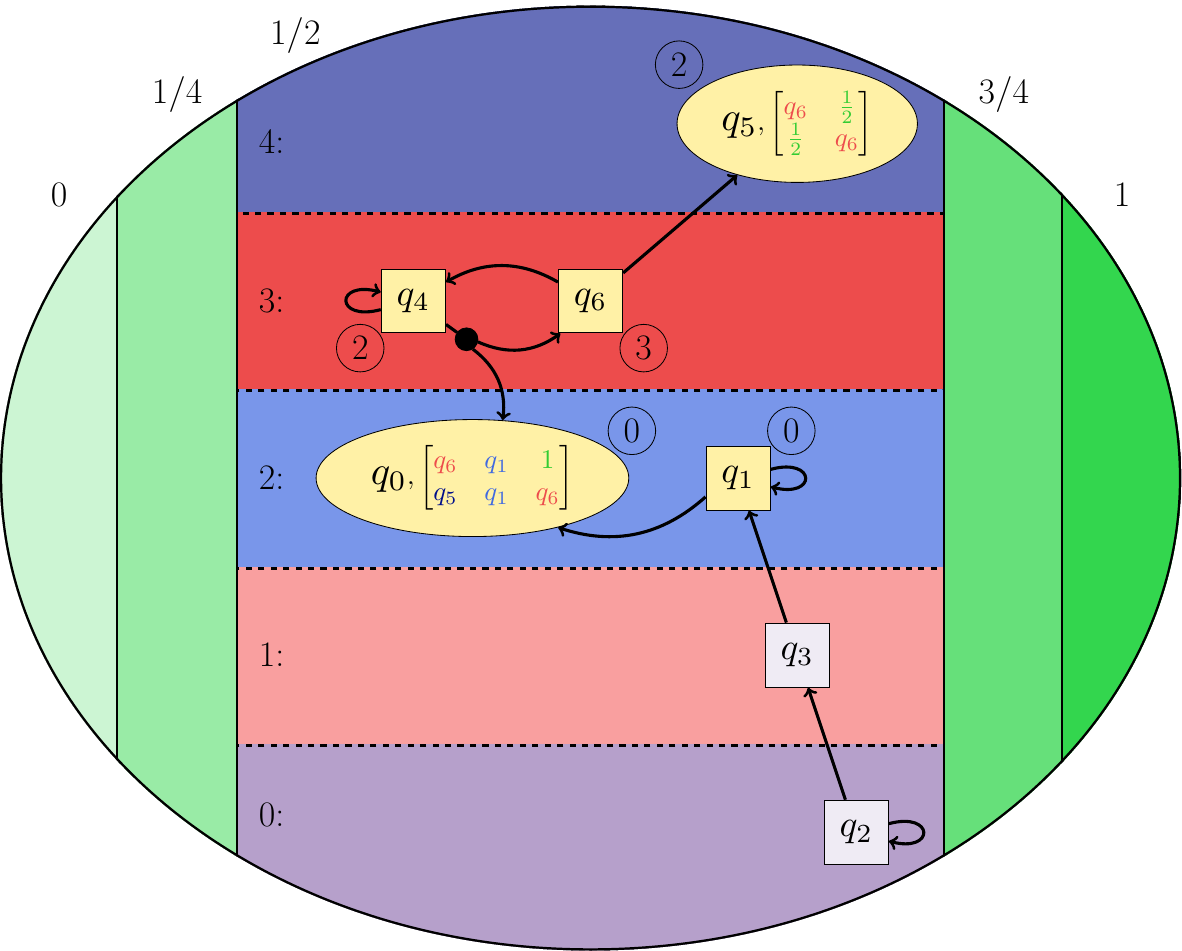}
	\caption{The same arena as in Figures~\ref{fig:InitialGame},\ref{fig:SomeStepsGame} but with a different coloring function.}
	\label{fig:ProgressCompletelyFaithful}  
\end{figure}

The process described in
Example~\ref{ex:progress_when_completely_turhtful} is implemented as
Algorithm~\ref{fig:IncLeast} in the Appendix,
it ensures the lemma below.
\begin{lemma}[Proof Appendix~\ref{subsubsec:proof_lem_algo_increase_least_color},  Page~\pageref{subsubsec:proof_lem_algo_increase_least_color}]
	\label{lem:algo_increase_least_color}
	Let 
	$\colVirt: Q_u \rightarrow \llbracket 0,e \rrbracket$,
	$\msf{Ev}: Q_u \rightarrow \msf{Env}(Q)$ be a coloring and an environment functions. 
	Let $n := \min \colVirt[Q]$. Assume that $n \leq
	e-2$ and the pair $(\colVirt,\msf{Ev})$ is completely faithful. If $(\colVirt',\msf{Ev}') \gets
	\msf{IncLeast}(\colVirt,\msf{Ev})$ is the result of
	increasing the least-colored layer by 2 and resetting the environment of the last but least-colored layer (Algorithm~\ref{fig:IncLeast}), then 
	$(\colVirt',\msf{Ev}')$ is faithful down to $n+2$.
\end{lemma}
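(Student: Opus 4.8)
I will verify, for the pair $(\colVirt',\msf{Ev}')$ output by $\msf{IncLeast}$, the two conditions of ``faithful down to $n+2$'' from Definition~\ref{def:trithful_env_coloring_func} (in the strengthened form used in the appendix). Recall that $\msf{IncLeast}$ changes nothing outside the two lowest layers: each $q$ with $\colVirt(q)=n$ gets the new virtual colour $\colVirt'(q):=n+2$ together with the environment that $\msf{IncLeast}$ assigns to it; each $q$ with $\colVirt(q)=n+1$ has its colour reset to $\colFunc(q)$ and its environment discarded; all other states keep their colour, and $\msf{Ev}'$ agrees with $\msf{Ev}$ on $\colVirt'^{-1}[\brck{n+3,e}]$, so $\msf{Ev}'$ is indeed defined on all of $\colVirt'^{-1}[\brck{n+2,e}]$. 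By the usual symmetry (swap the two players, swap $e$ and $o$, and swap the two value thresholds) I assume $n$ even, so $n+2$ is an even layer index.

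The witnessing condition for a colour $k$ with $n+2<k\le e$ is immediate by an isomorphism argument. The layer $\colVirt'^{-1}[k]=\colVirt^{-1}[k]$, its game forms, its true colours, and $\msf{Ev}'$ restricted to it, are untouched by $\msf{IncLeast}$; the only change in $\Zone^{k}_{\colVirt'}$ relative to $\Zone^{k}_{\colVirt}$ is that the virtual colours of states outside layer $k$ have been reshuffled, and only within $\brck{0,n+2}\subseteq\brck{0,k-1}$. But every abstract state $k^{k}_{j}$ with $j\le k-1$ has colour $\max(j,k-1)=k-1$, belongs to the player disliking the parity of $k$, and loops back to an arbitrary layer-$k$ state, so all such abstract states are interchangeable and the reshuffling merely merges or splits them while preserving the total transition probability into this colour-$(k-1)$ block. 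Hence $\Zone^{k}_{\colVirt'}$ and $\Zone^{k}_{\colVirt}$ are isomorphic parity games; transporting generated strategies and the valuation $v^{u}_{k,\cdot}$ along the isomorphism, $(\colVirt',\msf{Ev}')$ witnesses $k$ because $(\colVirt,\msf{Ev})$ does by complete faithfulness.

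The substantive case is $k=n+2$, where $\colVirt'^{-1}[n+2]=\colVirt^{-1}[n]\cup\colVirt^{-1}[n+2]$ is the old least layer merged into the old $(n+2)$-layer. Fix a positional Player-$\A$ strategy $\s_\A$ generated by $\msf{Ev}'$ in $\Zone^{n+2}_{\colVirt'}$; I must show it parity dominates $v^{u}_{n+2,\colVirt'}$. Domination (value $\ge u$ at every layer-$(n+2)$ state) is obtained state by state from Lemma~\ref{lem:opt_in_gf_equiv_opt_in_parity}(ii.1), exactly as in the proof sketch of Lemma~\ref{lem:algo_preserves_faithfulness}, since the game in normal form seen at a layer-$(n+2)$ state inside $\Zone^{n+2}_{\colVirt'}$ is the one against which $\s_\A(q)$ is optimal. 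For the BSCC condition let $H$ be a BSCC compatible with $\s_\A$, not a stopping state, with $\min v^{u}_{n+2,\colVirt'}[H]>0$; no leak occurs along $H$, so Lemma~\ref{lem:opt_in_gf_equiv_opt_in_parity}(ii.2) applies at each state of $H$. If $H$ meets an abstract state $k^{n+2}_{j}$ with $j\ge n+3$, then the largest such $j$ is even by the argument of Remark~\ref{rmk:explanation_optim_gf_strat}, so $H$ is even coloured. Otherwise $H$ lives inside the merged layer together with colour-$(n+1)$ abstract states only, and here I reconcile the two facts granted by complete faithfulness of $(\colVirt,\msf{Ev})$: $\s_\A$ restricted to the old $n$-layer parity dominates $v^{u}_{n,\colVirt}$ in $\Zone^{n}_{\colVirt}$, and restricted to the old $(n+2)$-layer it parity dominates $v^{u}_{n+2,\colVirt}$ in $\Zone^{n+2}_{\colVirt}$; combined with the minimality of $n$ --- which forces any sub-BSCC living entirely among old $n$-layer states to be won almost surely by Player $\A$, hence to be even coloured --- one gets that $H$ is even coloured, a BSCC oscillating between the two old layers being handled by the same ``$i$-th newly recoloured state'' induction as in the proof sketch of Lemma~\ref{lem:algo_preserves_faithfulness}. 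Finally, the second faithfulness condition concerns only the reset states, for which $\colFunc(q)=\colVirt'(q)$ holds by construction; the bound $\newC(q,\colVirt')<n+2$ is where the minimality of $n$ and the extra (third) appendix invariant are used, since raising the least layer to $n+2$ while lowering the $(n+1)$-layer to its true colours opens no new one-step route to an even virtual colour $\ge n+2$ other than through states already lying in the old $(n+2)$-layer.

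I expect the main obstacle to be the BSCC analysis of the merged $(n+2)$-layer in $\Zone^{n+2}_{\colVirt'}$: splicing the two separately established parity-domination statements (over the old least layer and over the old $(n+2)$-layer) into a single one, and in particular controlling BSCCs that oscillate between these two old layers --- which is precisely what the recolouring induction and the strengthened faithfulness invariant are tailored for. The $\newC$ bookkeeping for the reset states is a secondary, largely routine, difficulty.
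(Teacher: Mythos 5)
Your high-level split (layers above $n+2$ unchanged; domination via Lemma~\ref{lem:opt_in_gf_equiv_opt_in_parity}(ii.1); the $K^{\geq n+2}$ case via the optimality characterization) matches the paper, but there are genuine gaps. First, you misread what $\msf{IncLeast}$ does: the promoted states of the old least layer keep their \emph{old} environments (no new environment is assigned to them --- this is precisely what makes coherence at colour $n+2$ checkable, since $n\equiv n+2 \bmod 2$), and, crucially, $\msf{IncLeast}$ ends with a call to $\msf{UpdNewSta}(n+2,\cdot,\cdot)$, which you omit entirely. Because of this omission your claim for the second faithfulness condition is wrong: after merely raising the least layer and resetting the $(n+1)$-layer, reset states can very well have $\newC(q,\colVirt')=n+2$ --- that is the whole point of the reset and of re-running the attractor (cf.\ Example~\ref{ex:progress_when_completely_turhtful}, where $q_3$ is attracted to colour $2$ once $q_1$ is raised). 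So the intermediate pair is in general \emph{not} faithful down to $n+2$; the paper proves it is only \emph{non-deceiving} down to $n+2$ (the weaker notion of Definition~\ref{def:non-deceiving_env_coloring_func}) and then invokes Lemma~\ref{lem:algo_turthful_non_deceiving_to_faithful} to handle the $\msf{UpdNewSta}$ phase, whose newly attracted states (with fresh environments) your witnessing argument never addresses.

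Second, the heart of the matter --- a BSCC $H$ of $\Zone^{n+2}_{\colVirt'}$ meeting both the old least layer and the old $(n+2)$-layer --- is only gestured at, and the tool you point to does not apply: the ``$i$-th newly recoloured state'' induction of Lemma~\ref{lem:algo_preserves_faithfulness} rests on an attraction order and on stored environments certifying positive probability of stepping towards the target colour, whereas the promoted least-layer states were all raised simultaneously and their stored environments only certify progress towards the old colour $n$, i.e.\ towards staying inside the old least layer. The paper handles this mixed case differently: assuming some state of $H$ in the old $(n+2)$-layer cannot reach a true-colour-$(n+2)$ state within $H$, it builds a Player-$\B$ strategy redirecting the abstract states and obtains an odd-coloured BSCC compatible with $\s_\A$ in the old game $\Zone^{n+2}_{\colVirt}$, contradicting the old witnessing of colour $n+2$; this forces $H$ to contain a state of true colour $n+2$, hence to be even-coloured. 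Also, in the sub-case where $H$ avoids the old $(n+2)$-layer, the transfer from $\Zone^{n}_{\colVirt}$ uses the minimality of $n$ in a precise way (every abstract state reachable from $H$ is coloured $n+1$ in \emph{both} games, which fails if an intermediate colour were raised instead of the least one); your ``won almost surely by Player $\A$'' phrasing skips this colour-matching step, which is where the hypothesis $n=\min\colVirt[Q]$ actually enters.
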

\begin{proof*}[Proof sketch]
	Let $Q_n := \colVirt^{-1}[n]$ and
	$Q_{n+2} := \colVirt^{-1}[n+2]$.  The algorithm has three steps:
	first, it increases the least color by $2$;
	then it resets the environments of the $(n+1)$-colored
	states; finally it applies Algorithm~\ref{fig:UpdNewSta} to these reset states. Let us argue that 
	$(\colVirt'',\msf{Ev})$ (obtained
	after the first step) witnesses the color $n+2$.
	


	Consider a Player-$\A$ positional strategy $\s_\A$ generated by the
	environment $\msf{Ev}$ in the game $\Zone^{n+2}_{{\colVirt''}}$. Let
	$v := v^u_{n+1,\colVirt''}$. Similarly to the proof of Lemma
	\pat{\ref{lem:algo_preserves_faithfulness}}, $\s_\A$ dominates the
	valuation $v$.
	Consider a BSCC $H$ compatible with $\s_\A$. If
	$H \cap Q_{n+2} = \emptyset$, then $H$ is even-colored. Indeed,
	$(\colVirt,\msf{Ev})$ witnesses the color $n$. In addition, the
	probability to go to a state $k_{i}^{n+2}$ that is $(n+1)$-colored in the game
	$\Zone^{n+2}_{{\colVirt''}}$ is exactly the probability to
	go to a state $k_{i}^{n}$ that is $(n+1)$-colored 
	in the game $\Zone^{n}_{\colVirt}$ (since $n$ is the least color). Furthermore, $H$ is also even-colored as soon as $H \cap Q_n = \emptyset$ since $(\colVirt,\msf{Ev})$ witnesses the color $n+2$. Now, assume that
	none of these cases occur.
	%
	Then, one can show that: either a state $k_i$ is seen for some
	$i \geq n+2$, and $H$ is even-colored; or, from some states in
	$Q_{n+2}$, there is a positive probability to exit $Q_{n+2}$ and no
	state $k_i$ is seen for $i \geq n+2$. Now, looking at what happens
	in game $\Zone^{n+2}_{{\colVirt}}$, some states $k_i$ are
	seen for $i \leq n+1$, and such states are colored with
	$n+1$. Hence, since $(\colVirt,\msf{Ev})$ witnesses the color
	$n+2$, it must be that the highest color in $H$ is $n+2$, which is
	even. Therefore it is also the case in the game
	$\Zone^{n+2}_{{\colVirt''}}$. In all the cases, $H$ is even-colored.
	%
\end{proof*}

As stated in Lemma~\ref{lem:algo_increase_least_color}, the update of colors described in Example~\ref{ex:progress_when_completely_turhtful} can be done only if, for a completely faithful pair, the least (virtual) color $n$ appearing is at most $e-2$. If $n = e$
, we are actually in the scope of Lemma~\ref{lem:completely_faithful_all_even_ok} since in that case all states have (virtual) color $e$. However, there remains the case where 
we have $n = e-1$. In fact, this case cannot happen.
\begin{lemma}[Proof Appendix~\ref{subsubsec:proof_lem_minimum_faithful_cannot_be_d_minus_one},  Page~\pageref{subsubsec:proof_lem_minimum_faithful_cannot_be_d_minus_one}]
	\label{lem:minimum_faithful_cannot_be_d_minus_one}
	Consider a coloring function
	$\colVirt: Q_u \rightarrow \llbracket 0,e \rrbracket$, an
	environment function
	$\msf{Ev}: Q_u \rightarrow \msf{Env}(Q)$. Assume that
	$(\colVirt,\msf{Ev})$ is completely faithful. Then, for
	$C := \colVirt[Q]$, we have $\min C \neq e-1$.
\end{lemma}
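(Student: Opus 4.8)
The plan is to argue by contradiction. Suppose $(\colVirt,\msf{Ev})$ is completely faithful and $\min C = e-1$. Since $e = e_u = \msf{Even}(\colFunc[Q_u])$ is even, $e-1$ is odd, and $Q_u = Q_{e-1} \uplus Q_e$ with $Q_{e-1} := \colVirt^{-1}[e-1] \neq \emptyset$. I would then examine the extracted game $\Zone^{e-1}_{\colVirt}$ of Definition~\ref{def:parity_game_from_set_of_gfs}. The crucial structural observation is that, because $n := e-1$ is odd, every abstract vertex $k^{e-1}_i$ is controlled by Player $\A$ and colored $\max(i,e-2)$; and because $\min C = e-1$, the only abstract vertex that receives positive probability is $k^{e-1}_e$ (states of virtual color $e-1$ keep their identity, states outside $Q_u$ go to stopping states), and this vertex is $e$-colored, i.e.\ it carries the maximal and even color of the game and lies in Player $\A$'s hands.

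From this I would derive the key inequality $\MarVal{\Zone^{e-1}_{\colVirt}}(q) \ge u$ for every $q \in Q_{e-1}$. The states of $Q_{e-1}$ keep their true colors, the states outside $Q_u$ keep their $\G$-values, and the states of $Q_e$ — all of which have $\G$-value $u$ since $Q_e \subseteq Q_u$ — are collapsed into the single Player-$\A$-controlled, $e$-colored vertex $k^{e-1}_e$; routing a play through $k^{e-1}_e$ can only help Player $\A$ (if $k^{e-1}_e$ is seen infinitely often the maximal color seen is $e$, which is even, so Player $\A$ wins almost surely; and from $k^{e-1}_e$ Player $\A$ may re-enter $Q_{e-1}$ wherever she likes). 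Concretely I would exhibit a Player-$\A$ strategy in $\Zone^{e-1}_{\colVirt}$ that, between two successive visits to $k^{e-1}_e$, mimics an $\varepsilon_k$-optimal strategy of $\G^u$ (with $\varepsilon_k = \varepsilon 2^{-k}$) started from the state of $Q_{e-1}$ at which the current stretch begins: on the event "$k^{e-1}_e$ is visited infinitely often" the play is won surely, and on its complement the play is eventually a $\G^u$-play confined to $Q_{e-1}$, so the $\varepsilon_k$-optimal pieces yield overall winning probability at least $u - O(\varepsilon)$. Here I use that $\MarVal{\G^u}$ equals $u$ throughout $Q_u$, a standard property of the value-slice restriction of Definition~\ref{def:game_restricted_value_slice}; when $Q_e = \emptyset$ this step is immediate, since then $\Zone^{e-1}_{\colVirt}$ is literally $\G^u$ restricted to $Q_{e-1}$.

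On the other hand, being completely faithful, $(\colVirt,\msf{Ev})$ witnesses the color $e-1$ (Definition~\ref{def:trithful_env_coloring_func}). The size condition $\msf{Sz}_\B(\msf{Ev}(q)) \le o_u - \colFunc(q)$, together with the hypothesis of Theorem~\ref{thm:transfer_local_global} that each $\fsf(q)$ is positionally maximizable up to $o_u - \colFunc(q)$ w.r.t.\ Player $\B$ and von Neumann's minimax theorem, provide at each $q \in Q_{e-1}$ an optimal Player-$\B$ $\GF$-strategy w.r.t.\ $(\fsf(q),\msf{Ev}(q))$; gluing these gives a positional Player-$\B$ strategy $\s_\B$ generated by $\msf{Ev}$ in $\Zone^{e-1}_{\colVirt}$. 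By Definition~\ref{def:env_guarantee_value_and_color} there is some $u' < u$ such that $\s_\B$ parity dominates $v^{u'}_{e-1,\colVirt}$; by the Player-$\B$ version of Theorem~\ref{thm:strongly_domnates_imply_guarantee}, $\s_\B$ then guarantees $v^{u'}_{e-1,\colVirt}$, so $\MarVal{\Zone^{e-1}_{\colVirt}}(q) \le \MarVal{\Zone^{e-1}_{\colVirt}}[\s_\B](q) \le v^{u'}_{e-1,\colVirt}(q) = u'$ for every $q \in Q_{e-1}$. Combining with the previous paragraph yields $u \le u' < u$, a contradiction; hence $\min C \ne e-1$.

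The main obstacle is the inequality $\MarVal{\Zone^{e-1}_{\colVirt}}(q) \ge u$ on $Q_{e-1}$: formalizing the intuition that collapsing the value-$u$ region $Q_e$ into the even, topmost, Player-$\A$-controlled vertex $k^{e-1}_e$ cannot decrease Player $\A$'s value. This is the "seemingly harder but in fact equivalently hard" phenomenon of the extracted layer games noted in Remark~\ref{rmk:explain_game_extracted_layer}; a clean proof requires either the epoch-wise strategy transfer sketched above or a general lemma about the values of extracted games, together with the fact that the value slice $\G^u$ still has value $u$ everywhere on $Q_u$. Everything else is routine bookkeeping with the definitions of witnessing a color, faithfulness, and parity domination.
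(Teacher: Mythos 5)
Your second half (the upper bound) coincides with the opening of the paper's proof: the paper also takes a positional Player-$\B$ strategy $\s_\B$ generated by $\msf{Ev}$, which by witnessing of the color $e-1$ parity dominates $v^{u'}_{e-1,\colVirt}$ in $\Zone^{e-1}_{\colVirt}$ for some $u'<u$ (and, as you note, the size condition together with the positional-maximizability hypothesis of Theorem~\ref{thm:transfer_local_global} guarantees such strategies exist --- though this is by the very definition of positional maximizability, not by von Neumann's theorem). Where you genuinely diverge is in how the contradiction is reached. You want to contradict a lower bound $\MarVal{\Zone^{e-1}_{\colVirt}}(q)\geq u$ on the value of the extracted game itself. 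The paper never bounds that value: instead it transfers the domination of $\s_\B$ to the game $\G^{Q_{e-1}}$ of Definition~\ref{def:parity_game_stopping_states}, in which the $Q_e$-states are stopping at their true value $u$. Using Lemma~\ref{lem:opt_in_gf_equiv_opt_in_parity} for Player $\B$ it shows that any Player-$\A$ action with zero probability of reaching a real stopping state also has zero probability of reaching $k^{e-1}_e$ (because $e$ is even and maximal), and, letting $p_m>0$ be the minimal positive exit probability over the remaining state--action pairs, that $\s_\B$ parity dominates $v^{x}_{Q_{e-1}}$ for $x:=u+p_m(u'-u)<u$; Theorem~\ref{thm:strongly_domnates_imply_guarantee} (Player-$\B$ version) then contradicts $\MarVal{\G^{Q_{e-1}}}=\MarVal{\G}=u$ on $Q_{e-1}$ (Lemma~\ref{lem:same_value_vector_parity_game_stopping_states}). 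This quantitative "leak" argument is precisely what lets the paper avoid the value of the layer game.

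The step you flag as the main obstacle is indeed a genuine gap, and your sketched accounting for it does not go through as written. From $\varepsilon_k$-optimality in $\G^u$ you only get that an epoch, conditioned on starting, wins without reaching $Q_e$ with probability at least $u-\varepsilon_k-a_k$, where $a_k$ is the conditional probability of reaching $Q_e$ (i.e.\ $k^{e-1}_e$); summing these bounds over epochs can fall far short of $u$ (with $u=a_k=1/2$ it gives essentially nothing, while the infinitely-many-epochs event has probability $0$). The lost mass $a_k$ has to be recovered by capping the continuation value after entering $Q_e$ at $u$: extend the adversary induced by the extracted-game Player-$\B$ strategy with a $\delta$-optimal Player-$\B$ continuation of $\G^u$ from the entry state (the same device the paper uses in the proofs of Lemma~\ref{lem:same_value_vector_parity_game_stopping_states} and Lemma~\ref{prop:relevant_env_no_decrease_value}), so that each epoch wins without reaching $Q_e$ with conditional probability at least $(1-a_k)u-\varepsilon_k-\delta$; then the sum telescopes to $u\bigl(1-\lim_k p_k\bigr)$, and the surely-won event of infinitely many visits to the $e$-colored, Player-$\A$-controlled state $k^{e-1}_e$ contributes the remaining mass $\lim_k p_k$. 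With this correction (and noting that Remark~\ref{rmk:explain_game_extracted_layer}'s "equivalently hard" claim is informal and nowhere proved in the paper, so you cannot cite it) your route closes; as submitted, the inequality $\MarVal{\Zone^{e-1}_{\colVirt}}\geq u$ on $Q_{e-1}$ remains unproven and is the one missing piece.
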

\begin{proof*}[Proof sketch]
	Let $Q_{e-1} := \colVirt^{-1}[e-1]$. Towards a contradiction
	, let $\s_\B$ be a Player-$\B$ positional strategy generated by
	$\msf{Ev}$ in the game $\Zone^{e-1}_{\colVirt}$. It
	parity dominates the valuation $v_{e-1,\colVirt}^{u'}$ for some
	$u' < u$
	. Hence, all BSCCs compatible with $\s_\B$ are odd-colored: they all 
	stay in the layer $Q_{e-1}$. Indeed, since $e-1 = \min C$, 
	exiting $Q_{e-1}$ while staying in $Q_u$ means seeing $Q_e := \colVirt^{-1}[e]$ with $e$ even and the highest color in the game.
	Hence, either the game stays indefinitely in $Q_{e-1}$ and Player
	$\B$ wins almost surely, or there is some positive probability to
	visit stopping states, and in that case their expected values is at most
	$u'$. Hence, in the game $\G^u$, the strategy $\s_\B$ has
	values less than $u$ from the states $Q_{e-1} \subseteq Q_u$, which
	is a contradiction. 
\end{proof*}

Finally, all these pieces are put together in
Algorithm~\ref{fig:ComputeEnv} in the Appendix, whose output is a
completely faithful pair where all states are mapped to $e$. 
The only remaining step is to prove the termination of this algorithm. Let us consider the (virtual) coloring functions as vectors in $\N^{e+1}$ indicating the number of states mapped to each color. 
Then, one can realize that each step of Algorithm~\ref{fig:ComputeEnv} increases this vector for a lexicographic order (i.e. we first compare the number of states mapped to $e$, then the number of states mapped to $e-1$,
etc). Hence, Algorithm~\ref{fig:ComputeEnv} does terminate in
finitely many steps. 
\begin{lemma}[Proof Appendix~\ref{subsubsec:proof_lem_metric_work}, Page~\pageref{subsubsec:proof_lem_metric_work}]
	\label{lem:metric_work}
	Algorithm~\ref{fig:ComputeEnv} computes a completely faithful pair
	of environment and coloring functions mapping each state to $e$ in
	finitely many steps.
\end{lemma}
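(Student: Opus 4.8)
The plan is to assemble the pieces already established: Lemma~\ref{lem:algo_preserves_faithfulness}, Lemma~\ref{lem:algo_increase_least_color}, Lemma~\ref{lem:minimum_faithful_cannot_be_d_minus_one}, and Proposition~\ref{lem:completely_faithful_all_even_ok}. The correctness part (that the output is completely faithful with $\colVirt[Q_u] = \{e\}$) follows almost mechanically from these; the substance of what remains to be shown is \emph{termination}, which is what the suggested lexicographic metric is for. So first I would carefully describe the control flow of Algorithm~\ref{fig:ComputeEnv}: it alternates two phases. \textbf{Phase A} starts from a pair faithful down to $e+1$ (the initial coloring together with the empty environment function) and calls $\msf{UpdateColEnv}$ successively for indices $e, e-1, \dots, 0$; by $(e{+}1)$-fold application of Lemma~\ref{lem:algo_preserves_faithfulness} this yields a completely faithful pair $(\colVirt,\msf{Ev})$. \textbf{Phase B} inspects $n := \min \colVirt[Q_u]$: if $n = e$ we stop and invoke Proposition~\ref{lem:completely_faithful_all_even_ok}; if $n = e-1$ this is impossible by Lemma~\ref{lem:minimum_faithful_cannot_be_d_minus_one}; otherwise $n \le e-2$, and we call $\msf{IncLeast}$, which by Lemma~\ref{lem:algo_increase_least_color} produces a pair faithful down to $n+2$, after which we resume Phase~A from index $n+1$ downward to restore complete faithfulness.

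Next I would set up the termination argument. Associate with each completely faithful pair encountered at the start of a Phase~B iteration the vector $N(\colVirt) := (|\colVirt^{-1}[e]|, |\colVirt^{-1}[e-1]|, \dots, |\colVirt^{-1}[0]|) \in \N^{e+1}$, ordered lexicographically (comparing the $e$-coordinate first). I must show that one full round (a Phase~B call to $\msf{IncLeast}$ followed by the Phase~A re-stabilization) strictly increases $N(\colVirt)$ in this order. The key observation is that $\msf{IncLeast}$ moves \emph{all} states of the least color $n$ up to color $n+2$ and resets only the $(n+1)$-colored states — so immediately after $\msf{IncLeast}$, the counts for colors $e, e-1, \dots, n+2$ have not decreased, and the count for $n+2$ has strictly increased (it gained the nonempty layer $\colVirt^{-1}[n]$); the counts for colors $\le n+1$ are irrelevant to the comparison because the first coordinate in which the old and new vectors differ is at some index $\ge n+2$. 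Then I must argue that the subsequent Phase~A re-stabilization (running $\msf{UpdateColEnv}$ from $n+1$ down to $0$) does not undo this gain: $\msf{UpdateColEnv}(k,\cdot)$ for $k \le n+1$ only ever \emph{raises} colors of states currently colored $\le k-1$ to exactly $k$ (this is the parenthetical in Lemma~\ref{lem:algo_preserves_faithfulness}), so no state of color $> n+1$ can have its color changed — hence the coordinates at indices $\ge n+2$ are untouched by Phase~A, and in particular the strict increase at coordinate $n+2$ survives. Therefore $N$ strictly increases each round; since $N(\colVirt)$ is bounded above by $(|Q_u|, |Q_u|, \dots, |Q_u|)$ componentwise and the lexicographic order on this finite box is a well-order with no infinite ascending chain, only finitely many rounds occur.

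Finally I would tie it together: when the algorithm halts, it halts inside a Phase~B iteration with a completely faithful pair satisfying $\min \colVirt[Q_u] = e$ (the only non-looping exit, since $n = e-1$ is excluded by Lemma~\ref{lem:minimum_faithful_cannot_be_d_minus_one} and $n \le e-2$ triggers another round), i.e. $\colVirt[Q_u] = \{e\}$; so the output is as claimed. I expect the main obstacle to be the bookkeeping in the termination step — precisely pinning down that neither the reset of the $(n+1)$-layer inside $\msf{IncLeast}$ nor the subsequent cascade of $\msf{UpdateColEnv}$ calls can disturb any coordinate of $N$ at index $\ge n+2$, and hence that the strict lexicographic increase is genuinely preserved across the whole round rather than just immediately after $\msf{IncLeast}$. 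A secondary subtlety is checking that the re-entry into Phase~A is legitimate, i.e. that the pair handed to it after $\msf{IncLeast}$ is indeed faithful down to $n+2$ (supplied by Lemma~\ref{lem:algo_increase_least_color}) so that the hypotheses of Lemma~\ref{lem:algo_preserves_faithfulness} are met at each of the descending indices $n+1, n, \dots, 1$.
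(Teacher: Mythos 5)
Your proposal is correct and follows essentially the same route as the paper: the appendix proof also uses the lexicographic measure on the vector of per-color state counts (highest color first), shows each $\msf{IncLeast}$ call strictly increases it and each $\msf{UpdateColEnv}$ call does not decrease it (via Proposition~\ref{prop:update_color_at_least_preivous_color}), and concludes termination by well-foundedness, with correctness coming from Lemmas~\ref{lem:algo_preserves_faithfulness}, \ref{lem:algo_increase_least_color}, \ref{lem:minimum_faithful_cannot_be_d_minus_one} and the loop's exit condition. The one small mismatch --- you restart the descent at index $n+1$ whereas Algorithm~\ref{fig:ComputeEnv} reruns the full loop from $e$ down to $0$ --- is harmless, since $\msf{UpdateColEnv}$ can only raise a state's color, so the coordinates at indices $\geq n+2$ can only grow and the strict lexicographic increase gained by $\msf{IncLeast}$ is preserved.
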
	

We can now proceed to the proof of Theorem~\ref{thm:transfer_local_global}.
\label{proof:mainthm}
\begin{proof*}
	Let us prove Theorem~\ref{thm:transfer_local_global} for Player
	$\A$. Consider some $u \in V_Q \setminus \{ 0 \}$. By
	Lemma~\ref{lem:metric_work}, there is a completely faithful pair of
	environment and coloring functions $(\colVirt_u,\msf{Ev}_\A^u)$
	mapping each state in $Q_u$ to $e_u$. Hence, by
	Lemma~\ref{lem:completely_faithful_all_even_ok}, all
	Player-$\A$ positional strategies generated by the environment
	function $\msf{Ev}_\A^u$ parity dominate the valuation $\MarVal{\G}$
	in the game $\G^u$. Since we assume that all game forms occurring in
	$Q_u$ are positionally maximizable up to $e_u - \colFunc(q)$
	w.r.t. Player $\A$, such positional strategies generated by
	$\msf{Ev}_\A^u$ do exist. Then, considering the environment function
	$\msf{Ev}_\A: Q \rightarrow \msf{Ev}(\distribSet)$ that merges all
	the environment functions
	$(\msf{Ev}_\A^u)_{u \in V_\G \setminus \{ 0 \}}$ together (and that
	is defined arbitrarily on $Q_0$), it follows by
	Lemma~\ref{lem:optimal_in_all_value_slice_optimal_everywhere},
	that all Player-$\A$ positional strategies generated by that
	environment function $\msf{Ev}$ are optimal. (And such strategies
	exist.)
\end{proof*}

\section{Positionally optimizable game forms}
\label{sec:positionally_max_gf}
In this section, we state some properties that are ensured by positionally optimizable game forms. We begin by giving some notations for positionally optimizable game forms (and the corresponding decision problems).
\begin{notation}
	For $n \in \N$, we let $\msf{ParO}(n)$
	be the set of all game forms positionally
	optimizable up to $n$ and we let $\msf{IsO}(n)$
	be the problem of deciding whether a
	game form is in $\msf{ParO}(n)$. Furthermore, we let
	$\msf{ParO} := \cap_{n \in \N} \msf{ParO}(n)$ and we denote by
	$\msf{IsO}$ the problem of deciding whether a game form is in
	$\msf{ParO}$.
\end{notation}

\subsection{Some positionally optimizable game forms} 
Let us first exhibit some positionally maximizable game forms. Given the distance to the highest color in the game,  Theorem~\ref{thm:transfer_local_global} and
Remark~\ref{rmk:parm_necessary} give exactly the game forms that
should be used in parity games to ensure the existence of positional
optimal strategies for both players. The game forms in $\msf{ParO}$
are the ones that can be used in all parity games, regardless of the
number of colors involved, while ensuring the existence of positional
optimal strategies (for both players). Such game forms do exist. For instance, 
, turn-based game forms are positionally optimizable (straightforwardly). In fact, all \emph{determined}~\cite{BBSFSTTCS21} game forms are positionally
optimizable. A deterministic (i.e. where all distributions over are the outcomes are deterministic) game form $\formNF \in \Frm(\outComeNF)$ is determined if it satisfies the following property: for all $\{ 0,1 \}$-valuations of the outcomes $v: \outComeNF \rightarrow \{ 0,1 \}$, in the game in normal form $\Games{\formNF}{v}$, there is either a row of $1$ or a column of $0$. In particular, all turn-based game forms are determined. Furthermore, all game forms with at most two outcomes are in $\msf{ParO}$. This is stated formally in the lemma below.
\begin{proposition}[Proof~\ref{proof:prop_determined_two_var_ok}, Page~\pageref{proof:prop_determined_two_var_ok}]
	\label{prop:determined_two_var_ok}
	Consider a set of outcomes $\outComeNF$ and a game form $\formNF = \langle \Act_\A,\Act_\B,\distribSet,\outCNF \rangle \in \Frm(\distribSet)$. Assume that $\formNF$ is either determined (in particular, it may be turn-based), or such that $\outComeNF \leq 2$. In that case, $\formNF \in \msf{ParO}$.
\end{proposition}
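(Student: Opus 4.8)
The plan is to invoke Theorem~\ref{thm:transfer_local_global}: it suffices to show that every determined game form, and every game form with at most two outcomes, is positionally optimizable, i.e. lies in $\msf{ParO}(n)$ for all $n$. By Definition~\ref{def:positionally_maximizable} this amounts to exhibiting, for an arbitrary environment $E = \langle c,e,p\rangle \in \msf{Env}(\outComeNF)$ and both players, an optimal $\GF$-strategy in the simple parity game $\G_{(\formNF,E)}$. I would use the characterization of Lemma~\ref{lem:opt_in_gf_equiv_opt_in_parity}: letting $u := \MarVal{\G_Y}(q_\msf{init})$, a Player-$\A$ $\GF$-strategy is optimal iff $u = 0$, or it is optimal in the game in normal form $\Games{\formNF}{v^u_Y \circ p}$ (condition (ii.1)) \emph{and} for every Player-$\B$ action $b$ that loops on $q_\msf{init}$ with probability $1$, the maximum of $\mathsf{Color}(\formNF,p,\sigma_\A,b) \cup \{c\}$ is even (condition (ii.2)); symmetrically for Player $\B$.

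\textbf{The turn-based / determined case.} Here I would argue directly that $\G_{(\formNF,E)}$ is (equivalent to) a turn-based stochastic parity game, and then cite the known positional determinacy result~\cite{zielonka04,DBLP:conf/soda/ChatterjeeJH04} already recalled in the introduction. Concretely: when $\formNF$ is a turn-based game form, the unique non-trivial state $q_\msf{init}$ is controlled by a single player, all $k_i$ are trivial (Player-$\msf{C}$ states that loop back), and the stopping states carry fixed values; so $\G_{(\formNF,E)}$ is literally a finite turn-based stochastic parity game, which has positional optimal strategies for both players. For a general \emph{determined} deterministic game form $\formNF$, I would observe that determinacy (for every $\{0,1\}$-valuation $v$ there is a row of $1$s or a column of $0$s in $\Games{\formNF}{v}$) is exactly the property isolated in~\cite{BBSFSTTCS21} guaranteeing that $\formNF$ behaves like a turn-based interaction: for every threshold, one player can force the outcome to one side using a pure action. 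I would use this to transform $\G_{(\formNF,E)}$, state by state, into an equivalent turn-based parity game with the same values, and transfer the positional optimal strategy back (a pure action is automatically a positional $\GF$-strategy). The $\GF$-strategy extracted this way satisfies (ii.1) by construction of the equivalence, and (ii.2) because a pure action that loops surely on $q_\msf{init}$ in a turn-based setting forces a single successor $k_i$, and the turn-based optimality forces that color to have the right parity.

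\textbf{The two-outcome case.} Here I would handle $\formNF$ directly via Lemma~\ref{lem:opt_in_gf_equiv_opt_in_parity}. Write $\outComeNF = \{x_0,x_1\}$ (the $|\outComeNF| \le 1$ case is trivial since $\formNF$ is then trivial). The valuation $v^u_Y \circ p$ sends each outcome to a number in $[0,1]$; since there are only two outcomes, the game in normal form $\Games{\formNF}{v^u_Y \circ p}$ has a value attained by \emph{some} optimal mixed strategy, and I would pick one, say $\sigma_\A$, that additionally puts positive probability on as many actions as possible among those used with positive probability by \emph{some} optimal strategy (a "full-support optimal strategy" on the optimal-action set). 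This handles (ii.1). For (ii.2): suppose $b$ loops on $q_\msf{init}$ with probability $1$ under $\sigma_\A$; then against $b$ the outcome lies entirely in $p^{-1}[\{q_\msf{init}\} \cup K_e]$, and because $\sigma_\A$ maximizes the support, the set $\mathsf{Color}(\formNF,p,\sigma_\A,b)$ of colors seen with positive probability is forced — any color $k_i$ reachable under $\sigma_\A$ and $b$ must be such that reaching it is "unavoidable" for the value, so with two outcomes one can check that $\max(\mathsf{Color}(\formNF,p,\sigma_\A,b)\cup\{c\})$ being odd would contradict $u = \MarVal{\G_Y}(q_\msf{init})$, i.e. Player $\B$ could then improve. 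The symmetric argument gives Player $\B$'s optimal $\GF$-strategy.

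\textbf{Main obstacle.} The delicate point is the two-outcome case, specifically verifying condition (ii.2) — that the color-parity constraint is automatically met by a carefully chosen optimal mixed strategy in the normal-form game. The subtlety is that optimality in $\Games{\formNF}{v^u_Y\circ p}$ only pins down the \emph{expected} outcome, not which $k_i$'s get positive probability, so I must leverage the two-outcome restriction to argue that the set of loop-back colors achievable under any value-optimal strategy is rigid enough to force even parity (otherwise $u$ would not be the value). I expect this to require a short case analysis on whether the value $u$ equals $0$, equals $1$, or lies strictly in between, plus the observation (from Remark~\ref{rmk:explanation_optim_gf_strat}) that seeing an odd top color with positive probability while looping surely would let the opponent drive the value to $0$ (resp. $1$), contradicting the choice of $u$. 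The determined case is essentially a citation once the reduction to turn-based games is set up cleanly.
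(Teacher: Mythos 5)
There is a genuine gap, and it sits exactly where you flagged the "main obstacle": the two-outcome case. Your selection rule --- take a normal-form-optimal $\GF$-strategy in $\Games{\formNF}{v^u_Y\circ p}$ of maximal support --- provably picks the wrong strategy in the case where \emph{both} outcomes are mapped by $p$ into $\{\qinit\}\cup K_e$. In that case the valuation $v^u_Y\circ p$ is constant equal to $u$, so \emph{every} $\GF$-strategy is optimal in the normal form, your rule returns the full-support strategy, and condition (ii.2) of Lemma~\ref{lem:opt_in_gf_equiv_opt_in_parity} can fail even though $u>0$. Concretely, take $\outComeNF=\{x,y\}$, $c=0$, $e=3$, $p(x)=k_3$, $p(y)=k_2$, and a (deterministic) game form with rows $a^*=(y,y)$, $a_1=(x,y)$, $a_2=(y,x)$ (this form is not determined, so it is only covered by the two-outcome branch). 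Playing the pure row $a^*$ forever wins almost surely for Player $\A$ (only colors $0$ and $2$ are seen), so $u=1$ and an optimal $\GF$-strategy exists; but the full-support strategy sees color $3$ with positive probability against either column while looping surely, so it violates (ii.2) and indeed has value $0$. So "maximal support among value-optimal strategies" cannot be the criterion: when both outcomes loop, normal-form optimality carries no information, and the right strategy is determined by the parity structure (here a \emph{smaller} support avoiding the odd color). This is precisely why the paper's proof does not go through the normal-form game at all in this case: it performs a direct case analysis on which of the two outcomes are stopping values versus colors, and on the parities of those colors relative to $c$, and constructs the strategies explicitly ("play the row of $y$ if $y$'s color is even", "play uniformly over all actions that do not lead only to $y$", etc.). Your case split on $u\in\{0\},\{1\}$ or interior does not substitute for this; the split has to be on the environment, not on $u$.

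On the determined case your turn-based sub-case is fine, but for general determined game forms the paper simply cites the result of~\cite{BBSFSTTCS21} that locally determined concurrent games admit positional optimal strategies for both players; your sketch of a state-by-state reduction to an equivalent turn-based \emph{parity} game is essentially a re-proof of that cited result and is not carried out. In particular, "pure optimality in the normal form" does not by itself give optimality in $\G_{(\formNF,E)}$ (the normal-form valuation again assigns $u$ to all looping outcomes, so it does not constrain which $k_i$ a pure action reaches against a given column), so the claim that "turn-based optimality forces that color to have the right parity" needs the full equivalence with a turn-based parity game, which is the non-trivial content you would have to either prove or cite. Finally, a small point: invoking Theorem~\ref{thm:transfer_local_global} at the start is unnecessary --- the proposition is a statement about game forms and follows directly from Definition~\ref{def:positionally_maximizable} once optimal $\GF$-strategies are exhibited for every environment.
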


\subsection{Deciding if a game form is positionally optimizable}
Now, we want to show that it is decidable if a game form if positionally optimizable. To do so, we are going to encode this problem in the first order of the reals, that we introduce in the subsection below.
\subsubsection{First-order theory of the reals}
Informally, first order theory of the reals formulas use (existential or universal) quantification over real variables, classical logical connectors and (in)equalities between multi-variate polynomials. We formally define them below.
\begin{definition}[First-order theory of the reals]
	\label{def:first_order_theory_reals}
	In the first order theory of the reals ($\msf{FO}$-$\R$ for short), we consider formulas $\Phi$ of the shape:
	\begin{equation*}
	\Phi = Q_1 \; x_1 \in \R, \ldots, Q_n \; x_n \in \R, \varphi(x_1,\ldots,x_n)
	\end{equation*}
	where $n \in \N$ and
	\begin{itemize}
		\item for all $i \in \brck{1,n}$, $Q_i \in \{ \exists,\forall \}$. That is, it is either an existential or a universal quantifier; 
		\item $\varphi(x_1,\ldots,x_n)$ is a classical formula without quantifiers, with $\wedge$ (and), $\lor$ (or) and $\lnot$ (negation) as logical connectors. The atomic propositions considered in $\varphi(x_1,\ldots,x_n)$ are of the shape:
		\begin{equation*}
		P(x_1,\ldots,x_n) \bowtie 0
		\end{equation*}
		where $\bowtie \; \in \{ =,\neq,\geq,>,\leq,< \}$ and $P: \R^n \rightarrow \R$ is a real multi-variate polynomial with integer coefficients.
	\end{itemize}
\end{definition}
The semantics behind the first order theory of the reals is quite intuitive, though it would take some space to formally define. Instead, let us illustrate it on a couple of examples.
\begin{example}
	Consider the $\msf{FO}$-$\R$ formula below.
	\begin{equation*}
	\forall x \in \R,\; \exists y \in \R,\; x + y \geq 0
	\end{equation*}
	This $\msf{FO}$-$\R$ formula is true since it does hold that, for all $x \in \R$, taking $y := -x \in \R$, we have $x + y = 0$. However, the $\msf{FO}$-$\R$ formula
	\begin{equation*}
	\exists x \in \R,\; \forall y \in \R,\; x + y \geq 0
	\end{equation*}
	is false in $\msf{FO}$-$\R$ since there is no $x \in \R$, such that, for all $y \in \R$, we have $x + y \geq 0$.
\end{example}

Let us now consider the decision problem associated with $\msf{FO}$-$\R$ formulas.
\begin{definition}
	We denote by $\msf{True}_{\msf{FO}-\R}$ the set of all first-order theory of the reals formulas that evaluate to true. The decision problem $\msf{Is}_{\msf{FO}-\R}$ is as follows:
	\begin{itemize}
		\item Input: an $\msf{FO}$-$\R$ formula $\Phi$;
		\item Output: yes if and only if the formula $\Phi$ is in $\msf{True}_{\msf{FO}-\R}$.
	\end{itemize}
\end{definition}

\begin{theorem}
	\label{thm:decidable_theory}
	The decision problem $\msf{Is}_{\msf{Fo}-\R}$ is decidable.
\end{theorem}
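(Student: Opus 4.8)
The plan is to derive decidability from the classical quantifier-elimination theorem for the first-order theory of real closed fields (Tarski's theorem), together with the observation that a \emph{sentence} of this theory, once all quantifiers are removed, becomes a variable-free Boolean combination of integer (in)equalities and is therefore decidable by plain integer arithmetic. Concretely, in our setting an input formula $\Phi$ has shape $Q_1 x_1 \in \R, \ldots, Q_n x_n \in \R,\; \varphi(x_1,\ldots,x_n)$ with $\varphi$ quantifier-free and with \emph{no} free variables, so it suffices to exhibit an algorithm that, given $\Phi$, returns an equivalent quantifier-free (hence variable-free) formula.

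First I would recall that $(\R,+,\cdot,0,1,<)$ is a real closed field and that its theory admits \emph{effective} elimination of quantifiers: there is an algorithm transforming any formula $\Psi(\bar y)$ in the language of ordered rings into a quantifier-free formula $\psi(\bar y)$, over the same free variables $\bar y$, built from polynomial relations $P(\bar y)\bowtie 0$ with $\bowtie\,\in\{=,\neq,\geq,>,\leq,<\}$ and $P$ having integer coefficients, such that $\R\models\forall\bar y\,(\Psi(\bar y)\leftrightarrow\psi(\bar y))$. One proceeds by induction on the number of quantifiers, handling a universal quantifier $\forall x$ as $\lnot\exists x\,\lnot$, so the crux is eliminating a single existential quantifier from $\exists x\,\varphi(x,\bar y)$ with $\varphi$ quantifier-free. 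Putting $\varphi$ in disjunctive normal form and distributing $\exists$ over $\lor$, this reduces to eliminating $x$ from a conjunction of sign conditions $P_i(x,\bar y)\bowtie_i 0$ on finitely many polynomials $P_i$. The existence of a real $x$ realizing a prescribed sign pattern of the univariate polynomials $P_i(\cdot,\bar y)$ is governed by the ordering and multiplicities of their real roots and by the sign variations of the $P_i$ and their derivatives between consecutive roots; these combinatorial data are captured by the signs of the (parametric) principal subresultant coefficients and Sturm-sequence coefficients of the $P_i$, which are themselves polynomials in $\bar y$ with integer coefficients. Collecting the finitely many resulting cases yields the desired quantifier-free $\psi(\bar y)$, and one iterates inward over $x_n,\ldots,x_1$.

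Applying this procedure to $\Phi$, which has no free variables, produces a quantifier-free formula over the empty set of variables, i.e. a Boolean combination of atoms $c\bowtie 0$ with $c\in\Z$ an explicitly computed integer; each such atom, and hence the whole combination, is evaluated by integer comparison, and the result is the truth value of $\Phi$ in $\R$, giving the decision algorithm for $\msf{Is}_{\msf{Fo}-\R}$. The main obstacle — and the part I would \emph{cite} rather than reprove, since it is the classical content of the Tarski--Seidenberg theorem and its effective refinements (Seidenberg, Cohen, Collins' cylindrical algebraic decomposition) — is the correctness of the single-quantifier elimination step: proving that the algebraic sign conditions extracted from subresultants and Sturm sequences really do characterize, \emph{uniformly in the parameters} $\bar y$, the existence of a real $x$ with the prescribed sign pattern. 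The surrounding steps (disjunctive normal form, the induction on quantifier count, and the final variable-free evaluation) are routine.
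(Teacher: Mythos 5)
Your argument is the classical Tarski--Seidenberg quantifier-elimination route, with the crucial single-quantifier elimination step cited rather than reproved, which is exactly how the paper handles this theorem: it simply invokes Tarski's decidability result (and Renegar's doubly-exponential quantifier-elimination procedure for the complexity refinement). So your proposal is correct and takes essentially the same approach as the paper, just spelled out in slightly more procedural detail.
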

This result was first shown in \cite{tarski1998decision}. It was then improved in \cite[Theorem 1.1]{DBLP:journals/jsc/Renegar92b} 
it as it was shown that $\msf{Is}_{\msf{FO}-\R}$ could be decided in doubly-exponential time via a quantifier elimination procedure. 

\subsubsection{Encoding in the first order theory of the reals}
Let us come back to the problem of interest of us: deciding if a game form is positionally optimizable. To tackle this issue, we first realize that a game form is positionally optimizable if and only if there are optimal $\GF$-strategies w.r.t. all \emph{relevant} environments, defined below.
\begin{definition}[Relevant environments]
	\label{def:relevant_env}
	For a set of outcomes $\outComeNF$, an environment $E = \langle c,e,p \rangle \in \msf{Env}(\outComeNF)$ is \emph{relevant} if $c \in \{ 0,1 \}$, $p^{-1}[\{ c-1 \}] = p^{-1}[\{ \qinit \}] = \emptyset$ and, for all $i \in \brck{c,e}$, there is $o \in \outComeNF$ such that $p(o) = k_i$. The \emph{size} of a relevant environment $E$ is equal to $\msf{Sz}(E) := e - c$.
\end{definition}

As mentioned above, we may only consider relevant environment to decide if a game form is positionally maximizable. 
\begin{proposition}[Proof Appendix~\ref{appen:proof_prop_relevant_env_sufficient}, Page~\pageref{appen:proof_prop_relevant_env_sufficient}]
	\label{prop:relevant_env_sufficient}
	Consider a set of outcomes $\outComeNF$ and a game form
	$\formNF \in \Frm(\outComeNF)$. Consider a Player $\C \in \{ \A,\B \}$. For all $n \in \N$, the game form $\formNF$ is positionally maximizable w.r.t. Player $\C$ up to $n$ if and only if, for all relevant environments $E$ with $\msf{Sz}(E) \leq n-1$, there is an optimal $\GF$-strategy in $\formNF$ for Player $\C$ w.r.t. $(\formNF,E)$.
\end{proposition}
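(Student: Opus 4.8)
\begin{proof*}[Proof sketch]
The ``$\Rightarrow$'' direction is immediate. If $E = \langle c,e,p\rangle$ is relevant with $\msf{Sz}(E) = e-c \leq n-1$, then $c \in \{0,1\}$ and, since $\msf{Even}(e)$ and $\msf{Odd}(e)$ exceed $e$ by at most one, $\msf{Sz}_\C(E) \leq (e+1)-c = \msf{Sz}(E)+1 \leq n$; so positional maximizability of $\formNF$ up to $n$ w.r.t.\ $\C$ (Definition~\ref{def:positionally_maximizable}) already yields an optimal $\GF$-strategy for $\C$ w.r.t.\ $(\formNF,E)$.

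For ``$\Leftarrow$'', fix the player --- say $\C = \A$, the case $\C = \B$ being dual with the roles of even/odd and of the values $1$/$0$ exchanged --- and an arbitrary environment $E = \langle c,e,p\rangle$ with $\msf{Sz}_\A(E) = \msf{Even}(e) - c \leq n$. The plan is to reduce $E$, through a sequence of transformations, to a \emph{relevant} environment $E'$ with $\msf{Sz}(E') \leq n-1$, each transformation being shown to preserve both the value $u := \MarVal{\G_{(\formNF,E)}}(\qinit)$ and the set of optimal $\GF$-strategies, after which the hypothesis applies. The workhorse for transporting optimality is the characterisation of Lemma~\ref{lem:opt_in_gf_equiv_opt_in_parity}: optimality of a $\GF$-strategy depends only on $u$, on the partition of the outcomes into ``looping'' ones (sent by $p$ to $\qinit$ or to some $k_i$) and ``stopping'' ones with their explicit values, and on the order and parities of $c$ together with the set of \emph{hit} colors $\{\, i \mid p^{-1}[k_i]\neq\emptyset\,\}$ --- never on the concrete integer labels. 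Concretely I would: \textbf{(a)} reroute every outcome with $p(o)=\qinit$, and every outcome with $p(o)=k_m$ for a hit color $m<c$, to $k_c$; the game is unchanged, since looping at $\qinit$, bouncing through $k_c$, and bouncing through a lower $k_m$ all produce, on plays that loop, the same dominating colour $c$. \textbf{(b)} Discard every unhit color and relabel the remaining colors, together with $c$ (now the least of them), by the unique order- and parity-preserving bijection onto an initial segment starting at $0$ or $1$; this leaves the game unchanged up to renaming colors, makes $c\in\{0,1\}$ with every remaining color hit (so $E$ becomes relevant), and can only shrink the span. \textbf{(c)} If $c$ lies strictly below all hit colors, move $\colFunc(\qinit)$ up to the least hit color before (b): the game is unaffected (on every looping play $c$ is overruled by a $k_i$ seen infinitely often), and $\msf{Even}(e)-c$ strictly decreases, forcing $\msf{Sz}(E')\leq n-1$; if $c$ is even and lies above all hit colors, the game is a safety game for Player $\A$ (she wins on every looping play) and an optimal $\GF$-strategy exists unconditionally.

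After (a)--(c) we are left with a relevant $E'$ with $\msf{Sz}_\A(E')\leq n$; if $\msf{Sz}(E')\leq n-1$ we are done, so the one remaining case is $\msf{Sz}(E')=n$, which forces the top hit color $e'$ to be \emph{even} (otherwise $\msf{Sz}_\A(E')=e'-c'+1=n+1$). This ``even top, maximal span'' case is the crux, and I would resolve it by \emph{eliminating the top even color}: by von Neumann's theorem, either Player $\A$ can force, against every Player-$\B$ $\GF$-strategy, positive probability of reaching an outcome mapped to $k_{e'}$, or Player $\B$ can confine the play away from $k_{e'}$. In the second case I would show $E'$ has the same value and the same optimal $\A$-strategies as the environment with $k_{e'}$ deleted (top hit color $e'-1$, odd, relevant-size $e'-1-c' = n-1$, covered by the hypothesis). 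In the first case, a positional strategy witnessing the forcing visits $k_{e'}$ infinitely often almost surely on every looping play, so from $\A$'s side the outcome feeding $k_{e'}$ behaves exactly like a stopping state of value $1$, and I would trade it against such a state, again landing in a relevant environment of relevant-size $\leq n-1$. The main obstacle is precisely this last elimination: controlling both the value and, more delicately, the \emph{set} of optimal $\GF$-strategies when one reroutes an outcome that feeds a recurrent even-colored loop --- a BSCC/recurrence analysis in the spirit of Theorem~\ref{thm:strongly_domnates_imply_guarantee} --- while keeping the $\msf{Even}(\cdot)/\msf{Odd}(\cdot)$ accounting aligned so that the one-unit gap between $n$ and $n-1$ is genuinely bridged.
\end{proof*}
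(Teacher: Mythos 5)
Your easy direction and your normalisation steps (a)--(b) are in line with what the paper does (reroute the outcomes sent to $\qinit$ and to colors below $c$, then compress the hit colors by the unique order- and parity-preserving map so that $c\in\{0,1\}$). But the genuine gap is exactly where you flag it: the residual case where the top hit color is even and the span has not dropped below $n$. Your first-case move --- treating the outcome feeding $k_{e'}$ as a stopping state of value $1$ because Player $\A$ can force positive probability of hitting it --- does not work as stated: visiting $k_{e'}$ does not end the game, and a play that sees $k_{e'}$ finitely often and then exits to low-value stopping states gains nothing from those visits, so replacing $k_{e'}$ by a value-$1$ stopping state can strictly increase the value and create optimal strategies in the modified environment (e.g. strategies that buy a small probability of the fake value-$1$ exit at the cost of poor stopping values) that are not optimal in the original one. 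Symmetrically, in your second case the fact that Player $\B$ \emph{can} confine the play away from $k_{e'}$ at the matrix-game level does not mean his optimal behaviour does so, so the claim that the environment with $k_{e'}$ deleted has the same value and the same set of optimal $\A$-strategies is unsubstantiated. Since you explicitly leave this elimination open, the proposal is a reduction plan, not a proof.

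The paper resolves this crux differently: in Definition~\ref{def:relevant_env_from_arbitrary_env}, the outcomes mapped to $k_{\msf{Even}(e)}$ are remapped to a stopping state whose value is the \emph{value $u$ of the original game itself} (not $1$, and with no case distinction), while the lower hit colors are compressed by the parity-preserving map $f_E$; by construction this brings the relevant size down to $n-1$. The delicate step is then Lemma~\ref{prop:relevant_env_no_decrease_value}, which shows the value does not decrease under this substitution; its proof relies on a non-trivial (and, as the paper notes, unpublished) result that almost-optimal strategies can be chosen to depend only on the sequence of colors seen, in the spirit of Martin's determinacy argument. Finally, optimality of the $\GF$-strategy obtained for the relevant environment is transported back to the arbitrary environment through the characterisation of Lemma~\ref{lem:opt_in_gf_equiv_opt_in_parity}: the inequality $v_Y^u \circ p + (u'-u) \geq v_{Y'}^{u'} \circ p'$ gives domination of $v_Y^u$, and the parity condition transfers because $f_E$ preserves the parity of maxima and any looping behaviour that hits $k_{\msf{Even}(e)}$ is automatically even-colored. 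So the missing idea in your sketch is precisely the choice of $u$ as the substitute value together with the value-monotonicity lemma; the von Neumann case split you propose cannot replace it.
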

The proof of this proposition is quite long and actually uses a result that is not published yet, but that is in the PhD thesis of one of the authors, which will soon be available online. 

The benefit of considering only relevant environment is that, given any $l \in \N$, deciding if a game form is positionally optimizable up to $l$ can be done by considering environments where all the outcomes are mapped to integers at most $l$. Then, the fact that a game form is positionally optimizable can be encoded in the first-order theory of the reals.
\begin{proposition}
	\label{prop:positionally_optim_decidable}
	Consider a deterministic game form $\formNF$. The fact that $\formNF \in \msf{ParO}$ (resp. $\formNF \in \msf{ParO}(n)$, for some $n \in \N$) can be encoded, with formula of size polynomial in $|\formNF|$ (resp. and $n$), in $\msf{FO}$-$\R$. 
\end{proposition}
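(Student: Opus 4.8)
The plan is to express $\formNF \in \msf{ParO}(n)$ (and $\formNF \in \msf{ParO}$) as a closed $\msf{FO}$-$\R$ sentence whose outermost part universally quantifies over a compact real encoding of all \emph{relevant} environments, and which inside existentially quantifies over witnessing $\GF$-strategies. First I would reduce to relevant environments: by Proposition~\ref{prop:relevant_env_sufficient} together with Definition~\ref{def:positionally_maximizable}, $\formNF \in \msf{ParO}(n)$ iff for every relevant environment $E$ with $\msf{Sz}(E) \le n-1$ there is an optimal $\GF$-strategy w.r.t.\ $(\formNF,E)$ for \emph{both} players. Moreover a relevant environment $E = \langle c,e,p\rangle$ (Definition~\ref{def:relevant_env}) requires each color in $\brck{c,e}$ to be the $p$-image of some outcome, so $e-c+1 \le |\outComeNF|$, i.e.\ $\msf{Sz}(E) \le |\outComeNF|-1$; hence $\msf{ParO}(n) = \msf{ParO}(|\outComeNF|)$ for every $n \ge |\outComeNF|$, and in particular $\msf{ParO} = \msf{ParO}(|\outComeNF|)$. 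So it suffices to encode $\formNF \in \msf{ParO}(n)$ for a given $n$, using $n := |\outComeNF|$ for the unbounded case; writing $d$ for the resulting bound on the largest color (so $d \le \min(n,|\outComeNF|)$), an arbitrary relevant environment of size $\le d-1$ is encoded by real variables: $c$ with $c(c-1)=0$; $e$ with $c \le e \le c+d-1$; and, per outcome $o$, a bit $\beta_o$ with $\beta_o(\beta_o-1)=0$ (recording whether $p(o)$ is a colored state or a real value) and a payload $x_o$ with $c\le x_o\le e$ and $\prod_{j=0}^{d}(x_o-j)=0$ when $\beta_o=1$ (so $x_o$ codes the index of $k_{x_o}$), and $0\le x_o\le 1$ when $\beta_o=0$. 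The remaining relevance conditions (no outcome mapped to $\qinit$ or to $k_{c-1}$, and surjectivity onto $\brck{c,e}$) are finite conjunctions and disjunctions over outcomes and over the $O(d)$ candidate colors, so the environment block has size polynomial in $|\formNF|$ and $n$, and it is universally quantified at the top.

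Next, for a fixed encoded environment $E$, with $Y := (\formNF,E)$, I would encode ``Player $\A$ has a positional optimal $\GF$-strategy w.r.t.\ $Y$'' using Lemma~\ref{lem:opt_in_gf_equiv_opt_in_parity}: it holds iff there is $u \in [0,1]$ with $u = \MarVal{\G_Y}(\qinit)$ such that either $u=0$ (every strategy is then optimal) or there is $\sigma_\A \in \Dist(\Act_\A)$ satisfying items (ii.1) and (ii.2) of the lemma. Writing $M_{a,b} := \beta_{\outCNF(a,b)}\cdot u + (1-\beta_{\outCNF(a,b)})\cdot x_{\outCNF(a,b)}$ for the one-step payoff of $\Games{\formNF}{v^u_Y \circ p}$, item (ii.1) is, by von Neumann's theorem, equivalent to the existence of $\sigma_\B \in \Dist(\Act_\B)$ with $\bigwedge_{a\in\Act_\A}\bigwedge_{b\in\Act_\B}\big(\sum_{a'}\sigma_\A(a')M_{a',b} \ge \sum_{b'}\sigma_\B(b')M_{a,b'}\big)$; and item (ii.2) is $\bigwedge_{b\in\Act_\B}\big(\sum_a\sigma_\A(a)(1-\beta_{\outCNF(a,b)})=0 \ \Rightarrow\ \Psi_b\big)$, where $\Psi_b$ states that $\max\big(\{c\}\cup\{x_{\outCNF(a,b)} : \sigma_\A(a)>0,\ \beta_{\outCNF(a,b)}=1\}\big)$ is even, expressed with one further quantified real $m$ asserting that $m$ lies in that set, dominates it, and satisfies $\prod_j(m-2j)=0$. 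Each such subformula is polynomial in $|\Act_\A|,|\Act_\B|,|\outComeNF|,n$; the statement for Player $\B$ is symmetric by Lemma~\ref{lem:opt_in_gf_equiv_opt_in_parity}. Then $\formNF \in \msf{ParO}(n)$ becomes the universal closure, over the environment block, of the conjunction of the two per-player existential statements.

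The one ingredient not already available from the paper is a subformula $\phi_{\mathrm{val}}(u)$ expressing $u = \MarVal{\G_Y}(\qinit)$, and pinning this down compactly is where I expect the main difficulty to lie. I would invoke the known definability of concurrent parity game values in $\msf{FO}$-$\R$ via their nested-fixpoint (quantitative $\mu$-calculus) characterization \cite{quantitativeAH00}: each one-step predecessor operator is a matrix-game value, i.e.\ a single $\exists\forall$ block of polynomial size, and each of the at most $O(n)$ alternating least/greatest fixpoints of the nesting is pinned down by ``$F(v)=v$ and $\forall w\,(F(w)\le w \Rightarrow v\le w)$'', adding a further $\forall$ block of polynomial overhead; since $\G_Y$ has $O(n)$ states and priorities, $\phi_{\mathrm{val}}$ is polynomial. (The special shape of $\G_Y$ --- a single non-trivial state looping back to itself through colored states --- would also allow a more direct derivation of $\phi_{\mathrm{val}}$, with the same size bound.) Assembling the environment block, the two per-player existential statements, and $\phi_{\mathrm{val}}$ gives a fixed-depth combination of polynomial-size pieces, hence a formula of size polynomial in $|\formNF|$ and $n$ for $\msf{ParO}(n)$, and polynomial in $|\formNF|$ alone for $\msf{ParO}$ (taking $n=|\outComeNF|$). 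Decidability of $\msf{IsO}(n)$ and $\msf{IsO}$ then follows at once from Theorem~\ref{thm:decidable_theory}.
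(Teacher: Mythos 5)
Your skeleton matches the paper's: reduce to relevant environments via Proposition~\ref{prop:relevant_env_sufficient} (and bound their size by $|\outComeNF|$ to handle $\msf{ParO}$), universally quantify a real encoding of the environment, and use the characterization of Lemma~\ref{lem:opt_in_gf_equiv_opt_in_parity} to existentially quantify per-player witnessing $\GF$-strategies. The genuine gap is the one ingredient you yourself flag: the subformula $\phi_{\mathrm{val}}(u)$ asserting $u = \MarVal{\G_Y}(\qinit)$. The reference \cite{quantitativeAH00} gives the quantitative $\mu$-calculus characterization of the value, not its polynomial-size $\msf{FO}$-$\R$ definability, and your sketched encoding does not yield it: pinning each alternation level by ``$F(v)=v \wedge \forall w\,(F(w)\le w \Rightarrow v\le w)$'' requires writing out $F$, which at level $i$ is itself only given by the formula defining the level-$(i{+}1)$ fixpoint; implemented literally, the inner definition is duplicated at every one of the $O(n)$ levels and the formula grows exponentially in $n$. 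A polynomial encoding is possible, but it needs a genuinely more careful construction (or an exploitation of the one-non-trivial-state shape of $\G_Y$, which you mention only parenthetically and do not carry out). Note also that your whole formula leans on this unproved piece: your saddle-point rendering of item (ii.1) via an auxiliary $\sigma_\B$ only expresses ``$\sigma_\A$ dominates $v_Y^u$'' because, when $u$ is the value, $v_Y^u$ is the value vector of $\G_Y$ and the one-shot (Bellman) equation makes $u$ the value of the matrix game $\Games{\formNF}{v_Y^u \circ p}$; without $\phi_{\mathrm{val}}$ that clause no longer ties $\sigma_\A$ to $u$ at all.

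The paper avoids this difficulty entirely, and this is exactly where its proof is clever: it never defines the value in $\msf{FO}$-$\R$. Item (ii.1) is encoded directly as the linear inequalities ``expected payoff at least $u$ against every column'' (and, symmetrically, ``at most $u$ against every row'' for Player $\B$), and a single $u$ is existentially quantified together with witnesses for \emph{both} players that parity dominate $v_Y^u$. The simultaneous existence of such two-sided witnesses forces $u = \MarVal{\G_Y}(\qinit)$ and the optimality of both strategies (via Theorem~\ref{thm:strongly_domnates_imply_guarantee}), while conversely Lemma~\ref{lem:opt_in_gf_equiv_opt_in_parity} produces the witnesses whenever both players have optimal $\GF$-strategies; hence $\phi_{\mathrm{val}}$, and your auxiliary $\sigma_\B$ in the (ii.1) block, are unnecessary. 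If you replace your saddle-point clause by these direct inequalities in $u$, your construction collapses to the paper's proof and the size bound is immediate; otherwise you must supply a real proof of polynomial-size definability of concurrent parity values, which as written is the step that fails.
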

\begin{proof}
	Consider some $l \in \N$. We encode the fact that a standard finite deterministic game form $\formNF = \langle \Act_\A,\Act_\B,\outComeNF,\outCNF \rangle$ is positionally optimizable up to $l$. Without loss of generality, since the game form $\formNF$ is deterministic, we assume that $\outComeNF = \outCNF(\Act_\A,\Act_\B)$. To do so, we use the characterization of Proposition~\ref{prop:relevant_env_sufficient} and we express in $\msf{FO}$-$\R$ the fact that, for all relevant environments $E = \langle c,e,p \rangle$ of size at most $l-1$, both players have an optimal $\GF$-strategy. We use the characterization given by Lemma~\ref{lem:opt_in_gf_equiv_opt_in_parity}. We assume that $\Act_\A = \llbracket 1,n \rrbracket$, $\Act_\B = \llbracket 1,m \rrbracket$ and we consider the formula $\Phi_{\formNF}^{\msf{ParO}(l)}$ below, that we will explain line by line in the following.
	\begin{align*}
	\hspace*{-0.1cm}
	\Phi_{\formNF}^{\msf{ParO}(l)} := \forall c, & \; \msf{ZeroOrOne}(c) \; \wedge \\
	\forall \alpha = (\alpha_{o})_{o \in \outComeNF}, & \; \msf{ZeroOrOne}(\alpha) 
	\; \wedge \\ 
	\forall v = (v_{o})_{o \in \outComeNF}, & \; \msf{RealBetweenZeroOne}(v) \; \wedge \\ 
	\forall k = (k_{o})_{o \in \outComeNF}, & \; \msf{IntBetweenZeroL}(k,c) \; \wedge \\ 
	\exists u, & \; (0 \leq u \leq 1) \; \wedge \\ 
	\exists S_\A = S_\A^1,\ldots,S_\A^n,\; & \\
	\exists \sigma_\A = \sigma_\A^1,\ldots,\sigma_\A^n,\; & \mathsf{IsIndicator}(S_\A) \wedge 
	\mathsf{IsStrategy}_\A(\sigma_\A) \wedge \mathsf{IsSupp}(\sigma_\A,S_\A) \wedge \mathsf{Val}_\A(\sigma_\A,\alpha,v,u) \wedge \\
	& \bigwedge_{1 \leq j \leq m} 
	\msf{MaxIntegerEven}(S_\A,k,j,c) \; \wedge \\
	\exists S_\B = S_\B^1,\ldots,S_\B^m,\; & \\
	\exists \sigma_\B = \sigma_\B^1,\ldots,\sigma_\B^n,\; & \mathsf{IsIndicator}(S_\B) \wedge 
	\mathsf{IsStrategy}_\B(\sigma_\B) \wedge \mathsf{IsSupp}(\sigma_\B,S_\B) \wedge \mathsf{Val}_\B(\sigma_\B,\alpha,v,u) \wedge \\
	& \bigwedge_{1 \leq i \leq n} 
	\msf{MaxIntegerOdd}(S_\B,k,i,c)
	\end{align*}
	The formula $\Phi_{\formNF}^{\msf{ParO}(l)}$ does not fit exactly the formalism of Definition~\ref{def:first_order_theory_reals}, since all the quantifiers are not at the beginning of the formula. However, the semantics is unchanged if these quantifiers are moved at the beginning of the formula. We presented $\Phi_{\formNF}^{\msf{ParO}(l)}$ in that way for readability. 
	
	The first four lines encode the relevant environment $E$. Specifically:
	\begin{equation*}
	\msf{ZeroOrOne}(c) := (c = 0) \lor (c = 0) \;
	\end{equation*}
	Since the environment $E$ is relevant, $c$ is equal to either 0 or 1. 
	Furthermore:
	\begin{equation*}
	\msf{ZeroOrOne}(\alpha) := \bigwedge_{o \in \outComeNF} ((\alpha_o = 0) \lor (\alpha_o = 1))
	\end{equation*}
	where, for all $o \in \outComeNF$, $\alpha_o = 1$ means that $p(o) \in [0,1]$ and $\alpha_o = 0$ means that $p(o) \in K_l$. Furthermore:
	\begin{equation*}
	\msf{RealBetweenZeroOne}(v) := \bigwedge_{o \in \outComeNF} (0 \leq v_o \leq 1)
	\end{equation*}
	where, for all $o \in \outComeNF$, $v_o \in [0,1]$ corresponds to the value of $o$ w.r.t. $p$ (assuming $p(o) \in [0,1]$). Furthermore: 
	\begin{equation*}
	\msf{IntBetweenZeroL}(k,c) := \bigwedge_{o \in \outComeNF} (\bigvee_{0 \leq p \leq l} k_o = p) \wedge  ((c = 0) \Rightarrow (k_o \leq l-1) ) 
	\end{equation*}
	where, for all $o \in \outComeNF$, $k_o \in \brck{0,l}$ is the index of $o$ given by $p$ (assuming that $p(o) \in K_l$). Furthermore, since the size of $E$ is at most $l-1$, if $c = 0$, then the maximum of the colors should be at most $l-1$. 
	
	Then, $u$ is the value of the game $\G_{(\formNF,E)}$ from the state $\qinit$. This is ensured by the remainder of the formula: it exhibits a $\GF$-strategy per player whose corresponding positional strategy parity dominates the valuation $v_Y^u$ for $Y := (\formNF,E)$. The predicates are used for both players, we give them only for Player $\A$, the case of Player $\B$ being analogous.
	\begin{equation*}
	\mathsf{IsIndicator}(S_\A) := \bigwedge_{1 \leq i \leq n} ((S_\A^i = 0) \lor (S_\A^i = 1))
	\end{equation*}
	where $S_\A$ encodes the support of the $\GF$-strategy $\sigma_\A$. The fact that it is indeed a Player-$\A$ $\GF$-strategy is ensured by the predicates below: 
	\begin{equation*}
	\mathsf{IsStrategy}_\A(\sigma_\A) := \bigwedge_{1 \leq i \leq n} ((0 \leq \sigma_\A^i \leq 1) \wedge (\sum_{i = 1}^n \sigma_\A^i = 1)) 
	\end{equation*}
	In addition, we have:
	\begin{equation*}
	\mathsf{IsSupp}(\sigma_\A,S_\A) := \bigwedge_{1 \leq i \leq n} ((\sigma_\A^i > 0) \Leftrightarrow (S_\A^i = 1))
	\end{equation*}
	This predicate ensures that $S_\A$ does indeed correspond to the support of the $\GF$-strategy $\sigma_\A$. Furthermore:
	\begin{equation*}
	\mathsf{Val}_\A(\sigma_\A,\alpha,v,u) := \bigwedge_{1 \leq j \leq k} (\sum_{i = 1}^n \sigma_\A^i \cdot (\alpha_{\outCNF(i,j)} \cdot v_{\outCNF(i,j)} + (1 - \alpha_{\outCNF(i,j)}) \cdot u) \geq u)
	\end{equation*}
	This predicates encodes the fact that the Player-$\A$ $\GF$-strategy $\sigma_\A$ dominates the valuation $v_Y^u$. Indeed, note that for all $o \in \outComeNF$, if $\alpha_o = 1$, then $p(o) \in [0,1]$, and therefore $v_Y^u(o) = p(o) = v_o$. Similarly, if $\alpha_o = 0$, then $p(o) \in K_l$, and therefore $v_Y^u(o) = u$. Finally, when $l$ is even, denoting $l = 2 \cdot x$ with $x \in \N$:
	\begin{align*}
	\msf{MaxIntegerEven}(S_\A,k,j,c) := & \bigvee_{1 \leq i \leq n} (S_\A^i = 1 \wedge \alpha_{\outCNF(i,j)} = 1) \lor \\
	& \bigvee_{0 \leq e \leq x} (2 e \geq c) \wedge (( \bigvee_{1 \leq i \leq n} (S_\A^i = 1) \wedge (k_{\outCNF(i,j)} = 2 e)) \; \wedge \\
	& (\bigwedge_{1 \leq i \leq n} (S_\A^i = 1) \Rightarrow (k_{\outCNF(i,j)} \leq 2 \cdot e)) )
	\end{align*}
	It is similar if $l$ is odd, up to small changes. This predicate encodes item $(ii.2)$ of Lemma~\ref{lem:opt_in_gf_equiv_opt_in_parity}: for all columns $j \in \Act_\B$, either there is positive probability to see an outcome mapped to a value in $[0,1]$ (i.e. $\outM{\Games{\formNF}{\mathbbm{1}_{p^{-1}[0,1]}}}(\sigma_\A,j) > 0$) or the maximum of the colors in the support of the strategy $\sigma_\A$ is even (i.e. $\max (\mathsf{Color}(\formNF,p,\sigma_\A,j) \cup \{ c \})$ is even). 
	%
	
	By Lemma~\ref{lem:opt_in_gf_equiv_opt_in_parity} and Proposition~\ref{prop:relevant_env_sufficient}, we have that $\formNF \in \msf{ParO}(l)$ if and only if $\Phi_{\formNF}^{\msf{ParO}(l)} \in \msf{True}_{\msf{FO}-\R}$. Furthermore, the size of $\Phi_{\formNF}^{\msf{ParO}(l)}$ is polynomial in the size of $\formNF$ (and in $l$). 
	
	Finally, since given an environment, an outcome is mapped to at most one integer, and since in relevant environments of size $n \in \N$, the outcomes are mapped to at least $n-1$ different integers, we have $\formNF \in \msf{ParO} \Leftrightarrow \formNF \in \msf{ParO}(n+2)$ for $n := |\outComeNF|$. The proposition follows.
\end{proof}

\subsection{Hierarchy}
For all $n \in \N$, the game forms in $\msf{ParO}(n)$ are the ones to
be used --- to ensure the existence of positional optimal strategies
for both players --- in parity games at states where the gap between
the color of the state and the maximum color in the game is at most
$n$. Straightforwardly, $\msf{ParO}(n) \subseteq \msf{ParO}(n+1)$. In fact,
this inclusion is strict for all $n \in \N$. This defines an infinite
hierarchy of game forms.
\begin{proposition}
	\label{prop:inclustion_parm}
	For all $n \in \N$, the set of game forms positionally optimizable up to $n$ (i.e. $\msf{ParO}(n)$) strictly contains the set of game forms optimizable up to $n+1$ (i.e. $\msf{ParO}(n+1)$). That is, we have $\msf{ParO}(n) \supsetneq \msf{ParO}(n+1)$.
\end{proposition}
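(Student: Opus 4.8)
The plan is to exhibit, for each $n \in \N$, an explicit game form $\formNF_n$ lying in $\msf{ParO}(n) \setminus \msf{ParO}(n+1)$. One containment, $\msf{ParO}(n+1) \subseteq \msf{ParO}(n)$, holds immediately (a game form handling every environment of size at most $n+1$ a fortiori handles every one of size at most $n$), so only strictness needs proof. By Proposition~\ref{prop:relevant_env_sufficient}, membership $\formNF \in \msf{ParO}(m)$ is equivalent to: for every \emph{relevant} environment $E$ of size at most $m-1$, \emph{both} players have an optimal $\GF$-strategy w.r.t. $(\formNF,E)$. So it suffices to design $\formNF_n$ so that (a) for every relevant $E$ of size $\le n-1$ both players have an optimal $\GF$-strategy, while (b) some relevant $E_n$ of size exactly $n$ admits a player $\C \in \{\A,\B\}$ with no optimal $\GF$-strategy w.r.t. $(\formNF_n,E_n)$; by Definition~\ref{def:optimal_strat_in_gf}, (b) says that no positional $\C$-strategy is optimal in the simple parity game $\G_{(\formNF_n,E_n)}$. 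For $n=0$ this is Remark~\ref{rmk:parm_necessary}: every game form is in $\msf{ParO}(0)$, while some game form --- a ``Hide-or-Run''-type one, cf.~\cite{kumar1981existence} --- is not positionally maximizable with respect to any player up to $1$, hence not in $\msf{ParO}(1)$.

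For $n \ge 1$ I would build $\formNF_n$ by nesting $n$ copies of a fixed Hide-or-Run gadget, so that the companion environment $E_n$ displays a full color ladder. Concretely the outcomes of $\formNF_n$ split into ``escape'' outcomes, which $E_n$ sends to a stopping state of value $\tfrac12$, and ``loop'' outcomes, which $E_n$ sends to the states $k_0,\dots,k_n$, with $\colFunc(\qinit)=0$; then $\msf{Sz}(E_n)=n$ and the relevant environment $E_n$ carries exactly the colors $0,1,\dots,n$. The action tables of $\formNF_n$ are arranged so that, conditioned on the play remaining in $\qinit$, the player handicapped by the parity of the topmost color currently forced must randomise --- the decaying-probability phenomenon of concurrent reachability, repackaged through a parity ladder instead of a single target --- and $\C$ alternates with the parity of $n$, yielding both an ``$\A$ fails'' and a ``$\B$ fails'' witness. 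This is the local, game-form-level cause of why Table~\ref{tab:intro} keeps degrading as the objective grows from safety to reachability/(co-)Büchi to parity.

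To verify (b) I would invoke the characterization of Lemma~\ref{lem:opt_in_gf_equiv_opt_in_parity} at $u := \MarVal{\G_{(\formNF_n,E_n)}}(\qinit)=\tfrac12>0$: any optimal $\C$-strategy must simultaneously satisfy condition (1) there --- optimality in the induced game in normal form, which pins down a full-support randomisation of a prescribed shape --- and condition (2) --- that against every opponent action looping with probability $1$ the maximal visited color has the favorable parity --- whereas the gadget is calibrated so that (1) and (2) become jointly satisfiable only once at least one rung of the ladder is missing. To verify (a), for each relevant $E$ of size $\le n-1$ I would display a positional optimal $\GF$-strategy directly, the missing top rung letting one of the players collapse the nesting, and fall back on Proposition~\ref{prop:determined_two_var_ok} in the degenerate sub-cases where the induced game form is determined or has at most two outcomes. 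The main obstacle is precisely this calibration --- showing the ``break'' happens at level exactly $n$, i.e.\ that (1) and (2) are incompatible for the full ladder but compatible for every strict sub-ladder, uniformly in $n$ --- which is where the finite-versus-infinite-memory gap of concurrent games is what is at stake, since the value $\tfrac12$ is attained only in the Blackwell limit. A clean way to organise it is to set up a lifting $\formNF_n \rightsquigarrow \formNF_{n+1}$ on game forms (adding one rung, swapping which escape outcome is favorable), show it preserves (b) without spoiling (a), and bootstrap from a single small base case, so that the only ad hoc computation is confined to that base case.
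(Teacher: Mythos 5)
Your overall skeleton matches the paper's: the easy inclusion, reduction to relevant environments via Proposition~\ref{prop:relevant_env_sufficient}, and verification of (non-)optimality through the characterization of Lemma~\ref{lem:opt_in_gf_equiv_opt_in_parity}. But the proposal has a genuine gap: the separating family $\formNF_n$ is never actually constructed, and you yourself flag the missing piece (``the main obstacle is precisely this calibration''). The entire mathematical content of strictness is exactly that calibration: one must write down explicit action tables and prove both (a) that \emph{every} relevant environment of size at most $n-1$ admits optimal $\GF$-strategies for both players, and (b) that some size-$n$ environment defeats one player. Your plan of ``nesting $n$ copies of a Hide-or-Run gadget'' and a lifting $\formNF_n \rightsquigarrow \formNF_{n+1}$ that ``preserves (b) without spoiling (a)'' is precisely the unproven claim, not a reduction of it; nothing in the proposal indicates why the break should occur at level exactly $n$ rather than earlier or later, nor why property (a) survives the lifting. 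Note also that the failure mechanism in these simple parity games is not literally the decaying-probability phenomenon of reachability: by Lemma~\ref{lem:opt_in_gf_equiv_opt_in_parity}, what must be shown is that no single mixed row can simultaneously dominate the valuation $v_Y^u$ and make the maximal color even against every purely-looping column, and that this tension needs a full color ladder to arise.

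For comparison, the paper resolves this with a flat (non-nested) matrix $\formNF_n$ on outcomes $x_0,\dots,x_{n-1},y,z$ (Definition~\ref{def:game_form_distinct}): a $2\times 2$ ``core'' on $y,z$ giving value $3/4$ in the bad case, a ladder of rows/columns $a_i,a_{i,i-1},b_j,b_{j,j-1}$ tied to the $x_i$'s, and exit actions. The bad environment is the ladder $p(x_i)=k_i$, $p(y)=1$, $p(z)=0$, $c=0$, where any Player-$\A$ strategy either breaks domination (if it uses $a_\msf{b}$ or $a_\msf{Ex}$) or loses against the looping column $b_\msf{l}$ because the top color seen is odd; and the reason nothing goes wrong for environments of size at most $n-1$ is a pigeonhole-style argument: the problematic configuration forces a strictly increasing chain $c\le n_{x_0}<n_{x_1}<\dots<n_{x_{n-1}}\le e$, impossible when $e-c\le n-2$. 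Moreover the paper's witness is asymmetric (Player $\B$ always has optimal strategies; only Player $\A$ fails, and only even $n$ is treated, the odd case being symmetric), whereas you propose alternating which player fails --- fine in principle, but again unsubstantiated. Until you supply a concrete family together with proofs of (a) and (b) --- or at least a base case plus a genuinely proved lifting lemma --- the proposal is a roadmap rather than a proof.
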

For all $n \geq 1$, we exhibit a standard finite game form $\formNF_n$ that is in $\msf{ParO}(n-1)$ but not in $\msf{ParO}(n)$. This is done in Definition~\ref{def:game_form_distinct} below, where we describe a game form that is $\msf{ParO}(n)$ but not in $\msf{ParO}(n-1)$ in the case where $n$ is even. 
\begin{definition}
	\label{def:game_form_distinct}
	Consider some even integer $n \geq 2$. We consider the set of outcomes $\outComeNF_n := \{ x_0,x_1,\ldots,x_{n-1},y,z \}$ and we consider the game form $\formNF_n = \langle \Act_\A^n,\Act_\B^n,\outComeNF_n,\outCNF_n \rangle_\msf{s}$ depicted in Figure~\ref{fig:GameFormDetail}. 
	Let us describe the set of actions of the players.
	We set:
	\begin{itemize}
		\item $\Act_\A^n := \{ a_\msf{t},a_\msf{b},a_1,\ldots,a_{n-1},a_{1,0},\ldots,a_{n-1,n-2},a_{\msf{Ex}} \}$; and
		\item $\Act_\B^n := \{ b_\msf{l},b_\msf{r},b_0,\ldots,b_{n-2},b_{2,1},\ldots,b_{n-2,n-3},b_{\msf{Ex}} \}$.
	\end{itemize}
	The actions $a_\msf{t},a_\msf{b}$ (resp. $b_\msf{l},b_\msf{r}$) refer to the two topmost rows (resp. leftmost columns): $a_\msf{t}$ (resp. $b_\msf{l}$) leads to $\frac{x_0+\ldots+x_{n-1}}{n},\frac{3y+z}{4}$ whereas $a_\msf{b}$ (resp. $b_\msf{r}$) leads to $\frac{3y+z}{4},\frac{y+3z}{4}$. Then, for all odd $i \leq n-1$ (resp. even $j \leq n-2$) $a_i$ and $a_{i,i-1}$ (resp. $b_j$ and $b_{j,j-1}$ --- only if $j \geq 2$) correspond to the rows leading to $x_i$ and $\frac{x_i + x_{i-1}}{2}$ (resp. columns leading to $x_j$ and $\frac{x_j + x_{j-1}}{2}$) respectively. Finally, action $a_\msf{Ex}$ (resp. $b_\msf{Ex}$) correspond to the bottommost row (resp. rightmost column). 
\end{definition}

\begin{figure}[!t]
	\hspace*{-0.8cm}
	\includegraphics[scale=1.3]{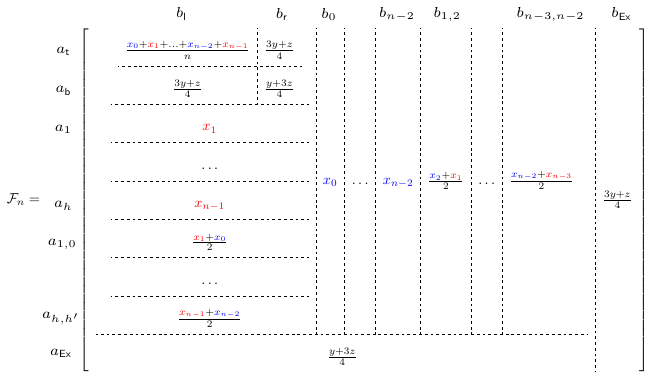}
	\caption{The game form $\formNF_n$. Due to a lack of space, $h$ refers to $n-1$ and $h'$ to $n-2$.}
	\label{fig:GameFormDetail}  
\end{figure}

The game forms described in Definition~\ref{def:game_form_distinct} satisfies the lemma below.
\begin{lemma}[Proof~\ref{proof:lem_gf_distinct}, Page~\pageref{proof:lem_gf_distinct}]
	\label{lem:gf_distinct}
	Consider some even $n \geq 2$. The game form of $\formNF_n$ Definition~\ref{def:game_form_distinct} is:
	\begin{itemize}
		\item positionally maximizable w.r.t. Player $\B$;
		\item positionally maximizable w.r.t. Player $\A$ up to $n-1$ but not up to $n$.
	\end{itemize}
	That is: $\formNF_n \in \msf{ParO}(n) \setminus \msf{ParO}(n-1)$.
\end{lemma}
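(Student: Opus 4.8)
\begin{proof*}[Proof sketch]
The plan is to establish the two bullet points separately, in each case reducing optimality of a positional $\GF$-strategy to the characterisation of Lemma~\ref{lem:opt_in_gf_equiv_opt_in_parity}. Fix $n \geq 2$ even. Given an environment $E = \langle c, e, p \rangle \in \msf{Env}(\outComeNF_n)$, write $Y := (\formNF_n, E)$ and $u := \MarVal{\G_Y}(\qinit)$. Every $\GF$-strategy is optimal when $u = 0$, so we may assume $u > 0$; then, by Lemma~\ref{lem:opt_in_gf_equiv_opt_in_parity}, a positional $\GF$-strategy of Player $\A$ (resp.\ $\B$) is optimal w.r.t.\ $Y$ iff it satisfies (1) optimality in the normal-form game $\Games{\formNF_n}{v_Y^u \circ p}$, and (2) the parity condition: for every opponent action that loops back to $\qinit$ with probability $1$, the maximum of $c$ together with the colours of the $k_i$'s reachable with positive probability is even (resp.\ odd). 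In each case I would first compute the value $u$ of the relevant simple parity game by solving the small matrix game of Figure~\ref{fig:GameFormDetail}, and then verify (1) and (2).

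For Player $\B$ and an arbitrary environment $E$: an optimal strategy of Player $\B$ in $\Games{\formNF_n}{v_Y^u \circ p}$ can always be chosen among the mixtures over $\{ b_\msf{l}, b_\msf{r}, b_\msf{Ex} \}$ and the columns $b_0, b_2, \ldots$, so (1) is automatic. It then remains to verify (2): whenever a Player-$\B$ row of $\formNF_n$ forces the play to loop at $\qinit$ against such a mixture, the colours that can be seen are those carried by a controlled family of the outcomes $x_j$ (together with possibly $c$), and a short computation on the matrix shows their maximum is odd, as required---a boundary case that would spoil this can be avoided by perturbing the optimal $\B$-mixture without changing its value. Since this argument never uses a bound on the size of $E$, the game form $\formNF_n$ is positionally maximizable w.r.t.\ Player $\B$.

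For Player $\A$ up to $n-1$: fix $E$ with $\msf{Sz}_\A(E) = \msf{Even}(e) - c \leq n-1$. The outcomes $x_0, \ldots, x_{n-1}$ form a ``ladder'' of depth $n$ inside $\formNF_n$---this is the role of the parameter $n$---and the point of the bound $\msf{Sz}_\A(E) \leq n-1$ is that the shifted parity objective imposed by $E$ only reaches a shallow enough part of this ladder that a positional mixture can navigate it. Concretely, I would do a short case analysis on which outcomes of $\outComeNF_n$ are sent by $p$ into $[0,1]$ and which into $K_e$, and in each case build a positional $\sigma_\A$ as an explicit mixture of $a_\msf{t}$ with the ``honest'' rows $a_i, a_{i,i-1}$ (odd $i$), choosing the weights so that, against every Player-$\B$ column, either the expected value of the reached stopping states is at least $u$, or the largest colour seen with positive probability is even and above the relevant threshold; the inequality $\msf{Sz}_\A(E) \leq n-1$ is exactly what makes such a choice of weights simultaneously feasible across all columns. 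Verifying (1) and (2) for this $\sigma_\A$ is then a direct matrix computation.

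For the negative part I would exhibit one bad environment $E^\ast$ of size $n$: take $c := 0$ and $e := n$, map $x_i$ to the colour $k_i$ for $i \in \brck{0,n-1}$, and use $y, z$ to place the top colour $k_n$ (or a stopping value tuned so that $u^\ast := \MarVal{\G_{(\formNF_n, E^\ast)}}(\qinit)$ equals a prescribed positive number), so that $\msf{Sz}_\A(E^\ast) = n$. A pigeonhole argument along the ladder of $\formNF_n$ then shows that, for $E^\ast$, condition~(1) forces condition~(2) to fail: any positional $\sigma_\A$ optimal in the corresponding normal-form game must put positive weight on a row whose set of reachable colours has \emph{odd} maximum against at least one Player-$\B$ column that loops at $\qinit$. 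Hence no positional $\GF$-strategy of Player $\A$ is optimal w.r.t.\ $(\formNF_n, E^\ast)$, i.e.\ $\formNF_n$ is not positionally maximizable w.r.t.\ Player $\A$ up to $n$. Together with the two positive parts this gives both bullet points, hence the claimed position of $\formNF_n$ in the hierarchy. I expect the main obstacle to be the second positive bullet (Player $\A$ up to $n-1$): carrying out the case analysis on $E$ and exhibiting, uniformly, mixing weights meeting all the column-by-column parity constraints at once, together with the routine but lengthy computations of the values $u$ of the small simple parity games that underlie conditions (1) and (2).
\end{proof*}
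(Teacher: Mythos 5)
Your overall plan coincides with the paper's (reduce everything to the two conditions of Lemma~\ref{lem:opt_in_gf_equiv_opt_in_parity}, do a case analysis on $p$, and use a ``ladder'' counting argument plus an explicit bad environment), but at the two places where the lemma actually has content your sketch either stops short or commits to a choice that does not work. The clearest problem is the negative witness $E^\ast$: you take $c=0$, $e=n$ and propose to ``use $y,z$ to place the top colour $k_n$''. Since $n$ is even, any such placement rescues Player $\A$: if $y$ carries $k_n$, the row $a_\msf{Ex}$ reaches $y$ almost surely in the looping game, so Player $\A$ wins almost surely positionally; if $z$ carries a colour and $y$ a stopping value, one is in the easy case where $u=p(y)$ and $a_\msf{Ex}$ is again positionally optimal (and if both $y,z$ carry colours, $u\in\{0,1\}$ with $a_\msf{Ex}$ optimal). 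The paper's witness is built precisely to avoid this: the largest reachable colour is the \emph{odd} $n-1$ (so $e:=n-1$, $c:=0$, $x_i\mapsto k_i$), while $y\mapsto 1$ and $z\mapsto 0$ pin $u=3/4$ through the entries $\tfrac{3y+z}{4}$ and $\tfrac{y+3z}{4}$; then domination of $v_Y^u$ forces any candidate $\sigma_\A$ to avoid $a_\msf{b}$ and $a_\msf{Ex}$, after which the single column $b_\msf{l}$ creates a sure loop whose maximal colour is odd. Your alternative parenthetical (``a stopping value tuned so that $u^\ast$ equals a prescribed positive number'') is exactly where the work lies and is left unspecified.

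For the positive directions your text is a to-do list rather than an argument. The paper does not ``choose mixing weights feasible across all columns''; it first reduces to \emph{relevant} environments via Proposition~\ref{prop:relevant_env_sufficient} (so the bound becomes $e-c\leq n-2$ and the bookkeeping with $\msf{Sz}_\A$ works out), and then proves a dichotomy: either a single action ($a_\msf{Ex}$, some $a_i$ or $a_{i,i-1}$, resp.\ $b_\msf{Ex}$, $b_j$ or $b_{j,j-1}$) is already optimal, or optimal play reduces to the $2\times 2$ corner game $\mathcal{T}_n$ on $\{a_\msf{t},a_\msf{b}\}\times\{b_\msf{l},b_\msf{r}\}$; the only case where this fails for Player $\A$ is when no $x_i$ is a stopping outcome, and there the accumulated inequalities force a strictly increasing, parity-alternating chain $c\leq n_{x_0}<n_{x_1}<\cdots<n_{x_{n-1}}\leq e$, impossible when $e-c\leq n-2$ --- this chain is the precise form of your ``pigeonhole along the ladder'', and it is what you would need to exhibit. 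Two further points in your Player-$\B$ paragraph are not justified: restricting optimal $\B$-strategies to mixtures over $\{b_\msf{l},b_\msf{r},b_\msf{Ex}\}$ and the pure columns $b_0,b_2,\ldots$ omits the columns $b_{j,j-1}$, which the case analysis genuinely needs (e.g.\ when $w(x_j)>u$ for every even $j$ but $w(x_j,x_{j-1})\leq u$ for some even $j$), and the claim that a parity-violating boundary case ``can be avoided by perturbing the optimal $\B$-mixture without changing its value'' is asserted without proof.
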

The proof of this lemma is very tedious since it requires to consider a lot of different possibilities in terms of which outcome is mapped to which value or color. We give here an informal proof of this lemma in the case where $n = 2$
. The game form $\formNF_2$ is depicted in Figure~\ref{fig:gf_n_eq_2}.
\begin{figure}
	\begin{minipage}[t]{0.48\linewidth}
		\hspace*{-0.3cm}
		\centering
		\includegraphics[scale=0.65]{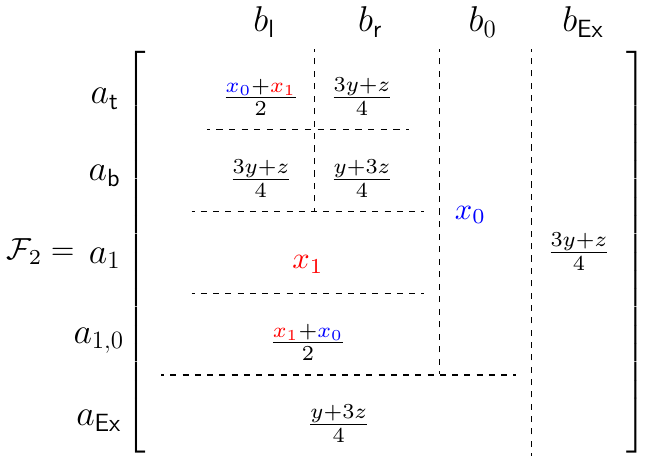}
		\caption{The game form $\formNF_2$.}
		\label{fig:gf_n_eq_2}  
	\end{minipage}
	\hspace*{0.1cm}
	\begin{minipage}[t]{0.48\linewidth}
		\centering
		\includegraphics[scale=1]{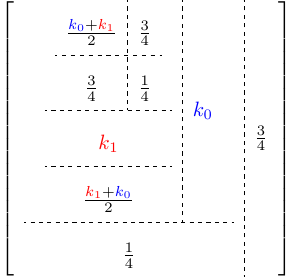}
		\caption{The game form $\formNF_2$ in a specific environment.}
		\label{fig:gf_n_eq_2_replace}  
	\end{minipage}
\end{figure}
\begin{proof}[Proof sketch]
	Consider a relevant environment $E = \langle c,e,p \rangle \in \msf{Env}(\outComeNF)$ and the corresponding parity game $\G_{(\formNF_2,E)}$ (recall Definition~\ref{def:parity_game_from_gf}). We let $u := \MarVal{\G_{(\formNF_2,E)}}(\qinit)$. For all $t \in \outComeNF_2$, if $p(t) \notin [0,1]$, we let $n_t \in \brck{0,e}$ denote the integer that $t$ is mapped to w.r.t. $p$, i.e. the integer ensuring $p(t) = k_{n_t}$.
	%
	We want to show that, in any case, Player $\B$ has a $\GF$-strategy that is optimal w.r.t. $(\formNF_2,E)$ and that this also holds for Player $\A$ as long as $c = e$ (i.e. if $\msf{Sz}(E) = 0 = n-2$). However, there is a relevant environment $E$ with $\msf{Sz}(E) = 1$ such that Player $\A$ has no $\GF$-strategy optimal w.r.t. $(\formNF_2,E)$. 
	There are several cases, we detail some of them.
	\begin{itemize}
		\item Assume that $p(y),p(z) \notin [0,1]$. This implies $u \in \{ 0,1 \}$ and playing action $a_\msf{Ex}$ for Player $\A$ and action $b_\msf{Ex}$ for Player $\B$ is optimal w.r.t. $(\formNF_2,E)$.
		\item Assume now that $p(y) \in [0,1]$ and $p(z) \notin [0,1]$. Then, we have $u = p(y)$. The reason is that, by playing action $a_\msf{Ex}$, Player $\A$ ensures that almost surely a stopping state of value $p(y)$ is reached. Furthermore, Player $\B$ can ensure the same thing by playing action $b_\msf{Ex}$. Then, playing action $a_\msf{Ex}$ for Player $\A$ and action $b_\msf{Ex}$ for Player $\B$ is optimal w.r.t. $(\formNF_2,E)$. This is similar if $p(y) \notin [0,1]$ and $p(z) \in [0,1]$.
		\item Assume now $p(y),p(z) \in [0,1]$. If $p(y) < p(z)$, then we have $u = \frac{3 p(y) + p(z)}{4}$, and as before playing action $a_\msf{Ex}$ for Player $\A$ and action $b_\msf{Ex}$ for Player $\B$ is optimal w.r.t. $(\formNF_2,E)$. 
		\item Assume now that $p(y),p(z) \in [0,1]$ and $p(y) \geq p(z)$. 
		If $p(x_0) \in [0,1]$ or $p(x_1) \in [0,1]$, we can also exhibit an optimal $\GF$-strategy for both players. 
		\item Let us now assume that $p(y),p(z) \in [0,1]$ with $p(y) \geq p(z)$. For simplicity, we assume that $p(y) = 1$ and $p(z) = 0$. In that case, we have:
		\begin{equation*}
		\frac{1}{4} = \frac{3 p(z) + p(y)}{4} \leq u \leq \frac{p(z) + 3 p(y)}{4} = \frac{3}{4}
		\end{equation*}
		Assume also that $p(x_0),p(x_1) \notin [0,1]$. Assume first that $n_{x_0}$ is odd. In that case, $u = \frac{1}{4}$. A Player-$\B$ strategy playing positionally action $b_0$ achieves this value in the game $\G_{(\formNF_2,E)}$
		. Thus, the Player-$\A$ $\GF$-strategy $a_{\msf{Ex}}$ is optimal w.r.t. $(\formNF_2,E)$. 
		
		Assume now that $n_{x_0}$ is even. Then, we have $u = \frac{3}{4}$: Indeed, for all $\varepsilon > 0$, a Player-$\A$ positional strategy playing action $a_\msf{t}$ with probability $1 - \varepsilon$ and action $a_\msf{b}$ with probability $\varepsilon$ has value at least $\frac{3}{4} - \varepsilon$ in the game $\G_{(\formNF_2,E)}$. Therefore, playing action $b_{\msf{Ex}}$ is always optimal for Player $\B$. It will also be the case for Player $\A$ if $\msf{Sz}(E) = 0$, i.e. if $c = e$. Indeed, in that case, we have either $n_{x_0} = n_{x_1} = 0$, in which case $u = \frac{3}{4}$ and the Player-$\A$ $\GF$-strategy $a_{1}$ is optimal w.r.t. $(\formNF_2,E)$; or $n_{x_0} = n_{x_1} = 1$, in which case we are in the scope of the previous case of this item since $n_{x_0}$ is odd. However, if $\msf{Sz}(E) = 1$, there may be an issue for Player $\A$, as we show below.
	\end{itemize}
	
	Consider now a relevant environment $E = \langle c,e,p \rangle$ of size 1 such that $c := 0$, 
	$p(y) := 1$, $p(z) := 0$, $p(x_0) = k_0$ and $p(x_1) = k_1$. We do have $\msf{Sz}(E) = 1$. What we obtain is depicted in Figure~\ref{fig:gf_n_eq_2_replace}. In that case, we have $u := \MarVal{\G_{\formNF_2,E}}(\qinit) = \frac{3}{4}$, as argued above in the last item. Consider any Player-$\A$ $\GF$-strategy $\sigma_\A \in \Sigma_\A(\formNF_2)$. If it plays action $a_\msf{b}$ or action $a_\msf{Ex}$ with positive probability, then the strategy $\s_\A^{(\formNF_2,E)}(\sigma_\A)$ does not dominate the valuation $v^u_{(\formNF_2,E)}$. However, if it does not play these actions with positive probability, the strategy $\s_\A^{(\formNF_2,E)}(\sigma_\A)$ has value 0 in $\MarVal{\G_{(\formNF_2,E)}}$ from $\qinit$. Indeed, by positionally  playing action $b_\msf{l}$, Player $\B$ ensures that: 1) surely, no stopping state is seen, and 2) almost surely, the state $k_1$ (of color 1) is seen infinitely often, while $c = 0$.
\end{proof}

Let us now consider the proof of Proposition~\ref{prop:inclustion_parm}. 
\begin{proof}
	Let $n \geq 1$. The inclusion $\msf{ParO}(n-1) \supseteq \msf{ParO}(n)$ is direct. Lemma~\ref{lem:gf_distinct} ensures that these two sets are not equal when $n$ is even. A similar lemma could be stated in the case where $n$ is odd.
\end{proof}

\bibliographystyle{plain}
\bibliography{biblio}

\appendices

\newpage
\onecolumn
\section{Explanations on Table~\ref{tab:intro} from the introduction}
\label{sec:explain_table}
Let us describe exactly what the rows and columns of the table refer to. First, the rows correspond to the objectives considered in a finite concurrent arena: 
\begin{itemize}
	\item Safety: Player $\A$ wants to avoid a target (i.e. a set of states);
	\item Reach: stands for reachability, Player $\A$ wants to reach a target;
	\item Büchi: Player $\A$ wants to reach a target infinitely often, it corresponds to a parity objective with colors $1$ and $2$;
	\item co-B: stands for co-Büchi, Player $\A$ wants to reach a target only finitely often, it corresponds to a parity objective with colors $0$ and $1$;
	\item Parity: arbitrary parity objective.
\end{itemize}
Consider now the columns. 
\begin{itemize}
	\item The first column on the left, $\exists$ opt strat?, tells if there always exist optimal strategies in finite games for these objectives. This holds for the safety objective, but it does not for reachability, therefore it does not either for Büchi, co-Büchi and parity as they are more involved than the reachability objective.
	\item The second column, opt, tells the simpler kind of strategies that allow to play optimally, assuming it is possible. In any cell, pos stands for positional whereas $\infty$ stands for infinite memory, i.e. strategies that cannot be described with a finite automaton.
	\item The third column, $\varepsilon$-opt, tells the simpler kind of strategies that allow to play $\varepsilon$-optimally. Note that such strategies always exist by definition of the value of the game.
	\item Finally, the forth column, SubG-opt, tells the simpler kind of strategies that allow to play subgame optimally, assuming it is possible. A strategy is subgame optimal if all of its residual strategies are optimal. (See e.g. \cite{BBSFoSSaCs23} for more details.) Note that being subgame is stronger than being optimal, however both notions coincide on positional strategies. Note that the need for infinite-memory strategies for arbitrary parity objectives already occurs with 3 colors.
\end{itemize}

\section{Algorithms}
\begin{figure}
	\begin{algorithmic}[1]
		\Procedure{CreateEnv}{$k,q,\colVirt$}
		\If{$k$ is even}
		\State return $E_{q,\colVirt}^k$
		\EndIf
		\If{$k$ is odd}
		\State $m \gets \max(k-2,0)$
		\State return $E_{q,\colVirt}^m$
		\EndIf
		\EndProcedure
	\end{algorithmic}
	\caption{\textsf{Build an environment from a state, a color and a coloring function}}
	\label{fig:CreateEnv}
\end{figure}

\begin{figure}
	\begin{algorithmic}[1]
		\Procedure{UpdCurSta}{$k,\colVirt,\msf{Ev}$}
		\For{$q \in Q_u$}
		\If{$\colVirt(q) = k$}
		\State $\msf{Ev}[q] \gets \msf{CreateEnv}(k,q,\colVirt)$
		\EndIf
		\EndFor
		\State \textsf{return $\msf{Ev}$}
		\EndProcedure
	\end{algorithmic}
	\caption{\textsf{Updates the Environment for the states of the current color}}
	\label{fig:UpdCurSta}
\end{figure}

\begin{figure}
	\begin{algorithmic}[1]
		\Procedure{UpdNewSta}{$k,\colVirt,\msf{Ev}'$}
		\State $\msf{change} \gets \msf{True}$
		\While{\textsf{change}}
		\State $\msf{change} \gets \msf{False}$
		\For{$q \in Q_u$}
		\If{
			$\newC(q,\colVirt) = k$}
		\State $\msf{Ev}'[q] \gets \msf{CreateEnv}(k,q,\colVirt)$; $\colVirt[q] \gets k$
		\State $\msf{change} \gets \msf{True}$; \textsf{break};
		\EndIf
		\EndFor
		\EndWhile
		\State \textsf{return $(\colVirt,\msf{Ev}')$}
		\EndProcedure
	\end{algorithmic}
	\caption{\textsf{Updates the Environment and Coloring Functions for new states of a specific color}}
	\label{fig:UpdNewSta}
\end{figure}

\begin{figure}
	\begin{algorithmic}[1]
		\Procedure{UpdateColEnv}{$k,\colVirt,\msf{Ev}'$}
		\State $\msf{Ev}' \gets \msf{UpdCurSta}(k,\colVirt,\msf{Ev}')$
		\State $\msf{UpdNewSta}(k,\colVirt,\msf{Ev}')$
		\EndProcedure
	\end{algorithmic}
	\caption{\textsf{Updates the Environment and Coloring Functions for a specific color}}
	\label{fig:UpdateColEnv}
\end{figure}

\begin{figure}
	\begin{algorithmic}[1]
		\Procedure{IncLeast}{$\msf{col}',\msf{Ev}'$}
		\State $c_{\msf{min}} \gets \min \colVirt$
		\For{$q \in Q_u$}
		\If{$\colVirt(q) = c_\msf{min}$}
		\State $\colVirt(q) \gets c_\msf{min}+2$;
		\EndIf
		\If{$\colVirt(q) = c_\msf{min}+1$}
		\State $\colVirt(q) \gets \colFunc(q)$;
		\State  $\msf{Ev}'[q] \gets \msf{NoEnv}$;
		\EndIf
		\EndFor
		\State $\msf{UpdNewSta}(c_\msf{min}+2,\colVirt,\msf{Ev}')$
		\EndProcedure
	\end{algorithmic}
	\caption{\textsf{Increases the least color, resets the last but least layer}}
	\label{fig:IncLeast}
\end{figure}


\begin{figure}
	\begin{algorithmic}[1]
		\Procedure{ComputeEnv}{$u,\colFunc$}
		\State $\colVirt \gets \colFunc$
		\State $\mathsf{Ev} \gets \mathsf{EmptyEnv}$
		\For{$k = e \textsf{ down to } 0$}
		\State $(\colVirt,\msf{Ev}) \gets \msf{UpdateColEnv}(k,\colVirt,\msf{Ev})$
		\EndFor
		\While{$\colVirt[Q_u] \neq \{ e \}$}
		\State $(\colVirt,\msf{Ev}) \gets \msf{IncLeast}(\colVirt,\msf{Ev})$
		\For{$k = e \textsf{ down to } 0$}
		\State $(\colVirt,\msf{Ev}) \gets \msf{UpdateColEnv}(k,\colVirt,\msf{Ev})$
		\EndFor
		\EndWhile
		\State \textsf{return }$(\colVirt,\msf{Ev})$
		\EndProcedure
	\end{algorithmic}
	\caption{\textsf{Computes an environment function guaranteeing $u$ for Player $\A$ in $Q_u$}}
	\label{fig:ComputeEnv}
\end{figure}

\section{Complements on strategies and valuations}

\subsection{How to implement stopping states}
\label{appen:implement_stopping_states}
This can be done in a straightforward manner: given a stopping state of value $u \in [0,1]$, we consider a probability distribution that goes with probability $u$ to  a (self-looping) state $q_\A$ of value 1 and that goes to a (self-looping) state $q_\B$ of value 0 with probability $1 - u$. Furthermore, note that the two corresponding BSCCs $\{ q_\A \}$ and $\{ q_\B \}$ (which exists regardless of the strategies of the players) behave properly w.r.t. the conditions for parity dominating a valuation (again, regardless of the strategies). 

\subsection{Markov chain induced by a pair of strategies}
\label{appen:markov_chain}
\begin{definition}[Markov chain induced by a pair of positional strategies]
	Consider a concurrent arena $\Aconc$ and two positional strategies $\s_\A$ (for Player $\A$) and $\s_\B$ (for Player $\B$). The Markov chain $\mathcal{M}_{\s_\A,\s_\B} = \langle Q,\mathbb{P}^{\s_\A,\s_\B} \rangle$ \emph{induced} by $\s_\A,\s_\B$ is such that, for all $q \in Q$
	:
	\begin{displaymath}
	\forall q' \in Q,\; \mathbb{P}^{\s_\A,\s_\B}(q)(q') := \outM_{\Games{\fsf(q)}{\mathbbm{1}_{q'}}}(\s_\A(q),\s_\B(q))
	\end{displaymath}
\end{definition}

\subsection{Proof of Theorem~\ref{thm:strongly_domnates_imply_guarantee}}
\label{appen:proof_thm_even_in_bscc_ok}
This theorem is straightforwardly deduced from already existing work, specifically, \cite{BBSCSL22} and \cite{BBSArXivICALP}. In particular, in this paper, results are stated by using the notion of MDPs and End Component. Informally, MDPs are games where only one Player plays. In particular, one obtains an MDP from a concurrent game by fixing the (positional) strategy of a player. An End Component $H$ (EC for short) in an MDP is a set of states $Q_H \subseteq Q$ such that one can ensure to stay in $Q'$ while the underlying graph of $H$ if strongly connected. In particular, BSCCs are special cases of end components.
\begin{proof*}
	The first part of the theorem is given by \cite[Proposition 18]{BBSCSL22}. As for the second part, we apply \cite[Lemma 17]{BBSArXivICALP}. Indeed, the first condition is ensured by assumption. As for the second, consider a End Component $H$ with $v_H > 0$. This can be seen as a finite MDP --- where only Player $\B$ plays --- with a parity objective. From \cite{DBLP:conf/soda/ChatterjeeJH04}, Player $\B$ has an optimal strategy $\s_\B$ that is positional and deterministic. Then, we obtain a Markov chain $H_{\s_\B}$ where all BSCCs are compatible with the strategy $\s_\A$ (and the minimum of their values w.r.t. the valuation $v$ is $v_H > 0$). Hence, by assumption, all these BSCCs are even-colored, that is they have value 1. It follows that all states in $H_{\s_\B}$ have value 1. Hence, the second condition of \cite[Lemma 17]{BBSArXivICALP} is ensured, and this lemma can be applied. 
\end{proof*}

\subsection{Proof of Lemma~\ref{lem:opt_in_gf_equiv_opt_in_parity}}
\label{appen:proof_prop_opt_in_gf_equiv_opt_in_parity}

\begin{proof*}
	The first equivalence comes from \cite[Lemma 16]{BBSFSTTCS22} and the fact that positional deterministic strategies are enough (for Player $\B$) to play optimally in finite MDPs with parity objectives \cite{DBLP:conf/soda/ChatterjeeJH04}. Now assume that $u > 0$. Assume that the strategy $\s_{Y}^\A(\sigma_\A)$ parity dominates 
	%
	%
	the valuation $v_Y^u$ in the game $\G_{Y}$. In particular, it dominates this valuation, i.e. item $(ii.1)$ of Lemma~\ref{lem:opt_in_gf_equiv_opt_in_parity} is ensured. Consider now some action $b \in \Act_\B$ such that $\outM_{\Games{\formNF}{p(\cdot)[p_{[0,1]}]}}(\sigma_\A,\s_\B(q_0)) = 0$ and a Player-$\B$ positional and deterministic strategy $\s_\B$ such that $\s_\B(q_0) := b$. 
	Since we have $\outM_{\Games{\formNF}{p(\cdot)[p_{[0,1]}]}}(\sigma_\A,\s_\B(q_0)) = 0$, no stopping state can be reached under $\s_{Y}^\A(\sigma_\A)$ and $\s_\B$. Consider the Markov chain $\mathcal{M}_{\s_{Y}^\A(\sigma_\A),\s_\B}$. Besides stopping states, it is reduced to a BSCC $H$ whose states are $q_{\msf{init}}$ and all states $k_i$ reachable with that action $b$. That is, $H = \{ q_{\msf{init}} \} \cup \{ k_i \mid i \in \brd,\; \outM_{\Games{\formNF}{p(\cdot)(i)}}(\sigma_\A,b) > 0 \}$. Furthermore, $q_{\msf{init}}$ is colored with $c$. The value of this BSCC $H$ is in $\{ 0,1 \}$ and it is equal to $1$ if and only if $H$ is even-colored (since every state in a BSCC is almost-surely seen infinitely often). Since $u > 0$, the value of $H$ cannot be 0, thus $H$ is even-colored and the maximum of the colors seen with that action is even. This exactly corresponds to item $(ii.2)$ of Lemma~\ref{lem:opt_in_gf_equiv_opt_in_parity}.
	
	Assume now that the $\GF$-strategy $\sigma_\A$ satisfies both items $(ii.1)$ and $(ii.2)$. Consider a positional deterministic Player-$\B$ strategy $\s_\B$ in the game $\G_Y$. Let $b := \s_\B(q_0) \in \Act_B$. Consider a BSCC $H$ in the induced Markov chain $\mathcal{M}_{\s_{Y}^\A(\sigma_\A),\s_\B}$ that is not reduced to a stopping state. In particular, this implies that $\outM_{\Games{\formNF}{p(\cdot)[p_{[0,1]}]}}(\sigma_\A,b) = 0$. Then, as previously, we have $H = \{ q_{\msf{init}} \} \cup \{ k_i \mid i \in \brd,\; \outM_{\Games{\formNF}{p(\cdot)(i)}}(\sigma_\A,b) > 0 \}$. The fact that $\sigma_\A$ satisfies item $(ii.2)$ ensures that the BSCC $H$ is even-colored. It follows that the strategy $\s_{Y}^\A(\sigma_\A)$ parity dominates the valuation $v_Y^u$.

\end{proof*}


\subsection{Proof of Lemma~\ref{lem:optimal_in_all_value_slice_optimal_everywhere}}
\label{appen:proof_prop_optimal_in_all_value_slice_optimal_everywhere}
\begin{proof*}
	Consider such a strategy $\s_\A$. Let us show that it parity dominates the valuation $\MarVal{\G}$. First, note that it dominates the valuation $\MarVal{\G}$ since, for all $u \in V_\G \setminus \{ 0 \}$,  the strategy $\s_\A^u$ dominates $\MarVal{\G}$ in $\G^u$ (recall that the stopping states in $\G^u$ have the values of the original states in $\G$). Consider now a BSCC $H$ compatible with $\s_\A$ such that $\min \MarVal{\G}[H] > 0$. By Theorem~\ref{thm:strongly_domnates_imply_guarantee}, there is a value $u_H \in (0,1]$ such that $\MarVal{\G}[H] = \{ u_H \}$. That is, $H \subseteq Q_{u_H}$. It follows that $H$ is compatible with $\s_\A^{u_H}$. Since $\s_\A^{u_H}$ parity dominates the valuation $\MarVal{\G}$ in $\G^{u_H}$, we can deduce that $H$ is even-colored. Overall, the strategy $\s_\A$ parity dominates the valuation $\MarVal{\G}$. By Theorem~\ref{thm:strongly_domnates_imply_guarantee}, the strategy $\s_\A$ guarantees it (i.e. it is optimal).
\end{proof*}

\section{Proof of Theorem~\ref{thm:transfer_local_global}}	
\label{appen:property_of_the_update_function}
In this section, we give all the technical details used to prove Theorem~\ref{thm:transfer_local_global}. The proof of this theorem is given in Page~\pageref{proof:mainthm}, provided that Lemmas~\ref{lem:algo_preserves_faithfulness},~\ref{lem:algo_increase_least_color},~\ref{lem:minimum_faithful_cannot_be_d_minus_one} and~\ref{lem:metric_work} hold. We already stated and argued why these lemmas hold, however we have not given a detailed proof of them. This is what we do in this section. However, we first state and prove results that we use in this section.

We first state and prove properties on the update of colors $\newC$. This is done in Page~\pageref{subsubsec:local_operator}. Then, we give the complete definition of faithfulness, see Page~\pageref{subsubsec:faithfulness}. We also state and prove three lemmas, that relate local and global behaviors, see Page~\pageref{subsubsec:three_lemmas}. We finally prove the lemmas mentioned above: we prove Lemma~\ref{lem:algo_preserves_faithfulness} in Page~\pageref{subsubsec:proof_lem_algo_preserves_faithfulness}, Lemma~\ref{lem:algo_increase_least_color} in Page~\pageref{subsubsec:proof_lem_algo_increase_least_color}, Lemma~\ref{lem:minimum_faithful_cannot_be_d_minus_one} in Page~\pageref{subsubsec:proof_lem_minimum_faithful_cannot_be_d_minus_one} and Lemma~\ref{lem:metric_work} in Page~\pageref{subsubsec:proof_lem_metric_work}.

\subsection{
	Properties ensured by the local operator $\newC$}
\label{subsubsec:local_operator}
We introduce a useful notation which will allow us to rewrite Lemma~\ref{lem:opt_in_gf_equiv_opt_in_parity} in our context. 
\begin{definition}
	\label{def:notation_max_color_local_BSCC}
	Consider a state $q \in Q$, a virtual coloring function $\colVirt: Q_u \rightarrow \brd$, a color $n \in \brd$ and two $\GF$-strategies $(\sigma_\A,\sigma_\B) \in \Sigma_\A(\fsf(q)) \times \Sigma_\B(\fsf(q))$.
	We denote by $\Col(q,\colVirt,\sigma_\A,\sigma_B)$ the set of colors reachable in one step with positive probability w.r.t. $(\sigma_\A,\sigma_\B)$ in the local game $\G_{q,\colVirt}^n$ (regardless of the color $n \in \brd$ considered). That is:
	\begin{displaymath}
	\Col(q,\colVirt,\sigma_\A,\sigma_B) := \{ i \in \brd \mid  \outM_{\Games{\fsf(q)}{\mathbbm{1}_{p_{q,\colVirt}^{-1}[k_i]}}}(\sigma_\A,\sigma_\B) > 0 \}
	\end{displaymath}
	In particular, $\Col(q,\colVirt,\sigma_\A,b) = \msf{Color}(\fsf(q),p_{q,\colVirt},\sigma_\A,b)$ for all $b \in \Act_\B$ (notation from Lemma~\ref{lem:opt_in_gf_equiv_opt_in_parity}).
	
	Then, the set of colors $\msf{ColBSCC}(q,\colVirt,n,\sigma_\A,\sigma_B)$ 
	is defined by:
	\begin{displaymath}
	\msf{ColBSCC}(q,\colVirt,n,\sigma_\A,\sigma_B) := \Col(q,\colVirt,\sigma_\A,\sigma_B) \cup \{ \max(\colVirt(q),c_n) \}
	\end{displaymath}
	with $c_n := n-1$ is $n$ is even and $c_n := n+1$ is $n$ is odd.
\end{definition}

We obtain a corollary of Lemma~\ref{lem:opt_in_gf_equiv_opt_in_parity} (which only consists in writing Lemma~\ref{lem:opt_in_gf_equiv_opt_in_parity} in the context of a local game $\G_{q,\colVirt}^n$).
\begin{corollary}
	\label{coro:opt_in_gf_equiv_opt_in_parity}
	Consider a state $q \in Q$, a virtual coloring function $\colVirt: Q_u \rightarrow \brd$, a color $n \in \brd$ and a local strategy $\sigma_\A \in \Sigma_\A(\fsf(q))$. 
	A Player-$\A$ $\GF$-strategy $\sigma_\A \in \Sigma_\A(\formNF)$ is
	\emph{optimal} w.r.t. $Y := (\fsf(q),E_{q,\colVirt}^n)$ if and only if, letting
	$u := \MarVal{\G_Y}(q_\msf{init})$, either (i) $u = 0$, or (ii)
	the positional Player-$\A$ strategy $\s^Y_\A(\sigma_\A)$ parity
	dominates the valuation $v_Y^u$. 
	
	Furthermore (ii) is equivalent to: 
	\begin{itemize}
		\item (1): the Player-$\A$ positional strategy $\s^Y_\A(\sigma_\A)$ dominates the valuation $v_Y^u$; and
		\item (2): for all $b \in \Act_\B$, if $\outM_{\Games{\formNF}{\mathbbm{1}_{p^{-1}[0,1]}}}(\sigma_\A,b) = 0$, then $\max \msf{ColBSCC}(q,\colVirt,n,\sigma_\A,\sigma_B)$ is even. 
	\end{itemize}
	
	This is symmetrical for Player $\B$.
\end{corollary}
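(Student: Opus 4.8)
The plan is to derive Corollary~\ref{coro:opt_in_gf_equiv_opt_in_parity} as a direct instantiation of Lemma~\ref{lem:opt_in_gf_equiv_opt_in_parity}, using that the local game $\G_{q,\colVirt}^n$ is by construction the game $\G_Y$ for $Y := (\fsf(q),E_{q,\colVirt}^n)$. First I would set $\formNF := \fsf(q)$ and take as environment $E := E_{q,\colVirt}^n = \langle \max(c_n,\colVirt(q)),e,p_{q,\colVirt}\rangle$ from Definition~\ref{def:local_env_induced_coloring_func}, so that $\G_Y = \G_{q,\colVirt}^n$ and $\MarVal{\G_Y}(q_\msf{init}) = u$. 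With this substitution, the first equivalence of the corollary — optimality of $\sigma_\A$ w.r.t.\ $Y$ iff $u = 0$ or $\s^Y_\A(\sigma_\A)$ parity dominates $v_Y^u$ — together with the splitting of (ii) into (1) (domination of $v_Y^u$) plus a color condition (2), are literally the two claims of Lemma~\ref{lem:opt_in_gf_equiv_opt_in_parity}. I would also remark that the valuation $v_Y^u$ depends only on the outcome set $Q$ and on the map $p_{q,\colVirt}$, not on any coloring, hence it is the same for every $n$ and coincides with the $v_{q,\colVirt}^u$ of Definition~\ref{def:local_env_induced_coloring_func}; nothing in (i) or (1) therefore needs rephrasing.

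The only genuine translation step concerns item (2). Lemma~\ref{lem:opt_in_gf_equiv_opt_in_parity} states that, whenever $b \in \Act_\B$ satisfies $\outM_{\Games{\fsf(q)}{\mathbbm{1}_{p_{q,\colVirt}^{-1}[0,1]}}}(\sigma_\A,b) = 0$, the quantity $\max(\mathsf{Color}(\fsf(q),p_{q,\colVirt},\sigma_\A,b) \cup \{ c \})$ is even, where $c$ is the first component of $E_{q,\colVirt}^n$, i.e.\ $c = \max(c_n,\colVirt(q))$. Next I would invoke the identity recorded in Definition~\ref{def:notation_max_color_local_BSCC}, namely $\mathsf{Color}(\fsf(q),p_{q,\colVirt},\sigma_\A,b) = \Col(q,\colVirt,\sigma_\A,b)$; the set whose maximum must be even then becomes $\Col(q,\colVirt,\sigma_\A,b) \cup \{ \max(c_n,\colVirt(q)) \}$, which is precisely $\msf{ColBSCC}(q,\colVirt,n,\sigma_\A,b)$ by definition. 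This recovers item (2) verbatim, and the claim ``symmetrical for Player $\B$'' follows from the corresponding symmetric clause of Lemma~\ref{lem:opt_in_gf_equiv_opt_in_parity} together with the analogous unfolding of $c = \max(c_n,\colVirt(q))$.

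I do not expect any real obstacle: the corollary is a bookkeeping reformulation of an already-proved lemma. The one point to keep track of is the index conventions — in particular that $\Col(q,\colVirt,\sigma_\A,\sigma_\B)$ is read off from $p_{q,\colVirt}$ alone and hence, as the statement asserts, does not depend on $n$ (the games $\G_{q,\colVirt}^n$ differ only in the color assigned to $q_\msf{init}$), whereas all the $n$-dependence in item (2) is carried by the constant $c_n$ inside $\msf{ColBSCC}$, exactly mirroring the role of $c = \max(c_n,\colVirt(q))$ in the underlying environment $E_{q,\colVirt}^n$.
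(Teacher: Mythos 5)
Your proposal is correct and matches the paper's treatment: the paper itself introduces this corollary as merely ``writing Lemma~\ref{lem:opt_in_gf_equiv_opt_in_parity} in the context of a local game $\G_{q,\colVirt}^n$'', which is exactly your instantiation with $E := E^n_{q,\colVirt}$ followed by the rewriting $\mathsf{Color}(\fsf(q),p_{q,\colVirt},\sigma_\A,b) \cup \{\max(c_n,\colVirt(q))\} = \msf{ColBSCC}(q,\colVirt,n,\sigma_\A,b)$ from Definition~\ref{def:notation_max_color_local_BSCC}. Your bookkeeping of the constant $c = \max(c_n,\colVirt(q))$ and of the $n$-independence of $v_Y^u$ is accurate, so nothing is missing.
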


Let us now state a proposition we will use to prove that the update of colors cannot decrease the colors of the states.
\begin{proposition}
	\label{prop:dominating_strat_in_gf_other_def}
	%
	Let $q \in Q_u$ 
	and some virtual coloring function $\colVirt: Q_u \rightarrow \brd$, some color $n \in \brd$ and a positive value $z \in (0,1]$. Let $p := p_{q,\colVirt}$ and $Y := (\fsf(q),E_{q,\colVirt}^n)$. For all Player-$\A$ $\GF$-strategies $\sigma_\A \in \Sigma_\A(\fsf(q))$
	, the positional Player-$\A$ strategy $\s_Y^\A(\sigma_\A)$ dominates the valuation $v_Y^z$ if and only if for all $b \in \Act_\B$: if $\outM_{\Games{\fsf(q)}{\mathbbm{1}_{Q \setminus Q_u}}}(\sigma_\A,b) > 0$, then: 
	\begin{equation*}
	\outM_{\Games{\fsf(q)}{\mu^z}}(\sigma_\A,b) \geq 0
	\end{equation*}
	where $\mu^z: Q \rightarrow [0,1]$ is such that, for all $q \in Q$:
	\begin{equation*}
	\mu^z(q) := \begin{cases}
	0 & \text{ if }q \in Q_u \\
	\MarVal{\G}(q) - z & \text{ otherwise} 
	\end{cases}
	\end{equation*}
\end{proposition}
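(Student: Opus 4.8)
The plan is to unfold the definition of ``$\s_Y^\A(\sigma_\A)$ dominates $v_Y^z$'' at the state $q_\msf{init}$ of the local game $\G_Y = \G_{q,\colVirt}^n$, and to observe that the condition at all other states of $\G_Y$ (the trivial states $k_i$ and the stopping states) is automatically satisfied, so that only the inequality at $q_\msf{init}$ matters. Recall (Definition~\ref{def:dominate_strongly_dominate_gurantee}) that $\s_Y^\A(\sigma_\A)$ dominates $v_Y^z$ iff for every state $s$ of $\G_Y$ we have $v_Y^z(s) \leq \va_{\Games{\fsf'(s)}{v_Y^z}}(\s_Y^\A(\sigma_\A)(s))$; at the trivial states $k_i$, which loop back to $q_\msf{init}$ with probability $1$, this reads $v_Y^z(k_i) = z \leq z = v_Y^z(q_\msf{init})$, which holds; at a stopping state $x \in p_{[0,1]}$ there is nothing to check. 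So the whole condition boils down to the single inequality at $q_\msf{init}$, namely $z \leq \va_{\Games{\fsf(q)^p}{v_Y^z}}(\sigma_\A) = \min_{b \in \Act_\B} \outM_{\Games{\fsf(q)^p}{v_Y^z}}(\sigma_\A,b)$.

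Next I would rewrite, for a fixed $b \in \Act_\B$, the outcome $\outM_{\Games{\fsf(q)^p}{v_Y^z}}(\sigma_\A,b)$ using the definition of $\outCNF_p$ (Definition~\ref{def:parity_game_from_gf}): pushing $v_Y^z$ through $p = p_{q,\colVirt}$, each outcome $q' \in Q_u$ contributes $z$ (since $p$ sends it to $q_\msf{init}$ or some $k_i$, all valued $z$ by $v_Y^z$), while each $q' \in Q \setminus Q_u$ contributes $\MarVal{\G}(q')$ (since $p$ sends it to the stopping state $\MarVal{\G}(q') \in [0,1]$). Hence $\outM_{\Games{\fsf(q)^p}{v_Y^z}}(\sigma_\A,b) = \sum_{q' \in Q_u} \outM_{\Games{\fsf(q)}{\mathbbm{1}_{q'}}}(\sigma_\A,b)\cdot z + \sum_{q' \in Q \setminus Q_u} \outM_{\Games{\fsf(q)}{\mathbbm{1}_{q'}}}(\sigma_\A,b)\cdot \MarVal{\G}(q')$. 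Writing $\beta := \outM_{\Games{\fsf(q)}{\mathbbm{1}_{Q \setminus Q_u}}}(\sigma_\A,b)$ for the probability, under $\sigma_\A$ and $b$, of leaving $Q_u$, and using that the total mass is $1$, the sum over $Q_u$ equals $(1-\beta)z$, so the quantity is $(1-\beta)z + \sum_{q' \in Q\setminus Q_u}\outM_{\Games{\fsf(q)}{\mathbbm{1}_{q'}}}(\sigma_\A,b)\cdot\MarVal{\G}(q')$. Then the inequality $z \leq \outM_{\Games{\fsf(q)^p}{v_Y^z}}(\sigma_\A,b)$ is equivalent, after subtracting $(1-\beta)z$ from both sides, to $\beta z \leq \sum_{q'\in Q\setminus Q_u}\outM_{\Games{\fsf(q)}{\mathbbm{1}_{q'}}}(\sigma_\A,b)\cdot\MarVal{\G}(q')$, i.e.\ to $0 \leq \sum_{q'\in Q\setminus Q_u}\outM_{\Games{\fsf(q)}{\mathbbm{1}_{q'}}}(\sigma_\A,b)\cdot(\MarVal{\G}(q') - z) = \outM_{\Games{\fsf(q)}{\mu^z}}(\sigma_\A,b)$, since $\mu^z$ vanishes on $Q_u$ and equals $\MarVal{\G}(q')-z$ elsewhere.

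Finally I would handle the case distinction on $\beta$. If $\beta = \outM_{\Games{\fsf(q)}{\mathbbm{1}_{Q\setminus Q_u}}}(\sigma_\A,b) = 0$, then $\outM_{\Games{\fsf(q)}{\mu^z}}(\sigma_\A,b) = 0$ as well (every term with $q' \notin Q_u$ has zero weight), so the inequality at $b$ holds trivially and imposes no constraint; this is why the statement only needs to quantify over those $b$ with $\outM_{\Games{\fsf(q)}{\mathbbm{1}_{Q\setminus Q_u}}}(\sigma_\A,b) > 0$. For such $b$, the chain of equivalences above shows that the domination inequality at $q_\msf{init}$ for this $b$ is exactly $\outM_{\Games{\fsf(q)}{\mu^z}}(\sigma_\A,b) \geq 0$. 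Taking the conjunction over all $b \in \Act_\B$ yields the claimed characterization. I expect no serious obstacle here; the only points requiring a little care are the bookkeeping that the mass outside $Q_u$ is accounted for consistently (using $z>0$ is not even needed for the equivalence, only for it to be the relevant notion), and the observation that $p_{q,\colVirt}$ sends $q_\msf{init}$ and all $k_i$ to $v_Y^z$-value $z$ while sending $Q\setminus Q_u$ to stopping states carrying their $\MarVal{\G}$-values. The symmetric statement for Player $\B$ follows by the same computation with the inequalities reversed.
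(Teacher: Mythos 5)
Your proposal is correct and follows essentially the same route as the paper's proof: reduce domination to the single inequality $z \leq \va_{\Games{\fsf(q)^{p}}{v_Y^z}}(\sigma_\A)$ at $q_\msf{init}$, split the mass under $(\sigma_\A,b)$ into the part staying in $\{q_\msf{init}\}\cup K_e$ (valued $z$) and the part leaving to stopping states, and then rearrange to obtain $\outM_{\Games{\fsf(q)}{\mu^z}}(\sigma_\A,b)\geq 0$, with the exit-probability-zero case being vacuous. Your explicit remark that the domination condition at the trivial states $k_i$ and at the stopping states holds automatically is a detail the paper leaves implicit, but it does not change the argument.
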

\begin{proof*}
	In the game $\G_Y$, the Player-$\A$ strategy $\s_Y^\A(\sigma_\A)$ dominates the valuation $v_Y^z$ --- property we denote $(1)$ --- if and only if:
	\begin{align*}
	z = v_Y^z(q_\msf{init}) & \leq \va{\Games{\fsf(q)}{v_Y^z}}(\s_Y^\A(\sigma_\A)(q_\msf{init})) \\
	& = \va{\Games{\fsf(q)}{v_Y^z}}(\sigma_\A)
	\end{align*}
	In addition, $\va{\Games{\fsf(q)}{v_Y^z}}(\sigma_\A) = \min_{b \in \Act_\B} \outM_{\Games{\fsf(q)}{v_Y^z}}(\sigma_\A,b)$. Furthermore, for all $b \in \Act_\B$, we have that
	\begin{align*}
	\outM_{\Games{\fsf(q)}{v_Y^z}}(\sigma_\A,b) & = \outM_{\Games{\fsf(q)}{\mathbbm{1}_{p^{-1}[\{ q_\msf{init} \} \cup K_e]}}}(\sigma_\A,b) \cdot z \\
	& + \sum_{x \in V_{Q \setminus Q_u}} \outM_{\Games{\fsf(q)}{\mathbbm{1}_{p^{-1}[x]}}}(\sigma_\A,b) \cdot x
	\end{align*}
	It follows that, if $\outM_{\Games{\fsf(q)}{\mathbbm{1}_{Q \setminus Q_u}}}(\sigma_\A,b) = 0$, we have $\outM_{\Games{\fsf(q)}{v_Y^z}}(\sigma_\A,b) = z$ since $\cup_{x \in V_{Q \setminus Q_u}} p^{-1}[x] = Q \setminus Q_u$. However, if $\outM_{\Games{\fsf(q)}{\mathbbm{1}_{Q \setminus Q_u}}}(\sigma_\A,b) > 0$, we have that 
	\begin{align*}
	z \leq \outM_{\Games{\fsf(q)}{v_Y^z}}(\sigma_\A,b) \Leftrightarrow 0 & \leq
	\sum_{x \in V_{Q \setminus Q_u}} \outM_{\Games{\fsf(q)}{\mathbbm{1}_{p^{-1}[x]}}}(\sigma_\A,b) \cdot ( x - z) \\
	0 & \leq
	\outM_{\Games{\fsf(q)}{\sum_{x \in V_{Q \setminus Q_u}} \mathbbm{1}_{p^{-1}[x]} \cdot ( x - z)}}(\sigma_\A,b) \\
	0 & \leq \outM_{\Games{\fsf(q)}{\mu^z}}(\sigma_\A,b)
	\end{align*}
	The result follows.
\end{proof*}

{\bf The update of colors does not decrease the color.}
\label{appen:prop_update_color_at_least_preivous_color}
Let us show that the local operator $\newC$ does not decrease (w.r.t. the usual order $<$ on natural numbers) the previous color of the state $q$ given by a (virtual) coloring function $\colVirt$.
\begin{proposition}
	\label{prop:update_color_at_least_preivous_color}
	Consider a state $q \in Q_u$ and a coloring function $\colVirt: Q_u \rightarrow \llbracket 0,e \rrbracket$. We have $\newC(q,\colVirt) \geq \colVirt(q)$.
\end{proposition}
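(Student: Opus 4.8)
The plan is to exploit the fact that the local game $\G^n_{q,\colVirt}=\G_{(\fsf(q),E^n_{q,\colVirt})}$ depends on the parameter $n$ only through the color it assigns to $\qinit$, namely $\max(c_n,\colVirt(q))$, with $c_n=n-1$ for $n$ even and $c_n=n+1$ for $n$ odd (Definition~\ref{def:local_env_induced_coloring_func}). Write $m:=\colVirt(q)$ and $V_n:=\MarVal{\G^n_{q,\colVirt}}(\qinit)$. By Definitions~\ref{def:parity_game_from_gf} and~\ref{def:local_env_induced_coloring_func} the arena, transitions, stopping values and the colors of the states $k_i$ of $\G^n_{q,\colVirt}$ do not depend on $n$; hence $\max(c_n,m)=\max(c_{n'},m)$ implies $\G^n_{q,\colVirt}=\G^{n'}_{q,\colVirt}$ and so $V_n=V_{n'}$. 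First I would record the elementary computation: if $m$ is even then $\max(c_n,m)=m$ for all $n\in\brck{0,m}$, so $V_n=V_m$ on $\brck{0,m}$; if $m$ is odd then $\max(c_n,m)=m$ for all $n\in\brck{0,m-1}\cup\{m+1\}$, so $V_n=V_{m+1}$ on that set. Both follow directly from $c_n\in\{n-1,n+1\}$ together with $c_n\leq m$ on the stated ranges.

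Then I would conclude by a short case split, keeping in mind that $u>0$, that $\newC(q,\colVirt)=\max_{\prec_{\msf{par}}}S$ with $S:=\{n\in\brd\mid V_n\geq u\}$ finite and non-empty (so that $\newC(q,\colVirt)\in S$ and $\newC(q,\colVirt)\succeq_{\msf{par}}n$ for every $n\in S$), and that for even $m$ the integers $\succeq_{\msf{par}}m$ are exactly the even integers $\geq m$. Suppose $m$ is even. If $V_m\geq u$, then $m\in S$, hence $\newC(q,\colVirt)\succeq_{\msf{par}}m$, hence $\newC(q,\colVirt)$ is an even integer $\geq m$. If $V_m<u$, the computation above gives $V_n<u$ for all $n\in\brck{0,m}$, i.e.\ $S\subseteq\brck{m+1,e}$, so $\newC(q,\colVirt)\in S$ yields $\newC(q,\colVirt)\geq m+1>m$.

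Now suppose $m$ is odd, and put $W:=V_{m+1}$. If $W\geq u$, then $m+1\in S$, hence $\newC(q,\colVirt)\succeq_{\msf{par}}(m+1)$ with $m+1$ even, hence (as in the even case) $\newC(q,\colVirt)$ is even and $\geq m+1>m$. If $W<u$, the computation gives $V_n<u$ for all $n\in\brck{0,m-1}\cup\{m+1\}$, i.e.\ $S\subseteq\{m\}\cup\brck{m+2,e}$, so every element of $S$ is $\geq m$, in particular $\newC(q,\colVirt)\geq m$. In all cases $\newC(q,\colVirt)\geq m=\colVirt(q)$, as required.

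The only delicate point I anticipate is the interaction with the order $\prec_{\msf{par}}$: because an even color $\prec_{\msf{par}}$-dominates every odd one, merely producing some $n\in S$ with $n\geq m$ would not suffice — one has to rule out that $\max_{\prec_{\msf{par}}}S$ is an even integer strictly below $m$, and this is exactly what the equalities $V_n=V_m$ on $\brck{0,m}$ (resp.\ $V_n=V_{m+1}$ on $\brck{0,m-1}\cup\{m+1\}$) are there to guarantee. Everything else is routine arithmetic on $\max(c_n,m)$ across the two parities of $m$.
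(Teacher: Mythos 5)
Your core argument is sound and is, in essence, a streamlined version of what the paper does in its second case ($\colVirt(q) < e$): there the paper also observes that for $n < \colVirt(q)$ the games $\G^{n}_{q,\colVirt}$ and $\G^{n+2}_{q,\colVirt}$ (resp.\ $\G^{\ScPar(n)}_{q,\colVirt}$ for odd $n$) assign the same color $\max(c_n,\colVirt(q)) = \colVirt(q)$ to $\qinit$, and it transfers an optimal $\GF$-strategy from one to the other through the parity-domination characterization (Corollary~\ref{coro:opt_in_gf_equiv_opt_in_parity}) to contradict $\prec_{\msf{par}}$-maximality. You replace that transfer by the blunter and cleaner remark that the two games are literally identical, and then conclude by direct arithmetic on $\prec_{\msf{par}}$; your handling of the pitfall that a small \emph{even} color could $\prec_{\msf{par}}$-dominate larger odd ones is exactly the right point to address, and your case split does address it correctly.

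Where you part company with the paper is the assertion, made without proof, that $S := \{ n \in \brd \mid \MarVal{\G^{n}_{q,\colVirt}}(\qinit) \geq u \}$ is non-empty. Note that $u>0$ does not help here; it is precisely what makes non-emptiness non-trivial. The issue is sharpest when $\colVirt(q) = e$: then \emph{all} the games $\G^{n}_{q,\colVirt}$ coincide (the color of $\qinit$ is $e$ for every $n$), so $S$ is either all of $\brd$ or empty, and your dichotomy degenerates to ``either $\MarVal{\G^{e}_{q,\colVirt}}(\qinit) \geq u$, or a case excluded by fiat''; in that configuration the whole content of the proposition is the non-emptiness you assume. The paper's first case is devoted exactly to this: assuming $\MarVal{\G^{e}_{q,\colVirt}}(\qinit) = u' < u$, it takes an optimal Player-$\B$ $\GF$-strategy of the local game, shows it must exit towards $Q \setminus Q_u$ with probability bounded away from $0$ against every Player-$\A$ action (otherwise Player $\A$ loops on $\qinit$, which carries the top even color $e$, and wins surely), and composes it with $\delta$-optimal Player-$\B$ continuations in $\G$ to push the value of $q$ strictly below $u$, contradicting $q \in Q_u$. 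This global-to-local step (in effect one half of the one-step optimality of $\MarVal{\G}$ at $q$) appears nowhere in your proposal and is not free. If the proposition is read as presupposing that $\newC(q,\colVirt)$ is well defined — a reading the paper's own second case also adopts when it sets $n := \newC(q,\colVirt)$ — then your proof is complete and arguably more economical; if well-definedness is part of what must be established (as the paper's treatment of $\colVirt(q)=e$ indicates), you must add an argument of the above kind, i.e.\ exhibit some $n$ with $\MarVal{\G^{n}_{q,\colVirt}}(\qinit) \geq u$ by relating the local game to the value of $q$ in $\G$.
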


\begin{proof*}
	We let 
	$p := p_{q,\colVirt}$ 
	and, for all $n \in \brd$, we let $Y_n := (\fsf(q),E_{q,\colVirt}^n)$. 
	Note that $p_{[0,1]} \subseteq V_{Q \setminus Q_u}$. 
	
	There are two cases: either $\colVirt(q) = e$ or $\colVirt(q) < e$. First, assume that $\colVirt(q) = e$. Let us show that in that case $\newC(q,\colVirt) = e$. Assume towards a contradiction that $\MarVal{\G_{Y_e}}(q_{\msf{init}}) = u' < u$ for some $u' \in [0,1]$. Consider a Player-$\B$ $\GF$-strategy $\sigma_\B$ that is optimal w.r.t. $Y_e
	$. 
	For all $a \in \Act_\A$, we have $
	\outM_{\Games{\formNF}{\mathbbm{1}_{Q \setminus Q_u}}}(a,\sigma_\B) > 0$. Indeed, otherwise, in the game $\G_{q,\colVirt}^e$ where Player $\B$ plays the strategy defined by $\sigma_\B$, Player $\A$ could loop indefinitely on $q_\msf{init}$ thus ensuring winning with probability 1 (since the color of the state $q_\msf{init}$ is $e$ --- as $\colVirt(q) = e$ --- which is both the highest color appearing in the game and even). We let:
	\begin{equation*}
		p_{\msf{Exit}} := \min_{a \in \Act_\A} \outM_{\Games{\formNF}{\mathbbm{1}_{Q \setminus Q_u}}}(a,\sigma_\B) > 0
	\end{equation*}
	
	Now, consider some $a \in \Act_\A$. By Proposition~\ref{prop:dominating_strat_in_gf_other_def} for Player $\B$, we have:
	$\outM_{\Games{\formNF}{\mu^{u'}}}(a,\sigma_\B) \leq 0$ where $\mu^{u'}: Q \rightarrow [0,1]$ comes from Proposition~\ref{prop:dominating_strat_in_gf_other_def}. Hence, we have:
	\begin{displaymath}
	\sum_{x \in V_{Q \setminus Q_u}} \outM_{\Games{\formNF}{\mathbbm{1}_{Q_x}}}(a,\sigma_\B) \cdot x \leq \sum_{x \in V_{Q \setminus Q_u}} \outM_{\Games{\formNF}{\mathbbm{1}_{Q_x}}}(a,\sigma_\B) \cdot u'
	\end{displaymath}
	Letting $v := \MarVal{\G}$ be the value vector in the game $\G$, we have:
	\begin{align*}
	\sum_{q' \in Q} \prob{q,q'}{}{}(a,\sigma_\B) \cdot v(q') & = \sum_{q' \in Q} \outM_{\Games{\formNF}{\mathbbm{1}_{q'}}}(a,\sigma_\B) \cdot v(q')  \\
	& = \outM_{\Games{\formNF}{\mathbbm{1}_{Q_u}}}(a,\sigma_\B) \cdot u + \sum_{x \in V_{Q \setminus Q_u}} \outM_{\Games{\formNF}{\mathbbm{1}_{Q_x}}}(a,\sigma_\B) \cdot x  \\
	& \leq \outM_{\Games{\formNF}{\mathbbm{1}_{Q_u}}}(a,\sigma_\B) \cdot u + \sum_{x \in V_{Q \setminus Q_u}} \outM_{\Games{\formNF}{\mathbbm{1}_{Q_x}}}(a,\sigma_\B) \cdot u'  \\
	& = \outM_{\Games{\formNF}{\mathbbm{1}_{Q_u}}}(a,\sigma_\B) \cdot u + \outM_{\Games{\formNF}{\mathbbm{1}_{Q \setminus Q_u}}}(a,\sigma_\B) \cdot u'  \\
	& = u - \outM_{\Games{\formNF}{\mathbbm{1}_{Q \setminus Q_u}}}(a,\sigma_\B) 
	\cdot (u - u') \\
	& \leq u - p_{\msf{Exit}}\cdot (u - u') < u = v(q)
	\end{align*}
	Hence, letting $\delta := p_{\msf{Exit}}\cdot (u - u')/2 > 0$ and considering, in the original game $\G$, a Player-$\B$ strategy $\s_\B \in \msf{S}_\B^\Aconc$ such that $\s_\B(q) := \sigma_\B$ and for all $q' \in Q$, we have $\s_\B(q \cdot q')$ a $\delta$-optimal Player-$\B$ strategy from $q'$, it follows that, for all Player-$\A$ strategy $\s_\A$ in the game $\G$, we have: 
	\begin{align*}
	\prob{\Aconc,q}{\s_\A,\s_\B}{}[W] & = \sum_{q' \in Q} \prob{q,q'}{}{}(\s_\A(q),\sigma_\B) \cdot \prob{\Aconc,q'}{\s_\A^q,\s_\B^q}{}[W] \\
	& \leq \sum_{q' \in Q} \prob{q,q'}{}{}(\s_\A(q),\sigma_\B) \cdot (v(q') + \delta) \leq v(q) - 2 \delta + \delta = v(q) - \delta
	\end{align*}
	where $W := (\colFunc^{\omega})^{-1}[W(\colFunc)] \subseteq Q^\omega$. 
	Thus, the value from $q$ is less than $u = v(q)$
	. Hence the contradiction. In fact, $\MarVal{\G_{q,\colVirt}^e}(q_{\msf{init}}) \geq u$ and $\newC(q,\colVirt) = e$.
	
	Consider now the case where $\colVirt(q) < e$. Assume towards a contradiction that $\newC(q,\colVirt) < \colVirt(q)$. Let $n := \newC(q,\colVirt)$. 
	
	Assume that $n$ is even. 
	By assumption, we have $\MarVal{\G_{Y_n}}(q_{\msf{init}}) \geq u$. Consider a Player-$\A$ $\GF$-strategy $\sigma_\A$ that is optimal w.r.t. $Y_n$. First, the positional Player-$\A$ strategy $\s_\A := \s_{Y_n}^A(\sigma_\A) = \s_{Y_{n+2}}^A(\sigma_\A)$ defined by $\sigma_\A$ dominates the valuation $v_{Y_n}^u = v_{Y_{n+2}}^u$ in the game $\G_{Y_n}$ and it also does in the game $\G_{Y_{n+2}}$. Consider now an action $b \in \Act_\B$ and assume that
	$\outM_{\Games{\formNF}{\mathbbm{1}_{p^{-1}[0,1]}}}(\sigma_\A,b) = 0$. 	 
	By Corollary~\ref{coro:opt_in_gf_equiv_opt_in_parity}, it follows that, in the game $\G_{Y_n}$, we have $\max \msf{ColBSCC}(q,\colVirt,n,\sigma_\A,b)$ even with $\msf{ColBSCC}(q,\colVirt,n,\sigma_\A,b) = \msf{Col}(q,\colVirt,\sigma_\A,b) \cup \{ e_n \}
	$ with $e_n = \max(\colVirt(q),c_n)$ and $c_n = n-1$ since $n$ is even. Since $n < \colVirt(q)$, we have $e_n = \max(\colVirt(q),c_n) = \colVirt(q)$. Furthermore, $e_{n+2} = \max(\colVirt(q),c_{n+2})$ with $c_{n+2} = n+1$ since $n$ is even. Since $n < \colVirt(q)$, we also have $e_{n+2} = \colVirt(q) = e_n$. That is, $\msf{ColBSCC}(q,\colVirt,n,\sigma_\A,b) = \msf{ColBSCC}(q,\colVirt,n+2,\sigma_\A,b)$. Hence, $\msf{ColBSCC}(q,\colVirt,n+2,\sigma_\A,b)$ is also even. By Corollary~\ref{coro:opt_in_gf_equiv_opt_in_parity}, 
	the Player-$\A$ $\GF$-strategy $\sigma_\A$ strongly dominates the valuation $v_{Y_{n+2}}^u$. Hence, by Proposition~\ref{thm:strongly_domnates_imply_guarantee}, the Player-$\A$ positional strategy $\s_{Y_{n+2}}^\A(\sigma_\A)$ guarantees the valuation $v_{Y_{n+2}}^u$. Hence the contradiction since this implies $\newC(q,\colVirt) \geq n+2$.
	
	When $n$ is odd, the reasoning is symmetrical, by taking the point-of-view of Player $\B$: we compare what happens in the games $\G_{Y_m}$ and $\G_{Y_m}$ for $m := \ScPar(n)$.
\end{proof*}

{\bf The update of colors is not affected by small changes of colors.}
\label{appen:prop_update_color_does_not_change_if_small_color_change}	
Let us now tackle another property ensured by the local operator $\newC$. 
Assume that the new color of a state $q$ is $n \in \brd$ w.r.t. a coloring function $\colVirt$. Now, consider another coloring function $\colVirt'$ that coincide with $\colVirt$ on colors at least $n$, and may differ for smaller colors. In that case, the new color of $q$ will still be $n$ w.r.t. the coloring function $\colVirt'$. This is (almost) what we prove here. First, 
we introduce the notion of equivalent and prevailing coloring functions.
\begin{definition}[Equivalent and Prevailing coloring functions]
	\label{def:k_similar_coloring_function}
	Consider two coloring functions $\colVirt,\colVirt': Q_u \rightarrow \brck{0,e}$ and some color $n \in \brd$. The coloring functions $\colVirt,\colVirt'$ are \emph{equivalent down to }$n$ if, for all $k \in \brck{n,e}$, 
	we have 
	$\colVirt^{-1}[k] = \colVirt'^{-1}[k]$.
	
	Furthermore, $\colVirt'$ is said to be $(n-1)$-\emph{prevailing compared to} $\colVirt$ if $\colVirt$ and $\colVirt'$ are equivalent down to $n$ and $\colVirt^{-1}[n-1] \subseteq \colVirt'^{-1}[n-1]$.
\end{definition}

Let us first state a lemma that we will use to prove the proposition (of interest for us) that we state below.
\begin{lemma}
	\label{lem:similarity_color}
	Consider a state $q \in Q_u$ and some color $n 
	\in \llbracket 0,e \rrbracket$. Let $\colVirt,\colVirt': Q_u \rightarrow \llbracket 0,e \rrbracket$ be two virtual coloring functions with $\colVirt'$ $n$-prevailing compared to $\colVirt$. Consider a pair of $\GF$-strategies $(\sigma_\A,\sigma_\B) \in \Sigma_\A(\fsf(q)) \times \Sigma_\B(\fsf(q))$. We have:
	\begin{align*}
	& \max \ColB(q,\colVirt,n,\sigma_\A,\sigma_\B) \geq n \Rightarrow \\
	& \max \ColB(q,\colVirt,n,\sigma_\A,\sigma_\B) = \max \ColB(q,\colVirt',n,\sigma_\A,\sigma_\B)
	\end{align*}
\end{lemma}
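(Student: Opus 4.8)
\begin{proof*}[Proof sketch]
The plan is to reduce the statement to an assertion about the image, under the two colouring functions, of a single set of states that does not depend on the colouring. First, unfolding Definition~\ref{def:local_env_induced_coloring_func} one gets $p_{q,\colVirt}^{-1}[k_i] = \colVirt^{-1}[i]\setminus\{q\}$ for every $i$; hence, writing $R$ for the set of states $q'\in Q_u\setminus\{q\}$ with $\outM_{\Games{\fsf(q)}{\mathbbm{1}_{q'}}}(\sigma_\A,\sigma_\B) > 0$ (the $u$-slice states reached in one step with positive probability under $(\sigma_\A,\sigma_\B)$), linearity of $\outM$ in its valuation gives $\Col(q,\colVirt,\sigma_\A,\sigma_\B) = \colVirt[R]$ and $\Col(q,\colVirt',\sigma_\A,\sigma_\B) = \colVirt'[R]$. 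Since $R$ is independent of the colouring, it follows that $\ColB(q,\colVirt,n,\sigma_\A,\sigma_\B) = \colVirt[R]\cup\{\max(\colVirt(q),c_n)\}$ and $\ColB(q,\colVirt',n,\sigma_\A,\sigma_\B) = \colVirt'[R]\cup\{\max(\colVirt'(q),c_n)\}$, with the same value of $c_n$ in both.

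Next I would record two elementary consequences of ``$\colVirt'$ is $n$-prevailing compared to $\colVirt$'' (i.e. $\colVirt^{-1}[k]=\colVirt'^{-1}[k]$ for all $k\in\brck{n+1,e}$, and $\colVirt^{-1}[n]\subseteq\colVirt'^{-1}[n]$): for every $q'\in Q_u$, (a) if $\colVirt(q')\geq n$ then $\colVirt'(q')=\colVirt(q')$, and (b) $\colVirt(q')\leq n$ iff $\colVirt'(q')\leq n$. Using (a), (b), and the equality of the two values of $c_n$, I would then check --- by inspecting the three possible sources of a colour strictly above $n$: a state of $R$, the state $q$ itself, or the shifted bound $c_n$ (equal to $n+1$ when $n$ is odd and to $n-1$ when $n$ is even) --- that the parts of the two colour sets lying in $\brck{n+1,e}$ coincide.

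Finally I would split on whether this common ``high part'' is empty. If it is non-empty, then in each of the two colour sets the maximum is attained in the high part, every other element being $\leq n$ and hence strictly below it; so the two maxima are equal, and in fact both are $\geq n+1$. If the high part is empty, then $\max\ColB(q,\colVirt,n,\sigma_\A,\sigma_\B)\leq n$, so the hypothesis $\max\ColB(q,\colVirt,n,\sigma_\A,\sigma_\B)\geq n$ forces equality with $n$; hence $n\in\colVirt[R]$ or $\colVirt(q)=n$, and in both cases (a) together with $\colVirt^{-1}[n]\subseteq\colVirt'^{-1}[n]$ yields $n\in\colVirt'[R]$ or $\colVirt'(q)=n$, whence $\max\ColB(q,\colVirt',n,\sigma_\A,\sigma_\B)\geq n$; as emptiness of the high part also bounds this maximum by $n$, it equals $n$ as well. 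Either way the two maxima agree, which is the claim. The only mildly delicate point is the case analysis establishing coincidence of the high parts, in particular the subcase $n$ odd where $c_n=n+1$ already lies in the high part regardless of the colouring; but this is routine bookkeeping, and I do not anticipate a real obstacle.
\end{proof*}
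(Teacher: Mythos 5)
Your sketch is correct and takes essentially the same route as the paper's own proof: both unfold $\ColB$ into the one-step colour set together with the bound term $\max(\colVirt(q),c_n)$, and both rest on the same consequences of $n$-prevailing (colours $\geq n$ are preserved, no colour $>n$ is created, the colour-$n$ preimage can only grow), the only difference being that you organize the final case analysis around emptiness of the part above $n$ while the paper splits on which of the three terms attains the maximum. The steps you defer as routine (the identity $\Col(q,\colVirt,\sigma_\A,\sigma_\B)=\colVirt[R]$, coincidence of the high parts, and the two concluding cases) all go through as anticipated.
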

\begin{proof*}
	Letting $c_n := n-1$ if $n$ is even and $c_n := n+1$ if $n$ is odd, we have (recall Definition~\ref{def:notation_max_color_local_BSCC}):
	\begin{displaymath}
	\ColB(q,\colVirt,n,\sigma_\A,\sigma_\B) = \Col(q,\colVirt,\sigma_\A,\sigma_\B) \cup \{ \max(\colVirt(q),c_n) \}
	\end{displaymath}
	and
	\begin{displaymath}
	\ColB(q,\colVirt',n,\sigma_\A,\sigma_\B) = \Col(q,\colVirt',\sigma_\A,\sigma_\B) \cup \{ \max(\colVirt'(q),c_n) \}
	\end{displaymath}
	
	By assumption
	, we have that, for all $i \in \brck{n+1,e}$:
	\begin{displaymath}
	\colVirt^{-1}[i] = \colVirt'^{-1}[i]
	\end{displaymath}
	Thus:
	\begin{displaymath}
	\outM_{\Games{\fsf(q)}{\mathbbm{1}_{p_{q,\colVirt}^{-1}[k_i]}}}(\sigma_\A,\sigma_\B) = \outM_{\Games{\fsf(q)}{\mathbbm{1}_{p_{q,\colVirt}^{-1}[k_i]}}}(\sigma_\A,\sigma_\B)
	\end{displaymath}
	Furthermore:
	\begin{displaymath}
	\colVirt^{-1}[n] \subseteq \colVirt'^{-1}[n]
	\end{displaymath}
	Thus:
	\begin{displaymath}
	\outM_{\Games{\fsf(q)}{\mathbbm{1}_{p_{q,\colVirt}^{-1}[k_n]}}}(\sigma_\A,\sigma_\B) > 0 \Rightarrow \outM_{\Games{\fsf(q)}{\mathbbm{1}_{p_{q,\colVirt'}^{-1}[k_n]}}}(\sigma_\A,\sigma_\B) > 0
	\end{displaymath}
	
	We therefore obtain that for any $k \in \brck{n,e}$:
	\begin{itemize}
		\item[(i)] If $\max \Col(q,\colVirt,\sigma_\A,\sigma_\B) \leq k$, then $\max \Col(q,\colVirt',\sigma_\A,\sigma_\B) \leq k$;
		\item[(ii)] If $\max \Col(q,\colVirt,\sigma_\A,\sigma_\B) = k$, then $\max \Col(q,\colVirt',\sigma_\A,\sigma_\B) = k$.
	\end{itemize}
	Furthermore:
	\begin{itemize}
		\item[(i')] If $\colVirt(q) \leq k$, then $\colVirt'(q) \leq k$;
		\item[(ii')] If $\colVirt(q) = k$, then $\colVirt'(q) = k$.
	\end{itemize}
	We have three cases, letting $n \leq j := \max \ColB(q,\colVirt,n,\sigma_\A,\sigma_\B) = \Col(q,\colVirt,\sigma_\A,\sigma_\B) \cup \{ \max(\colVirt(q),c_n) \}$:
	\begin{itemize}
		\item If $j = c_n$, then $\max \Col(q,\colVirt,\sigma_\A,\sigma_\B),\colVirt(q) \leq j$. Hence, by (i) and (i'), we have $\max \Col(q,\colVirt',\sigma_\A,\sigma_\B),\colVirt'(q) \leq j$. \\ That is, $\max \ColB(q,\colVirt',n,\sigma_\A,\sigma_\B) = j$.
		\item If $j = \colVirt(q)$ then $\max \Col(q,\colVirt,\sigma_\A,\sigma_\B),c_n \leq j$. Hence, by (ii'), we have $j = \colVirt'(q)$ and, by (i), we have $\max \Col(q,\colVirt',\sigma_\A,\sigma_\B) \leq j$. \\ That is, $\max \ColB(q,\colVirt',n,\sigma_\A,\sigma_\B) = j$.
		\item If $j = \max \Col(q,\colVirt,\sigma_\A,\sigma_\B)$ then $\colVirt(q),c_n \leq j$. Hence, by (ii), we have $j = \max \Col(q,\colVirt',\sigma_\A,\sigma_\B)$ and, by (i'), we have $\colVirt'(q) \leq j$. That is, $\max \ColB(q,\colVirt',n,\sigma_\A,\sigma_\B) = j$.
	\end{itemize}
	The lemma follows.
\end{proof*}

\begin{proposition}
	\label{prop:update_color_does_not_change_if_small_color_change}
	Consider a state $q \in Q_u$ and a coloring function $\colVirt: Q_u \rightarrow \llbracket 0,e \rrbracket$. Let $n := \newC(q,\colVirt) \in \llbracket 0,e \rrbracket$. Assume that another coloring function $\colVirt': Q_u \rightarrow \brd$ is $n$-prevailing compared to $\colVirt$. In that case, $\newC(q,\colVirt') = n$.
\end{proposition}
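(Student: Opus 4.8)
The plan is to prove $\newC(q,\colVirt') = n$ by establishing the two inequalities $\newC(q,\colVirt') \succeq_\msf{par} n$ and $\newC(q,\colVirt') \preceq_\msf{par} n$ with respect to the parity order, using the characterization of $\newC$ via Corollary~\ref{coro:opt_in_gf_equiv_opt_in_parity} and the transfer Lemma~\ref{lem:similarity_color}. The key observation is that, since all games $\G_{q,\colVirt}^k$ (over $k \in \brd$) and all games $\G_{q,\colVirt'}^k$ share the same game form $\fsf(q)$ at $q_\msf{init}$ and the same stopping states, what changes between $\colVirt$ and $\colVirt'$ is only the mapping $p_{q,\colVirt}$ versus $p_{q,\colVirt'}$ of the outcomes in $Q_u$ into abstract states $k_i$. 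Because $\colVirt'$ is $n$-prevailing compared to $\colVirt$, these two maps agree on all outcomes sent to $k_i$ with $i \geq n$ (those going to states of virtual color $\geq n$), and the only discrepancy is that some outcomes of virtual color $< n-1$ under $\colVirt$ may have been moved up to virtual color $n-1$ under $\colVirt'$; in either case the expected value contributed by stopping states under any pair of $\GF$-strategies is unchanged, so the value $\MarVal{\G_{q,\colVirt}^k}(q_\msf{init})$ and $\MarVal{\G_{q,\colVirt'}^k}(q_\msf{init})$ coincide as far as the ``leak'' part of the analysis goes.

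First I would show $\newC(q,\colVirt') \succeq_\msf{par} n$. Let $Y_n := (\fsf(q),E^n_{q,\colVirt})$ and $Y'_n := (\fsf(q),E^n_{q,\colVirt'})$. By definition of $\newC(q,\colVirt) = n$ we have $\MarVal{\G_{Y_n}}(q_\msf{init}) \geq u$; I want to conclude $\MarVal{\G_{Y'_n}}(q_\msf{init}) \geq u$. Suppose $n$ is even (the odd case is symmetric, swapping the roles of the players). Take a Player-$\A$ $\GF$-strategy $\sigma_\A$ optimal w.r.t. $Y_n$; by Corollary~\ref{coro:opt_in_gf_equiv_opt_in_parity}, either $\MarVal{\G_{Y_n}}(q_\msf{init}) = 0$ (impossible since it is $\geq u > 0$, as we work in a slice with $u \in (0,1]$) or $\s^{Y_n}_\A(\sigma_\A)$ parity dominates $v^u_{Y_n}$, i.e. (ii.1) $\sigma_\A$ dominates the valuation in normal form, and (ii.2) for every $b \in \Act_\B$ with $\outM_{\Games{\fsf(q)}{\mathbbm{1}_{Q\setminus Q_u}}}(\sigma_\A,b) = 0$, $\max \ColB(q,\colVirt,n,\sigma_\A,b)$ is even. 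For (ii.1): the valuation $v^u_{Y_n}$ assigns $u$ to $q_\msf{init}$ and all $k_i$, and the true value to stopping states, and since the preimages of the non-stopping part are unchanged as sets-of-outcomes up to reshuffling among $k_i$'s (all valued $u$), $\sigma_\A$ still dominates $v^u_{Y'_n}$. For (ii.2): here I invoke Lemma~\ref{lem:similarity_color}. The hypothesis $\max \ColB(q,\colVirt,n,\sigma_\A,b) \geq n$ holds because, by the definition of $\newC(q,\colVirt) = n$ via the parity order together with Proposition~\ref{prop:update_color_at_least_preivous_color} (which gives $\colVirt(q) \leq n$), the relevant ``$c_n$''-term $\max(\colVirt(q),c_n)$ with $c_n = n-1$ is at least $n-1$, and in fact the optimality analysis forces the max to be exactly $n$ (even) — the point made in Remark~\ref{rmk:explain_new_col}. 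So, since $\colVirt'$ is $n$-prevailing compared to $\colVirt$, Lemma~\ref{lem:similarity_color} yields $\max \ColB(q,\colVirt',n,\sigma_\A,b) = \max \ColB(q,\colVirt,n,\sigma_\A,b)$, which is even. Thus by Corollary~\ref{coro:opt_in_gf_equiv_opt_in_parity}, $\s^{Y'_n}_\A(\sigma_\A)$ parity dominates $v^u_{Y'_n}$, and by Theorem~\ref{thm:strongly_domnates_imply_guarantee} it guarantees it, so $\MarVal{\G_{Y'_n}}(q_\msf{init}) \geq u$, i.e. $\newC(q,\colVirt') \succeq_\msf{par} n$.

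Next I would show $\newC(q,\colVirt') \preceq_\msf{par} n$, equivalently: for no $m$ with $n \prec_\msf{par} m$ does $\MarVal{\G_{(\fsf(q),E^m_{q,\colVirt'})}}(q_\msf{init}) \geq u$. Since $\newC(q,\colVirt) = n$, for every such $m$ we have $\MarVal{\G_{(\fsf(q),E^m_{q,\colVirt})}}(q_\msf{init}) < u$; pick the dominant player's witness. Concretely, if $m$ is the ``successor parity'' color relevant to Player $\B$ (i.e. the next even $m = \ScPar(n)$ when $n$ is even, which dominates all larger $m$ w.r.t. $\prec_\msf{par}$ on the Player-$\B$ side), take a Player-$\B$ $\GF$-strategy $\sigma_\B$ optimal w.r.t. $E^m_{q,\colVirt}$ with value $u' < u$; its positional lift parity dominates $v^{u'}$ in $\G_{(\fsf(q),E^m_{q,\colVirt})}$, so by Corollary~\ref{coro:opt_in_gf_equiv_opt_in_parity} (Player-$\B$ version) for every $a \in \Act_\A$ with no leak, $\max \ColB(q,\colVirt,m,a,\sigma_\B)$ is odd. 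The ``$c_m$''-term for $E^m_{q,\colVirt}$ is $\max(\colVirt(q),c_m)$ with $c_m$ odd; again using $\colVirt(q) \leq n \leq e$ and the structure of the order one checks $\max \ColB(q,\colVirt,m,a,\sigma_\B) \geq m \geq n$, so Lemma~\ref{lem:similarity_color} transfers this: $\max \ColB(q,\colVirt',m,a,\sigma_\B) = \max \ColB(q,\colVirt,m,a,\sigma_\B)$, still odd. For the domination part (ii.1), the stopping-state contribution is unchanged as before, so $\sigma_\B$ still dominates $v^{u'}$ in $\G_{(\fsf(q),E^m_{q,\colVirt'})}$. Hence by Corollary~\ref{coro:opt_in_gf_equiv_opt_in_parity} and Theorem~\ref{thm:strongly_domnates_imply_guarantee}, $\s^{Y'_m}_\B(\sigma_\B)$ guarantees $v^{u'}$ with $u' < u$, so $\MarVal{\G_{(\fsf(q),E^m_{q,\colVirt'})}}(q_\msf{init}) \leq u' < u$. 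Combining with the previous paragraph gives $\newC(q,\colVirt') = n$.

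The main obstacle I anticipate is verifying cleanly the hypothesis ``$\max \ColB(q,\colVirt,k,\cdot,\cdot) \geq k$'' needed to apply Lemma~\ref{lem:similarity_color} in both directions, and handling the interaction between the value of $c_n$ (which is $n\pm 1$) and the constraint $\colVirt(q) \leq n$ from Proposition~\ref{prop:update_color_at_least_preivous_color}: one must be careful that $n$-prevailing only guarantees agreement on colors $\geq n$ and a one-way inclusion at color $n-1$, so the reshuffling at colors $< n-1$ (which all get value $u$ in the relevant valuations and contribute only via the ``$c_n$''-term, never exceeding $n-1 < n$) genuinely does not affect the maximum once that maximum is already $\geq n$. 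A secondary care point is the edge case $\colVirt(q) = n = e$ and the parity-of-$c_n$ bookkeeping, but Proposition~\ref{prop:update_color_at_least_preivous_color} and the explicit definitions of $c_n$ in Definition~\ref{def:local_env_induced_coloring_func} handle these uniformly.
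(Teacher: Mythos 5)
Your skeleton is exactly the paper's: prove the two parity-order inequalities by transferring an optimal $\GF$-strategy from the $\colVirt$-environment to the $\colVirt'$-environment, using Corollary~\ref{coro:opt_in_gf_equiv_opt_in_parity} for the characterization of optimality, Lemma~\ref{lem:similarity_color} to transfer $\max\ColB$, and Theorem~\ref{thm:strongly_domnates_imply_guarantee} to turn parity domination into a value bound. The genuine problem is your claim that the odd-$n$ case follows \emph{by symmetry, swapping the roles of the players}. It does not: for either parity of $n$, the lower bound $n \preceq_{\msf{par}} \newC(q,\colVirt')$ requires a Player-$\A$ witness (a strategy of value at least $u$ in $\G^{n}_{q,\colVirt'}$; a Player-$\B$ strategy can never certify a lower bound on the value), and the upper bound requires a Player-$\B$ witness of value less than $u$ in $\G^{m}_{q,\colVirt'}$ for $m := \ScPar(n)$. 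The only place where the parity of $n$ enters is the bookkeeping with $c_n$ and $c_m$ needed to verify the hypothesis $\max\ColB(\cdot) \geq n$ of Lemma~\ref{lem:similarity_color}; this is precisely where the paper's proof does a small case analysis (for instance, for $n$ odd and $m = n-2$ odd one uses that $c_m = n-1$ is even while the max is odd, forcing it to be at least $n$; the case $n=1$, $m=0$ is argued separately), and your sketch supplies none of this for odd $n$. Executed literally, your "swap" would be attempting to prove a value lower bound with a Player-$\B$ strategy, which fails.

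Two secondary overclaims are worth fixing, though they do not derail the route. First, optimality of $\sigma_\A$ does not force $\max\ColB(q,\colVirt,n,\sigma_\A,b)$ to be \emph{exactly} $n$; it is only even and at least $\max(\colVirt(q),c_n)$, which already gives the needed bound $\geq n$ (for $n$ even: even and $\geq n-1$ implies $\geq n$; for $n$ odd: $c_n = n+1 > n$). Second, in the upper-bound direction your inequality $\max\ColB(q,\colVirt,m,a,\sigma_\B) \geq m$ is not true in general when $n$ is even: the max is odd and only $\geq c_m = n+1$ is guaranteed while $m = n+2$; again only $\geq n$ is needed and does hold. Finally, your gloss of $n$-prevailing is off by one: it gives equality of preimages for colors in $\brck{n+1,e}$ together with the inclusion $\colVirt^{-1}[n] \subseteq \colVirt'^{-1}[n]$, not agreement at colors $\geq n$ plus inclusion at $n-1$; this is harmless here because you invoke Lemma~\ref{lem:similarity_color} at the index $n$ given by the hypothesis, but combined with the missing odd-$n$ bookkeeping it leaves the proof incomplete as written.
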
	
\begin{proof*}
	Let us consider two such coloring functions $\colVirt$ and $\colVirt'$. We have $n = \newC(q,\colVirt) \in \llbracket 0,e \rrbracket$. Consider a Player-$\A$ $\GF$-strategy $\sigma_\A$ that is optimal w.r.t. $Y$ for $Y := (\fsf(d),E_{q,\colVirt}^n)$. Let $Y' := (\fsf(d),E_{q,\colVirt'}^n)$, $p := p_{q,\colVirt}$ and $p' := p_{q,\colVirt'}$. Let us show that $n \preceq_\msf{par} \newC(q,\colVirt')$. This straightforwardly holds if $n = e-1$. Assume now that $n \neq e-1$. The Player-$\A$ strategy $\s_{Y}^\A(\sigma_\A)$ dominates the valuations $v_{Y}^u = v_{Y'}^u$. 
	
	Consider an action $b \in \Act_\B$ and assume that $\outM_{\Games{\fsf(q)}{\mathbbm{1}_{p^{-1}[0,1]}}}(\sigma_\A,b) = 0$
	. 
	By Corollary~\ref{coro:opt_in_gf_equiv_opt_in_parity}, we have $j := \max \msf{ColBSCC}(q,\colVirt,n,\sigma_\A,b)$ is even with $\max \msf{ColBSCC}(q,\colVirt,n,\sigma_\A,b) \geq \max(\colVirt(q),c_n)$ for $c_n = n-1$ if $n$ is even and $c_n = n+1$ otherwise. Since $n-1$ is odd, it follows that, in any case, $\max \msf{ColBSCC}(q,\colVirt,n,\sigma_\A,b) \geq n$. Then, by Lemma~\ref{lem:similarity_color}, we have $\max \msf{ColBSCC}(q,\colVirt',n,\sigma_\A,b) = \max \msf{ColBSCC}(q,\colVirt,n,\sigma_\A,b)$, which is even. 
	As this holds for all $b \in \Act_\B$, the $\GF$-strategy $\sigma_\A$ strongly dominates the valuation $v_{Y'}^u$, hence, by Theorem~\ref{thm:strongly_domnates_imply_guarantee}, the value of the state $q_\msf{init}$ in the game $\G_{Y'}$ is at least $u$: $\MarVal{\G_{Y'}}[q_\msf{init}] \geq u$. Hence, $n \preceq_\msf{par} \newC(q,\colVirt')$. 
	
	
	Let us now show that $\newC(q,\colVirt') \preceq_\msf{par} n$, which straightforwardly holds if $n = e$. Hence, assume that $n \neq e$ and let $m := \ScPar(n) \in \brd$. The proof is very similar to the previous case. Let $Z := (\fsf(d),E_{q,\colVirt}^m)$. Let also $Z' := (\fsf(d),E_{q,\colVirt'}^m)$. The value of the state $q_\msf{init}$ in the game $\G_{Z}$ is at most $u'$ for some $u' < u$: $\MarVal{\G_{Z}}[q_\msf{init}] \leq u' < u$. Consider a Player-$\B$ $\GF$-strategy $\sigma_\B$ that is optimal w.r.t. $Z$. The Player-$\B$ strategy $\s_{Z}^\B(\sigma_\B)$ dominates the valuation $v_{Z}^{u'} = v_{Z'}^{u'}$. Consider an action $a \in \Act_\A$ and assume that $\outM_{\Games{\fsf(q)}{\mathbbm{1}_{p^{-1}[0,1]}}}(a,\sigma_\B) = 0
	$
	. We have $\max \msf{ColBSCC}(q,\colVirt,m,a,\sigma_\B)$ odd with $j := \max \msf{ColBSCC}(q,\colVirt,m,a,\sigma_\B) \geq \max(\colVirt(q),c_m)$ for $c_m = m-1$ if $m$ is even and $c_m = m+1$ otherwise. 
	Let us show that $j \geq n$.
	\begin{itemize}
		\item If $m$ is odd (and $n = m+2$), we have $c_m = m+1 = n-1$ which is even. Hence, $j \geq n$. 
		\item If $m = 0$ (and $n = 1$), since $j$ is odd, it must be that $j \geq n$.
		\item If $m \geq 2$ is even (and therefore $n = m-2$), we have $c_m = m-1 = n+1$. Hence, we have $j \geq n+1$. 
	\end{itemize}
	We can apply Lemma~\ref{lem:similarity_color} to obtain that $\max \msf{ColBSCC}(q,\colVirt',m,a,\sigma_\B) = \max \msf{ColBSCC}(q,\colVirt,m,a,\sigma_\B)$, which is odd. As this holds for all $a \in \Act_\A$, we have that the $\GF$-strategy $\sigma_\B$ strongly dominating the valuation $v_{Z'}^{u'}$, hence, by Theorem~\ref{thm:strongly_domnates_imply_guarantee}, the value of the state $q_\msf{init}$ in the game $\G_{Z'}$ is at most $u'$: $\MarVal{\G_{Z'}}[q_\msf{init}] \leq u'$. Hence, $\newC(q,\colVirt') \prec_\msf{par} m$. 
	
	In any case, we have $n = \newC(q,\colVirt')$.
\end{proof*}

%

\subsection{Complete definition of faithfulness}
\label{subsubsec:faithfulness}
We want to formally give the definition of faithfulness that we will use in the proof. First, we define the notion of coherent coloring and environment functions. Informally, for each state $q \in Q_u$: 
the colors given by the coloring function correspond to the environment provided at each state (the environments being defined from coloring functions as in Definition~\ref{def:local_env_induced_coloring_func}).
\begin{definition}[Coherent coloring and environment functions]
	\label{def:corresponding_coloring_environment_functions}
	Consider a virtual coloring function $\colVirt: Q_u \rightarrow \brd$, some color $n \in \brd$ and let $Q_n := \colVirt^{-1}[n]$. Consider an environment function $\msf{Ev}: Q_n \rightarrow \msf{Env}(Q)$ and, for all $q \in Q_n$, let $\colVirt_q: Q_u \rightarrow \brd$ be the coloring function associated with the environment $\msf{Ev}(q)$. Let $q \in Q_n$. We say that $(\colVirt,\msf{Ev})$ is \emph{coherent} at state $q$ if, letting $n_q := \newC(q,\colVirt_q)$:
	\begin{itemize}
		\item $\colFunc(q),\colVirt_q(q),n_q
		\leq n$ and the coloring function $\colVirt$ is $n$-prevailing compared to the coloring function $\colVirt_q$
		;
		\item $n \equiv n_q\mod 2$ and $\msf{Ev}(q) = \msf{CreateEnv}(
		n_q,q,\colVirt_q)$ where $\msf{CreateEnv}$ corresponds to Algorithm~\ref{fig:CreateEnv}.
	\end{itemize}
	If this holds for all $q \in Q_n$, $(\colVirt,\msf{Ev})$ is \emph{coherent} at color $n$.
\end{definition}

\begin{definition}[Faithful pair of coloring and environment functions]
	\label{def:true_trithful_env_coloring_func}
	Consider a virtual coloring function $\colVirt: Q_u \rightarrow \brd$, some $n \in \llbracket 0,e+1 \rrbracket$ and a partial environment function $\msf{Ev}: Q_u \rightarrow \msf{Env}(Q)$ defined on $\colVirt^{-1}[\brck{n,e}]$. We say that $(\colVirt,\msf{Ev})$ is \emph{faithful down to} $n$ if:
	\begin{itemize}
		\item[1f.] for all $k \in \brck{n,e}$, the pair $(\colVirt,\msf{Ev})$ witnesses the color $k$;
		\item[2f.] for all $k \in \brck{n,e}$, the pair $(\colVirt,\msf{Ev})$ is coherent at color $k$;
		\item[3f.] for all $q \in Q_u$, if $\colVirt(q) < n$, then we have $\colFunc(q) = \colVirt(q)$ and $\newC(q,\colVirt) < n$;
	\end{itemize}
	When $n = 0$, we say that the pair $(\colVirt,\msf{Ev})$ is \emph{completely faithful}.
\end{definition}

\subsection{Three central lemmas}
\label{subsubsec:three_lemmas}	
In the following, we state three lemmas that we will use in remainder of this chapter. The first lemma relates probability distributions in a local game and in a global game. This is particularly useful as it allows to use the assumptions made on the local strategies to obtain various properties on global games where such local strategies are used.

The second lemma states that any Player-$\A$ positional strategy generated by an environment such that the new color (w.r.t. that environment) is even dominates a specific valuation. This is similar for Player $\B$.

Finally, the third lemma states that any Player-$\A$ local strategy defined by local strategies that are optimal w.r.t. to an even color ensure that in a BSCC $H$ compatible with such a strategy, if some high enough color occurs in $H$
, then $H$ is even-colored. 
This is similar for Player $\B$. 
\begin{lemma}
	\label{lem:similar_local_behavior_local_and_global_game}
	Consider a virtual coloring function $\colVirt: Q_u \rightarrow \llbracket 0,e \rrbracket$, some color $n \in \brd$ and a partial environment function $\msf{Ev}: Q_u \rightarrow \msf{Env}(Q)$ defined over $Q_{n} := \colVirt^{-1}[\{ n \}]$. Let $q \in Q_n$ and $\colVirt_q: Q_u \rightarrow \brd$ be the coloring function associated with environment $\msf{Ev}(q)$
	. 
	We also let $p := p_{q,\colVirt_q}$. Then, for all 
	Player-$\A$ and Player-$\B$ strategies $\s_\A$ and $\s_\B$ in the arena $\Aconc_{\colVirt}^{n}$, we have:
	\begin{equation}
	\label{eqn:prob_in_V_Q_Q_u_equal}
	\forall x \in V_{Q \setminus Q_u},\; \mathbb{P}_{\Aconc_{\colVirt}^{n},q}^{\s_\A,\s_\B}[x] = \outM_{\Games{\fsf(q)}{\mathbbm{1}_{p^{-1}[x]}}}(\s_\A(q),\s_\B(q))
	\end{equation}
	and 
	\begin{equation}
	\label{eqn:prob_in_Q_u_K_n_equal} \mathbb{P}_{\Aconc_{\colVirt}^{n},q}^{\s_\A,\s_\B}[Q_n \cup K^n] = \outM_{\Games{\fsf(q)}{\mathbbm{1}_{p^{-1}[\{q_{\msf{init}}\} \cup K_e]}}}(\s_\A(q),\s_\B(q))
	\end{equation}
	Furthermore, if $\colVirt$ is equivalent down to $n+1$ to $\colVirt_q$:
	\begin{equation}
	\label{eqn:prob_ki_if_equiv_equal}
	\forall i \in \llbracket n+1,e \rrbracket,\; \mathbb{P}_{\Aconc_{\colVirt}^{n},q}^{\s_\A,\s_\B}[k_i^n] = \outM_{\Games{\fsf(q)}{\mathbbm{1}_{p^{-1}[k_i]}}}(\s_\A(q),\s_\B(q))
	\end{equation}
	and if, in addition, $\colVirt$ is $n$-prevailing compared to $\colVirt_q$:
	\begin{equation}
	\label{eqn:prob_n_similar_equal}
	\mathbb{P}_{\Aconc_{\colVirt}^{n},q}^{\s_\A,\s_\B}[\colVirt_q^{-1}[n] \setminus \{ q \}] = \outM_{\Games{\fsf(q)}{\mathbbm{1}_{p^{-1}[k_n]}}}(\s_\A(q),\s_\B(q))
	\end{equation}
\end{lemma}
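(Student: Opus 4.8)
The plan is to prove the four equalities by unfolding the definitions of the probability distribution in the extracted arena $\Aconc_{\colVirt}^n$ (Definition~\ref{def:parity_game_from_set_of_gfs}), the local game form $\fsf(q)^{p_{n,\colVirt}}$, and the probability-transition function (Definitions~\ref{def:probability_transition} and~\ref{def:probability_in_parity_game}). The key observation is that the arena $\Aconc_{\colVirt}^n$ is obtained from the $u$-slice by the relabelling map $p_{n,\colVirt}$, so a single step from $q$ in $\Aconc_{\colVirt}^n$ lands in a state $x \in Q'= Q_n \cup K^n \cup V_{Q\setminus Q_u}$ with probability $\outCNF_{p_{n,\colVirt}}(\s_\A(q),\s_\B(q))(x) = \outCNF(\s_\A(q),\s_\B(q))[p_{n,\colVirt}^{-1}[x]]$. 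The subtlety is that the environment function $\msf{Ev}$ is built from $\colVirt_q$, not $\colVirt$ directly, so the local game form at $q$ in $\msf{Ev}(q)$ uses $p := p_{q,\colVirt_q}$ while the extracted arena uses $p_{n,\colVirt}$; the equalities therefore hold only to the extent that these two partitions of $Q$ agree on the relevant blocks, which is exactly what the successive hypotheses (first nothing, then "equivalent down to $n+1$", then "$n$-prevailing") buy us.

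First I would treat \eqref{eqn:prob_in_V_Q_Q_u_equal}: for $x \in V_{Q\setminus Q_u}$, the state $x$ is a stopping state of $\Aconc_{\colVirt}^n$, so $\mathbb{P}_{\Aconc_{\colVirt}^n,q}^{\s_\A,\s_\B}[x]$ is just the one-step transition probability $\prob{\s_\A(q),\s_\B(q)}{}{}(q,x) = \outM_{\Games{\fsf'(q)}{\mathbbm{1}_x}}(\s_\A(q),\s_\B(q)) = \outCNF(\s_\A(q),\s_\B(q))[p_{n,\colVirt}^{-1}[x]]$. Now $p_{n,\colVirt}^{-1}[x] = \{ q' \in Q \setminus Q_u \mid \MarVal{\G}(q') = x \}$ by the third clause of Definition~\ref{def:probability_extracted_parity_game}, and crucially this set does not depend on the coloring function at all, hence equals $p^{-1}[x] = p_{q,\colVirt_q}^{-1}[x]$ by the analogous clause of Definition~\ref{def:local_env_induced_coloring_func}. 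This gives the claim. For \eqref{eqn:prob_in_Q_u_K_n_equal}, I would sum \eqref{eqn:prob_in_V_Q_Q_u_equal} over all $x \in V_{Q\setminus Q_u}$ and complement: $\mathbb{P}^{\s_\A,\s_\B}[Q_n \cup K^n] = 1 - \sum_{x} \mathbb{P}^{\s_\A,\s_\B}[x] = 1 - \outM_{\Games{\fsf(q)}{\mathbbm{1}_{p^{-1}[V_{Q\setminus Q_u}]}}}(\cdot) = \outM_{\Games{\fsf(q)}{\mathbbm{1}_{p^{-1}[\{q_\msf{init}\}\cup K_e]}}}(\cdot)$, using that $\{q_\msf{init}\} \cup K_e$ and $[0,1]$ (equivalently $V_{Q\setminus Q_u} \supseteq p_{[0,1]}$) partition the range of $p$ and that correspondingly $Q_u$ and $Q\setminus Q_u$ partition $Q$.

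For \eqref{eqn:prob_ki_if_equiv_equal}, with $\colVirt$ equivalent down to $n+1$ to $\colVirt_q$: the state $k_i^n$ is a trivial Player-state looped back to $Q_n$ (in the extracted arena a single step reaching $k_i^n$ happens with probability $\outCNF(\s_\A(q),\s_\B(q))[p_{n,\colVirt}^{-1}[k_i^n]]$), and $p_{n,\colVirt}^{-1}[k_i^n] = \colVirt^{-1}[i]$ for $i \neq n$, while $p^{-1}[k_i] = \colVirt_q^{-1}[i]$. Since $i \geq n+1$ and the two coloring functions are equivalent down to $n+1$, these preimages coincide, giving the equality. For \eqref{eqn:prob_n_similar_equal}, with the stronger $n$-prevailing hypothesis, I note that inside the extracted arena the states of $Q_n$ other than $q$ are those $q' \in Q_u$ with $\colVirt(q') = n$, so $\colVirt_q^{-1}[n] \setminus \{q\}$ is, by $n$-prevailingness ($\colVirt_q^{-1}[n] \subseteq \colVirt^{-1}[n]$ and $\colVirt$, $\colVirt_q$ equivalent down to $n+1$, hence $\colVirt^{-1}[n] = \colVirt_q^{-1}[n] \cup (\text{states }\colVirt_q\text{-colored }{<}n\text{ but }\colVirt\text{-colored }n)$) contained in the set reached; but the probability we are measuring is over $\colVirt_q^{-1}[n]\setminus\{q\}$, which matches $p^{-1}[k_n] = \colVirt_q^{-1}[n]\setminus\{q\}$ exactly by the second clause of Definition~\ref{def:local_env_induced_coloring_func}. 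So again both sides equal $\outCNF(\s_\A(q),\s_\B(q))[\colVirt_q^{-1}[n]\setminus\{q\}]$.

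The main obstacle I anticipate is purely bookkeeping: keeping straight which of the two maps $p_{n,\colVirt}$ (defining the arena) and $p_{q,\colVirt_q}$ (defining the local game via $\msf{Ev}(q)$) is in play at each point, and verifying that each stated hypothesis is the minimal one making the corresponding preimages agree on the right block of $Q$ — in particular that for \eqref{eqn:prob_n_similar_equal} the right-hand side is phrased in terms of $p = p_{q,\colVirt_q}$ and $\colVirt_q^{-1}[n]$, not $\colVirt^{-1}[n]$, so the prevailingness hypothesis is used to transport information \emph{about the strategies' behaviour} rather than to rewrite the preimage itself. I would also double-check the edge cases $n = e$ and the treatment of $K_e$ versus $K^n$ to make sure no color index is off by one.
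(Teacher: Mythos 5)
Your proposal is correct and follows essentially the same route as the paper's proof: compute the one-step probabilities in $\Aconc_{\colVirt}^{n}$ via the preimages of $p_{n,\colVirt}$, compare them with those of $p=p_{q,\colVirt_q}$, and use each hypothesis exactly where the two preimages must agree (the $V_{Q\setminus Q_u}$ block unconditionally, the blocks $\colVirt^{-1}[i]=\colVirt_q^{-1}[i]$ for $i\geq n+1$ under equivalence down to $n+1$, and $\colVirt_q^{-1}[n]\setminus\{q\}\subseteq Q_n$ under $n$-prevailingness, so that these are genuine states of the extracted arena whose one-step probabilities sum to $\outM_{\Games{\fsf(q)}{\mathbbm{1}_{p^{-1}[k_n]}}}(\s_\A(q),\s_\B(q))$), with (2) obtained by complementation just as in the paper. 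The only points you gloss over are the harmless corrections $p_{n,\colVirt}^{-1}[k_i^n]=\colVirt^{-1}[i]\setminus Q_n$ and $p^{-1}[k_i]=\colVirt_q^{-1}[i]\setminus\{q\}$, which coincide with $\colVirt^{-1}[i]$ for $i\geq n+1$ because $\colVirt(q)=n$ and $\colVirt^{-1}[i]\cap Q_n=\emptyset$, exactly as the paper spells out.
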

\begin{proof*}
	By 
	Definition~\ref{def:probability_extracted_parity_game}, we have:
	\begin{itemize}
		\item for all $q' \in Q_n$, $p_{n,\colVirt}^{-1}[q'] = \{ q' \}$;
		\item for all $i \in \brd$: $p_{n,\colVirt}^{-1}[k_i^n] = \colVirt^{-1}[i] \setminus Q_n$;
		\item for all $x \in V_{Q \setminus Q_u}$, $p_{n,\colVirt}^{-1}[x] = Q_x$.
	\end{itemize}
	Furthermore, from Definition~\ref{def:local_env_induced_coloring_func}, we have:
	\begin{itemize}
		\item $p^{-1}[\qinit] = \{ q \}$;
		\item for all $i \in \brd$: $p^{-1}[k_i] = \colVirt^{-1}[i] \setminus \{ q \}$;
		\item for all $x \in V_{Q \setminus Q_u}$, $p^{-1}[x] = Q_x = p_{n,\colVirt}^{-1}[x]$.
	\end{itemize}
	
	
	By Definition~\ref{def:parity_game_from_set_of_gfs}, for all $x \in V_{Q \setminus Q_u}$. We have:
	\begin{align*}
	\Prb_{\Aconc_{\colVirt}^{n},q}^{\s_\A,\s_\B}(x) & = \outM_{\Games{\fsf(q)}{\mathbbm{1}_{p_{n,\colVirt}^{-1}[x]}}}(\s_\A(q),\s_\B(q)) \\
	& = \outM_{\Games{\fsf(q)}{\mathbbm{1}_{p^{-1}[x]}}}(\s_\A(q),\s_\B(q))
	\end{align*}
	This proves Equation~(\ref{eqn:prob_in_V_Q_Q_u_equal}).
	
	Furthermore, using this Equation for the second equality, we have:
	\begin{align*}
	\mathbb{P}_{\Aconc_{\colVirt}^{n},q}^{\s_\A,\s_\B}[Q_n \cup K^n] & = 1 - \mathbb{P}_{\Aconc_{\colVirt}^{n},q}^{\s_\A,\s_\B}[V_{Q \setminus Q_u}] \\
	& = 1 - \outM_{\Games{\fsf(q)}{\mathbbm{1}_{p^{-1}[V_{Q \setminus Q_u}]}}}(\s_\A(q),\s_\B(q)) \\
	& = \outM_{\Games{\fsf(q)}{1 - \mathbbm{1}_{p^{-1}[V_{Q \setminus Q_u}]}}}(\s_\A(q),\s_\B(q)) \\
	& = \outM_{\Games{\fsf(q)}{\mathbbm{1}_{p^{-1}[\{ q_{\msf{init}} \} \cup K_e]}}}(\s_\A(q),\s_\B(q))
	\end{align*}
	This proves Equation~(\ref{eqn:prob_in_Q_u_K_n_equal}).
	
	Assume now that $\colVirt$ is equivalent down to $n+1$ to $\colVirt_q$. Let $i \in \brck{n+1,e}$. We have $\colVirt^{-1}[i] = \colVirt_q^{-1}[i]$ (recall Definition~\ref{def:k_similar_coloring_function}). Furthermore, for all $q' \in Q_n$, we have $\colVirt(q') = n \neq i$. Hence, $\colVirt^{-1}[i] \setminus Q_n = \colVirt^{-1}[i] = \colVirt^{-1}[i] \setminus \{ q \}$. 
	Hence we have, again by Definition~\ref{def:parity_game_from_set_of_gfs}, for all $i \in \brck{n+1,e}$:
	\begin{align*}
	\mathbb{P}_{\Aconc_{\colVirt}^{n},q}^{\s_\A,\s_\B}(k_i^n) & = \outM_{\Games{\fsf(q)}{\mathbbm{1}_{p_{n,\colVirt}^{-1}[k_i^n]}}}(\s_\A(q),\s_\B(q))  \\
	& = \outM_{\Games{\fsf(q)}{\colVirt^{-1}[i]}}(\s_\A(q),\s_\B(q))  \\
	& = \outM_{\Games{\fsf(q)}{\mathbbm{1}_{p^{-1}[k_i]}}}(\s_\A(q),\s_\B(q)) 
	\end{align*}
	
	This gives Equation~(\ref{eqn:prob_ki_if_equiv_equal}). Assume in addition that $\colVirt$ is $n$-prevailing compared to $\colVirt_q$ 	(recall Definition~\ref{def:k_similar_coloring_function}). That is, $\colVirt_q^{-1}[n] \subseteq \colVirt^{-1}[n] = Q_n$ 
	. 
	Hence, $
	\colVirt_q^{-1}[n] \setminus \{ q \} \subseteq Q_n$. 
	It follows that:
	\begin{align*}
	\mathbb{P}_{\Aconc_{\colVirt}^{n},q}^{\s_\A,\s_\B}[\colVirt_q^{-1}[n] \setminus \{ q \}] & = \sum_{q' \in \colVirt_q^{-1}[n]  \setminus \{ q \}} \outM_{\Games{\fsf(q)}{\mathbbm{1}_{p_{n,\colVirt}^{-1}[q']}}}(\s_\A(q),\s_\B(q)) \\
	& = \sum_{q' \in \colVirt_q^{-1}[n]  \setminus \{ q \}} \outM_{\Games{\fsf(q)}{\mathbbm{1}_{q'}}}(\s_\A(q),\s_\B(q)) \\
	& = \outM_{\Games{\fsf(q)}{\mathbbm{1}_{\colVirt_q^{-1}[n] \setminus \{ q \}}}}(\s_\A(q),\s_\B(q))
	\\
	& = \outM_{\Games{\fsf(q)}{\mathbbm{1}_{p^{-1}[k_n]}}}(\s_\A(q),\s_\B(q)) 
	\end{align*}
	We obtain Equation~(\ref{eqn:prob_n_similar_equal}).
\end{proof*}

\begin{lemma}
	\label{lem:update_with_algo_dominating}
	Consider a virtual coloring function $\colVirt: Q_u \rightarrow \brd$, some color $n \in \brd$ and let $Q_n := \colVirt^{-1}[n]$. Consider an environment function $\msf{Ev}: Q_n \rightarrow \msf{Env}(Q)$. 
	Assume that $(\colVirt,\msf{Ev})$ is coherent at color $n$. In the arena $\Aconc_{\colVirt}^{n}$:
	\begin{itemize}
		\item if 
		$n$ is even, then all positional Player-$\A$ strategies generated by the environment $\msf{Ev}$ dominate the valuation $v_{n,\colVirt}^u$ in the arena $\Aconc^{n}_{\colVirt}$ (see Definition~\ref{def:env_guarantee_value_and_color});
		\item if 
		$n$ is odd, then there is some $y < u$ such that, for all $z \geq y$, all positional Player-$\B$ strategies generated by the environment $\msf{Ev}$ dominate the valuation $v_{n,\colVirt}^z$ in the arena $\Aconc^{n}_{\colVirt}$ (see Definition~\ref{def:env_guarantee_value_and_color}). 
	\end{itemize}
\end{lemma}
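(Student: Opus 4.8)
\begin{proof*}[Proof sketch]
The plan is to verify the domination inequality state by state in $\Aconc^{n}_{\colVirt}$, and the only states where it is not immediate are those of $Q_n$. At a stopping state $x\in V_{Q\setminus Q_u}$ it holds because $\msf{val}(x)\gets x$; at a state $k\in K^n$, which is a Player-$\C$ state whose successors are exactly the states of $Q_n$, the valuation $v^u_{n,\colVirt}$ (resp.\ $v^z_{n,\colVirt}$) is constant equal to $u$ (resp.\ $z$) on those successors, so $\va_{\Games{\fsf'(k)}{v}}(\s(k))$ equals that constant and the inequality holds, with equality. So I would concentrate on a state $q\in Q_n$, writing $\sigma^q:=\s(q)$ for the local $\GF$-strategy played there, $\colVirt_q$ for the coloring associated with $\msf{Ev}(q)$, and $n_q:=\newC(q,\colVirt_q)$.

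The key elementary remark, which I would make first, is that by the same computation as in the proof of Proposition~\ref{prop:dominating_strat_in_gf_other_def} the domination inequality for $\sigma^q$ at $q$ in $\Aconc^{n}_{\colVirt}$ — namely $u\le\va_{\Games{\fsf(q)^{p_{n,\colVirt}}}{v^u_{n,\colVirt}}}(\sigma^q)$, resp.\ $z\ge\va_{\Games{\fsf(q)^{p_{n,\colVirt}}}{v^z_{n,\colVirt}}}(\sigma^q)$ — is equivalent to the condition $(\star_t)$ with $t=u$, resp.\ $(\star'_t)$ with $t=z$: for every $b\in\Act_\B$ (resp.\ $a\in\Act_\A$) such that $\outM_{\Games{\fsf(q)}{\mathbbm{1}_{Q\setminus Q_u}}}(\sigma^q,b)>0$ (resp.\ $\outM_{\Games{\fsf(q)}{\mathbbm{1}_{Q\setminus Q_u}}}(a,\sigma^q)>0$) one has $\outM_{\Games{\fsf(q)}{\mu^t}}(\sigma^q,b)\ge 0$ (resp.\ $\outM_{\Games{\fsf(q)}{\mu^t}}(a,\sigma^q)\le 0$), where $\mu^t$ is the valuation $\mu^{(\cdot)}$ of Proposition~\ref{prop:dominating_strat_in_gf_other_def} (sending $Q_u$ to $0$ and $q'\in Q\setminus Q_u$ to $\MarVal{\G}(q')-t$). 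The point is that $(\star_t)$ mentions neither the coloring nor the index of any environment, only the parameter $t$, and it is monotone in $t$: since $\mu^t$ decreases with $t$, $(\star_{u'})$ implies $(\star_u)$ whenever $u'\ge u$ (dually $(\star'_{u_q})$ implies $(\star'_z)$ whenever $z\ge u_q$).

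In the even case, coherence gives $n_q\equiv n\pmod 2$, so $n_q$ is even and $\msf{Ev}(q)=\msf{CreateEnv}(n_q,q,\colVirt_q)=E^{n_q}_{q,\colVirt_q}$; moreover $n_q=\newC(q,\colVirt_q)$ lies in $\{k\in\brd\mid\MarVal{\G_{q,\colVirt_q}^{k}}(q_{\msf{init}})\ge u\}$ by Definition~\ref{def:local_operator}, so $u':=\MarVal{\G_{q,\colVirt_q}^{n_q}}(q_{\msf{init}})\ge u>0$. Since $\sigma^q$ is optimal w.r.t.\ $Y_q:=(\fsf(q),E^{n_q}_{q,\colVirt_q})$ and $u'>0$, item (ii.1) of Lemma~\ref{lem:opt_in_gf_equiv_opt_in_parity} together with Proposition~\ref{prop:dominating_strat_in_gf_other_def} yields $(\star_{u'})$, hence $(\star_u)$ by monotonicity, hence the domination inequality at $q$. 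In the odd case, coherence gives $n_q$ odd, so $\msf{Ev}(q)=\msf{CreateEnv}(n_q,q,\colVirt_q)=E^{m_q}_{q,\colVirt_q}$ with $m_q:=\max(n_q-2,0)$; in both subcases ($n_q=1$, giving even $m_q=0$; and $n_q\ge 3$, giving odd $m_q=n_q-2<n_q$) one has $m_q\succ_{\msf{par}}n_q$, so $\prec_{\msf{par}}$-maximality of $n_q=\newC(q,\colVirt_q)$ forces $m_q$ out of the set above, i.e.\ $u_q:=\MarVal{\G_{q,\colVirt_q}^{m_q}}(q_{\msf{init}})<u$. I would then set $y:=\max_{q\in Q_n}u_q$ (and $y:=0$ if $Q_n=\emptyset$), so $y<u$; for any $z$ with $y\le z$ and any $q\in Q_n$, since $u_q\le y\le z$ and $u_q<u\le 1$, the Player-$\B$ versions of Lemma~\ref{lem:opt_in_gf_equiv_opt_in_parity} and Proposition~\ref{prop:dominating_strat_in_gf_other_def} applied to $\sigma^q$ optimal w.r.t.\ $(\fsf(q),E^{m_q}_{q,\colVirt_q})$ give $(\star'_{u_q})$, hence $(\star'_z)$, hence the domination inequality at $q$ (the case $z=1$ being trivial in any case).

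The step I expect to be the main obstacle — and the reason Proposition~\ref{prop:dominating_strat_in_gf_other_def} is phrased the way it is — is precisely the "key elementary remark": making rigorous that plain domination of $v^t_Y$ constrains $\sigma^q$ only through $t$ (and through $\fsf(q)$, $Q_u$, $\MarVal{\G}$), so that the local guarantee carried by $\msf{Ev}(q)$, which was built from $\colVirt_q$ and $n_q$, transports to the ambient context built from $\colVirt$ and $n$ inside $\Aconc^{n}_{\colVirt}$. The only other delicate point is the short $\prec_{\msf{par}}$-bookkeeping in the odd case, which is what produces $u_q<u$ and hence the bound $y<u$ needed in the statement.
\end{proof*}
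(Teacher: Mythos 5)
Your proposal is correct and follows essentially the same route as the paper: coherence gives that the local $\GF$-strategy is optimal in $\G^{n_q}_{q,\colVirt_q}$ (resp.\ $\G^{m_q}_{q,\colVirt_q}$ with value $y_q<u$ by $\prec_{\msf{par}}$-maximality of $\newC$), Lemma~\ref{lem:opt_in_gf_equiv_opt_in_parity} turns this into local domination, and the inequality is transported to $\Aconc^{n}_{\colVirt}$ and relaxed from $u'$ (resp.\ $u_q$) to $u$ (resp.\ $z$) by monotonicity in the threshold. The only difference is organizational: you package the transfer and monotonicity through the exit-condition of Proposition~\ref{prop:dominating_strat_in_gf_other_def}, whereas the paper does the same computation via Lemma~\ref{lem:similar_local_behavior_local_and_global_game} and a direct convex-combination inequality chain.
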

\begin{proof*}
	
	For all states $q \in Q_n$, 
	we let $\colVirt_q: Q_u \rightarrow \brd$ be the coloring function associated with the environment $\msf{Ev}(q)$ and we let $p := p_{q,\colVirt_q}$. 
	Let $q \in Q_n$. For all Player-$\A$ and $\B$ strategies $\s_\A,\s_\B$ in the arena $\Aconc_{\colVirt}^{n}$, for all $z \in [0,1]$
	, we have:
	\begin{align*}
	\outM_{\Games{\fsf(q)}{v_{n,\colVirt}^z}}(\s_\A(q),\s_\B(q)) 
	& = z \cdot \Prb_{\Aconc_{\colVirt}^n,q}^{\s_\A,\s_\B}[Q_n \cup K] + \sum_{x \in V_{Q \setminus Q_u}} x \cdot \Prb_{\Aconc_{\colVirt}^n,q}^{\s_\A,\s_\B}(x)
	\end{align*}
	
	Assume that $n$ is even and consider a Player-$\A$ positional strategy $\s_\A$ generated by the environment function $\msf{Ev}$. Letting $Y_q := (\fsf(q),\msf{Ev}(q))$, in the game $\G_{Y_q}$, the positional Player-$\A$ strategy $\s_{Y_q}^\A(\s_\A(q))$ dominates the valuation $v_{Y_q}^{u'}$ for $u' := \MarVal{\G_{Y_q}}[q_{\msf{init}}] \geq u$ (recall Lemma~\ref{lem:opt_in_gf_equiv_opt_in_parity}) since $(\colVirt,\msf{Ev})$ is coherent at color $n$. 
	Hence:
	\begin{align*}
	u' & \leq \outM_{\Games{\fsf(q)}{v_{Y_q}^{u'}}}(\s_\A(q),\s_\B(q)) \\ 
	& = \outM_{\Games{\fsf(q)}{\mathbbm{1}_{p^{-1}[\{ q_\msf{init} \} \cup K_e]}}}(\s_\A(q),\s_\B(q))\cdot u' \\
	& + \sum_{x \in V_{Q \setminus Q_u}} \outM_{\Games{\fsf(q)}{\mathbbm{1}_{p^{-1}[x]}}}(\s_\A(q),\s_\B(q)) \cdot x \\
	& = u' \cdot \mathbb{P}_{\Aconc_{\colVirt}^{n},q}^{\s_\A,\s_\B}[Q_n \cup \colSet^n] + \sum_{x \in V_{Q \setminus Q_u}} x \cdot \mathbb{P}_{\Aconc_{\colVirt}^{n},q}^{\s_\A,\s_\B}[x]
	\end{align*}
	Then, we have (since $u' \geq u$):
	\begin{align*}
	u = u' + (u-u') & \leq u' \cdot \mathbb{P}_{\Aconc_{\colVirt}^{n},q}^{\s_\A,\s_\B}[Q_n \cup K^n] + \sum_{x \in V_{Q \setminus Q_u}} x \cdot \mathbb{P}_{\Aconc_{\colVirt}^{n},q}^{\s_\A,\s_\B}[x] + (u - u') \\
	& \leq u' \cdot \mathbb{P}^{\Aconc_{\colVirt}^{n},q}_{\s_\A,\s_\B}[Q_n \cup K^n] + \sum_{x \in V_{Q \setminus Q_u}} x \cdot \mathbb{P}_{\Aconc_{\colVirt}^{n},q}^{\s_\A,\s_\B}[x] + (u - u') \cdot \mathbb{P}_{\Aconc_{\colVirt}^{n},q}^{\s_\A,\s_\B}[Q_n \cup K^n] \\
	& = u \cdot \mathbb{P}_{\Aconc_{\colVirt}^{n},q}^{\s_\A,\s_\B}[Q_n \cup K^n] + \sum_{x \in V_{Q \setminus Q_u}} x \cdot \mathbb{P}_{\Aconc_{\colVirt}^{n},q}^{\s_\A,\s_\B}[x] \\
	& = \outM_{\Games{\fsf(q)}{v_{n,\colVirt}^u}}(\s_\A(q),\s_\B(q))
	\end{align*}
	Since this holds for all Player-$\B$ positional strategies strategies $\s_\B$ and for all $q \in Q_n$, it follows that the Player-$\A$ strategy $\s_\A$ dominates the valuation $v_{n,\colVirt}^u$ in the arena $\Aconc_{\colVirt}^{n}$. 
	
	In the case where $n$ is odd, the proof is analogous, from Player-$\B$'s point-of-view. 
\end{proof*}

\begin{lemma}
	\label{lem:good_local_highest_color_even}
	Consider a virtual coloring function $\colVirt: Q_u \rightarrow \brd$, some $n \in \brd$ and $Q_{n} := \colVirt^{-1}[\{ n \}]$. Consider an environment function $\msf{Ev}: Q_n \rightarrow \msf{Env}(Q)$ and assume that $(\colVirt,\msf{Ev})$ is coherent at the color $n$. 
	Then, 
	denoting $K^{\geq n} := \{ k_i^n \mid i \in \brck{n,e} \}$:
	\begin{itemize}
		\item if $n$ is even
		, then in the arena $\Aconc_{\colVirt}^{n}$, all positional Player-$\A$ strategies generated by the environment $\msf{Ev}$ ensure that for all BSCCs $H$ compatible with $\s_\A$, if $K^{\geq n}$ occurs in $H$, then $H$ is even-colored;
		\item 
		if $n$ is odd
		, then in the arena $\Aconc_{\colVirt}^{n}$, all positional Player-$\B$ strategies generated by the environment $\msf{Ev}$ ensure that for all BSCCs $H$ compatible with $\s_\B$, if $K^{\geq n}$ occurs in $H$, then $H$ is odd-colored.
	\end{itemize}
\end{lemma}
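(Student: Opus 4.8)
\textbf{Proof plan for Lemma~\ref{lem:good_local_highest_color_even}.}
I would prove the two bullet points symmetrically, so let me describe the even case ($n$ even; the odd case follows by swapping the roles of the players and of even/odd). Fix a Player-$\A$ positional strategy $\s_\A$ in $\Aconc^n_{\colVirt}$ generated by $\msf{Ev}$, a Player-$\B$ positional deterministic strategy $\s_\B$, and a BSCC $H$ in the induced Markov chain $\mathcal{M}_{\s_\A,\s_\B}$ such that $K^{\geq n}$ occurs in $H$, i.e. some $k^n_i$ with $i \in \brck{n,e}$ lies in $H$. The goal is to show $\max\colFunc[H]$ (where we use $\colVirt_n$ for colors) is even. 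The natural strategy is: first identify the state $q \in Q_n \cap H$ from which, in one step, the game can reach $k^n_i$ (it exists since $H$ is strongly connected and all states $k^n_j$ are Player-$\C$ states that loop back into $Q_n$, so the only way to \emph{enter} some $k^n_j$ is from a state of $Q_n$); then transport this one-step information into the \emph{local} game $\G_{Y_q}$ where $Y_q := (\fsf(q),\msf{Ev}(q))$, apply the optimality characterization of $\sigma_\A := \s_\A(q)$ via Corollary~\ref{coro:opt_in_gf_equiv_opt_in_parity} (or Lemma~\ref{lem:opt_in_gf_equiv_opt_in_parity}), and conclude about the colors.

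More concretely, the key steps I would carry out in order are: (1) Observe that since $H$ is a BSCC not exiting into $V_{Q\setminus Q_u}$ (BSCCs are trapped, and stopping states form their own trivial BSCCs by Appendix~\ref{appen:implement_stopping_states}), for every $q \in Q_n \cap H$ we have $\outM_{\Games{\fsf(q)}{\mathbbm{1}_{V_{Q\setminus Q_u}}}}(\s_\A(q),\s_\B(q)) = 0$; by Equation~(\ref{eqn:prob_in_V_Q_Q_u_equal}) of Lemma~\ref{lem:similar_local_behavior_local_and_global_game} this pulls back to $\outM_{\Games{\fsf(q)}{\mathbbm{1}_{p^{-1}[0,1]}}}(\sigma_\A,b) = 0$ in the local game, where $p := p_{q,\colVirt_q}$, $b := \s_\B(q)$, and $\colVirt_q$ is the coloring function associated with $\msf{Ev}(q)$. (2) Because $(\colVirt,\msf{Ev})$ is coherent at color $n$, we have $\msf{Ev}(q) = \msf{CreateEnv}(n_q,q,\colVirt_q)$ with $n_q := \newC(q,\colVirt_q)$, $n_q \equiv n \bmod 2$, and (crucially, since $n$ is even) $\MarVal{\G_{Y_q}}(q_\msf{init}) \geq u > 0$, so $\s_\A(q)$ is optimal w.r.t. $Y_q$ and item (ii.2) of Corollary~\ref{coro:opt_in_gf_equiv_opt_in_parity} applies: $\max\msf{ColBSCC}(q,\colVirt_q,n_q,\sigma_\A,b)$ is even. (3) Translate $\msf{ColBSCC}$ back: the set of colors $i \in \brck{n+1,e}$ such that $k^n_i$ is reachable in one step from $q$ in $\Aconc^n_{\colVirt}$ equals $\msf{Col}(q,\colVirt_q,\sigma_\A,b) \cap \brck{n+1,e}$ by Equation~(\ref{eqn:prob_ki_if_equiv_equal}) (using that $\colVirt$ is equivalent to $\colVirt_q$ down to $n+1$, part of coherence via $n$-prevailing), and likewise the probability of reaching $k^n_n$ corresponds to outcomes mapped to $k_n$ by Equation~(\ref{eqn:prob_n_similar_equal}) together with the fact that $\colVirt^{-1}_q[n]\setminus\{q\} \subseteq Q_n$. (4) Conclude: for every $q \in Q_n \cap H$, the one-step successors $k^n_i$ with $i \geq n$ have colors $\max(i,n-1) = i$ (for $i \geq n$) bounded by an even number — more precisely, the largest $i \geq n$ with $k^n_i$ reachable in one step (together with the value $\max(\colVirt_q(q),c_{n_q})$) is even. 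Chaining over the states of $H$: since $K^{\geq n}$ does occur in $H$, let $i^\star := \max\{i \in \brck{n,e} \mid k^n_i \in H\}$; pick $q \in Q_n \cap H$ with $k^n_{i^\star}$ a one-step successor; then by step (2)–(3), $i^\star = \max\msf{ColBSCC}(q,\colVirt_q,n_q,\sigma_\A,b)$ if $i^\star \geq n+1$ is the witness, hence $i^\star$ is even; and no color in $H$ exceeds $i^\star$ among the $k^n$-states nor among the $Q_n$-states, the latter because $\colFunc(q') \leq n \leq i^\star$ for all $q' \in Q_n$ by coherence (condition $\colFunc(q) \leq n$), so $\max\colFunc[H] = i^\star$ is even. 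The boundary case $i^\star = n$ (only $k^n_n$ and no higher $k^n_i$ occurs) is handled the same way: $n$ is even and it dominates all colors of $Q_n$-states, so $H$ is even-colored.

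The odd case ($n$ odd) is symmetric but uses the odd half of coherence: $n_q := \newC(q,\colVirt_q)$ is odd and $\msf{Ev}(q) = E^{m}_{q,\colVirt_q}$ with $m = \max(n_q - 2,0)$, so that $\MarVal{\G_{Y_q}}(q_\msf{init}) < u$ — hence $\s_\B(q)$ is optimal w.r.t. $(\fsf(q),\msf{Ev}(q))$ only if the local value is positive, but when it is $0$ we instead argue directly via the dual of item (ii.2), i.e. $\max\msf{ColBSCC}$ being odd; here one must be slightly careful because a Player-$\B$ optimal $\GF$-strategy for value $0$ is any strategy, so I would phrase the argument in terms of \emph{all} Player-$\B$ strategies generated by $\msf{Ev}$, exactly as in the statement. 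This is where I expect the main obstacle: handling the interaction between the shifted parity index $n_q$ (versus $n$), the associated $c_{n_q}$ in $\msf{ColBSCC}$, and the \emph{true} color $\colFunc(q)$ of states of $Q_n$, since the color of a state $q \in Q_n$ in the layer game $\Zone^n_{\colVirt}$ is $\colFunc(q)$, not $n$, and we need $\colFunc(q) \leq n$ (a coherence condition) to ensure it cannot become the dominating color of $H$. I would isolate this as a small bookkeeping sub-lemma: in any BSCC $H$ of $\Aconc^n_{\colVirt}$ not exiting to stopping states and with $K^{\geq n}$ occurring, $\max\colFunc[H] = \max\{i \in \brck{n,e} \mid k^n_i \in H\}$, proven by noting $\colFunc(q') \leq n$ for $q' \in Q_n$ and $\colVirt_n(k^n_i) = \max(i,n-1) = i$ for $i \geq n$.
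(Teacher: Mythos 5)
Your argument is correct and follows essentially the same route as the paper's proof: pull the BSCC's no-exit condition back to the local game via Lemma~\ref{lem:similar_local_behavior_local_and_global_game}, invoke item (ii.2) of the optimality characterization (Corollary~\ref{coro:opt_in_gf_equiv_opt_in_parity}) at a $Q_n$-predecessor in $H$ of the maximal $k^n_i$ occurring in $H$, use the equivalence for colors in $\brck{n+1,e}$ to identify that maximum with $\max \msf{ColBSCC}$, and finish with the coherence bound $\colFunc(q)\leq n$ on the true colors of $Q_n$-states. The only inaccuracy is your aside on the odd case: the degenerate branch of the Player-$\B$ characterization is local value $1$ (not $0$), and since coherence stores for odd layers an environment whose local value is $<u\leq 1$, that branch never arises, so no special handling is needed there.
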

\begin{proof*}
	Assume that $n$ is even and consider a Player-$\A$ positional strategy $\s_\A$ generated by the environment $\msf{Ev}$ in the game $\Zone^{n}_{\colVirt}
	$ (recall Definition~\ref{def:parity_game_from_set_of_gfs}). Consider a Player-$\B$ positional deterministic strategy $\s_\B$ and consider a BSCC $H$ compatible with $\s_\A$ and $\s_\B$. 
	
	For all $q \in Q_{n}$, let $\colVirt_q: Q_u \rightarrow \brd$ be the coloring function associated with the environment $\msf{Ev}(q)$. Let $q \in Q_{n}$ and 
	$n_q := \newC(q,\colVirt_q)$. Since $(\colVirt,\msf{Ev})$ is coherent at $n$, $n_q$ is even and less than or equal to $n$. Recall that $\G_{q,\colVirt}^{n_q} = \G_{Y_q}$, for $Y_q := (\fsf(q),E^{n_q}_{q,\colVirt_q})$ (see Definition~\ref{def:local_env_induced_coloring_func}). Recall also that $E^{n_q}_{q,\colVirt_q} = \langle \max(c_{n_q},\colVirt_q(q)),e,p_{\{ q \},\colVirt_q} \rangle
	$ (also see Definition~\ref{def:local_env_induced_coloring_func}) with $c_{n_q} = n_q-1$ as $n_q$ is even.  
	Let $p^q := p_{q,\colVirt_q}$. Note that $p^q_{[0,1]} \subseteq V_{Q \setminus Q_u}$ (recall Definition~\ref{def:local_env_induced_coloring_func}). 
	By Equation $(\ref{eqn:prob_in_V_Q_Q_u_equal})$ from Lemma~\ref{lem:similar_local_behavior_local_and_global_game}, if $q$ is in $H$, then $\outM_{\Games{\fsf(q)}{\mathbbm{1}_{Q \setminus Q_u}}}(\s_\A(q),\s_\B(q)) = 0$. In that case, since the Player-$\A$ $\GF$-strategy $\s_\A(q)$ is optimal w.r.t. $Y_q$ (recall Lemma~\ref{lem:opt_in_gf_equiv_opt_in_parity}), we have the following.: 
	\begin{equation}
	\label{eqn:relation_at_state_q}
	\max (\mathsf{Color}(\fsf(q),p^q,\s_\A(q),\s_\B(q)),n_q-1,\colVirt_q(q)) \text{ is even }
	\end{equation}
	Since $(\colVirt,\msf{Ev})$ is coherent (see Definition~\ref{def:corresponding_coloring_environment_functions}) at $n$, we have $\colFunc(q) \leq n$ for all $q \in Q_n$. Hence, our goal is to show that:
	\begin{equation*}
	\label{eqn:max_colors_even}
	K^{\geq n} \text{ occurs in } H \Rightarrow \max M \text{ is even }
	\end{equation*}
	with $M := \{ i \in \llbracket n,e \rrbracket \mid k_i^n \text{ occurs in } H \}$.
	
	To prove this, we use the following characterization: for all colors $i \in \llbracket n+1,e \rrbracket$, by Equation $(\ref{eqn:prob_ki_if_equiv_equal})$ in Lemma~\ref{lem:similar_local_behavior_local_and_global_game} since $\colVirt$ and $\colVirt_q$ are equivalent down to $n+1$, we have the following equivalence:
	\begin{equation}
	\label{eqn:equiv_in_bscc}
	k_i^n \text{ occurs in } H \Leftrightarrow \exists q \in H,\; i \in \mathsf{Color}(\fsf(q),p^q,\s_\A(q),\s_\B(q))
	\end{equation}
	
	Furthermore, for all $q \in H$, by assumption (the pair $(\colVirt,\msf{Ev})$ being coherent), $\colVirt_q(q) \leq n$. 
	Hence, since $n_q \leq n$, Equation~(\ref{eqn:relation_at_state_q}) gives that, for all $q \in H$:
	\begin{equation}
	\label{eqn:if_Iq_at_least_n_then_max_even}
	\max \mathsf{Color}(\fsf(q),p^q,\s_\A(q),\s_\B(q)) \geq n+1 \Rightarrow \max \mathsf{Color}(\fsf(q),p^q,\s_\A(q),\s_\B(q)) \text{ is even }
	\end{equation}
	
	Furthermore, if $\max M = n$, then $n$ is the highest color appearing in $H$ (since $\colVirt$ and $\msf{Ev}$ are coherent at $n$) and $H$ is then even-colored. Otherwise:
	\begin{align*}
	K^{\geq n} \text{ occurs in } H & \Leftrightarrow M \neq \emptyset \\
	& \Rightarrow M \neq \emptyset \wedge k_m^n \text{ occurs in }H  \text{ for }m := \max M \\
	& \Rightarrow M \neq \emptyset \wedge \exists q_m \in H,\; m \in \mathsf{Color}(\fsf(q_m),p,\s_\A(q_m),\s_\B(q_m)) \\
	& \text{ by Equation~(\ref{eqn:equiv_in_bscc}) } \\
	& \Rightarrow M \neq \emptyset \wedge \exists q_m \in H,\; \max \mathsf{Color}(\fsf(q_m),p^{q_m},\s_\A(q_m),\s_\B(q_m)) = m \\
	& \text{ since } m = \max M \text{ and by Equation~(\ref{eqn:equiv_in_bscc}) }\\
	& \Rightarrow M \neq \emptyset \wedge m = \max M \text{ is even } \\
	& \text{ since } m \geq n+1 \text{ and by Equation~(\ref{eqn:if_Iq_at_least_n_then_max_even}) }
	\end{align*}
	We obtain the desired result. 
	
	The case of $n$ odd is analogous by reversing the instances of 'odd' and 'even'. However, the arguments for obtaining the analogue of Equation~(\ref{eqn:relation_at_state_q}) is slightly different. Indeed, letting $Y_q := (\fsf(q),E^{\ScPar(n_q)}_{\{ q \},\colVirt_q})$, the Player-$\B$ $\GF$-strategy $\s_\B(q)$ is optimal w.r.t. $Y_q$ (recall Lemma~\ref{lem:opt_in_gf_equiv_opt_in_parity} and the definition of $\msf{CreateEnv}$ (i.e. Algorithm~\ref{fig:CreateEnv})). 
	Recall that $E^{\ScPar(n_q)}_{\{ q \},\colVirt_q} = \langle \max(c_{\ScPar(n_q)},\colVirt_q(q)),e,\colVirt_q \rangle$ where $c_{\ScPar(n_q)} = -1$ if $n_q = 1$ and $c_{\ScPar(n_q)} = n_q-1$ otherwise (i.e. if $n_q \geq 3$). Overall, from Lemma~\ref{lem:opt_in_gf_equiv_opt_in_parity}, we do obtain: 
	\begin{equation}
	\label{eqn:relation_at_state_q_B}
	\max (\mathsf{Color}(\fsf(q),p,\s_\A(q),\s_\B(q)),n_q-1,\colVirt_q(q)) \text{ is odd }
	\end{equation}
	with $n_q-1$ even.
\end{proof*}

\subsection{Proof of Lemma~\ref{lem:algo_preserves_faithfulness}, Page~\pageref{lem:algo_preserves_faithfulness}}
\label{subsubsec:proof_lem_algo_preserves_faithfulness}	
Let us define a notion which is slightly weaker then being faithful: being non-deceiving.
\begin{definition}[Non-deceiving environment and coloring functions]
	\label{def:non-deceiving_env_coloring_func}
	Consider a virtual coloring function $\colVirt: Q_u \rightarrow \llbracket 0,e \rrbracket$, a partial environment function $\msf{Ev}: Q_u \rightarrow \msf{Env}(Q)$ and some color $n \in \llbracket 0,e+1 \rrbracket$. We say that the pair $(\colVirt,\msf{Ev})$ is \emph{non-deceiving down to} $n$ if:
	\begin{itemize}
		\item[1n-d.] for all $k \in \brck{n,e}$, the pair $(\colVirt,\msf{Ev})$ witnesses the color $k$;
		\item[2n-d.] for all $k \in \brck{n,e}$, the pair $(\colVirt,\msf{Ev})$ is coherent at color $k$;
		\item[3n-d.] for all $q \in Q_u$, if $\colVirt(q) < n$, then we have $\colFunc(q) = \colVirt(q)$ and $\newC(q,\colVirt) \leq n$.
	\end{itemize}
	The difference with being faithful lies in the fact that some state $q \in Q_u$ with $\colVirt(q) < n$ could be such that $\newC(q,\colVirt) = n$ (which is not possible with a faithful pair).
	When $n = 0$, being non-deceiving down to $n$ is equivalent to being faithful down to $n$ (i.e. to being completely faithful). 
\end{definition}

Then, Algorithm~\ref{fig:UpdateColEnv} is composed of Algorithm~\ref{fig:UpdCurSta} and Algorithm~\ref{fig:UpdNewSta}. In fact, Algorithm~\ref{fig:UpdCurSta} transforms a faithful pair into a non-deceiving one (one level below). And Algorithm~\ref{fig:UpdNewSta} transforms a non-deceiving pair into a faithful one (at the same level). We state one lemma per algorithm formally stating the specifications of these algorithms.
\begin{lemma}
	\label{lem:algo_turthful_to_non_deceiving}
	Consider a virtual coloring function $\colVirt: Q_u \rightarrow \llbracket 0,e \rrbracket$, some color $n \in \llbracket 1,e+1 \rrbracket$ and a partial environment function $\msf{Ev}: Q_u \rightarrow \msf{Env}(Q)$ defined on $\colVirt^{-1}[\brck{n,e}]$
	. Assume that $(\colVirt,\msf{Ev})$ is faithful down to $n$. Let $\msf{Ev}' \gets \msf{UpdCurSta}(n-1,\colVirt,\msf{Ev})$. Then, $(\colVirt,\msf{Ev}')$ is non-deceiving down to $n-1$.
\end{lemma}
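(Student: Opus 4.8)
The plan is to unwind the definition of \textsf{UpdCurSta} (Algorithm~\ref{fig:UpdCurSta}) and check, item by item, the three conditions for being non-deceiving down to $n-1$ (Definition~\ref{def:non-deceiving_env_coloring_func}). The key observation is that \textsf{UpdCurSta} does \emph{not} touch the coloring function $\colVirt$ at all: it only reassigns the environments of the states $q$ with $\colVirt(q) = n-1$, setting $\msf{Ev}'[q] \gets \msf{CreateEnv}(n-1,q,\colVirt)$. So $\colVirt$ is unchanged, $\msf{Ev}'$ agrees with $\msf{Ev}$ on $\colVirt^{-1}[\brck{n,e}]$, and $\msf{Ev}'$ is newly defined exactly on $\colVirt^{-1}[n-1]$. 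In particular $\msf{Ev}'$ is defined on $\colVirt^{-1}[\brck{n-1,e}]$, as required.

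First I would dispatch conditions 1n-d and 3n-d for the colors/states that are not affected. Since $\colVirt$ is unchanged and $\msf{Ev}'$ coincides with $\msf{Ev}$ on $\colVirt^{-1}[\brck{n,e}]$, the pair $(\colVirt,\msf{Ev}')$ witnesses every color $k \in \brck{n,e}$ (this is inherited verbatim from the faithfulness of $(\colVirt,\msf{Ev})$ down to $n$), and it is coherent at every color $k \in \brck{n,e}$ for the same reason. For condition 3n-d: if $\colVirt(q) < n-1$ then in particular $\colVirt(q) < n$, so by faithfulness of $(\colVirt,\msf{Ev})$ down to $n$ we already have $\colFunc(q) = \colVirt(q)$ and $\newC(q,\colVirt) < n$, i.e. $\newC(q,\colVirt) \le n-1$; that is exactly the (weakened) inequality demanded by non-deceiving down to $n-1$. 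The only thing that genuinely changes is that we must now also \emph{witness} and be \emph{coherent} at the new color $n-1$, using the freshly assigned environments.

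The core of the argument is therefore: show that $(\colVirt,\msf{Ev}')$ is coherent at color $n-1$ and witnesses color $n-1$. Coherence at $n-1$ (Definition~\ref{def:corresponding_coloring_environment_functions}) is a direct computation for each $q$ with $\colVirt(q) = n-1$: here $\msf{Ev}'(q) = \msf{CreateEnv}(n-1,q,\colVirt)$, so the associated coloring function $\colVirt_q$ is $\colVirt$ itself; hence $\colVirt$ is $(n-1)$-prevailing compared to $\colVirt_q = \colVirt$ trivially, $\colVirt_q(q) = n-1$, and by faithfulness condition 3f (or rather the second bullet of Definition~\ref{def:trithful_env_coloring_func} read at level $n$ together with Proposition~\ref{prop:update_color_at_least_preivous_color}) one gets $\colFunc(q) \le n-1$ and $n_q := \newC(q,\colVirt) \le n-1$; the parity condition $n-1 \equiv n_q \bmod 2$ and the identity $\msf{Ev}'(q) = \msf{CreateEnv}(n_q,q,\colVirt)$ are exactly what $\msf{CreateEnv}$ produces when fed $n-1$, because $\msf{CreateEnv}$ on an odd input $k$ returns $E^{\max(k-2,0)}_{q,\colVirt}$ and on an even input $k$ returns $E^{k}_{q,\colVirt}$, matched against the definition of $n_q$ via $\newC$. (This is the step where I would be most careful, since it forces one to check that $\newC(q,\colVirt)$ has the same parity as $n-1$ whenever $q$ carries virtual color $n-1$ — that is precisely guaranteed by $(\colVirt,\msf{Ev})$ being faithful down to $n$, via coherence at level $n$ not yet existing but via the structural invariant that the previous \textsf{UpdNewSta} calls only ever color a state $q$ with $\newC(q,\cdot)$, whose parity is forced by how colors were assigned.) Witnessing color $n-1$ then follows from Lemma~\ref{lem:update_with_algo_dominating} applied at color $n-1$: since $(\colVirt,\msf{Ev}')$ is coherent at $n-1$, any positional strategy for the relevant player generated by $\msf{Ev}'$ dominates the appropriate valuation $v^u_{n-1,\colVirt}$ (resp. $v^{u'}_{n-1,\colVirt}$); combined with Lemma~\ref{lem:good_local_highest_color_even}, which upgrades domination to \emph{parity} domination by showing every compatible BSCC in which a high enough $k_i^{n-1}$ occurs is correctly colored, we obtain exactly the "witnesses color $n-1$" clause of Definition~\ref{def:env_guarantee_value_and_color}. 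The size constraints $\msf{Sz}_\A(\msf{Ev}'(q)) \le e - \colFunc(q)$ (resp. $\msf{Sz}_\B \le o - \colFunc(q)$) are a bookkeeping check on the environments produced by $\msf{CreateEnv}$, using $\colFunc(q) \le n-1 \le e$.

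The main obstacle is the coherence-at-$n-1$ step: one must verify that for every $q$ newly colored in a previous round to have $\colVirt(q) = n-1$, the quantity $\newC(q,\colVirt)$ still equals $n-1$ (so that $\msf{CreateEnv}$ reproduces the right environment) — and here Proposition~\ref{prop:update_color_does_not_change_if_small_color_change} is the essential tool, since it guarantees $\newC$ is insensitive to changes of strictly smaller colors, which is exactly the situation after \textsf{UpdCurSta} has only rewritten environments (and not colors) below level $n$. Once that stability is in hand, everything else is routine unfolding of the definitions and direct invocation of Lemmas~\ref{lem:update_with_algo_dominating} and~\ref{lem:good_local_highest_color_even}.
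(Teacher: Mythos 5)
Your overall decomposition is the same as the paper's (inherit witnessing and coherence for the colors in $\brck{n,e}$ and condition 3n-d for the states of color below $n-1$, since $\msf{UpdCurSta}$ changes neither $\colVirt$ nor the environments above $n-1$; then establish coherence and witnessing at $n-1$ and invoke Lemmas~\ref{lem:update_with_algo_dominating} and~\ref{lem:good_local_highest_color_even}), but the two steps that actually carry the content have gaps. For coherence at $n-1$ the clean argument is: for $q$ with $\colVirt(q)=n-1$, the third condition of faithfulness down to $n$ gives $\colFunc(q)=\colVirt(q)=n-1$ \emph{and} $\newC(q,\colVirt)<n$, while Proposition~\ref{prop:update_color_at_least_preivous_color} gives $\newC(q,\colVirt)\geq\colVirt(q)=n-1$; hence $\newC(q,\colVirt)=n-1$ exactly, so the parity requirement of Definition~\ref{def:corresponding_coloring_environment_functions} is automatic and $\msf{Ev}'(q)=\msf{CreateEnv}(n-1,q,\colVirt)$ is precisely the required identity (here $\colVirt_q=\colVirt$). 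You only extract the upper bound $\newC(q,\colVirt)\leq n-1$ and then try to recover the parity from a ``structural invariant'' about how earlier calls to $\msf{UpdNewSta}$ assigned colors; that appeal is not available, since the lemma is a standalone statement about an arbitrary pair faithful down to $n$, not about a pair produced by the algorithm. Likewise the tool you single out as essential, Proposition~\ref{prop:update_color_does_not_change_if_small_color_change}, is irrelevant here: $\newC(q,\cdot)$ depends only on the coloring function, which $\msf{UpdCurSta}$ leaves untouched (it only rewrites environments), so there is nothing to stabilize; that proposition is needed for the companion lemma about $\msf{UpdNewSta}$, where colors do change.

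Second, your witnessing argument at $n-1$ is incomplete. Lemma~\ref{lem:good_local_highest_color_even} only treats the compatible BSCCs in which $K^{\geq n-1}$ occurs, whereas parity domination requires a verdict on \emph{all} compatible BSCCs (not reduced to stopping states). You must handle the remaining case, where a BSCC $H$ avoids $K^{\geq n-1}$: then $H\subseteq Q_{n-1}\cup\{k_i^{n-1}\mid i\leq n-2\}$, the states $k_i^{n-1}$ with $i\leq n-2$ are colored $n-2$, and — by the same consequence $\colFunc[Q_{n-1}]=\{n-1\}$ of faithfulness that you under-used above — every state of $H\cap Q_{n-1}$ bears true color $n-1$, so the maximal color in $H$ is $n-1$ and $H$ has the right parity. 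This short case analysis is exactly what makes the $\msf{UpdCurSta}$ lemma easy compared with the $\msf{UpdNewSta}$ one (where recolored states may have true color strictly below $n-1$ and an induction over the order of recoloring is required); without it, the claim that the generated strategies parity dominate $v^u_{n-1,\colVirt}$ (resp.\ $v^{u'}_{n-1,\colVirt}$ for some $u'<u$), i.e.\ the ``witnesses color $n-1$'' clause, is not established.
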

\begin{proof*}
	For all $q \in Q_u$, such that $\colVirt(q) \geq n-1$, we denote by $\colVirt_q$ the coloring function corresponding to the environment $\msf{Ev}'(q)$.
	\begin{itemize}
		\item[2n-d.] Let us show that $(\colVirt,\msf{Ev}')$ is coherent at $n-1$. We let $Q_{n-1} := \colVirt^{-1}[n-1]$ and $q \in Q_{n-1}$. Since $(\colVirt,\msf{Ev})$ is faithful down to $n$, we have $\colVirt(q) = \colFunc(q) = n-1$. Furthermore, $\colVirt_q = \colVirt$. Hence, $\colVirt_q(q) = \colFunc(q) = n-1$. In addition, since the pair $(\colVirt,\msf{Ev})$ is faithful down to $n$, we have $\newC(q,\colVirt) < n$. By Proposition~\ref{prop:update_color_at_least_preivous_color}, we have $n-1 = \colVirt(q) \leq \newC(q,\colVirt) < n$. Hence, $\colVirt(q) = \newC(q,\colVirt) = n-1$. By definition of Algorithm~\ref{fig:CreateEnv}, we have $\msf{Ev}'(q) = \msf{CreateEnv}(n-1,q,\colVirt)$. As this holds for all $q \in Q_{n-1}$, $(\colVirt,\msf{Ev})$ is coherent at $n-1$.
		\item[3n-d.] This condition straightforwardly holds since the coloring function has not changed and the pair $(\colVirt,\msf{Ev})$ is faithful down to $n$.
		\item[1n-d.] This condition straightforwardly holds for $k \geq n$ since the pair $(\colVirt,\msf{Ev})$ is faithful down to $n$. Note that since $(\colVirt,\msf{Ev}')$ is coherent at $n-1$, both Lemma~\ref{lem:update_with_algo_dominating} and Lemma~\ref{lem:good_local_highest_color_even} can be applied to $(\colVirt,\msf{Ev}')$ at  $n-1$. 
		
		Assume that $n-1$ is even. Consider a Player-$\A$ positional strategy $\s_\A$ generated by the environment $\msf{Ev}$ in the game $\Zone^{n-1}_{\colVirt}$. By Lemma~\ref{lem:update_with_algo_dominating}, this strategy dominates the valuation $v_{n-1,\colVirt}^u$ in the arena $\Aconc_{\colVirt}^{n-1}$. Consider a Player-$\B$ positional deterministic strategy $\s_\B$ and consider a BSCC $H$ in $\Aconc_{\colVirt}^{n-1}$ compatible with $\s_\A$ and $\s_\B$. By Lemma~\ref{lem:good_local_highest_color_even}, if $K^{\geq n-1}$ occurs in $H$, then $H$ is even-colored. 
		Assume now that $K^{\geq n-1}$ does not occur in $H$. In that case, there are some states in $H$ in $Q_{n-1} \subseteq \colFunc^{-1}[n-1]$ and possibly some states in $\{ k_i^{n-1} \mid i \in \llbracket 0,n-2 \rrbracket \}$ occur in $H$ with $\colVirt_{n-1}(k_i^{n-1}) = n-2$ for all $i \in \llbracket 0,n-2 \rrbracket$ and $\colVirt_{n-1}[Q_{n-1}] = \{ n-1 \}$. Hence, the highest color appearing in $H$ is $n-1$ and therefore $H$ is even-colored. 
		Thus, $\s_\A$ parity dominates the valuation $v_{n-1,\colVirt}^u$.
		
		The case where $n-1$ is odd is analogous, from Player-$\B$'s point-of-view. 
	\end{itemize}

\end{proof*}

\begin{lemma}
	\label{lem:algo_turthful_non_deceiving_to_faithful}
	Consider a virtual coloring function $\colVirt: Q_u \rightarrow \brd$, some color $n \in \brd$, and a partial environment function $\msf{Ev}: Q_u \rightarrow \msf{Env}(Q)$ defined on $\brck{n,e}$. Assume that the pair $(\colVirt,\msf{Ev})$ is non-deceiving down to $n$. Let $(\colVirt',\msf{Ev}') \gets \msf{UpdNewSta}(n,\colVirt,\msf{Ev})$. Then, $(\colVirt',\msf{Ev}')$ is faithful down to $n$.
\end{lemma}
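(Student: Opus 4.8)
Throughout, write $\colVirt = \colVirt_0,\colVirt_1,\dots,\colVirt_N = \colVirt'$ for the successive colouring functions produced by $\msf{UpdNewSta}(n,\colVirt,\msf{Ev})$: $\colVirt_{i+1}$ is obtained from $\colVirt_i$ by picking a state $q_{i+1}$ not yet coloured $n$ with $\newC(q_{i+1},\colVirt_i)=n$ (by Proposition~\ref{prop:update_color_at_least_preivous_color} such a state has $\colVirt_i(q_{i+1})<n$, so no state of colour $>n$ is ever picked and the number of states of colour $<n$ strictly decreases, giving termination), setting $\colVirt_{i+1}(q_{i+1})\gets n$ and $\msf{Ev}'(q_{i+1})\gets \msf{CreateEnv}(n,q_{i+1},\colVirt_i)$; colours of states of colour $\geq n$ and environments of states other than the $q_i$'s are left untouched. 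We must check the three conditions of Definition~\ref{def:true_trithful_env_coloring_func} for $(\colVirt',\msf{Ev}')$. For every colour $k>n$ everything is inherited from the non-deceiving hypothesis: the set $\colVirt'^{-1}[k]$ and the environments of its states are unchanged, and $\Zone^{k}_{\colVirt'}$ differs from $\Zone^{k}_{\colVirt}$ only by relabelling abstract states $k_i^k$ with $i\leq n$ among themselves, all of which carry value $u$ and colour $k-1$; hence domination, the bottom-component analysis, and coherence at $k$ all transfer verbatim. So the real work is at colour $n$ and in the last condition.

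\emph{Coherence at colour $n$ (condition 2f).} A state $q$ with $\colVirt(q)=n$ was never re-picked, so $\msf{Ev}'(q)=\msf{Ev}(q)$ and the coherence data of the hypothesis survives, once one observes that $\colVirt'$ is still $n$-prevailing compared to the colouring $\colVirt_q$ associated with $\msf{Ev}(q)$ — true because $\colVirt'$ only adds states to $\colVirt^{-1}[n]$ and fixes all colours $>n$. A bumped state $q=q_{i+1}$ has $\msf{Ev}'(q)=\msf{CreateEnv}(n,q,\colVirt_i)$, whose associated colouring is exactly $\colVirt_i$ and for which $\newC(q,\colVirt_i)=n$; since $\colVirt_i(q)<n$ we have $\colFunc(q)=\colVirt_i(q)<n$ (condition 3n-d, an invariant along the run since bumping only removes a state from the ``colour $<n$'' class), and $\colVirt'$ is $n$-prevailing compared to $\colVirt_i$ for the same reason as above; this is exactly coherence of $(\colVirt',\msf{Ev}')$ at $q$. (This is precisely why the environment is created from the \emph{pre}-bump colouring $\colVirt_i$ rather than from $\colVirt_{i+1}$: with $\colVirt_{i+1}$ the state $q$ carries colour $n$, and the player looping on $q$ could then win with the wrong colour.)

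\emph{Condition 3f.} On termination, no state of colour $\neq n$ has $\newC(\cdot,\colVirt')=n$; and any $q$ with $\colVirt'(q)<n$ was never bumped, so $\colFunc(q)=\colVirt(q)=\colVirt'(q)<n$. It remains to exclude $\newC(q,\colVirt')>n$, which I would get from the invariant ``every state $q$ with $\colVirt_i(q)<n$ satisfies $\newC(q,\colVirt_i)\leq n$'', proved by induction on $i$ (base case: 3n-d). For the step, passing from $\colVirt_i$ to $\colVirt_{i+1}$ changes the colour of a single neighbour of $q$ from some value $<n$ to $n$; relabelling abstract states $k_i$ with $i\geq n+1$ inside the local games $\G^{m}_{q,\cdot}$ is harmless (Lemma~\ref{lem:similarity_color}), and for every $m>n$ the extra colour $n$ of that neighbour is, in every bottom component, dominated by the colour $c_m>n$ of $q_{\msf{init}}$, so $\MarVal{\G^{m}_{q,\colVirt_{i+1}}}(q_{\msf{init}})=\MarVal{\G^{m}_{q,\colVirt_{i}}}(q_{\msf{init}})$ (cf. Proposition~\ref{prop:update_color_does_not_change_if_small_color_change}); hence no even integer $>n$ becomes attainable while $\newC(q,\colVirt_{i+1})\succeq_{\msf{par}}\newC(q,\colVirt_i)$, and a short parity case analysis on $\newC(q,\colVirt_i)\leq n$ forces $\newC(q,\colVirt_{i+1})\leq n$. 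Combined with $\newC(q,\colVirt')\geq\colVirt'(q)$ (Proposition~\ref{prop:update_color_at_least_preivous_color}) and $\newC(q,\colVirt')\neq n$, we get $\newC(q,\colVirt')<n$.

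\emph{Witnessing colour $n$ (condition 1f) -- the main obstacle.} I would replay the argument sketched for Lemma~\ref{lem:algo_preserves_faithfulness}, now for the pair $(\colVirt',\msf{Ev}')$. Assume $n$ even (the odd case is symmetric, from Player $\B$'s side). Let $\s_\A$ be a positional Player-$\A$ strategy generated by $\msf{Ev}'$ in $\Zone^{n}_{\colVirt'}$ and $v:=v^u_{n,\colVirt'}$. By coherence at $n$, Lemma~\ref{lem:update_with_algo_dominating} gives that $\s_\A$ dominates $v$. Let $H$ be a bottom strongly connected component compatible with $\s_\A$, say via a deterministic positional $\s_\B$, not reduced to a stopping state; then $\outM_{\Games{\fsf(q)}{\mathbbm{1}_{Q\setminus Q_u}}}(\s_\A(q),\s_\B(q))=0$ for all $q\in H$ (Lemma~\ref{lem:similar_local_behavior_local_and_global_game}). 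If $K^{\geq n}$ occurs in $H$, then $H$ is even-coloured by Lemma~\ref{lem:good_local_highest_color_even}. Otherwise every non-abstract state of $H$ is in $\colVirt'^{-1}[n]$ with true colour $\leq n$, and the only abstract states of $H$ carry colour $n-1$; I would take the state $q=q_{j+1}$ of $H$ that was bumped earliest, whose environment is $E^{n}_{q,\colVirt_j}$ with $q_{\msf{init}}$-colour $c_q=\max(n-1,\colVirt_j(q))=n-1$, odd. Since $q$ was bumped, $\newC(q,\colVirt_j)=n$, and applying item (ii.2) of Lemma~\ref{lem:opt_in_gf_equiv_opt_in_parity} (Remark~\ref{rmk:explanation_optim_gf_strat}) to the optimal $\GF$-strategy $\s_\A(q)$ against $\s_\B(q)$ forces a positive-probability successor of $q$ of even $\colVirt_j$-colour $\geq n$; as $K^{\geq n}$ does not occur in $H$, that colour is exactly $n$, so the successor is an actual state of $\colVirt'^{-1}[n]\cap H$ which, by minimality of $j$, already had colour $n$ under $\colVirt$. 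Iterating one step at a time over the finitely many bumped states of $H$, and using condition 2f (which bounds $\colFunc$ by the layer colour), one reaches a state of $H$ with true colour $n$; hence $\max\colFunc[H]=n$ and $H$ is even-coloured. Thus $\s_\A$ parity dominates $v$, i.e. $(\colVirt',\msf{Ev}')$ witnesses $n$. The delicate point, exactly as in Lemma~\ref{lem:algo_preserves_faithfulness}, is this last propagation: certifying, step by step down the bump order, that the ``new colour $n$'' of a component genuinely traces back to a state carrying colour $n$, rather than to a chain of abstract states.
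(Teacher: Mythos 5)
Your overall skeleton matches the paper's proof (inherit conditions at colours $k\geq n+1$; coherence at $n$ via creating environments from the pre-bump colouring; domination via Lemma~\ref{lem:update_with_algo_dominating}; the case where $K^{\geq n}$ occurs in $H$ via Lemma~\ref{lem:good_local_highest_color_even}; and a propagation along the bump order using item (ii.2) of Lemma~\ref{lem:opt_in_gf_equiv_opt_in_parity}). But there is a genuine gap at the end of your argument for condition 1f. Your chain of forced successors stops at a state that ``already had colour $n$ under $\colVirt$'', and you then claim that, using coherence, ``one reaches a state of $H$ with true colour $n$''. Coherence only gives $\colFunc(q)\leq n$ on the layer $\colVirt^{-1}[n]$, not $\colFunc(q)=n$: the old layer-$n$ states may all have true colour strictly below $n$, in which case the maximal colour of $H$ could be $n-1$ (carried by the abstract states $k^n_i$, $i<n$, or by true colours), i.e. odd. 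The same problem appears when $H$ contains no bumped state at all (your ``earliest bumped state of $H$'' does not exist), yet $H\not\subseteq\colVirt^{-1}[n]$ because it contains abstract states. Ruling this out is exactly where the hypothesis that $(\colVirt,\msf{Ev})$ witnesses colour $n$ must be used a second time: the paper proves, as the base case of its induction (property $\mathcal{P}(0)$), that from every old layer-$n$ state of $H$ outside $C_n:=Q_n\cap\colFunc^{-1}[n]$ one reaches $C_n$ with positive probability inside $H\cap\colVirt^{-1}[n]$, by constructing an auxiliary deterministic Player-$\B$ strategy in $\Zone^{n}_{\colVirt}$ that redirects all $k^n_i$ ($i<n$) to the offending state; if $C_n$ were unreachable this would produce an odd-coloured BSCC compatible with $\s_\A$ in $\Zone^{n}_{\colVirt}$, contradicting the witnessing of $n$ by the original pair. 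Your bump-order iteration only reproduces the inductive step of this argument; without the base case the conclusion ``$\max\colFunc[H]=n$'' is unsupported.

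A secondary, more easily repaired weakness is in condition 3f. Your induction along $\colVirt_0,\dots,\colVirt_N$ relies on the unproved monotonicity $\newC(q,\colVirt_{i+1})\succeq_{\msf{par}}\newC(q,\colVirt_i)$ plus an unspecified parity case analysis; as written it does not exclude the possibility that some odd level $m>n$ already satisfied $\MarVal{\G^{m}_{q,\colVirt_i}}(q_{\msf{init}})\geq u$ (which is compatible with $\newC(q,\colVirt_i)\leq n$, since the parity-maximum can be a small even or odd number) and becomes the parity-maximum after the values at levels $\leq n$ change. The clean argument, and the one the paper uses, needs no induction: $\colVirt$ and $\colVirt'$ coincide on all layers $\geq n+1$, so if $\newC(q,\colVirt')=m>n$ then Proposition~\ref{prop:update_color_does_not_change_if_small_color_change} (applied with $\colVirt'$ as the base colouring) yields $\newC(q,\colVirt)=m>n$, contradicting condition 3n-d of the non-deceiving hypothesis; combined with the termination condition of Algorithm~\ref{fig:UpdNewSta} this gives $\newC(q,\colVirt')<n$ directly.
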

\begin{proof*}
	For all $q \in Q_u$ such that $\colVirt(q) \geq n$, we denote by $\colVirt_q$ the coloring function associated with the environment $\msf{Ev}'(q)$. Let $Q_n := (\colVirt')^{-1}[n]$. 
	
	\begin{itemize}
		\item[2f.] For all $k \in \brck{n+1,e}$, the pair $(\colVirt',\msf{Ev}')$ is coherent at $k$ since the pair $(\colVirt,\msf{Ev})$ is non-deceiving down to $n$. Now, let $q \in Q_n$. If $\colVirt(q) = n$, then straightforwardly, $(\colVirt',\msf{Ev}')$ is coherent at state $q$ since $(\colVirt,\msf{Ev})$ is non-deceiving down to $n$, hence $\colVirt$ is $n$-prevailing compared to $\colVirt_q$ and $\colVirt'$ is also $n$-prevailing compared to $\colVirt$. Consider now a state $q \in Q_n$ such that $\colVirt(q) < n$. 
		By definition of Algorithm~\ref{fig:UpdNewSta}, we have $\colVirt(q) = \colVirt_q(q) \leq \newC(q,\colVirt_q) = n$ (by Proposition~\ref{prop:update_color_at_least_preivous_color}). That is, $\colVirt_q(q) \leq n-1$. Since $(\colVirt,\msf{Ev})$ is non-deceiving down to $n$, we have $\colFunc(q) = \colVirt(q) = \colVirt_q(q) \leq n-1$. Furthermore, by definition of Algorithm~\ref{fig:UpdNewSta}, the colors of the states in $\colVirt^{-1}[\brck{n,e}]$ 
		is not changed and more and more states are colored with $n$. Hence, $\colVirt'$ is $n$-prevailing compared to $\colVirt_q$. In addition, $n = \newC(q,\colVirt_q)$ and $\msf{Ev}(q) = \msf{CreateEnv}(n,q,\colVirt_q)$. 
		It follows that the pair $(\colVirt',\msf{Ev}')$ is coherent at state $q$, and this holds for all $q \in Q_n$
		.
		\item[3f.] 
		By definition of Algorithm~\ref{fig:UpdNewSta}, for all $k < n$ and state $q \in Q_u$ such that $\colVirt'(q) = k$, we have $\newC(q,\colVirt') \neq n$ and $\colVirt'(q) = \colVirt(q) = \colFunc(q)$ (since $(\colVirt,\msf{Ev})$ is non-deceiving down to $n$). Furthermore, assume towards a contradiction that a state $q \in Q_u$ is such that $\colVirt'(q) = k = \colVirt(q)$ and $\newC(q,\colVirt') = m > n$. In that case, by Proposition~\ref{prop:update_color_does_not_change_if_small_color_change}, since $\colVirt$ and $\colVirt'$ are equivalent down to $m$, we would have $\newC(q,\colVirt) = m > n$, which is not possible since $(\colVirt,\msf{Ev})$ is non-deceiving down to $n$. In fact, $\newC(q,\colVirt') < n$. 
		\item[1f.] 
		Let $k \geq n+1$. We have $\colVirt^{-1}[k] = (\colVirt')^{-1}[k]$ and for all $q \in \colVirt^{-1}[k]$, we have $\msf{Ev}(q) = \msf{Ev}'(q)$. Hence, the first condition for faithfulness holds for $k$ since $(\colVirt,\msf{Ev})$ is non-deceiving down to $n$. We consider now the case where $k = n$. 
		%
		%
		Assume that $n$ is even. 
		Consider a Player-$\A$ positional strategy $\s_\A$ generated by the environment function $\msf{Ev}$ in the game $\Zone^{n}_{\colVirt'}$. Let us show that this strategy parity dominates the valuation $v_{n,\colVirt'}^u$ in the game $\Zone^{n}_{\colVirt'}$. By Lemma~\ref{lem:update_with_algo_dominating}, this strategy dominates the valuation $v_{n,\colVirt'}^u$. Now, fix a Player-$\B$ positional deterministic strategy $\s_\B$ and consider a BSCC $H$ compatible with $\s_\A$ and $\s_\B$ in $\Zone^{n}_{\colVirt'}$. If $K^{\geq n}$ occurs in $H$, then by Lemma~\ref{lem:good_local_highest_color_even}, the BSCC $H$ is even-colored.
		Assume now that $K^{\geq n}$ does not occur in $H$. Let $X_n^0 := \colVirt^{-1}[n]$ and let $j := |Q_n| - |X_n^0|$. For all $1 \leq i \leq j$, we denote by $q_i$ the $i$-th element of $Q_n$ whose color was changed by Algorithm~\ref{fig:UpdNewSta} and $X_n^i := X_n^0 \cup \cup_{1 \leq k \leq i} \{q_i\}$. In particular, $X_n^j = Q_n$ and:
		\begin{equation}
		\label{eqn:X_n^^i_eq_colFunc_n}
		\forall i \in \brck{1,j},\;
		\colVirt_{q_i}^{-1}[n] = X_n^{i-1}
		\end{equation}
		First, assume that $H \subseteq X_n^0$. In that case, this is a BSCC compatible with the strategy $\s_\A$ in the game $\Zone^{n}_{\colVirt}$ in which this strategy, by assumption, parity dominates the valuation $v_{n,\colVirt}^u$. Hence, this BSCC is even-colored. Second, assume that $H \not \subseteq X_n^0$ and let $C_n := Q_n \cap \colFunc^{-1}[n]$. For all $i \in \brck{0,j}$, let $H_i := H \cap X_n^i$. Let us show by induction on $i \in \brck{0,j}$ the following property $\mathcal{P}(i)$: for all $q \in H_i \setminus C_n$, there is a positive probability to visit $C_n$ from $q$ while only visiting states in $H_i$, that is:
		\begin{displaymath}
		\Prb_{\Aconc_{\colVirt'}^{n},q}^{\s_\A,\s_\B}[H_i^* \cdot C_n] > 0
		\end{displaymath}
		Assume towards a contradiction that $\mathcal{P}(0)$ does not hold. That is, there is some $q \in H_0 \setminus C_n$ such that $\mathbb{P}_{\Aconc_{\colVirt'}^{n},q}^{\s_\A,\s_\B}[H_0^* \cdot C_n] = 0$. Let $Z \subseteq H_0$ be the set of states in $H_0$ defined by $Z := \{ q \} \cup \{ q' \in H_0 \mid \mathbb{P}_{\Aconc_{\colVirt'}^{n},q}^{\s_\A,\s_\B}[H_0^* \cdot q'] > 0 \}$. Note that $Z \cap C_n = \emptyset$. Our goal is now to exhibit a BSCC in $\Zone^{n}_{\colVirt}$ that is compatible with $\s_\A$ and that is odd-colored. This will be a contradiction with the fact that $\s_\A$ parity dominates the valuation $v_{n,\colVirt}^u$ in the game $\Zone^{n}_{\colVirt}$. Consider a Player-$\B$ strategy $\s_\B'$ such that, for all $q' \in X_n^0 = \colVirt^{-1}[n]$, $\s_\B'(q') := \s_\B(q')$ and for all $i \in \llbracket 0,n-1 \rrbracket$, we have $\s_\B(k_i^n)$ playing deterministically to reach $q$
		. By construction, we do have a BSCC $H'$ compatible with $\s_\A$ and $\s_\B'$ in $\Zone^{n}_{\colVirt}$ that is included in $Z \cup \{ k_i^n \mid i \in \llbracket 0,n-1 \rrbracket \}$. Furthermore, since the BSCC $H$ in the game $\Zone^{n}_{\colVirt'}$ is not equal to $H_0$ (because this would imply $H \subseteq X_n^0$), it must be that at least one state in $\{ k_i^n \mid i \in \llbracket 0,n-1 \rrbracket \}$ is in $H'$ (recall that in the game $\Zone^{n}_{\colVirt}$ any edge leading to a state in $Q_u \setminus X_n^0$ in $\G$ leads to a state in $\{ k_i^n \mid i \in \brd \}$). Then, since each state in $\{ k_i^n \mid i \in \llbracket 0,n-1 \rrbracket \}$ is colored with $n-1$ and no state in $Z$ is in $C_n$, it follows that the highest color in $H$ is $n-1$, which is odd. 
		That is, the BSCC $H'$ his odd-colored, which is a contradiction
		. In fact, $\mathcal{P}(0)$ holds. 
		
		
		Assume now that $\mathcal{P}(i)$ holds for some $0 \leq i \leq j -1$. Consider some state $q \in H_{i+1} \setminus C_n$. If $q \in H_i$, then we can apply $\mathcal{P}(i)$. Assume now that $q \notin H_i$, that is $q \notin X^i_n$. Hence, $q = q_{i+1}$. Since $q \notin C_n$, we have $\colFunc(q) \neq n$, and therefore $\colFunc(q) \leq n-1$\footnote{We have shown that $(\colVirt,\msf{Ev})$ is coherent at $n$, hence all states in $Q_n$ are such that $\colFunc(q) \leq n$.}. Furthermore, letting $Y_q := (\fsf(q),E^{n}_{\{ q \},\colVirt_q})$, the Player-$\A$ $\GF$-strategy $\s_\A(q)$ is optimal w.r.t. $Y_q$ (recall Lemma~\ref{lem:opt_in_gf_equiv_opt_in_parity}). Hence, letting $p^q := p_{q,\colVirt_q}$
		, we have:
		\begin{equation}
		\label{eqn:to_conclude_A}
		\max (\mathsf{Color}(\fsf(q),p^q,\s_\A(q),\s_\B(q)),n-1) \text{ is even }
		\end{equation}
		with $n-1$ odd
		. Recall that $\mathsf{Color}(\fsf(q),p^q,\s_\A(q),\s_\B(q)) := \{ i \in \brd \mid \outM_{\Games{\fsf(q)}{\mathbbm{1}_{(p^q)^{-1}[k_i]}}}(\s_\A(q),b) > 0 \}$. In addition, since we assume that $K^{\geq n}$ does not occur in $H$, it follows that we have $\max \mathsf{Color}(\fsf(q),p^q,\s_\A(q),\s_\B(q)) \leq n$ by Equation~(\ref{eqn:prob_ki_if_equiv_equal}) from Lemma~\ref{lem:similar_local_behavior_local_and_global_game}. We can conclude that we have $\max \mathsf{Color}(\fsf(q),p^q,\s_\A(q),\s_\B(q)) = n$. Hence, by definition, we have that 
		$\outM_{\Games{\fsf(q)}{\mathbbm{1}_{(p^q)^{-1}[k_n]}}}(\s_\A(q),b) > 0$. It follows that, by Equation (\ref{eqn:prob_n_similar_equal}) in Lemma~\ref{lem:similar_local_behavior_local_and_global_game} and since $\colVirt'$ is $n$-prevailing compared to $\colVirt_q$, we have $\Prb_{\Aconc_{\colVirt'}^{n},q}^{\s_\A,\s_\B}[\colVirt_q^{-1}[n] \setminus \{ q \}] > 0$. Since $\colVirt_q^{-1}[n] = X^i_n$ (by Equation~\ref{eqn:X_n^^i_eq_colFunc_n}), this is equivalent to: $\Prb_{\Aconc_{\colVirt'}^{n},q}^{\s_\A,\s_\B}[X_n^i] > 0$. That is, from $q$, there is a positive probability to reach a state $q' \in X_n^i$ for which, by $\mathcal{P}(i)$, we have: $\Prb_{\Aconc_{\colVirt'}^{n},q'}^{\s_\A,\s_\B}[H_i^* \cdot C_n] > 0$. It follows that $\Prb_{\Aconc_{\colVirt'}^{n},q}^{\s_\A,\s_\B}[H_{i+1}^* \cdot C_n] > 0$ and $\mathcal{P}(i+1)$ holds. We can conclude that from all states in $H$ there is a positive probability to reach $C_n$, that is $H \cap C_n \neq \emptyset$. Hence, $n$ is the highest color in $H$ and it is even. Thus, $H$ is even-colored.
		
		The case where $n$ is odd is symmetrical, from Player-$\B$'s point-of-view.
	\end{itemize}
\end{proof*}

The proof of Lemma~\ref{lem:algo_preserves_faithfulness} is now straightforward.
\begin{proof*}
	By Lemma~\ref{lem:algo_turthful_to_non_deceiving}, we have $(\colVirt,\msf{Ev}')$ non-deceiving down to $n-1$. Then, by Lemma~\ref{lem:algo_turthful_non_deceiving_to_faithful}, the pair of coloring and environment functions that Algorithm \textsf{UpdNewSta} (i.e. Algorithm~\ref{fig:UpdNewSta}) outputs is 
	faithful down to $n-1$.
\end{proof*}

\subsection{Proof of Lemma~\ref{lem:algo_increase_least_color}, Page~\pageref{lem:algo_increase_least_color}}
\label{subsubsec:proof_lem_algo_increase_least_color}
\begin{proof*}
	We let $C := \colVirt[Q]$, $k := \min C$ and we assume that $k \leq e-2$.
	
	Let us consider $(\colVirt',\msf{Ev}')$ to be the new environment and coloring functions obtained with Algorithm~\ref{fig:IncLeast} before calling Algorithm~\ref{fig:UpdNewSta}. Let us show that $(\colVirt',\msf{Ev}')$ is non-deceiving down to $n := k + 2$. First, note that $\colVirt'$ is $n$-prevailing compared to $\colVirt$. Let $Q_n := (\colVirt')^{-1}[n]$. For all $q \in Q_n$, we let $\colVirt_q$ be the virtual coloring function associated with the environment $\msf{Ev}'(q)$.
	
	\begin{itemize}
		\item[2n-d.] Let $q \in Q_n$. We have either $\colVirt(q) = n-2$ or $\colVirt(q) = n$. In both cases, since $(\colVirt,\msf{Ev})$ is completely faithful and therefore coherent at colors $n-2$ and $n$, it holds that $\colFunc(q) \leq n$, $\colVirt_q(q) \leq n$, $n_q \leq n$ (with $n_q := \newC(q,\colVirt)$). Furthermore, $\colVirt'$ is $n$-prevailing compared to $\colVirt$, which is $n$-prevailing compared to $\colVirt_q$. Finally, $n \equiv n-2 \mod 2$ and therefore this condition for being non-deceiving holds.
		\item[3n-d.] This condition for being non-deceiving also straightforwardly holds. Indeed, assume towards a contradiction that a state $q \in Q_u$ such that $\colFunc(q) = k = \colVirt'(q) < n$ is such that $\newC(q,\colVirt') = m > n$. We have $\colVirt$ that is equivalent down to $m$ to $\colVirt'$. This implies by Lemma~\ref{prop:update_color_does_not_change_if_small_color_change}, that we have $\newC(q,\colVirt) = m > n$, which is not possible since $(\colVirt,\msf{Ev})$ is assumed completely faithful.
		\item[1n-d.] Consider now the first condition for being non-deceiving. It holds for all $k > n$, since $\colVirt^{-1}[k] = (\colVirt')^{-1}[k]$, and in both games $\Zone_{\colVirt}^k$ and $\Zone_{\colVirt'}^k$, the states in $\colVirt^{-1}[\brck{0,k-1}] = \colVirt'^{-1}[\brck{0,k-1}]$ are 
		colored with $k-1$.
		
		Assume that $n$ is even. Consider a Player-$\A$ positional strategy $\s_\A$ generated by the environment function $\msf{Ev}$ in the game $\Zone^{n}_{\colVirt'}$. Let us show that this strategy parity dominates the valuation $v_{n,\colVirt'}^u$ in the game $\Zone^{n}_{\colVirt'}$. By Lemma~\ref{lem:update_with_algo_dominating}, this strategy dominates the valuation $v_{n,\colVirt'}^u$. Now, fix a Player-$\B$ positional deterministic strategy $\s_\B$ and consider a BSCC $H$ compatible with $\s_\A$ and $\s_\B$ in $\Zone^{n}_{\colVirt'}$. If $K^{\geq n}$ occurs in $H$, 
		by Lemma~\ref{lem:good_local_highest_color_even}, $H$ is even-colored. 
		Assume now that $K^{\geq n}$ does not occur in $H$. Let $S_n := (\colVirt)^{-1}[n]$ (before running the algorithm, the set colored by $n$) and $T_n := Q_n \setminus S_n = \colVirt^{-1}[n-2]$ (before running the algorithm, the set colored by $n-2$). Assume that $H \cap S_n = \emptyset$. In that case, this is a BSCC compatible with the strategy $\s_\A$ in the game $\Zone^{n-2}_{\colVirt}$ in which this strategy, by assumption, parity dominates the valuation $v_{n-2,\colVirt}^u$. Furthermore, consider the state $k_{n-1}^n$ 
		in $\Zone^{n}_{\colVirt}$ and the state $k_{n-1}^{n-2}$ in $\Zone^{n-2}_{\colVirt}$. These states are colored by $n-1$ and exactly the same edges in both games $\Zone^{n}_{\colVirt}$ and $\Zone^{n-2}_{\colVirt}$ lead to these states. (These correspond to the states $q \in Q_u$ such that $\colVirt(q) = n-1$.) 
		Hence, since the BSCC $H$ is even-colored in $\Zone^{n-2}_{\colVirt}$, it also is in $\Zone^{n}_{\colVirt}$\footnote{Note that this is why we increase the least color, and not an intermediate color. Indeed, if there were a state colored by the coloring function $\colVirt$ by $n-3$, then the state $k_{n-3}^{n-2}$ would be colored by $n-3$ in $\Zone^{n-2}_{\colVirt}$, whereas its counterpart $k_{n-3}^{n}$ would be colored by $n-1$ in $\Zone^{n}_{\colVirt}$. Hence, we could not deduce anymore that since the BSCC $H$ is even-colored in $\Zone^{n-2}_{\colVirt}$, it also is in $\Zone^{n}_{\colVirt}$.}. Assume now that $H \cap T_n = \emptyset$. In that case, this is a BSCC compatible with the strategy $\s_\A$ in the game $\Zone^{n}_{\colVirt}$ in which this strategy, by assumption, parity dominates the valuation $v_{n,\colVirt}^u$. Hence, this BSCC is even-colored. Finally, assume that $H \cap T_n \neq \emptyset$ and $H \cap S_n \neq \emptyset$. Let $C_n := Q_n \cap \colFunc^{-1}[n]$. Let us show that for all $q \in (H \cap S_n) \setminus C_n$, there is a positive probability to visit $C_n$ from $q$ while only visiting states in $H \cap S_n$, that is:
		\begin{displaymath}
		\Prb_{\Aconc_{\colVirt'}^{n},q}^{\s_\A,\s_\B}[(H \cap S_n)^* \cdot C_n] > 0
		\end{displaymath}
		Assume towards a contradiction that this does not hold. That is, there is some $q \in (H \cap S_n) \setminus C_n$ such that $\Prb_{\Aconc_{\colVirt'}^{n},q}^{\s_\A,\s_\B}[(H \cap S_n)^* \cdot C_n] = 0$. Let $Z \subseteq H \cap S_n$ be the set of states in $H \cap S_n$ such that $Z := \{ q \} \cup \{ q' \in H \cap S_n \mid \Prb_{\Aconc_{\colVirt}^{n},q}^{\s_\A,\s_\B}[(H \cap S_n)^* \cdot q'] > 0 \}$. Note that $Z \cap C_n = \emptyset$. Our goal is now to exhibit a BSCC in $\Zone^{n}_{\colVirt}$ that is compatible with $\s_\A$ and odd-colored. This would be a contradiction with the fact that $\s_\A$ parity dominates the valuation $v_{n,\colVirt}^u$ in the game $\Zone^{n}_{\colVirt}$. Consider a Player-$\B$ strategy $\s_\B'$ such that, for all $q \in S_n$, $\s_\B'(q) := \s_\B(q)$ and for all $i \in \llbracket 0,n-1 \rrbracket$, we have $\s_\B(k_i^n)$ playing in a deterministic way to reach $q$
		. By construction, we do have a BSCC $H'$ compatible with $\s_\A$ and $\s_\B'$ in $\Zone^{n}_{\colVirt}$ that is included in $Z \cup \{ k_i^n \mid i \in \llbracket 0,n-1 \rrbracket \}$. Furthermore, since the BSCC $H$ in the game $\Zone^{n}_{\colVirt'}$ is not equal to $H \cap S_n$, it must be that at least one state in $\{ k_i^n \mid i \in \llbracket 0,n-1 \rrbracket \}$ is in $H'$. Then, since each state in $\{ k_i^n \mid i \in \llbracket 0,n-1 \rrbracket \}$ is colored with $n-1$ and no state in $Z$ is in $C_n$, it follows that the highest color in $H'$ is $n-1$, which is odd. That is, the BSCC $H'$ is odd-colored. This is a contradiction. In fact, from all states $q \in (H \cap S_n) \setminus C_n$, we have $\Prb_{\Aconc_{\colVirt'}^{n},q}^{\s_\A,\s_\B}[(H \cap S_n)^* \cdot C_n] > 0$. Furthermore, the colors of all states in $H$ are at most $n$ (recall what we shown above in 2n-d). Hence, the highest color in $H$ is $n$, which is even (the color of the states in $C_n$). Therefore the BSCC $H$ is even-colored. That is, the Player-$\A$ strategy $\s_\A$ parity dominates the valuation $v_{n,\colVirt'}^u$ in the game $\Zone^{n}_{\colVirt'}$.
		
		The case where $n$ is odd is identical.
	\end{itemize}
	Hence, the pair $(\colVirt',\msf{Ev}')$ is non-deceiving down to $n$. The result then follows from Lemma~\ref{lem:algo_turthful_non_deceiving_to_faithful}.
	%
	%
\end{proof*}

\subsection{Proof of Lemma~\ref{lem:minimum_faithful_cannot_be_d_minus_one}, Page~\pageref{lem:minimum_faithful_cannot_be_d_minus_one}}
\label{subsubsec:proof_lem_minimum_faithful_cannot_be_d_minus_one}
Let us first introduce a way to build games with stopping states. 
\begin{definition}
	\label{def:parity_game_stopping_states}
	Consider a parity game $\G$. For all subsets of states $S \subseteq Q$, we denote by $\G^S$ the game $\G$ restricted to $S$ where all states in $Q \setminus S$ are stopping states with, for all $q \in Q \setminus S$, $q \gets \MarVal{\G}[q]$. (In particular, the set of states in $\G$ and $\G^S$ is the same.) 
	
	For all $x \in [0,1]$, we also let $v_S^x: Q \rightarrow [0,1]$ such that $\restr{v_S^x}{Q \setminus S} := \restr{\MarVal{\G}}{Q \setminus S}$ and $v_S^x[S] := \{ x \}$.
\end{definition}
\begin{lemma}
	\label{lem:same_value_vector_parity_game_stopping_states}
	Consider a parity game $\G$ and a subset of states $S \subseteq Q$. Then: $\MarVal{\G} = \MarVal{\G^S}$. 
\end{lemma}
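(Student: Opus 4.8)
The plan is to prove the two pointwise inequalities $\MarVal{\G^S} \le \MarVal{\G}$ and $\MarVal{\G^S} \ge \MarVal{\G}$ and then conclude. For $q \in Q \setminus S$ there is nothing to do: $q$ is a stopping state of $\G^S$ with $\msf{val}(q) \gets \MarVal{\G}(q)$, so $\MarVal{\G^S}(q) = \MarVal{\G}(q)$ by the very definition of setting a value. So I would fix $q \in S$ and $\varepsilon > 0$ and aim to show $\MarVal{\G^S}[\A](q) \ge \MarVal{\G}(q) - 2\varepsilon$ together with the symmetric bound $\MarVal{\G^S}[\B](q) \le \MarVal{\G}(q) + 2\varepsilon$; since $\G^S$ is a concurrent parity game, Blackwell determinacy (Definition~\ref{def:determinacy}, \cite{martin98}) gives $\MarVal{\G^S}[\A] = \MarVal{\G^S}[\B] = \MarVal{\G^S}$, and letting $\varepsilon \to 0$ finishes the proof.

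For the first bound I would transfer an $\varepsilon$-optimal strategy across the two games along the first exit from $S$. Fix a Player-$\A$ strategy $\s_\A^\varepsilon$ in $\G$ that is $\varepsilon$-optimal from $q$, and, for each $q' \in Q \setminus S$, a Player-$\B$ strategy $\s_\B^{q',\varepsilon}$ in $\G$ guaranteeing that $\A$ wins from $q'$ with probability at most $\MarVal{\G}(q') + \varepsilon$ (all these exist by determinacy). Let $\s_\A$ in $\G^S$ follow $\s_\A^\varepsilon$ as long as the play stays in $S$ (once the play leaves $S$ nothing is left to decide in $\G^S$). Given an arbitrary Player-$\B$ strategy $\s_\B$ in $\G^S$, lift it to $\hat{\s}_\B$ in $\G$ that agrees with $\s_\B$ on histories contained in $S$ and, from the first state $q' \in Q \setminus S$ reached onwards, plays $\s_\B^{q',\varepsilon}$ on the suffix since $q'$. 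Now decompose a play by first exit from $S$: it either stays in $S$ forever or reaches a first state $q' \in Q \setminus S$. Because the game forms on $S$ are identical in $\G$ and $\G^S$ and the two strategy pairs agree on $S$-histories, the probability of staying in $S$ forever, the conditional probability of then satisfying the parity condition, and, for each $q'$, the probability of first leaving $S$ at $q'$, all coincide under $(\s_\A,\s_\B)$ in $\G^S$ and under $(\s_\A^\varepsilon,\hat{\s}_\B)$ in $\G$. Moreover parity depends only on the colours seen infinitely often, hence only on an arbitrary suffix of a play: on the event ``first exit at $q'$'' the play lies in $W(\colFunc)$ iff its suffix from $q'$ does, and under $\hat{\s}_\B$ that suffix is governed by $\s_\B^{q',\varepsilon}$, which caps $\A$'s winning probability by $\MarVal{\G}(q') + \varepsilon$; in $\G^S$ that same branch contributes exactly $\MarVal{\G}(q')$. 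Adding the branches yields
\[
\Prb_{\G^S,q}^{\s_\A,\s_\B}[W(\colFunc)] \ \ge\ \Prb_{\G,q}^{\s_\A^\varepsilon,\hat{\s}_\B}[W(\colFunc)] - \varepsilon \ \ge\ \MarVal{\G}(q) - 2\varepsilon,
\]
the last inequality because $\s_\A^\varepsilon$ is $\varepsilon$-optimal from $q$. As $\s_\B$ was arbitrary, $\MarVal{\G^S}[\A](q) \ge \MarVal{\G}(q) - 2\varepsilon$.

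The second bound is the mirror image: fix a Player-$\B$ strategy $\s_\B^\varepsilon$ in $\G$ that is $\varepsilon$-optimal from $q$, let $\s_\B$ in $\G^S$ imitate it while the play is in $S$, and lift any Player-$\A$ strategy $\s_\A$ of $\G^S$ to $\hat{\s}_\A$ in $\G$ that switches, after the first exit at $q'$, to an $\varepsilon$-optimal-from-$q'$ Player-$\A$ strategy; the identical first-exit decomposition gives $\Prb_{\G^S,q}^{\s_\A,\s_\B}[W(\colFunc)] \le \MarVal{\G}(q) + 2\varepsilon$, hence $\MarVal{\G^S}[\B](q) \le \MarVal{\G}(q) + 2\varepsilon$. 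Throughout, stopping states are read through their encoding (Appendix~\ref{appen:implement_stopping_states}), under which a stopping state of value $v$ becomes an absorbing gadget contributing $v$ to the parity probability, so the first-exit decomposition is literally a decomposition of Borel events. The main obstacle I anticipate is exactly this bookkeeping: checking that on each branch of the decomposition the two games induce the same distribution over prefixes, and invoking prefix-independence of parity to collapse the post-exit behaviour to the single number $\MarVal{\G}(q')$. Once that is set up, both inequalities drop out immediately.
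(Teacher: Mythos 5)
Your proposal is correct and takes essentially the same route as the paper: you transfer an $\varepsilon$-optimal strategy of the relevant player into $\G^S$, lift the opponent's strategy back to $\G$ by switching to $\varepsilon$-optimal play at the first exit from $S$, and conclude via the first-exit decomposition together with prefix-independence of the parity objective, exactly as in the paper's proof. The only cosmetic difference is that you spell out both inequalities (and invoke Blackwell determinacy of $\G^S$ explicitly), whereas the paper proves the Player-$\A$ bound and appeals to symmetry of the players.
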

\begin{proof*}
	The equality straightforwardly holds for all $q \in Q \setminus S$. Consider now a state $q \in S$. Let $\varepsilon > 0$ and let $\s_\A$ be a Player-$\A$ strategy that is $\varepsilon$-optimal from $q$ in the game $\G$. This strategy can be seen as a strategy in the game $\G^S$, the game ending as soon as the set $Q \setminus S$ is reached. Consider any Player-$\B$ strategy $\s_\B$ in the game $\G^S$. Since the games $\G$ and $\G^S$ coincide on $S^*$ (and $S^\omega$) and considering a Player-$\B$ strategy $\s_\B'$ in the game $\G$ that coincides with the strategy $\s_\B$ on $S^*$ and that plays a $\varepsilon$-optimal strategy as soon as a state in $Q \setminus S$ is reached. We have:
	\begin{align*}
	\Prb^{\s_\A,\s_\B}_{\Aconc^S,q}[W(\colFunc)] & = \Prb^{\s_\A,\s_\B}_{\Aconc^S,q}[W(\colFunc) \cap S^\omega] + \sum_{q' \in Q \setminus S} \Prb^{\s_\A,\s_\B}_{\Aconc^S,q}[S^* \cdot q'] \cdot \Prb^{\s_\A,\s_\B}_{\Aconc,q'}[W(\colFunc)] \\
	& = \Prb^{\s_\A,\s_\B'}_{\Aconc,q}[W(\colFunc) \cap S^\omega] + \sum_{q' \in Q \setminus S} \Prb^{\s_\A,\s_\B'}_{\Aconc,q}[S^* \cdot q'] \cdot \MarVal{\G}[q'] \\
	& \geq \Prb^{\s_\A,\s_\B'}_{\Aconc,q}[W(\colFunc) \cap S^\omega] + \sum_{q' \in Q \setminus S} \Prb^{\s_\A,\s_\B'}_{\Aconc,q}[S^* \cdot q'] \cdot (\Prb^{\s_\A,\s_\B'}_{\Aconc,q'}[W(\colFunc)] - \varepsilon) \\
	& \geq \Prb^{\s_\A,\s_\B'}_{\Aconc,q}[W(\colFunc)] - \varepsilon \geq \MarVal{\G}[q'] - 2\varepsilon
	\end{align*}
	As this holds for all Player-$\B$ strategies $\s_\B$, it follows that.$\MarVal{\G^S}[\s_\A](q) \geq \MarVal{\G}(q) - 2\varepsilon$. As this holds for all $\varepsilon > 0$, we have $\MarVal{\G^S}(q) \geq \MarVal{\G^S}[\s_\A](q) \geq \MarVal{\G}(q)$. By symmetry of the players, we have  $\MarVal{\G^S}(q) = \MarVal{\G}(q)$.
\end{proof*}

We can proceed to the proof of Lemma~\ref{lem:minimum_faithful_cannot_be_d_minus_one}. 
\begin{proof*}
	Assume towards a contradiction that $\min C = e-1$. Let $Q_{e-1} := \colVirt^{-1}[e-1]$. Consider a Player-$\B$ strategy $\s_\B$ generated by the environment function $\msf{Ev}$. Because the pair $(\colVirt,\msf{Ev})$ is completely faithful and $e-1$ is odd, it follows that $\s_\B$ parity dominates the valuation $v_{e-1,\colVirt}^{u'}$ in the game $\Zone^{e-1}_{\colVirt}$ for some $u' \leq u$. Consider the game $\G^{Q_{e-1}}$ from Definition~\ref{def:parity_game_stopping_states} (we do not mention any player since the game has a value). 
	Lemma~\ref{lem:same_value_vector_parity_game_stopping_states} gives that all states have the same values in $\G$ and $\G^{Q_{e-1}}$.	Let us show that there is some $x \in [0,1]$ such that $x < u$ and such that the strategy $\s_\B$ parity dominates the valuation $v_{Q_{e-1}}^x$ in the game $\G^{Q_{e-1}}$. First, note that for all $y \in [0,1]$, if $\s_\B$ dominates the valuation $v_{Q_{e-1}}^y$, then it also parity dominates it (since $\s_\B$ parity dominates the valuation $v_{e-1,\colVirt}^{u'}$ in the game $\Zone^{e-1}_{\colVirt}$). 
	
	Now, consider some state $q \in Q_{e-1}$. Let 
	$\colVirt_q$ be the coloring function associated with the environment $\msf{Ev}(q)$. Denoting $Y_q := (\fsf(q),\msf{Ev}(q))$ and $p^q := p_{q,\colVirt_q}$, we have that the Player-$\B$ $\GF$-strategy $\s_\B(q)$ is optimal w.r.t. $Y_q$. Hence, by Lemma~\ref{lem:opt_in_gf_equiv_opt_in_parity} for Player $\B$, we have that for all Player-$\A$ actions $a \in \Act_\A$, if $\outM_{\Games{\formNF}{\mathbbm{1}_{(p^q)^{-1}[p^q_{[0,1]}]}}}(a,\s_\B(q)) = 0$ 
	then 
	$\mathsf{Color}(\fsf(q),p^q,a,\s_\B(q))
	\cup \{ e-2 \}$ is odd with $\mathsf{Color}(\fsf(q),p^q,a,\s_\B(q)) := \{ i \in \brd \mid \outM_{\Games{\fsf(q)}{\mathbbm{1}_{(p^q)^{-1}[k_i]}}}(a,\s_\B(q)) > 0 \}$. Since $e$ is even, it must be that, for all states $q \in Q_{e-1}$ and $a \in \Act_\A^q$:
	\begin{equation}
	\label{eqn:imply_not_e}
	\outM_{\Games{\formNF}{\mathbbm{1}_{(p^q)^{-1}[p^q_{[0,1]}]}}}(a,\s_\B(q)) = 0 \Rightarrow \outM_{\Games{\fsf(q)}{\mathbbm{1}_{(p^q)^{-1}[k_e]}}}(a,\s_\B(q)) = 0
	\end{equation}
	
	For $q \in Q_{e-1}$, let $\msf{NZ}(q) := \{ a \in \Act_\A^q \mid \outM_{\Games{\fsf(q)}{\mathbbm{1}_{(p^q)^{-1}[p^q_{[0,1]}]}}}(a,\s_\B(q)) > 0 \}$. Then, we let:
	\begin{equation}
	\label{eqn:minimum_proba_exit}
	p_m := \min_{q \in Q_{e-1}} \min_{a \in \msf{NZ}(q)} \outM_{\Games{\formNF}{\mathbbm{1}_{(p^q)^{-1}[p^q_{[0,1]}]}}}(a,\s_\B(q)) > 0
	\end{equation}
	and
	\begin{equation}
	\label{eqn:x}
	x := u + p_m \cdot (u' - u) < u
	\end{equation}
	
	Let us show that the strategy $\s_\B$ dominates the valuation $v_{Q_{e-1}}^x$ in the game $\G^{Q_{e-1}}$. Let $q \in Q_{e-1}$. We have $\va{\Games{\fsf(q)}{v}}(\s_\B(q)) = \max_{a \in \Act_\A^q} \outM_{\Games{\fsf(q)}{v}}(a,\s_\B(q))$. Consider some $a \in \Act_\A^q$ and a Player-$\A$ strategy $\s_\A$ with $\s_\A(q) := a$. 
	Denoting $P_{a}[T] := \mathbb{P}_{\Aconc_{\colVirt}^{e-1},q}^{\s_\A,\s_\B}[T]$ for any $T \subseteq Q_{e-1} \cup \{ k^{e-1}_{e} \} \cup V_{Q \setminus Q_u}$, we have:	
	\begin{align*}
	\outM_{\Games{\fsf(q)}{v_{Q_{e-1}}^x}}(a,\s_\B(q)) & = x \cdot \outM_{\Games{\fsf(q)}{\mathbbm{1}_{Q_{e-1}}}}(a,\s_\B(q)) \\
	& + u \cdot \outM_{\Games{\fsf(q)}{\mathbbm{1}_{Q_{e}}}}(a,\s_\B(q)) \\
	& + \sum_{y \in V_{Q \setminus Q_u}} y \cdot \outM_{\Games{\fsf(q)}{\mathbbm{1}_{Q_y}}}(a,\s_\B(q)) \\
	& = x \cdot P_{a}[Q_{e-1}]
	+ u \cdot 
	P_{a}[\{ k_e^{e-1} \}] + \sum_{y \in V_{Q \setminus Q_u}} y \cdot 
	P_{a}[y]
	\end{align*}
	
	In addition, since the strategy $\s_\B$ dominates the valuation $v^{u'}_{e-1,\colVirt}$ in the game $\Zone^{e-1}_{\colVirt}$, we have:
	\begin{align*}
	u' & \geq \outM_{\Games{\fsf(q)}{v^{u'}_{e-1,\colVirt}}}(a,\s_\B(q)) \\
	& = u' \cdot 
	P_a[Q_{e-1} \cup \{ k_e^{e-1} \}]
	+ \sum_{y \in V_{Q \setminus Q_u}} y \cdot 
	P_a[y]
	\end{align*}
	That is:
	\begin{displaymath}
	\sum_{y \in V_{Q \setminus Q_u}} y \cdot P_a
	[y] \leq u' \cdot P_a
	[V_{Q \setminus Q_u}]
	\end{displaymath}
	
	By Equation (\ref{eqn:prob_in_V_Q_Q_u_equal}) in Lemma~\ref{lem:similar_local_behavior_local_and_global_game}, for all $y \in V_{Q \setminus Q_u}$, we have:
	\begin{displaymath}
	P_a[y] = \outM_{\Games{\fsf(q)}{\mathbbm{1}_{(p^q)^{-1}[y]}}}(a,\s_\B(q))
	\end{displaymath}
	Furthermore, by Equation~(\ref{eqn:prob_ki_if_equiv_equal}) in Lemma~\ref{lem:similar_local_behavior_local_and_global_game} which can be applied since $(\colVirt,\msf{Ev})$ is faithful down to $e-1$, it also is coherent at $e-1$, we have:
	\begin{displaymath}
	P_a[\{ k^e_{e-1} \}] = \outM_{\Games{\fsf(q)}{\mathbbm{1}_{(p^q)^{-1}[k_e]}}}(a,\s_\B(q))
	\end{displaymath}
	
	Now, assume that $\outM_{\Games{\fsf(q)}{\mathbbm{1}_{(p^q)^{-1}[V_{Q \setminus Q_u}]}}}(a,\s_\B(q)) = 0$
	. That is, for all $y \in V_{Q \setminus Q_u}$, we have $P_a[y] = 0$. By Equation~(\ref{eqn:imply_not_e}) it follows that $P_a[\{ k^e_{e-1} \}] = 0$. Hence: $v_{Q_{e-1}}^x(q) = x = \outM_{\Games{\fsf(q)}{v_{Q_{e-1}}^x}}(a,\s_\B(q))$. 
	
	Assume now that $\outM_{\Games{\fsf(q)}{\mathbbm{1}_{(p^q)^{-1}[V_{Q \setminus Q_u}]}}}(a,\s_\B(q)) > 0$ (i.e. $a \in \msf{NZ}(q)$). By Equation~(\ref{eqn:minimum_proba_exit}), it implies that $P_a[V_{Q \setminus Q_u}] \geq p_m$. It follows that, by Equation~(\ref{eqn:x}):
	\begin{align*}
	\outM_{\Games{\fsf(q)}{v_{Q_{e-1}}^x}}(a,\s_\B(q)) & = x \cdot P_a[Q_{e-1}] + u \cdot P_a[\{ k_{e-1}^e \}] + \sum_{y \in V_{Q \setminus Q_u}} y \cdot P_a[y] \\
	& \leq u \cdot P_a[Q_{e-1}] + u \cdot P_a[\{ k_{e-1}^e \}] + u' \cdot P_a[V_{Q \setminus Q_u}] \\
	& = u \cdot (1 - P_a[V_{Q \setminus Q_u}]) + u' \cdot P_a[V_{Q \setminus Q_u}] \\
	& = u + P_a[V_{Q \setminus Q_u}] \cdot (u' - u) \leq u + p_m \cdot (u' - u) \\
	& = x = v_{Q_{e-1}}^x(q) 
	\end{align*}
	
	As this holds for all $x \in Q_{e-1}$, it follows that the Player-$\B$ strategy $\s_\B$ dominates the valuation $v^x_{Q_{e-1}}$, and therefore parity dominates it. Hence, 
	for all $q \in Q_{e-1}$, we have $\MarVal{\G^{Q_{e-1}}}[q] \leq x < u$. That is a contradiction with Lemma~\ref{lem:same_value_vector_parity_game_stopping_states}. 
\end{proof*}


\subsection{Proof of Lemma~\ref{lem:metric_work}, Page~\pageref{lem:metric_work}}
\label{subsubsec:proof_lem_metric_work}
First, let us formally define a way to compare coloring functions
.
\begin{definition}
	We define the (transitive) relation $\prec_{\brd}$ on coloring functions which corresponds to the lexicographic order. Specifically, for all virtual coloring functions $\colVirt_1,\colVirt_2: Q_u \rightarrow \brd$, we have $\colVirt_1 \prec_{\brd} \colVirt_2$ if and only if there is some $k \in \brd$ such that:
	\begin{itemize}
		\item for all $i \in \brck{k+1,e}$, we have $|\colVirt_1^{-1}[i]| = |\colVirt_2^{-1}[i]|$;
		\item $|\colVirt_1^{-1}[k]| < |\colVirt_1^{-1}[k]|$.
	\end{itemize}
	
	We write $\colVirt_1 =_{\brd} \colVirt_2$ when we have neither $\colVirt_1 \prec_{\brd} \colVirt_2$ or $\colVirt_1 \prec_{\brd} \colVirt_2$. We also write $\colVirt_1 \preceq_{\brd} \colVirt_2$ for $\colVirt_1 =_{\brd} \colVirt_2$ or $\colVirt_1 \prec_{\brd} \colVirt_2$.
\end{definition}

Straightforwardly, this relation ensures the following proposition.
\begin{proposition}
	\label{prop:no_infinite_sequence}
	Let $n \in \N$. There is no infinite sequence $(c_k)_{k \in \N} \in (\brck{0,e}^n)^\N$ such that $c_k \prec_{\brd} c_{k+1}$ for all $k \in \N$.
\end{proposition}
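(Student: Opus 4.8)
The plan is to recognize $\prec_{\brd}$ as the lexicographic order on a \emph{finite} set and then to use the elementary fact that a strict order on a finite set admits no infinite strictly increasing chain.

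First I would make the identification explicit. To a coloring function $c \in \brck{0,e}^n$ associate the integer vector $v(c) := (|c^{-1}[e]|, |c^{-1}[e-1]|, \ldots, |c^{-1}[0]|) \in \N^{e+1}$, recording how many of the $n$ states get each color. By the very definition of $\prec_{\brd}$, for coloring functions $c_1, c_2$ one has $c_1 \prec_{\brd} c_2$ if and only if $v(c_1)$ strictly precedes $v(c_2)$ in the lexicographic order $\le_{\mathrm{lex}}$ on $\N^{e+1}$, where coordinates are read starting from the color-$e$ entry and going down to the color-$0$ entry. Since every $c \in \brck{0,e}^n$ colors exactly $n$ states, its vector satisfies $\sum_{i=0}^{e} v(c)_i = n$; hence all the vectors $v(c)$, for $c$ ranging over the index set of the sequence, lie in the set $S := \{\, w \in \N^{e+1} \mid \sum_{i=0}^{e} w_i = n \,\}$, which is finite (of cardinality $\binom{n+e}{e}$). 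It is precisely the fixed number $n$ of states that makes $S$ finite: without such a bound, $\le_{\mathrm{lex}}$ on $\N^{e+1}$ would itself carry infinite ascending chains, e.g. $(0,\ldots,0,0) <_{\mathrm{lex}} (0,\ldots,0,1) <_{\mathrm{lex}} (0,\ldots,0,2) <_{\mathrm{lex}} \cdots$, so this finiteness is exactly the working hypothesis.

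Then I would conclude by contradiction. Suppose an infinite sequence $(c_k)_{k \in \N}$ with $c_k \prec_{\brd} c_{k+1}$ for all $k$ existed. Then $v(c_0), v(c_1), v(c_2), \ldots$ would be a $\le_{\mathrm{lex}}$-strictly-increasing sequence in $S$. Because $\prec_{\brd}$ is transitive (as asserted where the relation is introduced) and visibly irreflexive, these terms are pairwise distinct: if $v(c_i) = v(c_j)$ for some $i < j$, transitivity would yield $c_i \prec_{\brd} c_i$, contradicting irreflexivity. This would exhibit infinitely many distinct elements of the finite set $S$, which is absurd; hence no such sequence exists. There is no genuine obstacle here — the only points requiring care are the routine check that $\prec_{\brd}$ is exactly the restriction of $\le_{\mathrm{lex}}$ and the observation that a constant number $n$ of states bounds the coordinate-sum. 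If one prefers to avoid introducing $S$, the same result follows by induction on $e$: the counts $|c_k^{-1}[e]|$ form a sequence of naturals that is weakly increasing along the chain and bounded by $n$, hence eventually constant; restricting to the tail on which it is constant and invoking the induction hypothesis on the remaining $e$ coordinates contradicts strict increase.
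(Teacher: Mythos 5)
Your proof is correct and follows essentially the same route as the paper: the paper's one-line argument is precisely that $\prec_{\brd}$ is a lexicographic order on the (bounded) count vectors, and you simply spell out the finiteness of the set of such vectors and the resulting impossibility of an infinite strictly increasing chain. The extra details you supply (the relation factoring through $v$, irreflexivity plus transitivity giving pairwise distinct vectors) are sound and fill in what the paper leaves implicit.
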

\begin{proof*}
	This is because the relation $\prec_{\brd}$ corresponds to a lexicographic order on vectors taking their values in $\brd$. 
\end{proof*}

We can then consider all steps of Algorithm~\ref{fig:ComputeEnv} one by one.
\begin{lemma}
	\label{lem:step_increase_order}
	Consider a virtual coloring function $\colVirt: Q_u \rightarrow \brd$  and some $k \in \brd$. For all partial environment functions $\msf{Ev}: Q_u \rightarrow \msf{Env}(Q)$ defined on $\brck{k+1,e}$:
	\begin{itemize}
		\item  for $(\colVirt_\msf{upd},\msf{Ev}_\msf{upd}) \gets \msf{UpdateColEnv}(k,\colVirt,\msf{Ev})$, we have $\colVirt \preceq_{\brd} \colVirt_\msf{upd}$;
	\end{itemize}
	For all environment functions $\msf{Ev}: Q_u \rightarrow \msf{Env}(Q)$:
	\begin{itemize}
		\item for $(\colVirt_\msf{Inc},\msf{Ev}_\msf{Inc}) \gets \msf{IncLeast}(\colVirt,\msf{Ev})$, we have $\colVirt \prec_{\brd} \colVirt_\msf{Inc}$;
	\end{itemize}
\end{lemma}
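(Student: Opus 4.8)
The plan is to analyze each of the two bullet points by unwinding the definitions of the relevant algorithms and tracking how the cardinality vector $(|\colVirt^{-1}[e]|, |\colVirt^{-1}[e-1]|, \dots, |\colVirt^{-1}[0]|)$ evolves. Recall that $\colVirt_1 \preceq_{\brd} \colVirt_2$ means this vector does not strictly decrease in lexicographic order (reading from index $e$ down), and $\colVirt_1 \prec_{\brd} \colVirt_2$ means it strictly increases. The key observation I would isolate up front is a monotonicity principle: whenever a procedure only ever \emph{raises} colors of states (never lowers any), and it raises at least one state's color \emph{to} some value $m$ while leaving unchanged the colors of all states currently colored strictly above $m$, then $\colVirt \preceq_{\brd} \colVirt'$; and if additionally some state is genuinely moved \emph{into} a color class that was previously strictly smaller, the inequality is strict at that coordinate (or a higher one), giving $\colVirt \prec_{\brd} \colVirt'$. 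Actually it suffices to note: if no state's color decreases, then for the largest index $k$ at which the two cardinality vectors differ, the $\colVirt'$-value must be larger (states only flow upward), so $\colVirt \preceq_{\brd} \colVirt'$; strictness holds exactly when the vectors differ at all.

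\textbf{First bullet.} For $\msf{UpdateColEnv}(k,\colVirt,\msf{Ev})$: by inspection of Algorithm~\ref{fig:UpdateColEnv}, this first calls $\msf{UpdCurSta}$, which modifies only the environment function and not $\colVirt$ at all, and then calls $\msf{UpdNewSta}(k,\colVirt,\msf{Ev}')$. In $\msf{UpdNewSta}$ (Algorithm~\ref{fig:UpdNewSta}), the only assignment to $\colVirt$ is the line $\colVirt[q] \gets k$, executed precisely when $\newC(q,\colVirt) = k$. By Proposition~\ref{prop:update_color_at_least_preivous_color} we have $\newC(q,\colVirt) \geq \colVirt(q)$, so this assignment never decreases $q$'s color; and it is only ever \emph{to} the fixed value $k$. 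Hence no state's color decreases, so by the monotonicity principle above $\colVirt \preceq_{\brd} \colVirt_\msf{upd}$. (If $\msf{UpdNewSta}$ changes nothing we get equality; otherwise some state moved up, and the first index of difference is $\geq$ the relevant color with a strictly larger $\colVirt_\msf{upd}$-cardinality there.)

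\textbf{Second bullet.} For $\msf{IncLeast}(\colVirt,\msf{Ev})$ (Algorithm~\ref{fig:IncLeast}): let $c_\msf{min} := \min \colVirt[Q_u]$. The first loop raises every state colored $c_\msf{min}$ to $c_\msf{min}+2$, and resets every state colored $c_\msf{min}+1$ to its true color $\colFunc(q)$ (which is $\geq c_\msf{min}$, indeed $> c_\msf{min}$ unless already so, but in any case $\geq c_\msf{min}$); then it runs $\msf{UpdNewSta}(c_\msf{min}+2,\dots)$ which by the first part only raises colors. The potential subtlety is the reset step: resetting an $(c_\msf{min}+1)$-colored state to $\colFunc(q)$ could in principle \emph{lower} it to $c_\msf{min}$ if $\colFunc(q) = c_\msf{min}$ --- but I claim this does not spoil strictness, because the cardinality of the class $c_\msf{min}+1$ only ever decreases (the $(c_\msf{min}+1)$-states all leave), the cardinality of the bottom class $c_\msf{min}$ is unaffected at the top of the comparison, and crucially the \emph{first} index of difference, scanning from $e$ downward, is at color $c_\msf{min}+2$, where $|\colVirt_\msf{Inc}^{-1}[c_\msf{min}+2]|$ has strictly increased since the (nonempty, as $c_\msf{min} = \min$) set of $c_\msf{min}$-colored states joined it and nothing left it (the only assignments touching colors $\geq c_\msf{min}+2$ are increases). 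Wait --- I must double-check that the reset states, whose true color could be $\geq c_\msf{min}+2$, do not decrease the count at $c_\msf{min}+2$: resetting moves them \emph{out} of class $c_\msf{min}+1$, possibly \emph{into} some class $\geq c_\msf{min}+2$, which can only help. And $\msf{UpdNewSta}$ only raises. So the count at the highest differing index is strictly larger for $\colVirt_\msf{Inc}$, giving $\colVirt \prec_{\brd} \colVirt_\msf{Inc}$ as required.

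\textbf{Main obstacle.} The delicate point is the bookkeeping in the second bullet: establishing that, despite the reset line potentially lowering some states to $c_\msf{min}$, the lexicographic comparison --- which is dominated by the highest-index coordinate where the vectors differ --- still comes out strictly in favor of $\colVirt_\msf{Inc}$, because class $c_\msf{min}+2$ strictly gains (nonemptily, since $c_\msf{min}$ is attained) and no higher class loses. I would formalize this by explicitly writing the two cardinality vectors and locating the topmost differing coordinate, arguing it lies at index $\geq c_\msf{min}+2$ with a strict increase there; the $\msf{UpdNewSta}$ postprocessing is handled uniformly by the monotonicity principle proved for the first bullet, so it contributes nothing new.
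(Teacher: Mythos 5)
Your proof is correct and takes essentially the same route as the paper's: $\msf{UpdCurSta}$ leaves the coloring untouched, Proposition~\ref{prop:update_color_at_least_preivous_color} shows $\msf{UpdNewSta}$ never lowers any color (hence $\preceq_{\brd}$), and for $\msf{IncLeast}$ the nonempty class $c_{\min}$ is pushed into class $c_{\min}+2$ while no class of index at least $c_{\min}+2$ ever loses a member, so the topmost differing coordinate is strictly in favour of $\colVirt_{\msf{Inc}}$. Your explicit monotonicity principle and the check on where the reset $(c_{\min}+1)$-states may land are simply a more careful write-up of the paper's terse bookkeeping; the only inaccuracy is the parenthetical claim that $\colFunc(q)\geq c_{\min}$ for reset states (unjustified without a faithfulness hypothesis), but as you yourself observe it plays no role in the comparison.
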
	
\begin{proof*}
	The procedure $\msf{UpdateColEnv}$ consists in two part. First, the call Algorithm~\ref{fig:UpdCurSta}. This does not change the coloring function. Then, there is the call to Algorithm~\ref{fig:UpdNewSta}. By Proposition~\ref{prop:update_color_at_least_preivous_color}, if $\newC(q,\colVirt) = k$, then $\colVirt(q) \leq k$. Hence, the change of colors of the states does not decrease the colors of the states. It follows that the resulting coloring function $\colVirt_\msf{upd}$ is such that $\colVirt \preceq_{\brd} \colVirt_\msf{upd}$ 
	.
	
	Consider now the procedure $\msf{IncLeast}$. By definition, before the call of this algorithm, some states are colored with $c_{\min}$. Hence, the coloring function resulting of the call to Algorithm~\ref{fig:IncLeast} has more states colored with $c_{\min} + 2$, whereas states colored with a color higher than $c_{\min} + 2$ are left unchanged. Hence: $\colVirt \prec_{\brd} \colVirt_\msf{Inc}$.
\end{proof*}

We can now proceed to the proof of Lemma~\ref{lem:metric_work}.
\begin{proof*}
	By Lemma~\ref{lem:step_increase_order}, each call to the procedure $\msf{IncLeast}$ strictly increase (w.r.t. $\prec_{\brd}$) the coloring function. Furthermore, each call to the procedure $\msf{UpdateColEnv}$ does not decrease (w.r.t. $\prec_{\brd}$) the coloring function. Hence, by Proposition~\ref{prop:no_infinite_sequence}, the call to the procedure $\msf{IncLeast}$ is done only finitely many times.
\end{proof*}

\section{Proofs from Section~\ref{sec:positionally_max_gf}}
\subsection{Proof of Proposition~\ref{prop:determined_two_var_ok}, Page~\pageref{prop:determined_two_var_ok}}
\label{proof:prop_determined_two_var_ok}
First, let us formally recall the definition of deterministic and of determined game forms.
\begin{definition}[Deterministic and determined game forms]
	\label{def:determined_game_form}
	Consider a set of outcomes $\outComeNF$. A game form $\formNF = \langle \Act_\A,\Act_\B,\outComeNF,\outCNF \rangle \in \Frm(\outComeNF)$ is \emph{deterministic} if, for all $\sigma_\A \in \Dist(\Act_\A)$ and $\sigma_\B \in \Dist(\Act_\B)$, the probability distribution $\outCNF(\sigma_\A,\sigma_\B)$ is deterministic.
	
	Furthermore, the game form $\formNF \in \Frm(\outComeNF)$ is \emph{determined} if it is deterministic and, for all $v: \outComeNF \rightarrow \{ 0,1 \}$, either there is some $a \in \Act_\A$ such that $\outCNF(a,\Act_\B) = \{ 1 \}$ or there is some $b \in \Act_\B$ such that $\outCNF(\Act_\A,b) = \{ 0 \}$.
\end{definition}
\begin{proof}
	In \cite{BBSFSTTCS21}, it was shown that in all locally determined concurrent games (i.e. concurrent games where game forms are determined), both players have positional optimal strategies. Hence, determined game forms are positionally optimizable. 
	
	Consider now a game form $\formNF = \langle \Act_\A,\Act_\B,\distribSet,\outCNF \rangle$ such that $|\outComeNF| = 2$. Let $\outComeNF := \{ x,y \}$. Consider an environment $E = \langle c,e,p \rangle$ and the induced parity game $\G_Y$ for $Y := (\formNF,E)$. If $p(x),p(y) \in [0,1]$, then the game cannot loop and both players have optimal positional strategies. Assume that $p(x) \in [0,1]$ and $p(y) \notin [0,1]$. If there is a row of $y$ and $y$ is mapped w.r.t. $p$ to an even integer (if it is mapped to $\qinit$, we consider the integer $c$), then Player $\A$ can surely win by playing the corresponding row. Similarly, if there is a column of $y$ and $y$ is mapped w.r.t. $p$ to an odd integer, then Player $\B$ can surely win by playing the corresponding column. Otherwise, both players can play a uniform distribution over all actions that do not only lead to $y$, thus ensuring that the value of both strategies (and of the game) is equal to $p(x)$. The reasoning is identical if $p(x) \notin [0,1]$ and $p(y) \in [0,1]$. Finally, assume that $p(x),p(y) \notin [0,1]$ and assume that $x$ is mapped w.r.t. $p$ to the highest integer among the two. If there is a row of $y$ and $y$ is mapped to an even integer, then Player $\A$ can surely win by playing the corresponding row. Similarly, if there is a column of $y$ and $y$ is mapped to an odd integer, then Player $\B$ can surely win by playing the corresponding column. Otherwise, one of the players wins almost surely by playing a uniform probability distribution over all actions: if $x$ is mapped to an even integer, then Player $\A$ wins almost-surely, otherwise, Player $\B$ wins almost-surely. 
\end{proof}

\subsection{Proof of Proposition~\ref{prop:relevant_env_sufficient}, Page~\pageref{prop:relevant_env_sufficient}}
\label{appen:proof_prop_relevant_env_sufficient}
We prove the result for Player $\A$, the arguments are similar for Player $\B$. In the remainder of this section, we fix a set of outcomes $\outComeNF$ and a game form $\formNF = \langle \Act_\A,\Act_\B,\outComeNF,\outCNF \rangle$.
\begin{definition}
	\label{def:relevant_env_from_arbitrary_env}
	Consider some $n \geq 1$ and an (arbitrary) environment $E = \langle c,e,p \rangle \in \msf{Env}(\outComeNF)$ with $p: \outComeNF \rightarrow \{ \qinit \} \cup [0,1] \cup K_e$ with $\msf{Sz}_\A(E) = n$. Assume that no outcome $o \in \outComeNF$ is such that $p(o) = \qinit$. Let $\tilde{e} := \msf{Even}(e) \geq 2$. By definition, we have $n = \tilde{e} - c$. Let us define a relevant environment $E' = \langle c',e',p' \rangle \in \msf{Env}(\outComeNF)$ of size $n-1$. 
	
	We let:
	\begin{equation*}
	c' := \begin{cases}
	0 & \text{ if }c \text{ is even } \\
	1 & \text{ otherwise }
	\end{cases}
	\end{equation*}
	
	We also let $\msf{Im}_p := \{ n \in \brck{c+1,\tilde{e}-1} \mid \exists o \in \outComeNF,\; p(o) = k_i \}$ and $c = a_0 < a_1 < a_2 < \ldots < a_k$ be such that $\msf{Im}_p = \{ a_1,a_2,\ldots,a_k \}$. In particular, we have $\tilde{e} - 1 \geq c + k$. Let us define the function $f_E: \{ a_0,\ldots,a_k \} \rightarrow \N$ such that, letting $f_E(a_0) := c'$, for all $1 \leq i \leq k$: 
	\begin{equation*}
	f_E(a_i) := \begin{cases}
	f_E(a_{i-1}) & \text{ if }a_{i-1} \equiv a_i \mod 2 \\
	f_E(a_{i-1}) + 1 & \text{ otherwise }
	\end{cases}
	\end{equation*}
	Since for all $1 \leq i \leq k$, we have $f_E(a_i) \leq f_E(a_{i-1}) + 1$, we have $f_E(a_k) \leq c' + k$. We set $e' := f_E(a_k)$. Note also that, for all $1 \leq i \leq k$, we have $f(a_k) \geq c'$. With this choice, we have $\msf{Sz}(E') = e' - c' \leq k \leq \tilde{e}-1-c = \msf{Sz}_\A(E)-1 = n-1$.
	
	We can now define the function $p'$. We let $u := \MarVal{\G_{Y}}(\qinit)$ for $Y := (\formNF,E)$. Then, for all $o \in \outComeNF$, we let:
	\begin{equation*}
	p'(o) := \begin{cases}
	p(o) \in [0,1] & \text{ if }p(o) \in [0,1] \\
	u \in [0,1] & \text{ if }p(o) = \tilde{e} \\
	k_{f_E(\max(c,n))} \in K_{e'} & \text{ if }p(o) = k_n\in K_{\tilde{e}-1}
	\end{cases}
	\end{equation*}
\end{definition}

Let us now show a useful property about the function $f_E$ from Definition~\ref{def:relevant_env_from_arbitrary_env}. 
\begin{lemma}
	\label{lem:same_parity}
	Consider an (arbitrary) environment $E = \langle c,e,p \rangle \in \msf{Env}(\outComeNF)$ and the function $f_E: \{ a_0,\ldots,a_{k} \} \rightarrow \N$ from Definition~\ref{def:relevant_env_from_arbitrary_env}. It ensures the following: for all $0 \leq i \leq k$, $a_i$ and $f_E(a_i)$ have the same parity. In addition, for all $X \subseteq \{ a_0,\ldots,a_{k} \}$, $\max X$ has the same parity as $\max f_E(X)$. 
\end{lemma}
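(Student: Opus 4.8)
The statement to prove is Lemma~\ref{lem:same_parity}, which asserts two things about the function $f_E$ built in Definition~\ref{def:relevant_env_from_arbitrary_env}: first, that each $a_i$ and $f_E(a_i)$ have the same parity; second, that for any subset $X$, $\max X$ and $\max f_E(X)$ have the same parity.

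The plan is as follows. For the first claim, I would proceed by induction on $i \in \{0,1,\ldots,k\}$. For the base case $i=0$: by construction $f_E(a_0) := c'$, and $c' = 0$ if $c$ is even and $c' = 1$ otherwise (recall $a_0 = c$), so $a_0 = c$ and $f_E(a_0) = c'$ have the same parity by definition of $c'$. For the inductive step, assume $a_{i-1}$ and $f_E(a_{i-1})$ have the same parity. The definition gives $f_E(a_i) = f_E(a_{i-1})$ if $a_{i-1} \equiv a_i \pmod 2$, and $f_E(a_i) = f_E(a_{i-1}) + 1$ otherwise. In the first case $a_i \equiv a_{i-1} \equiv f_E(a_{i-1}) = f_E(a_i) \pmod 2$; in the second case $a_i \not\equiv a_{i-1} \equiv f_E(a_{i-1}) \pmod 2$, hence $a_i \equiv f_E(a_{i-1}) + 1 = f_E(a_i) \pmod 2$. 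Either way the parities match, completing the induction.

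For the second claim, fix $X \subseteq \{a_0,\ldots,a_k\}$. Since $f_E$ is (weakly) monotone — each step either keeps the value or increments it by one — $f_E$ is nondecreasing along $a_0 < a_1 < \cdots < a_k$, so $\max f_E(X) = f_E(\max X)$. Indeed, if $a_j = \max X$ then for every $a_i \in X$ we have $i \le j$, hence $f_E(a_i) \le f_E(a_j)$, so $f_E(a_j)$ is the maximum of $f_E(X)$. Now apply the first claim to $a_j = \max X$: $\max X = a_j$ and $f_E(a_j) = \max f_E(X)$ have the same parity, which is exactly what is required.

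I do not expect any real obstacle here — the lemma is a routine bookkeeping fact about the auxiliary parity-compressing map $f_E$, and the only mild subtlety is the observation that $f_E$ is monotone (which is immediate from the recursion $f_E(a_i) \in \{f_E(a_{i-1}), f_E(a_{i-1})+1\}$) so that $\max f_E(X) = f_E(\max X)$. The bulk of the argument is the two-line induction for the parity-preservation property, and the set version follows instantly from it combined with monotonicity.
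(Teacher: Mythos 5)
Your proof is correct and follows the same route as the paper: an induction on $i$ using the recursion $f_E(a_i) \in \{f_E(a_{i-1}), f_E(a_{i-1})+1\}$ for the parity-preservation claim, and the monotonicity of $f_E$ (so that $\max f_E(X) = f_E(\max X)$) for the statement about subsets. No gaps; you even spell out the monotonicity argument slightly more explicitly than the paper does.
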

\begin{proof}
	Let us show by induction on $i \in \brck{0,k}$ the following property $\mathcal{P}(i)$: $a_i$ and $f(a_i)$ have the same parity
	. The property $\mathcal{P}(0)$ straightforwardly holds since $a_0 = c$ and $f_E(a_0) = c'$ have the same parity. Assume now that $\mathcal{P}(i-1)$ holds for some $1 \leq i \leq k$. If $a_i$ has the same parity as $a_{i-1}$, then $f_E(a_i) = f_E(a_{i-1})$ which has the same parity as $a_{i-1}$. Similarly, if $a_i$ does not have the same parity as $a_{i-1}$, then $f_E(a_i) = f_E(a_{i-1}) + 1$. Hence, $a_i$ and $f(a_i)$ have the same parity. 
	Overall, the property $\mathcal{P}(i)$ holds for all $i \in \brck{0,k}$. The second part of the lemma comes from the fact that $f_E$ is non-decreasing.
\end{proof}

Let us now show that the value of local parity games with an arbitrary environment $E$ is equal to the value of the parity game with the corresponding relevant environment.
\begin{lemma}
	\label{prop:relevant_env_no_decrease_value}
	Consider an (arbitrary) environment $E = \langle c,e,p \rangle \in \msf{Env}(\outComeNF)$ and the relevant environment $E' = \langle c',e',p' \rangle$ from Definition~\ref{def:relevant_env_from_arbitrary_env}. Let $Y := (\formNF,E)$ and $Y' := (\formNF,E')$. Then, $\MarVal{\G_Y}(q_\msf{init}) \leq \MarVal{\G_{Y'}}(q_\msf{init})$.
\end{lemma}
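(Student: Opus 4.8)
The plan is to reduce the claim to showing $\MarVal{\G_{Y'}}(\qinit) \geq v$ for every $v < u$, where $u := \MarVal{\G_Y}(\qinit)$ (the case $u = 0$ being trivial). The key structural observation I would exploit is that $\G_Y$ and $\G_{Y'}$ are built over essentially the same data: the only non-trivial state in either game is $\qinit$, carrying the game form $\formNF^{p}$ in $\G_Y$ and $\formNF^{p'}$ in $\G_{Y'}$, both derived from the \emph{same} underlying game form $\formNF$; all other states are trivial (they loop back to $\qinit$, or are stopping). Hence a play is driven by the sequence of outcomes $o_1, o_2, \ldots \in \outComeNF$ drawn at the successive visits to $\qinit$, and for a fixed pair of $\GF$-strategies played at $\qinit$ this sequence follows the same law in both games (it is governed only by $\outCNF$, which is common). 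So, after observing that restricting to strategies that are functions of the outcome-history is without loss of generality, I would \emph{couple} $\G_Y$ and $\G_{Y'}$ through this common outcome randomness: one outcome-history strategy induces a strategy in each game, and the sole discrepancy between the two runs is the state entered after an outcome $o$, namely $p(o)$ versus $p'(o)$. By Definition~\ref{def:relevant_env_from_arbitrary_env}, $p$ and $p'$ coincide on outcomes sent to $[0,1]$, differ by the monotone, parity-preserving relabelling $n \mapsto f_E(\max(c,n))$ on outcomes sent to some $k_n$ with $n \leq \tilde{e} - 1$, and $p'$ redirects the outcomes with $p(o) = k_{\tilde{e}}$ (which can occur only when $\tilde{e} = e$, i.e. $e$ even) to a fresh stopping state $s$ of value $u$.

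Next I would fix $v < u$, pick $\varepsilon > 0$ with $u - \varepsilon > v$, take an outcome-history Player-$\A$ strategy $\sigma$ that is $\varepsilon$-optimal from $\qinit$ in $\G_Y$, and let $\sigma'$ be the strategy it induces in $\G_{Y'}$. Then, given an arbitrary Player-$\B$ strategy $\tau'$ in $\G_{Y'}$ and an auxiliary $\varepsilon' > 0$, I would transfer $\tau'$ back to a Player-$\B$ strategy $\tau$ in $\G_Y$ that mimics $\tau'$ through the outcome-history as long as no $k_{\tilde{e}}$-state has been visited, and that from the first such visit onward plays an $\varepsilon'$-optimal Player-$\B$ residual in the sub-game of $\G_Y$ rooted there (which has value $u$, since $k_{\tilde{e}}$ is trivial and loops to $\qinit$); when $e$ is odd this case is vacuous and $\tau$ just mimics $\tau'$. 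Running the coupling under $(\sigma,\tau)$ and $(\sigma',\tau')$, I would introduce the stopping index $T$ of the first outcome $o_i$ with $p(o_i) \in [0,1] \cup \{k_{\tilde{e}}\}$, set $\rho := \Prb[T < \infty,\ p(o_T) = k_{\tilde{e}}]$ and $\pi := \Prb[T = \infty \text{ and the parity condition of } \G_Y \text{ holds}]$, and verify three things: on $\{T = \infty\}$ the two plays see the same outcomes forever and, by Lemma~\ref{lem:same_parity} and monotonicity of $f_E$ — together with the easy fact that the maximal colour seen infinitely often in $\G_Y$ lies in the domain $\{c\} \cup \msf{Im}_p$ of $f_E$ — their parity conditions hold simultaneously; on $\{T < \infty,\ p(o_T) \in [0,1]\}$ both plays halt at the same stopping value $p(o_T) = p'(o_T)$; and on $\{T < \infty,\ p(o_T) = k_{\tilde{e}}\}$ the $\G_{Y'}$-play halts at $s$, of value $u$. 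This yields $\Prb^{\sigma',\tau'}_{\G_{Y'}}[\text{win}] = u\rho + \pi$ while $\Prb^{\sigma,\tau}_{\G_Y}[\text{win}] \leq (u + \varepsilon')\rho + \pi$, the latter because $\tau$ caps Player $\A$'s residual winning probability after a $k_{\tilde{e}}$-visit.

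Finally, $\varepsilon$-optimality of $\sigma$ gives $\Prb^{\sigma,\tau}_{\G_Y}[\text{win}] \geq u - \varepsilon > v$, whence $u\rho + \pi \geq v - \varepsilon'\rho \geq v - \varepsilon'$; since $\rho$ and $\pi$ do not depend on $\varepsilon'$, letting $\varepsilon' \to 0$ gives $\Prb^{\sigma',\tau'}_{\G_{Y'}}[\text{win}] = u\rho + \pi \geq v$. As $\tau'$ was arbitrary, $\MarVal{\G_{Y'}}(\qinit) \geq \MarVal{\G_{Y'}}[\sigma'](\qinit) \geq v$, and letting $v \uparrow u$ finishes the argument. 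I expect the main obstacle to be precisely the mismatch of horizons between the two coupled plays — the $\G_{Y'}$-play halts as soon as a $k_{\tilde{e}}$-outcome is drawn, whereas the $\G_Y$-play continues — which is what forces the detour through an $\varepsilon'$-optimal Player-$\B$ residual and the value-$u$ stopping state $s$; a secondary, routine-but-fiddly point is the bookkeeping showing that passing to outcome-history strategies loses no generality and that the $f_E$-image of the top colour seen infinitely often stays inside the domain where Lemma~\ref{lem:same_parity} applies.
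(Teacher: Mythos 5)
The decisive step in your plan --- ``restricting to strategies that are functions of the outcome-history is without loss of generality'' --- is exactly where the argument breaks, and it is not a routine bookkeeping point. In the games $\G_Y$ and $\G_{Y'}$ of Definition~\ref{def:parity_game_from_gf}, the players observe the \emph{states} visited, i.e.\ the sequence $p(o_1),p(o_2),\ldots$ (resp.\ $p'(o_1),p'(o_2),\ldots$), not the abstract outcomes $o_i\in\outComeNF$ themselves. So an outcome-history prescription is \emph{more} informed than a legal strategy and is not a strategy of either game; the harmless direction of your observation (every legal strategy factors through the outcome history) is true but useless. What your coupling actually needs is the opposite and much stronger statement: that Player $\A$ has $\varepsilon$-optimal strategies in $\G_Y$ that are measurable with respect to the \emph{coarser} information available in $\G_{Y'}$. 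This is genuinely coarser, because the relabelling $k_n\mapsto k_{f_E(\max(c,n))}$ is not injective ($f_E$ identifies consecutive indices of equal parity), so distinct $\G_Y$-histories collapse to the same $\G_{Y'}$-history; for a generic $\varepsilon$-optimal $\sigma$ in $\G_Y$ the induced $\sigma'$ is simply not well defined. Nor can one repair this by averaging or by simulating the hidden refinement: the conditional law of $p(o)$ given $p'(o)$ at a visit to $\qinit$ depends on Player $\B$'s unobserved $\GF$-strategy, so Player $\A$ cannot reconstruct the missing information. (Your Player-$\B$ pull-back is fine, since $p'(o)$ is a function of $p(o)$, so that direction poses no measurability problem.)

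This missing step is precisely what the paper's proof is organized around: it first recolors $\G_Y$ into $\G_Y^f$ with $\colFunc^f(q)=f(\max(c,\colFunc(q)))$, using Lemma~\ref{lem:same_parity} to show the value is unchanged, so that the colors of $\G_Y^f$ match those of $\G_{Y'}$; and it then invokes a non-trivial (unpublished, Martin-style) result that almost-optimal strategies can be chosen \emph{color-uniform}, i.e.\ depending only on the color sequence --- only such strategies transfer to $\G_{Y'}$. Everything else in your proposal matches the paper's argument: the first-visit time to $k_{\tilde{e}}$, switching Player $\B$ to a $\delta$-optimal residual there (the residual value is indeed $u$), the value-$u$ stopping state absorbing that event in $\G_{Y'}$, the parity equivalence on non-exiting plays via Lemma~\ref{lem:same_parity}, and the $\varepsilon,\varepsilon'\to 0$ limits. (Your displayed identity $u\rho+\pi$ omits the common expected-stopping-value term on $\{T<\infty,\ p(o_T)\in[0,1]\}$, but since that term is identical on both sides this is only a presentational slip.) As written, however, the proposal has a genuine gap at the strategy-transfer step, and closing it requires the uniformization result (or an equivalent argument), not bookkeeping.
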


To prove this result we will use a result whose proof is not yet published. Specifically, consider a game $\G = \Games{\Aconc}{W}$ (with $W \subseteq Q^\omega$ an arbitrary  Borel winning objective), a finite set of labels $\msf{Label}$ and a labeling function $\msf{lb}: Q \rightarrow \msf{Label}$. Assume that $W$ is $\msf{lb}$-uniform, that is, for all $\rho,\rho' \in Q^\omega$, if for all $i \in \N$, we have $\msf{lb}(\rho_i) = \msf{lb}(\rho'_i)$, then $\rho \in W \Leftrightarrow \rho' \in W$. In that case, the value of the game $\G$ can be approximated (i.e. the value can approached arbitrarily closely) by label-strategies (for both players), that is strategies that only depend on the labels of the states visited (and the current state). For Player $\A$, such a strategy $\s_\A$ would satisfy that, for all $q \in Q$, for all $n \in \N$ and $\rho,\rho' \in Q^n$ such that for all $0 \leq i \leq n$, we have $\msf{lb}(\rho_i) = \msf{lb}(\rho'_i)$, then $\s_\A(\rho \cdot q) = \s_\A(\rho' \cdot q)$. This result follows from an adaptation of Martin's proof of the determinacy of Blackwell games and should be submitted in the next few months. We will explicitly mention in the next proof when we use this result.

\begin{proof}
	We let $Q_y$ and $q_Y \in Q_Y$ (resp. $Q_{Y'}$ and $q_{Y'}$) denote the set of non-stopping states in $\Aconc_Y$ and the state $\qinit \in Q_Y$ (resp. $\Aconc_{Y'}$ and the state $\qinit \in Q_{Y'}$). 
	
	Recall the function $f_E$ from Definition~\ref{def:relevant_env_from_arbitrary_env}. We let $f := f_E$ and we extend it such that $f(\tilde{e}) := d$ with $d$ the smallest integer of the same parity than $e$ such that $d \geq e'+1$. Then, we define an alternate game $\G_Y^f = \Games{\Aconc_Y^f}{\colFunc^f}$ which differs from the game $\G_Y$ only in the coloring function and in the objective. The coloring function $\colFunc^f$ in the arena $\Aconc_Y^f$ is such that, for all states $q \in Q_Y$, we have:
	\begin{equation*}
	\colFunc^f(q) := f(\max(c,\colFunc(q)))
	\end{equation*}
	where $\colFunc$ refers to the coloring function in the game $\G_Y$. Let us show that the value of both games $\G_Y^f$ and $\G_Y$ is the same. Consider a pair of strategies $(\s_\A,\s_\B) \in \msf{S}_\A^{\Aconc_Y} \times \msf{S}_\B^{\Aconc_Y} = \msf{S}_\A^{\Aconc_Y^f} \times \msf{S}_\B^{\Aconc_Y^f}$. First, we have:
	\begin{equation*}
	\sum_{x \in [0,1]} \Prb_{\Aconc_Y,q_Y}^{\s_\A,\s_\B}[Q_Y^* \cdot x] \cdot x = \sum_{x \in [0,1]} \Prb_{\Aconc_Y^f,q_Y}^{\s_\A,\s_\B}[Q_Y^* \cdot x] \cdot x
	\end{equation*}
	In addition, since all states $k_i$ loop back to $q_Y$ in $\Aconc_Y$ and $\Aconc_Y^f$, we have:
	\begin{equation*}
	\Prb_{\Aconc_Y,q_Y}^{\s_\A,\s_\B}[Q_Y^\omega \setminus (Q_Y^* \cdot q_Y)^\omega]  = 0 = \Prb_{\Aconc_Y^f,q_Y}^{\s_\A,\s_\B}[Q_Y^\omega \setminus (Q_Y^* \cdot q_Y)^\omega]
	\end{equation*}
	
	Furthermore, consider some infinite path $\rho \in (Q_Y^* \cdot q_Y)^\omega$ visiting infinitely often the central state $q_Y$. Let $X_\rho := \{ \colFunc[\msf{InfOft}(\rho)] \cap \brck{c,e} \}$ denote the set of colors, at least $c$, seen infinitely often in $\rho$ in $\G_Y$. Note that $X_\rho \neq \emptyset$ since $\colFunc(q_Y) = c$. We also let $X_\rho^f := \{ \colFunc^f[\msf{InfOft}(\rho)] \cap \brck{c',d} \}$ denote the set of colors, at least $c'$, seen infinitely often in $\rho$ in $\G_Y^f$. By definition of $\colFunc^f$, we have $X_\rho^f = f[X_\rho]$. Therefore, we have, by Lemma~\ref{lem:same_parity} and by definition of $d$:
	\begin{align*}
	\max X_\rho \text{ is even } \Leftrightarrow \max f(X_\rho) \text{ is even } \Leftrightarrow \max X_\rho^f \text{ is even } \\
	\end{align*}
	It follows that:
	\begin{equation*}
	\Prb_{\Aconc_Y,q_Y}^{\s_\A,\s_\B}[Q_Y^\omega \cap W_Y]  = \Prb_{\Aconc_Y^f,q_Y}^{\s_\A,\s_\B}[(Q_Y^f)^\omega \cap W_Y^f]
	\end{equation*}
	
	As this holds for all pair of strategies $(\s_\A,\s_\B) \in \msf{S}_\A^{\Aconc_Y} \times \msf{S}_\B^{\Aconc_Y} = \msf{S}_\A^{\Aconc_Y^f} \times \msf{S}_\B^{\Aconc_Y^f}$, it follows that $\MarVal{\G_Y}(q_Y) = \MarVal{\G_Y^f}(q_Y)$.
	
	Let us now relate the games $\G_Y^f$ and $\G_{Y'}$. 
	By using the unpublished result mentioned prior to this proof, one can show that almost-optimal Player-$\A$ strategies $\s_\A$ in $\G_{Y}^f$ can be chosen such that, for all $\rho,\rho' \in Q_Y^*$, if for all $i \in \N$, we have $\colFunc^f(\rho_i) = \colFunc^f(\rho'_i)$, then for all $q \in Q_Y$, we have $\s_\A(\rho \cdot q) = \s_\A(\rho' \cdot q)$. Such strategies are called $(\brck{c',d},\colFunc^f)$-uniform strategies. Similarly, in $\G_Y$, almost-optimal strategies (for both players) can be found among  $(\brck{c',e'},\colFunc')$-uniform strategies in $\G_{Y'}$. Furthermore, any strategy $(\brck{c',d},\colFunc^f)$-uniform strategy in $\Aconc_Y^f$ can be seen as a ($(\brck{c',e'},\colFunc')-$)strategy in $\Aconc_{Y'}$. On the other hand, a $(\brck{c',e'},\colFunc')$-uniform strategy in $\Aconc_{Y'}$ can be seen as a ($(\brck{c',d},\colFunc^f)-$)strategy in $\Aconc_Y^f$ up to defining what it does once the state $k_{\tilde{e}}$ --- colored by $d$ --- occurs.
	
	Consider any Player-$\A$ strategy $\s_\A \in \msf{S}_\A^{\Aconc_Y^f}$ that is $(\brck{c',d},\colFunc^f)$-uniform in $\Aconc_Y^f$. As mentioned above, it can be seen as a Player-$\A$ strategy in $\Aconc_{Y'}$. Let $\delta > 0$ and consider a $(\brck{c',e'},\colFunc')$-uniform Player-$\B$ strategy in $\Aconc_{Y'}$. Consider the Player-$\B$ strategy $\s_\B^\delta \in \msf{S}_\B^{\Aconc_Y^f}$ that mimics the strategy $\s_\B$ as long as $k_{\tilde{e}}$ is not seen. Once it is seen, the Player-$\B$ strategy $\s_\B^\delta$ switches to a $\delta$-optimal strategy in the game $\G_Y^f$.
	By definition of $\s_\B^\delta$, and since $u = \MarVal{\G_Y}(q_Y) = \MarVal{\G_Y^f}(q_Y)$, we have:
	\begin{equation*}
	\Prb^{\s_\A,\s_\B^\delta}_{\Aconc_{Y}^f,q_{Y}}[W(\colFunc^f) \cap (Q_{Y}^f)^* \cdot k_{\tilde{e}}] \leq
	\Prb^{\s_\A,\s_\B^\delta}_{\Aconc_{Y}^f,q_{Y}}[(Q_{Y}^f)^* \cdot k_{\tilde{e}}] \cdot (u + \delta)
	\end{equation*}
	Since the outcomes leading to $k_{\tilde{e}}$ in $\Aconc_y^f$ lead to the stopping state $u \in [0,1]$ in $\Aconc_{Y'}$, and all outcomes leading to a stopping state in $\Aconc_y^f$ lead to the same stopping state in $\Aconc_{Y'}$, it follows that: 
	\begin{equation*}
	\Prb_{\Aconc_Y^f,q_Y}^{\s_\A,\s_\B^\delta}[W(\colFunc^f) \cap Q_{Y}^* \cdot (k_{\tilde{e}} \cup [0,1])] 
	\leq \Prb_{\Aconc_{Y'},q_{Y'}}^{\s_\A,\s_\B}[W(\colFunc') \cap Q_{Y'}^* \cdot [0,1]] + \delta
	\end{equation*}
	In addition, we have:
	\begin{equation*}
	\Prb_{\Aconc_Y^f,q_Y}^{\s_\A,\s_\B^\delta}[(Q_{Y} \setminus \{ k_{\tilde{e}} \})^\omega] = \Prb_{\Aconc_{Y'},q_{Y'}}^{\s_\A,\s_\B}[(Q_{Y'})^\omega]
	\end{equation*}
	In addition, consider any outcome $o \in \outComeNF$ such that $p'(o) \notin [0,1]$ or equivalently such that $p(o) \notin [0,1] \cup \{ k_{\tilde{e}} \}$. Then, we have $p'(o) = k_{f(\max(c,n))}$ where $n \in \brck{0,\tilde{e}-1}$ is such that $p(o) = k_n$. Then, we have $\colFunc' \circ p'(o) = f(\max(c,n)) = \colFunc^f \circ p(o)$. Furthermore, we have $\colFunc'(q_{Y'}) = c' = \colFunc^f(q_Y)$. It follows that, assuming that $[0,1] \cup \{ k_{\tilde{e}} \}$ is not reached in $\Aconc_Y^f$ and that $[0,1]$ is not reached in $\Aconc_{Y'}$, the colors seen with $\s_\A$ and $\s_\B^\delta$ are the same as the colors seen with $\s_\A$ and $\s_\B$ in $\Aconc_Y^f$ (because the strategies $\s_\A,\s_\B,\s_\B^\delta$ only depend on the colors seen). Therefore: 
	\begin{equation*}
	\Prb_{\Aconc_Y^f,q_Y}^{\s_\A,\s_\B^\delta}[W(\colFunc^f) \cap (Q_{Y} \setminus \{ k_{\tilde{e}} \})^\omega] = \Prb_{\Aconc_{Y'},q_{Y'}}^{\s_\A,\s_\B}[W(\colFunc') \cap (Q_{Y'})^\omega]
	\end{equation*}
	
	Overall, we obtain that:
	\begin{align*}
	\MarVal{\G_Y}[\s_\A](q_Y) \leq \Prb_{\Aconc_Y^f,q_Y}^{\s_\A,\s_\B^\delta}[W(\colFunc^f)] & = \Prb_{\Aconc_Y^f,q_Y}^{\s_\A,\s_\B^\delta}[W(\colFunc^f) \cap (Q_{Y})^* \cdot (k_{\tilde{e}} \cup [0,1])] \\
	& + \Prb_{\Aconc_Y^f,q_Y}^{\s_\A,\s_\B^\delta}[W(\colFunc^f) \cap (Q_{Y} \setminus ([0,1] \cup \{ k_{\tilde{e}} \}))^\omega] \\
	& \leq \Prb_{\Aconc_{Y'},q_{Y'}}^{\s_\A,\s_\B}[W(\colFunc') \cap (Q_{Y'})^* \cdot [0,1]] + \delta \\
	& + \Prb_{\Aconc_{Y'},q_{Y'}}^{\s_\A,\s_\B}[W(\colFunc') \cap (Q_{Y'})^\omega] \\
	& = \Prb_{\Aconc_{Y'},q_{Y'}}^{\s_\A,\s_\B^\delta}[W(\colFunc')] + \delta
	\end{align*}
	Since this holds for all 
	Player-$\B$  $(\brck{c',e'},\colFunc')$-uniform strategies, 
	and for all $\delta > 0$, it follows that $\MarVal{\G_Y^f}[\s_\A](q_Y) \leq \MarVal{\G_{Y'}}[\s_\A](q_{Y'}) + \delta \leq \MarVal{\G_{Y'}}(q_{Y'}) + \delta$. Since this holds for all Player-$\A$ $(\brck{c,d},\colFunc)$-uniform strategies $\s_\A \in \msf{S}_\A^{\Aconc_Y}$, 
	it follows that $\MarVal{\G_Y}(q_Y) = \MarVal{\G_Y^f}(q_Y) \leq \MarVal{\G_{Y'}}(q_{Y'})$.
\end{proof}

We can now proceed to the proof of Proposition~\ref{prop:relevant_env_sufficient}.
\begin{proof}
	Consider some $n \in \N$. First, consider the case where $n = 0$. Then, an environment $E = \langle c,e,p \rangle$ with $\msf{Sz}_\A(E) = 0$ is such that $c = e$ is even. That is, the corresponding parity game $\G_{(\formNF,E)}$ is, from Player $\A$'s point-of-view, a safety game. Hence, positional optimal strategies exists for Player $\A$ in $\G_{(\formNF,E)}$ \cite{AM04b}. This is similar for Player $\B$. Furthermore, there is no relevant environment of size at most -1. Hence, the equivalence holds for $n = 0$. 
	
	Assume now that $n \geq 1$. Consider any relevant environment $E$ with $\msf{Sz}(E) \leq n-1$. Then, we have $\msf{Sz}_\A(E) \leq n$. Therefore, if $\formNF$ is positionally maximizable w.r.t. Player $\A$ up to $n$, then 
	there is an optimal $\GF$-strategy in $\formNF$ for Player $\A$ w.r.t. $(\formNF,E)$. 
	
	Consider now an (arbitrary) environment $E = \langle c,e,p \rangle \in \msf{Env}(\outComeNF)$ with $\msf{Sz}_\A(E) = n \geq 1$. Let $\tilde{e} := \msf{Even}(e)$. Clearly, the outcomes in $\outComeNF$ leading, w.r.t. $p$, to $\qinit$ of color $c$, can be redirected to $k_c$ of color $c$ (which then loops back to $\qinit$) without changing the outcome of the parity game induced by $E$. Hence, without loss of generality, we assume that no outcome $o \in \outComeNF$ is such that $p(o) = \qinit$. We can therefore consider the relevant environment $E' \in \msf{Env}(\outComeNF)$ from Definition~\ref{def:relevant_env_from_arbitrary_env} of size $\msf{Sz}(E') = n-1$. Let $Y := (\formNF,E)$ and $Y' := (\formNF,E')$ and assume that there is a Player-$\A$ $\GF$-strategy $\sigma_\A \in \Sigma_\A(\formNF)$ that is optimal w.r.t. $Y'$. Let us show that $\sigma_\A$ is also optimal w.r.t. to $Y$. We let $q_Y$ denote the state $q_\msf{init}$ in the game $\G_Y$ and $q_{Y'}$ denote the state $q_\msf{init}$ in the game $\G_{Y'}$. Let $u := \MarVal{\G_Y}(q_Y)$ and $u' := \MarVal{\G_{Y'}}(q_{Y'})$. By Lemma~\ref{prop:relevant_env_no_decrease_value}, we have $u \leq u'$. We want to apply Lemma~\ref{lem:opt_in_gf_equiv_opt_in_parity}. 
	
	Let us show that the Player-$\A$ positional strategy $\s^Y_\A(\sigma_\A)$ ensures item $(i)$ of this lemma, i.e. that it dominates the valuation $v_Y^u$ in the game $\G_Y$. This amount to show that $\va_{\Games{\formNF}{v_Y^u \circ p}}(\sigma_\A) \geq u$. Note that, again by Lemma~\ref{lem:opt_in_gf_equiv_opt_in_parity}, we have that $\va_{\Games{\formNF}{v_{Y'}^{u'} \circ p'}}(\sigma_\A) \geq u'$. Furthermore, we have: 
	\begin{equation*}
	v_Y^u \circ p + (u' - u) \geq  v_{Y'}^{u'} \circ p'
	\end{equation*}
	Therefore, for any Player-$\B$ $\GF$-strategy $\sigma_\B \in \Dist(\Act_\B)$, we have:
	\begin{align*}
	\outM_{\Games{\formNF}{v_{Y}^{u} \circ p}}(\sigma_\A,\sigma_\B) + u' - u & \geq \outM_{\Games{\formNF}{v_{Y}^{u'} \circ p'}}(\sigma_\A,\sigma_\B) \geq \va_{\Games{\formNF}{v_{Y}^{u'} \circ p' }}(\sigma_\A) \geq u'
	\end{align*}
	Since this holds for all Player-$\B$ $\GF$-strategies $\sigma_\B \in \Sigma_\B(\formNF)$, it follows that 
	\begin{equation*}
	\va_{\Games{\formNF}{v_{Y}^{u}}}(\sigma_\A) \geq u
	\end{equation*}
	
	Consider now item $(ii)$ from Lemma~\ref{lem:opt_in_gf_equiv_opt_in_parity}. Consider some Player-$\B$ action $b \in \Act_\B$ and assume that $\outM_{\Games{\formNF}{\mathbbm{1}_{p^{-1}[0,1]}}}(\sigma_\A,b) = 0$. If we have $\outM_{\Games{\formNF}{\mathbbm{1}_{p^{-1}[k_{\tilde{e}}]}}}(\sigma_\A,b) > 0$, then it follows that $\max (\mathsf{Color}(\formNF,p,\sigma_\A,b) \cup \{ c \})$ is even since $\tilde{e}$ is even and is the highest integer appearing the game $\G_Y$. Assume now that $\outM_{\Games{\formNF}{\mathbbm{1}_{p^{-1}[k_{\tilde{e}}]}}}(\sigma_\A,b) = 0$. Then, it follows that, by definition of $p'$, we have $\outM_{\Games{\formNF}{\mathbbm{1}_{(p')^{-1}[0,1]}}}(\sigma_\A,b) = 0$. Therefore, by Lemma~\ref{lem:opt_in_gf_equiv_opt_in_parity}, it follows that $\max (\mathsf{Color}(\formNF,p',\sigma_\A,b) \cup \{ c' \})$ is even. Furthermore, by definition of $p'$
	:
	\begin{align*}
	f_E[& \{ i \in \brck{c,\tilde{e}-1} \mid \outM_{\Games{\formNF}{\mathbbm{1}_{p^{-1}[k_i]}}}(\sigma_\A,b) > 0 \} \cup \{ c \}] \\
	= & \{ i \in \brck{c',e'} \mid \outM_{\Games{\formNF}{\mathbbm{1}_{(p')^{-1}[k_i]}}}(\sigma_\A,b) > 0 \} \cup \{ c' \}
	\end{align*}
	
	We have $\max \{ i \in \brck{c',e'} \mid \outM_{\Games{\formNF}{\mathbbm{1}_{(p')^{-1}[k_i]}}}(\sigma_\A,b) > 0 \} \cup \{ c' \} = \max (\mathsf{Color}(\formNF,p',\sigma_\A,b) \cup \{ c' \})$ is even. Therefore, by Lemma~\ref{lem:same_parity}, we have $\max \{ i \in \brck{c,\tilde{e}-1} \mid \outM_{\Games{\formNF}{\mathbbm{1}_{p^{-1}[k_i]}}}(\sigma_\A,b) > 0 \} \cup \{ c \} = \max (\mathsf{Color}(\formNF,p,\sigma_\A,b) \cup \{ c \})$ is also even.
	
	
	
	In fact, the Player-$\A$ $\GF$-strategy $\sigma_\A \in \Sigma_\A(\formNF)$ satisfies item $(ii)$ of Lemma~\ref{lem:opt_in_gf_equiv_opt_in_parity}. Therefore, it is optimal w.r.t. $Y$. 
\end{proof}

\subsection{Proof of Lemma~\ref{lem:gf_distinct}, Page~\pageref{lem:gf_distinct}}
\label{proof:lem_gf_distinct}
\begin{proof}
	Consider a relevant environment $E = \langle c,e,p \rangle \in \msf{Env}(\outComeNF_n)$ and $Y_n := (\formNF_n,E)$. We let $u := \MarVal{\G_{Y_n}}(\qinit)$. Note that since $E$ is relevant, we have $c \in \{ 0,1 \}$ and $p: \outComeNF_n \rightarrow [0,1] \cup K_n$. For all $o \in \outComeNF$, if $p(o) \notin [0,1]$, we let $n_o \in \brck{0,e}$ denote the integer that $t$ is mapped to w.r.t. $p$, i.e. the integer ensuring $p(o) = k_{n_o}$. Note that, since $E$ is relevant, for all $o \in \outComeNF_n$, we have $n_o \geq c$.
	
	Let us introduce some notations. We let $X_{[0,1]} := \{ o \in \outComeNF_n \mid p(o) \in [0,1] \}$, $X_{\msf{even}} := \{ o \in \outComeNF_n \mid n_o \text{ is even} \}$ and $X_{\msf{odd}} := p^{-1}[K_n] \setminus X_{\msf{even}}$. 
	We let $X_\msf{Lp} := p^{-1}[K_n] = X_{\msf{even}} \uplus X_{\msf{odd}}$. Note that we have $\outComeNF_n = X_{[0,1]} \uplus X_\msf{Lp}$. We define the valuation $v: \outComeNF_n \rightarrow [0,1]$ mapping each outcome to its value in the game $\G_{Y_n}$ such that, for all $o \in \outComeNF_n$:
	\begin{equation*}
	v(o) := \begin{cases}
	p(o) & \text{ if }o \in X_{[0,1]} \\
	u & \text{ otherwise }
	\end{cases}
	\end{equation*}
	That is, $v = v_{y_n}^u \circ p$.
	%
	For all $x \in \outComeNF_n$, we let:
	\begin{equation*}
	w(x) := \begin{cases}
	1 & \text{ if }x \in X_{\msf{even}} \\
	0 & \text{ if }x \in X_{\msf{odd}} \\
	p(x) & \text{ if }x \in X_{[0,1]}  
	\end{cases}
	\end{equation*}
	For all $x \in \outComeNF_n$, the value $w(x) \in [0,1]$ corresponds to the value of the game $\G_{Y_n}$ if $x$ is seen (indefinitely, if it is in $X_\msf{Lp}$). This can be generalized to a pair of outcomes: for all $x,x' \in \outComeNF_n$, we let:
	\begin{equation*}
	w(x,x') := \begin{cases}
	w(x) & \text{ if }x,x' \in X_\msf{Lp} \text{ and }n_x \leq n_{x'} \\
	w(x') & \text{ if }x,x' \in X_\msf{Lp} \text{ and }n_{x'} \leq n_x \\
	p(x) & \text{ if }x \in X_\msf{[0,1]}, x' \in X_\msf{Lp} \\
	p(x') & \text{ if }x' \in X_\msf{[0,1]}, x \in X_\msf{Lp} \\
	\frac{1}{2} \cdot (p(x) + p(x')) & \text{ if } x,x' \in X_\msf{[0,1]}
	\end{cases}
	\end{equation*}
	
	\begin{figure}[!t]
		\centering
		\includegraphics[scale=1]{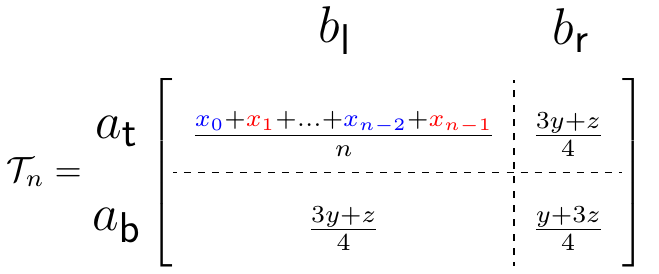}
		\caption{The game form $\mathcal{T}_n$.}
		\label{fig:GameFormExtracted}  
	\end{figure}
	
	Finally, we let $\mathcal{T}_n$ denote the $2 \times 2$ game form at the top left of the game form $\formNF_n$ from Figure~\ref{fig:GameFormDetail}. It is depicted in Figure~\ref{fig:GameFormExtracted}. We let $u' := \va_{\Games{\mathcal{T}_n}{v}}$. Now, there are several cases:
	\begin{itemize}
		\item Assume that $y,z \in X_\msf{Lp}$. Then, $u \in \{ 0,1 \}$, and playing $a_\msf{Ex} \in \Act_\A^n$ (resp. $b_\msf{Ex} \in \Act_\B^n$) is optimal for Player $\A$ (resp. $\B$) w.r.t. $Y_n$. 
		\item Assume now that $y \in X_\msf{Lp}$ and $z \in X_{[0,1]}$. Then, $u = p(z)$, as argued in the proof sketch, and playing $a_\msf{Ex} \in \Act_\A^n$ (resp. $b_\msf{Ex} \in \Act_\B^n$) is optimal for Player $\A$ (resp. $\B$) w.r.t. $Y_n$. 
		This is similar if $y \in X_{[0,1]}$ and $z \in X_{\msf{Lp}}$.
		\item Assume that $p(y),p(z) \in [0,1]$ and $p(y) \leq p(z)$. In that case, we have $u = \frac{3 p(y) + p(z)}{4}$ and playing action $a_\msf{Ex}$ (resp. $b_\msf{Ex}$) is optimal for Player $\A$ (resp. $\B$) w.r.t. $Y_n$.
		\item Let us now assume that $y,z \in X_{[0,1]}$ and $p(z) < p(y)$. In that case, we have:
		\begin{equation*}
		\frac{3 p(z) + p(y)}{4} \leq u \leq \frac{p(z) + 3 p(y)}{4}
		\end{equation*}
		We argue that Player $\B$ always has an optimal $\GF$-strategy w.r.t. $Y_n$.
		\begin{itemize}
			\item If $u = \frac{p(z) + 3 p(y)}{4}$, then positionally playing action $b_\msf{Ex}$ is optimal for Player $\B$. We now assume that $u < \frac{p(z) + 3 p(y)}{4}$.
			\item If, for some even $i \in \brck{0,n-1}$, we have $w(x_{i}) \leq u$, then playing action $b_i$ is optimal. We now assume that for all even $i \in \brck{0,n-1}$, we have $w(x_{i}) > u$ (i.e. $x_i$ is mapped to either a real greater then $u$ or even integer). Furthermore, if for any even $i \in \brck{2,n-1}$, we have $w(x_{i},x_{i-1}) \leq u$, then playing action $b_{i,i-1}$ is optimal. We now assume that for all even $i \in \brck{2,n-1}$, we have $w(x_{i},x_{i-1}) > u$. 
			\item It follows that, for any odd $i \in \brck{0,n-1}$, $w(x_{i}) \leq u$ and $w(x_{i},x_{i-1}) \leq u$ --- otherwise Player $\A$ could ensure that the value of the game is more than $u$ by playing the corresponding row, i.e. action $a_i$ or $a_{i,i-1}$. It also implies that $u' \leq u$. Indeed, assume that it is not the case, i.e. $u' > u$. Consider a Player-$\A$ $\GF$-strategy $\sigma_\A \in \Dist(\{ a_\msf{t},a_\msf{b} \})$ whose value in the game in normal form $\Games{\mathcal{T}_n}{v}$ is greater than $u$. Then, the Player-$\A$ positional strategy $\s_\A$ in the game $\G_{Y_n}$ that plays the $\GF$-strategy $\sigma_\A$ in $\qinit$ 
			parity dominates the valuation $v_{(\formNF_n,E)}^{r}$ for some $r > u$\footnote{Potentially, if it is not already the case, up to playing action $a_\msf{b}$ with small yet positive probability.}, due to the assumptions we made in the previous item. In fact, $u' \leq u$. 
			
			Note that this implies $X_{\msf{Ex}} \neq \emptyset$. Indeed, assume that $X_{\msf{Ex}} = \emptyset$. Let $\varepsilon > 0$. Consider a Player-$\A$ $\GF$-strategy $\sigma_\A$ in the game form $\mathcal{T}_n$ that plays action $a_\msf{t}$ with probability $1 - \varepsilon$ and action $a_\msf{b}$ with probability $\varepsilon$. Then, for some small enough, yet positive, $\varepsilon > 0$ and since $\frac{p(z) + 3 p(y)}{4} > u$ and $v(x_0),\ldots,v(x_{n-1}) = u$, such a $\GF$-strategy has value more than $u$ in the game in normal form $\Games{\mathcal{T}_n}{v}$. Hence, $X_{\msf{Ex}} = \emptyset$.
			\item Let us exhibit a Player-$\B$ $\GF$-strategy that is optimal w.r.t. $Y_n$. Consider a Player-$\B$ $\GF$-strategy $\sigma_\B \in \Dist(\{b_\msf{l},b_\msf{r}\})$ that is optimal in that game in normal form $\Games{\mathcal{T}_n}{v}$. Let $\s_\B$ be the Player-$\B$ positional strategy in the game $\G_{Y_n}$ that plays the $\GF$-strategy $\sigma_\B$ at $\qinit$. Recall that we assume that for any odd $i \in \brck{0,n-1}$, $w(x_{i}) \leq u$ and $w(x_{i},x_{i-1}) \leq u$, $\frac{3 p(z) + p(y)}{4} \leq u$ and $X_{\msf{Ex}} \neq \emptyset$. Therefore, the strategy $\s_\B$ dominates the valuations $v_{Y_n}^u$ and no Player-$\A$ action can, against the strategy $\s_\B$, make the game loop indefinitely on $\qinit$ while ensuring that the highest color seen is even, almost-surely. In fact, this strategy $\s_\B$ parity dominates the valuations $v_{Y_n}^u$. 
			Hence, the Player-$\B$ $\GF$-strategy $\sigma_\B$ is optimal w.r.t. $(Y_n)$, recall Lemma~\ref{lem:opt_in_gf_equiv_opt_in_parity}.
		\end{itemize}
		Let us now consider the case of Player $\A$. 
		\begin{itemize}
			\item If 
			$u = \frac{3 p(z) + p(y)}{4}$, then playing action $a_\msf{Ex}$ is optimal. We now assume that $\frac{3 p(z) + p(y)}{4} < u$.
			\item It follows that, for all even $i \in \brck{0,n-1}$, we have $w(x_{i}) \geq u$ and for all even $i \in \brck{2,k-1}$, we have $w(x_{i, i-1}) \geq u$. Otherwise Player $\B$ could play the action $b_i$ or $b_{i,i-1}$ and ensure that the value of the game $\G_{Y_n}$, from $\qinit$, is less than $u$.
			\item Furthermore, for any odd $i \in \brck{1,n-1}$, if either $w(x_{i}) \geq u$ or $w(x_{i},x_{i-1}) \geq u$, then playing action $a_i$ or action $a_{i,i-1}$ is optimal. We now assume that, for all odd $i \in \brck{A,k-1}$, $w(x_{i}) < u$ and $w(x_{i},x_{i-1}) < u$. 
			\item Assume that $X_\msf{Ex} \neq \emptyset$. Then, we have $u' \geq u$. This is analogous to the previous case for Player $\B$. Indeed, assume that $u' < u$. Then, consider a Player-$\B$ $\GF$-strategy $\sigma_\B \in \Dist(\{b_\msf{l},b_\msf{r}\})$ whose value in the game in normal form $\Games{\mathcal{T}_n}{v}$ is less than $u$. Then, a Player-$\B$ positional strategy $\s_\B$ in the game $\G_{Y_n}$ that plays the $\GF$-strategy $\sigma_\B$ in $\qinit$	parity dominates the valuation $v_{Y_n}^{r}$ for some $r < u$, due to the assumptions we made in the previous item and since we assume that $\frac{3 p(z) + p(y)}{4} < u$. In fact, $u' \geq u$. 
			
			Therefore, in the case where $X_\msf{Ex} \neq \emptyset$, Player $\A$ has an optimal $\GF$-strategy w.r.t. $Y_n$ that consists in playing optimally (in $\Dist(a_\msf{t},a_\msf{b})$) in the game in normal form $\va{\Games{\mathcal{T}_n}{v}}$, due to the assumptions of the previous item.
			\item Let us now assume that $X_\msf{Ex} = \emptyset$. 
			Since, for all even $i \in \brck{0,n-1}$, we have $w(x_{i}) \geq u$, it follows that $n_{x_i}$ is even and at least equal to $c$. Similarly, since for all odd $i \in \brck{1,n-1}$, we have $w(x_{i}) < u$. That is, $m_{x_i}$ is odd or less than $c$. In addition, for all even $i \in \brck{2,n-1}$, we have $w(x_{i,i-1}) \geq u$. That is, $n_{x_i} > n_{x_{i-1}}$. Similarly, for all odd $i \in \brck{1,n-1}$, we have $w(x_{i},x_{i-1}) < u$, it follows that $n_{x_i} > n(x_{i-1})$. We have $c \leq n_{x_0} < n_{x_1} < \ldots < n_{x_{n-2}} < n_{x_{n-1}} \leq e$
			. This cannot happen if $e - c \leq n-2$. 
		\end{itemize}
	\end{itemize}
	Let us now exhibit a relevant environment $E$ of size $n-1$ w.r.t. which Player $\A$ has no optimal strategy. Consider the environment where $c := 0$, $e := n-1$, $p(z) := 0$, $p(y) := 1$ and for all $i \in \brck{0,n-1}$, we have $p(x_i) := k_i$. We do have $\msf{Sz}(E) = n-1$. Furthermore, the value $u$ of the parity game $\G_{Y_n}$ from $\qinit$ is equal to $\frac{p(z) + 3 p(y)}{4} = \frac{3}{4}$. Indeed, Player $\B$ ensures that $u \leq \frac{3}{4}$ by playing action $b_\msf{Ex}$. Furthermore, for all $\varepsilon > 0$, Player $\A$ can play with probability $1-\varepsilon$ action $a_\msf{t}$ and with probability $\varepsilon$ action $a_\msf{b}$, thus ensuring that $u \geq \frac{3}{4} - \varepsilon$.
	
	Consider now an arbitrary Player-$\A$ $\GF$-strategy $\sigma_\A$ and the corresponding positional strategy $\s_\A$ in the game $\G_{Y_n}$. If $\sigma_\A$ plays with positive probability action $a_\msf{b}$ or action $a_\msf{Ex}$, then the value of $\s_\A$ is less that $u$ since $\frac{p(y) + 3 p(z)}{4} < \frac{p(z) + 3 p(y)}{4}$ (Player $\B$ can play the action $b_\msf{r}$). However, if $\sigma_\A$ does not play these actions with positive probability, then consider the Player $\B$ strategy $\s_\B$ that plays action $b_\msf{l}$ with probability 1. It ensures that the value of the strategy $\s_\A$ is 0. The reason why is the following: the highest index that the variables $x_0,x_1,\ldots,x_{n-1}$ are mapped to w.r.t. $p$ is $n-1$ and it is odd. Furthermore, all variables $x_1,x_3,\ldots,x_{n-1}$ are mapped to odd indices. Finally, the highest index that the variables $(x_1,x_0)$ are mapped to is $1$ and it is odd. This is also the case for $(x_3,x_2)$, ...,$(x_{n-1},x_{n-2})$. Hence, the highest color seen infinitely often with $\s_\A$ and $\s_\B$ is almost-surely odd, and surely no stopping state is reached.
\end{proof}

\end{document}